\tikzset{->-/.style={decoration={
  markings,
  mark=at position .97 with {\arrow{>}}},postaction={decorate}}}
\newdimen\decalage
\newtheorem{lemma}{Lemma}[chapter]
\newtheorem{proposition}[lemma]{Proposition}
\newtheorem{theorem}[lemma]{Theorem}
\newtheorem{definition}[lemma]{Definition}
\newtheorem{problem}[lemma]{Problem}
\newcommand{\cyclebubble}{\ensuremath{\mathtt{cycle}}}
\newcommand{\unblock}{\ensuremath{\mathtt{unblock}}}
\newcommand{\listpaths}{\ensuremath{\mathtt{list\_paths}}}
\newcommand{\listpathsiter}{\ensuremath{\mathtt{list\_paths\_iterative}}}
\newcommand{\lcp}{\ensuremath{\mathtt{lcp}}}
\newcommand{\listbubbles}{\ensuremath{\mathtt{enumerate\_bubbles}}}
\newcommand{\ks}{{\sf \sc KisSplice}\xspace}
\newcommand{\tri}{{\sf \sc Trinity}\xspace}
\newcommand{\setofcycles}{\ensuremath{\mathcal{C}}}
\newcommand{\setofpaths}{\ensuremath{\mathcal{P}}}
\newcommand{\beadstring}{\ensuremath{B_{u,t}}}
\newcommand{\sbeadstring}{\ensuremath{B_{s,t}}}
\newcommand{\vbeadstring}{\ensuremath{B_{v,t}}}
\newcommand{\head}{\ensuremath{H_u}}
\newcommand{\Chead}{\ensuremath{H_X}}
\newcommand{\chooseedge}{\ensuremath{\mathtt{choose}}}
\newcommand{\oracleleft}{\ensuremath{\mathtt{left\_update}}}
\newcommand{\oracleright}{\ensuremath{\mathtt{right\_update}}}
\newcommand{\undooracle}{\ensuremath{\mathtt{restore}}}
\newcommand{\bcc}{\textsc{bcc}}
\newcommand{\liststpaths}{\ensuremath{\mathtt{list\_paths}_{s,t}}}
\newcommand{\return}{\ensuremath{\mathtt{return}}}
\newcommand{\routput}{\ensuremath{\mathtt{output}}}
\newif\ifcomment\commentfalse
\def\commentON{\commenttrue}
\long\outer\def\bc#1\ec{{\ifcomment \sloppy  $[${\bf Begin new}]
{{#1}} \textbf{[end new]} \fi }}
\long\outer\def\br#1\er{{\ifcomment \sloppy  $[${\bf Vicente suggest remove}]
{{#1}} \textbf{[end]} \fi }}
\long\outer\def\bo#1\eo{{\ifcomment \sloppy  $[${\bf instead of}]
{\textit{#1}} \textbf{[end]}  \fi }}
\long\outer\def\BC#1\EC{{\ifcomment \sloppy \par \#  \dotfill
{\textsc{#1}} \dotfill \# \par \fi }}
\begin{document}

\renewcommand{\bibname}{Bibliography}


\pagenumbering{roman}


\thispagestyle{empty}

\begingroup
\makeatletter

\begin{titlepage}
\begin{large}
\hskip -20ex
\begin{center}
  \begin{tabular}{lp{7.2cm}l}
    N$^o$ d'ordre: 43-2014 && Année 2014\\
    
  \end{tabular}
  \begin{tabular}{llll}
    &&&\\
    &\multicolumn{2}{c}{\textsc{Th\`ese}}&\\ [1.5ex]
    &\multicolumn{2}{c}{Présentée}&\\ [1.5ex]
    &\multicolumn{2}{c}{devant \textsc{l'Université Claude Bernard - Lyon 1}}&\\ [1.5ex]
    &\multicolumn{2}{c}{pour l'obtention}&\\[1.5ex]
    &\multicolumn{2}{c}{du \textsc{Diplôme de Doctorat}}&\\ [1.5ex]
    &\multicolumn{2}{c}{\textit{(arrêté du 7 août 2006)}}&\\ [1.5ex]
    &\multicolumn{2}{c}{et soutenue publiquement le}&\\
    &\multicolumn{2}{c}{6 Mars 2014}&\\ [1.5ex]
    &\multicolumn{2}{c}{par}&\\
    &&&\\
    &\multicolumn{2}{c}{\large{Gustavo Akio \textsc{Tominaga Sacomoto}}}&\\
    &&&\\
    \cline{2-3}\\
    &\multicolumn{2}{c}{\bf \Large{Efficient Algorithms for de novo Assembly of Alternative}}&\\
    &\multicolumn{2}{c}{\bf \Large{\strut  Splicing Events from RNA-seq Data}}&\\
    &&&\\
    \cline{2-3}\\
    &&&\\
    &\multicolumn{2}{c}{Directeur de thèse: Marie-France \textsc{Sagot}} &\\
      & \multicolumn{2}{c}{Co-Directeur: Pierluigi \textsc{Crescenzi}} &\\
      & \multicolumn{2}{c}{Co-Encadrant: Vincent \textsc{Lacroix}} \\
    &&&\\
  \end{tabular}
  \begin{tabular}{lll}
    \textsc{Jury}: 
    & Céline \textsc{Brochier-Armanet},& Examinateur\\
    & Michael \textsc{Brudno},& Rapporteur\\
    & Rodéric \textsc{Guigo},& Rapporteur\\
    & Thierry \textsc{Lecroq},& Examinateur\\
    & Peter \textsc{Widmayer}, & Rapporteur
  \end{tabular}
\end{center}
\end{large}
\end{titlepage}
\endgroup
 
\thispagestyle{empty}
\cleardoublepage
\thispagestyle{empty}

\begingroup
\makeatletter
\pagestyle{empty}
\begin{center}
\Large{UNIVERSITÉ CLAUDE BERNARD-LYON 1}
\end{center} 

\begin{center}
\begin{tabular}{lcl}
\bf Président de l'Université&&\bf M. le Professeur M. A. BONMARTIN\\
Vice-Président du Conseil d'Administration&&M. le Professeur G. ANNAT\\
Vice-Président du Conseil des Etudes et&&M. le Professeur D. SIMON\\
 de la Vie Universitaire&& \\
Vice-Président du Conseil Scientifique&&M. le Professeur J-F. MORNEX\\
Secrétaire Général&&  M. G. GAY\\
\end{tabular} 

\begin{center}
\large SECTEUR SANTÉ
\end{center}

\begin{tabular}{p{7.2cm}ll}
\it Composantes && \\
Faculté de Médecin Lyon-Est - Claude Bernard && Directeur: M. le Professeur J. ETIENNE\\
Faculté de Médecine et de Maïeutique Lyon && Directeur: M. le Professeur F-N. GILLY\\
Lyon Sud – Charles Mérieux&&\\
UFR d'Ontologie && Directeur: D. BOURGEOIS\\
Institut des Sciences Pharmaceutiques && Directeur: M. le Professeur F. LOCHER \\
et Biologiques&&\\
Institut Techniques de Réadaptation && Directeur: M. le Professeur MATILLON\\
Département de Formation et Centre de && Directeur: M. le Professeur P. FARGE\\
Recherche en Biologie Humaine&&\\

\end{tabular}

\begin{center}
\large SECTEUR SCIENCES
\end{center}

\begin{tabular}{p{7.3cm}ll}
\it Composantes && \\
Faculté des Sciences et Technologies && Directeur: M. le Professeur S. De MARCHI\\
Département Biologie && Directeur: M. le Professeur F. FLEURY\\
Département Chimie Biochimie && Directeur: Mme. le Professeur H. PARROT\\
Département Génie Electrique et && Directeur: M. N. SIAUVE\\
des Procédés&&\\
Département Informatique && Directeur: M. le Professeur S. AKKOUCHE\\
Département Mathématiques&& Directeur: M. le Professeur A. GOLDMAN\\
Département  Mécanique&& Directeur: M. le Professeur H. BEN HADID\\
Département Physique&& Directeur: M. le Professeur S. FLECK\\
Département Sciences de la Terre && Directeur: M. le Professeur I. DANIEL\\
des Activités Physiques et Sportives&&\\
UFR Sciences et Techniques && Directeur: M. C. COLLIGNON\\
Observatoire de Lyon && Directeur: M. B. GUIDERDONI\\
Ecole Polytechnique Universitaire de Lyon 1&& Directeur: M. P. FOURNIER\\
Ecole Supérieure de Chimie && Directeur: M. G. PIGNAULT\\
Physique Electronique&&\\
Institut Universitaire de Technologie de Lyon 1&& Directeur: M. le Professeur C. COULET\\
Institut de Science Financière && Directeur: M. le Professeur  J. C. AUGROS\\
et d'Assurances&&\\
\end{tabular}
\end{center} 
\hfill

\clearpage
\pagestyle{empty}
\pagebreak[4]
$^{}$
\clearpage
\pagestyle{empty}
\pagebreak[4]
\endgroup

\cleardoublepage
\thispagestyle{empty}

\chapter*{Acknowledgments}
First and foremost, I would like to thank my advisors, Marie, Pilu and
Vincent. \emph{This thesis would not been possible without your
  guidance and support}. A special thanks to Vincent, who was a
constant presence during the last three years, particularly in the
beginning when I was trying to figure out my thesis topic.

My gratitude to my co-authors, Pavlos Antoniou, \'Etienne Birmel\'e,
Rayan Chikhi, Pierluigi Crescenzi, Rui Ferreira, Roberto Grossi,
Vincent Lacroix, Alice Julien-Laferri\`ere, Andrea Marino, Vincent
Miele, Nadia Pisanti, Janice Kielbassa, Gregory Kucherov, Pierre
Peterlongo, Marie-France Sagot, Kamil Salikhov, Romeo Rizzi and Raluca
Uricaru. It has been a great pleasure to work with all of you. I
learned a lot from you, not only new techniques or algorithms, but
different ways to do research.

I would like to thank all KisSplice team, Alice Julien-Laferri\`ere,
Camille Marchet, Vincent Miele and Vincent Lacroix. Thanks for all the
bug reports, patches and discussions. I would also like to thank all
Bamboo team, who created a great work environment. You made me feel
home in a distant country.

Last but not least, I would like to write a few words in Portuguese
dedicaded to my family: \emph{``Essa tese com certeza n\~ao seria
  poss\'ivel sem o suporte incondicional da minha fam\'ilia, minha
  m\~ae, Miltes, meu pai, Jo\~ao, e minha irm\~a, Nat\'alia. Voc\^es
  que desde cedo me ensinaram que toda conquista \'e uma mistura de
  talento e muito suor. Voc\^es que sempre me encorajaram a seguir os
  meu sonhos. Obrigado por isso e muito mais!''}

\pagestyle{empty} 
\cleardoublepage

\chapter*{Abstract}

In this thesis, we address the problem of identifying and quantifying
variants (alternative splicing and genomic polymorphism) in RNA-seq
data when no reference genome is available, without assembling the
full transcripts. Based on the fundamental idea that each variant
corresponds to a recognizable pattern, a bubble, in a de Bruijn graph
constructed from the RNA-seq reads, we propose a general model for all
variants in such graphs. We then introduce an exact method, called
KisSplice, to extract alternative splicing events. Finally, we show
that it enables to identify more correct events than general purpose
transcriptome assemblers (Grabherr et al. (2011)).

In order to deal with ever-increasing volumes of NGS data, an extra
effort was put to make our method as scalable as possible. The main
time bottleneck in the KisSplice is the bubble enumeration step. Thus,
in order to improve the running time of KisSplice, we propose a new
algorithm to enumerate bubbles. We show both theoretically and
experimentally that our algorithm is several orders of magnitude
faster than the heuristics based on cycle enumeration. The main memory
bottleneck in KisSplice is the construction and representation of the
de Bruijn graph. Thus, in order to reduce the memory consumption of
KisSplice, we propose a new compact way to build and represent a de
Bruijn graph improving over the state of the art (Chikhi and Rizk
(2012)). We show both theoretically and experimentally that our
approach uses 30\% to 40\% less memory than such state of the art,
with an insignificant impact on the construction time.

Additionally, we show that the same techniques used to list bubbles
can be applied in two classical enumeration problems: cycle listing
and the K-shortest paths problem. In the first case, we give the first
optimal algorithm to list cycles in undirected graphs, improving over
Johnson's algorithm, the long-standing state of the art. This is the
first improvement to this problem in almost 40 years. We also give the
first optimal algorithm to list $st$-paths in undirected graphs. In
the second case, we consider a different parameterization of the
classical K-shortest simple (loopless) paths problem: instead of
bounding the number of st-paths, we bound the weight of the
st-paths. We present new algorithms with the same time complexities
but using exponentially less memory than previous approaches.

\dominitoc 
\setcounter{tocdepth}{2} 
\setcounter{minitocdepth}{2} 


\pagestyle{fancy} 
\renewcommand{\chaptermark}[1]{%
\markboth{\chaptername~\thechapter.\ #1}{}}
\renewcommand{\sectionmark}[1]{\markright{\thesection\ #1}} 
\fancyhf{} 
\fancyhead[LE,RO]{\bfseries\thepage} 
\fancyhead[LO]{\bfseries\rightmark} 
\fancyhead[RE]{\bfseries\leftmark} 
\fancypagestyle{plain}{%
\fancyhead{} 
\renewcommand{\headrulewidth}{0pt} 
}

\tableofcontents
\clearpage
\thispagestyle{empty} 

%

\chapter*{Introduction}
\markboth{\bf Introduction}{\bf Introduction}
\addstarredchapter{Introduction}
The general question addressed in this thesis is how to extract
biologically meaningful information from next generation sequencing
(NGS) data without using a reference genome. The NGS technology allows
to read, although in a fragmented way, the full content of the genetic
material (DNA) in a given organism. The main difficulty lies in the
fragmented nature of the NGS information. The NGS data forms a huge
``jigsaw puzzle'' that needs to be, at least partially solved (or
assembled) in order to retrieve some biologically meaningful
information. The challenge is made harder by the no reference genome
assumption, which means that the puzzle needs to be solved relying
only on the intrinsic information that two ``pieces'' (reads) are
compatible, and thus should be together; there is no prior information
about the ``full picture''.

The usual route to solve this kind of problem is to first assemble the
NGS reads and then analyze the result to draw conclusions about a
given biological question. The difficulty with this approach is that
even the simplest formalization of genome assembly as an optimization
problem is NP-hard (shortest superstring problem). In addition, there
is no guarantee that a solution to the optimization problem is unique
or really corresponds to the original sequence.  Actually, since the
genome may contain repeats much larger than the read size, we may not
have enough information to completely solve this problem regardless of
the formulation. In practice, several heuristics are applied to this
problem. The main goal of these heuristics is to produce long contigs,
i.e. contiguous \emph{consensus} sequences of overlapping reads.  Of
course, being heuristics, there are no strong guarantees about the
results. Additionally, since they try to maximize the length of the
contigs, genomic polymorphism (SNPs, indels and CNVs), corresponding
to local variations in a (diploid) genome, are not modeled explicitly
and thus systematically overlooked, for each variation only a
consensus sequence is produced.  In this thesis, we propose an
alternative strategy avoiding the use of heuristics. We argue that for
certain biological questions, it is not necessary to first solve the
hard, often ill-posed, problem of completely assembling the NGS reads;
it is instead sufficient, and sometimes even better (as in the case of
searching for genomic polymorphism), to only \emph{locally} assemble
the data.

The classical view of the information flow inside the cell, or central
dogma of molecular biology, can be summarized as: genes (DNA sequence)
are transcribed into messenger molecules (mRNA) which are then
translated into proteins. The NGS technology is not restricted to the
sequence in the first step of this flow (DNA), it can as well be
applied to the entire set of mRNAs of a cell (that is, to the
\textit{transcriptome}) through what is called an RNA-seq experiment.
In this case, the assembly problem is not anymore to solve a single
``jigsaw puzzle'', but several puzzles where the pieces are mixed
together. Intuitively, this is a generalization of the genome assembly
problem, and thus certainly no easier than it.

The specific problem we address here is the identification of
variations (including alternative splicing) in RNA-seq data. The way
the central dogma was stated may induce us to think that there is a
one-to-one correspondence between genes and proteins. In general that
is not true, a single gene can produce several distinct
proteins. Alternative splicing is one of the main factors responsible
for this variability. It is a mechanism where several distinct mRNAs
are produced from a same gene through an RNA sequence editing
process. The local assembly strategy is specially suited to identify
alternative splicing events because these are intrinsically local: an
alternative splicing event usually generates two similar mRNAs
molecules (isoforms) sharing the majority of their sequence
(constitutive exons). The heuristics used in transcriptome assemblers
tend to overlook such similar, but not identical, sequences.

Since the first automatic sequencing instruments (Sanger) were
introduced in 1998, the cost per base sequenced has decreased
dramatically: from \$2,400 to currently \$0.07 per million of bases
sequenced (Illumina). This is due to an exponential increase in
throughput; while the early Sanger machines produced 10 Kb ($10^4$
base pairs) per run, the current Illumina HiSeq 2000 produces 600
Gb/run. This represents an impressive $10^{7}$-fold increase. During
the same time period, the processing capacity of an off-the-shelf
computer had only a $10$-fold increase, and the memory a $10^2$-fold
increase. For that reason, any algorithm dealing with NGS data has to
be highly efficient both in terms of memory usage and time
consumption, and not rely on hardware improvements to compensate for
ever-increasing volumes of data. In this thesis, we focus on time and
memory efficient algorithms, from both the theoretical and the
practical point of view.

The first step towards a solution to our variation identification
problem is to have a suitable representation for the set of RNA-seq
reads. A natural way to represent NGS data is to consider a directed
graph (overlap graph), where each vertex corresponds to a read and the
``compatibility'' information, i.e.  suffix-prefix overlaps, is stored
in the arcs. This representation, however, does not scale well to
large volumes of NGS data, since to compute the arcs, in principle, a
quadratic number of read comparisons is necessary. A more suitable
one, proposed by \cite{Pevzner01}, is to use de Bruijn graphs. This is
the representation used here, and also in the majority of the recent
NGS assemblers. A more in-depth comparison between several possible
representations of NGS data, along with the necessary biological and
mathematical background to follow this thesis, is given in
Chapter~\ref{chap:back}.

We then show that variations in RNA-seq correspond to certain
subgraphs in the de Bruijn graph built from the set of RNA-seq
reads. More specifically, a variation creates a \emph{bubble}, that is
a pair of vertex-disjoint paths, in a de Bruijn graph. Hence, the
problem of finding variations can be reduced to the problem of listing
bubbles satisfying certain properties in the de Bruijn graph built
from the set of reads. In Chapter~\ref{chap:kissplice}, based on our
paper \cite{Sacomoto12}, we describe a method, called \ks,
implementing this strategy, along with a complete description of the
relationship between variations and bubbles. We then show that, for
the specific case of alternative splicing identification, our method
is more sensitive than general purpose transcriptome assemblers and,
although using relatively simple algorithms to build the graph and
list the bubbles, uses roughly the same amount of memory and time.

The main time bottleneck in the \ks algorithm is the bubble
enumeration step. Thus, in an effort to make our method as scalable as
possible, in the first part of Chapter~\ref{chap:unweighted}, which is
based on our paper \cite{Birmele12}, we modified Johnson's cycle
listing algorithm (\cite{Johnson75}) to enumerate bubbles in general
directed graphs, while maintaining the same time complexity. For a
directed graph with $n$ vertices and $m$ arcs containing $\eta$
bubbles, the method we propose lists all bubbles with a given source
in $O((n + m)(\eta + 1))$ total time and $O(m+n)$ delay (time elapsed
between the output of two consecutive solutions).  For the general
problem of listing bubbles, this algorithm is exponentially faster
than the listing algorithm of \ks. However, in the particular case of
listing bubbles corresponding to alternative splicing events, this
algorithm is outperformed by \ks. This issue is addressed in the first
part of Chapter~\ref{chap:weighted}, which is based on our paper
\cite{Sacomoto13}. Using a different enumeration technique, we
propose an algorithm to list bubbles with path length constraints in
weighted directed graphs. The method we propose lists all bubbles with
a given source in $O(n(m + n \log n))$ delay. Moreover, we
experimentally show that this algorithm is several orders of magnitude
faster than the listing algorithm of \ks to identify bubbles
corresponding to alternative splicing events.

The main memory bottleneck in \ks is the construction and
representation of the de Bruijn graph. Thus, again with the goal to
make our method as scalable as possible, in Chapter~\ref{chap:dbg},
which is based on our paper \cite{Salikhov13}, we propose a new
compact way to build and represent a de Bruijn graph improving over
the state of the art \cite{Chikhi12}. We show both theoretically and
experimentally that our approach uses 30\% to 40\% less memory than
such state of the art, with an insignificant impact on the
construction time. Our de Bruijn graph representation is general, in
other words it is not restricted to the variation finding or RNA-seq
context, and can be used as part of any algorithm that represents NGS
data with de Bruijn graphs.

The central result of this thesis can be summarized as \ks version
2.0, the current version of \ks that includes both improvements, time
and memory, discussed above. This version is able to treat
medium-sized datasets (up to 100M Ilumina reads) in a desktop computer
(8GB of RAM) and large datasets (already tested with 1G Ilumina reads)
in a standard high-memory server (100GB of RAM). This is however not
the only result of this thesis. As another example of the role of
serendipity in scientific research, the techniques we developed while
designing the two bubble listing algorithms turned out to be useful in
the context of two classical enumeration problems: listing simple
cycles in undirected graphs and listing the $K$-shortest paths.

The problem of efficiently listing all the simple cycles in a graph
has been studied since the early 70s. For a graph with $n$ vertices
and $m$ edges, containing $\eta$ cycles, the most efficient solution
was presented by \cite{Johnson75} and takes \mbox{$O((\eta+1)(m+n))$}
time.  This solution is not optimal for undirected
graphs. Nevertheless, no theoretical improvements have been proposed
in the past decades. In the second part of
Chapter~\ref{chap:unweighted}, which is based on our paper
\cite{Birmele13}, we present the first optimal solution to list all
the simple cycles in an undirected graph $G$. Specifically, let
$\setofcycles(G)$ denote the set of all these cycles
($|\setofcycles(G)| = \eta$).  For a cycle $c \in \setofcycles(G)$,
let $|c|$ denote the number of edges in~$c$. Our algorithm requires
$O(m + \sum_{c \in \setofcycles(G)}{|c|})$ time and is asymptotically
optimal: $\Omega(m)$ time is necessarily required to read $G$ as
input, and $\Omega(\sum_{c \in \setofcycles(G)}{|c|})$ time is
required to list the output. We also present the first optimal
solution to list all the simple paths from $s$ to $t$ (shortly,
$st$-paths) in an undirected graph $G$. Let $\setofpaths_{st}(G)$
denote the set of $st$-paths in $G$ and, for an $st$-path $\pi \in
\setofpaths_{st}(G)$, let $|\pi|$ be the number of edges in $\pi$.
Our algorithm lists all the $st$-paths in~$G$ optimally in $O(m +
\sum_{\pi \in \setofpaths_{st}(G)}{|\pi|})$ time.

The $K$-shortest paths problem, that is returning the first $K$
distinct shortest simple $st$-paths, has also been studied for more
than 30 years, since the early 60s. For a weighted graph with $n$
vertices and $m$ edges, the most efficient solution is an $O(K(mn +
n^2 \log n))$ time algorithm for directed graphs
(\cite{Yen71,Lawler72}), and an $O(K(m+n \log n))$ time algorithm for
undirected graphs (\cite{Katoh82}), both algorithms using $O(Kn + m)$
memory. In the second part of Chapter~\ref{chap:weighted}, which is
based on our paper
\cite{Grossi14} (in preparation), we consider a different
parameterization for this problem: instead of bounding the number of
$st$-paths, we bound the weight of the $st$-paths. In other words, we
consider the problem of listing $st$-paths with a weight bounded by
$\alpha$ in a weighted graph.  We present a general scheme to list
bounded length $st$-paths in weighted graphs that takes $O(nt(n,m)
\eta)$ time, where $t(n,m)$ is the time for a single source shortest
path computation and $\eta$ is the number of paths. This algorithm
uses memory linear in the size of the graphs, independent of the
number of paths output. For undirected non-negatively weighted
graphs, we also show an improved algorithm that lists all $st$-paths
with length bounded by $\alpha$ in $O( (m + t(n,m)) \eta)$ total
time. In particular, this is $O(m \eta)$ for unit weights and $O((m +
n \log n) \eta)$ for general non-negative weights. The time spent per
path by our algorithms in both directed and undirected graphs matches
the complexity of the best algorithms for the $K$-shortest path
problem, while only using memory linear in the size of the
graph. Moreover, we also show how to modify the general scheme to
output the paths in increasing order of their lengths, providing an
alternative solution to the $K$-shortest paths problem.

A summary of the thesis organization, and the relation between the
chapters, is given in the figure below. As suggested by the figure,
Chapters 3, 4 and 5 can be read more or less independently of each
other.

\bigskip
\bigskip

\begin{figure}[htbp]
  \includegraphics[width=\linewidth]{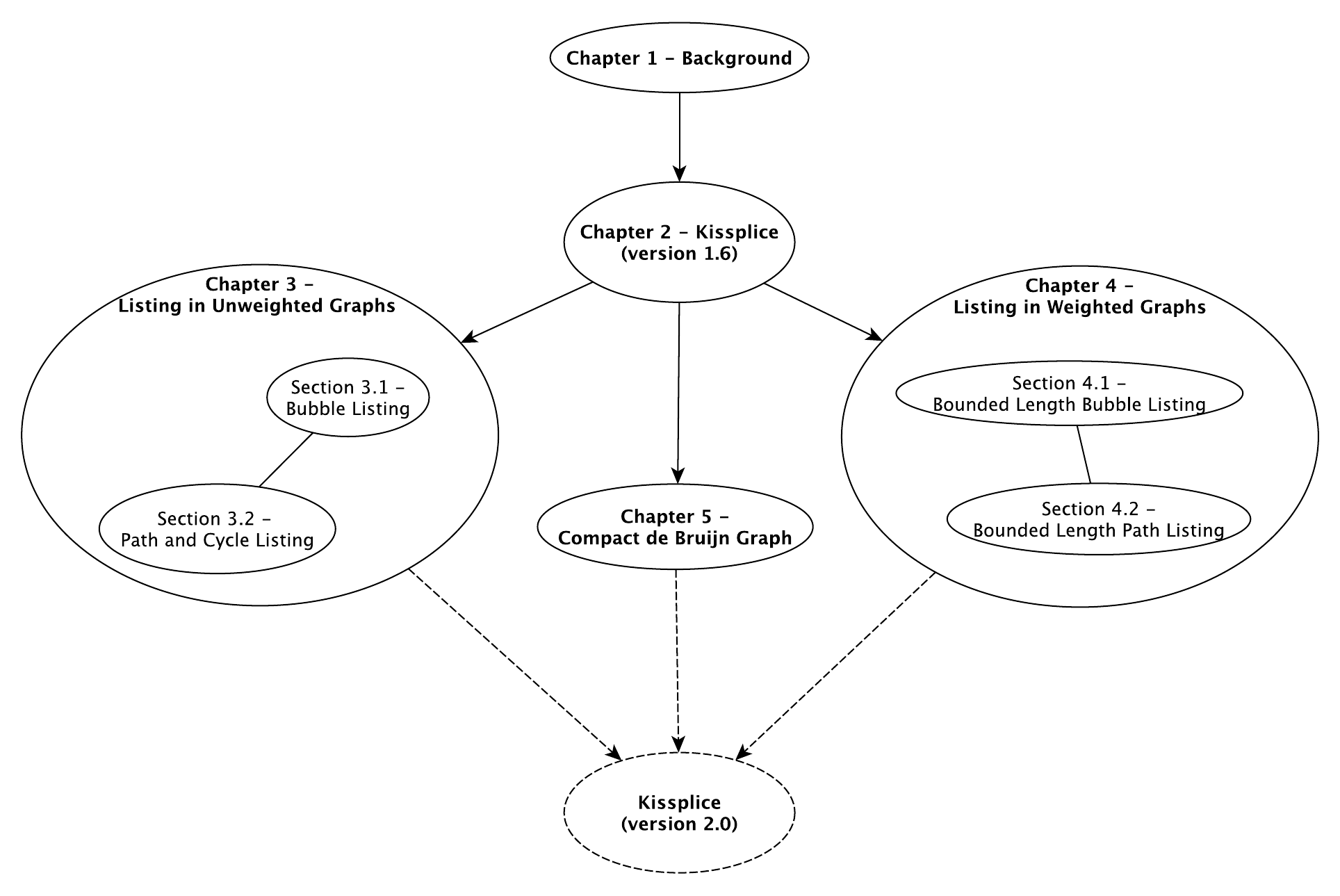}
  \caption*{Thesis organization and the corresponding versions of \ks.}
\end{figure}


\mainmatter
\chapter{Background}
\label{chap:back}
\minitoc
In this chapter, we cover the background and introduce the main
notations necessary to follow the rest of this thesis. Certainly it
would be infeasible to cover all the material with sufficient detail
to make a thesis self-contained; we therefore do not attempt to be
exhaustive or even complete. For the majority of the topics presented
in this chapter, we give only a brief introduction, whereas we spend
more time on a few others, which we consider to be important to the
thesis and less familiar to the reader.  Whenever possible, we give
the main intuition behind the concepts presented and their
inter-relationship. The chapter is divided in two sections:
\emph{biological concepts} (Section~\ref{sec:back:bio}) and
\emph{mathematical concepts} (Section~\ref{sec:back:math}).

The purpose of Section~\ref{sec:back:bio} is to present three main
topics. The first is the \emph{central dogma of molecular biology},
which intuitively gives a roadmap for the information flow inside the
cell, from the ``blueprints'' (DNA) to the ``workers'' (proteins). The
next topic is \emph{alternative splicing}, one of the steps of the
information flow in ``complex organisms'' (eukaryotes), where the
``message'' can be \emph{modulated}, that is from the same input
message, several distinct output messages (mRNAs) can be
produced. Finally, the third topic is \emph{next-generation sequencing
(NGS)} with emphasis on \emph{RNA-seq}, a technology that allows to
read, although in a fragmented way, the mRNAs inside the cell. A
central question in this thesis is to find, from RNA-seq data, all the
alternative ways in which the mRNAs are \emph{spliced} (i.e. to find
alternative splicing events).

The main goal of Section~\ref{sec:back:math} is to introduce, along
with the main definitions, notations and some properties of the
mathematical structures used in this thesis, two seemingly unrelated
topics: modeling and assembling NGS data, and enumeration
algorithms. The relationship, although not immediately apparent,
stands at the very core of this thesis. By modeling NGS data, more
specifically RNA-seq, using a special kind of directed graphs, namely
de Bruijn graphs, the problem of identifying biologically interesting
structures (e.g. alternative splicing events) can be seen as an
enumeration problem of special structures in those graphs. This
relationship is further detailed and explored in
Chapter~\ref{chap:kissplice}.

It should be noted that several standard computer science topics used
throughout this thesis are not covered in
Section~\ref{sec:back:math}. For the analysis of algorithms, the
asymptotic big $O$ notation and basic data structures (e.g. stacks,
queues and heaps), we refer to \cite{Cormen01}. For the computational
complexity theory and a compendium of NP-hard problems, we refer to
\cite{Ausiello99} and \cite{Garey79}. For basic graph algorithms,
e.g. depth-first search (DFS), breadth-first search (BFS) and
Dijkstra's algorithm, we refer to \cite{Cormen01} and
\cite{Sedgewick01}. Finally, for further information about graphs and
digraphs, we refer to \cite{Diestel05} and \cite{Bang-Jensen08},
respectively.


\bigskip
\bigskip


\section{Biological Concepts} \label{sec:back:bio}

\subsection{DNA, RNA and Protein} \label{sec:back:central_dogma}

\emph{Deoxyribonucleic acid} (\emph{DNA}) is a long biopolymer that
carries the hereditary information in almost all known organisms.
They contain the ``blueprint'' for a complete organism. In other
words, it is a long molecule composed by a huge number of repeated
subunits, called \emph{nucleotides}, chemically bonded together. There
are four different nucleotides, namely: \emph{adenine} (A),
\emph{cytosine} (C), \emph{thymine} (T) and \emph{guanine} (G).  A DNA
molecule has a \emph{double-stranded} structure, where each molecule
is composed by two strands, i.e. two chains of nucleotides, of the
same length, running in opposite directions and respecting a fixed
pairing rule between the corresponding nucleotides in each strand. The
rules for nucleotide pairing, or \emph{hybridization}, are: A with T
(and vice-versa), and C with G (and vice-versa). We say that A
(resp. C) is \emph{complementary} to T (resp. G). That way, each
strand is the \emph{reverse complement} of the other, i.e. the
sequence of nucleotides in one strand is equal to the reverse sequence
of the other strand after substituting each nucleotide by its
complementary. A \emph{genome} is the set of all the genetic material,
in the form of DNA (except for some viruses), of a given organism.

\emph{Ribonucleic acid} (\emph{RNA}) is, as DNA, a long
biopolymer composed by adenine, cytosine, guanine and \emph{uracil}
(U) instead of thymine. However, unlike DNA, an RNA is not
double-stranded: it contains only a single chain of nucleotides, and
the sequence is usually much shorter. Despite being single-stranded,
the nucleotide hybridization rules still hold for RNA molecules, with
U substituting T. The hibridization can occur inside the same molecule
of RNA, with complementary stretches of RNA folding and binding
together; between two molecules of RNA; or, under certain
circumstances, between an RNA and partially single-stranded DNA
molecules. In terms of function, except for some viruses, the main
role of RNA is not to carry hereditary information, but to transport
genetic information from the DNA to other parts of the cell. An RNA is
the main ``messenger'' within the cell. The \emph{transcriptome} is
the set of all RNAs present in the organism. Unlike the genome, it is
not the same in all cells at all times.

\emph{Proteins} are another kind of biopolymers where, unlike
nucleic acids (DNA and RNA), the subunits are called \emph{amino
  acids} and, instead of 4, there are 20 different types. A protein
  may contain several linear chains of amino acids, called
\emph{polypeptides}, but there is no strict pairing rules for amino
acids like for nucleic acids. Proteins perform a large number of
functions within an organism, including: catalyzing certain reactions,
replicating DNA, and transporting molecules from one location to
another. The proteins are the main ``workers'' of the cell.

The \emph{central dogma of molecular biology} states that the ``coded
genetic information hard-wired into DNA is transcribed into individual
transportable cassettes, composed of messenger RNA (mRNA); each
mRNA\footnote{This is the classical view of molecular biology, which
  we present for simplicity. However, it is known that not every
  transcribed RNA is later translated into protein
  (\cite{Encode07,Bakel10}). These molecules are known as non-coding
  RNA (ncRNA). } cassette contains the program for synthesis of a
particular protein (or small number of proteins)''
(\cite{Lodish00}). In other words, the ``blueprint'' for each protein
encoded in the DNA sequence is transcribed to RNA which is then
translated to proteins. Note that, for the passage from DNA to RNA, we
use the term transcription whereas from RNA to protein we use
translation. That is because RNA and DNA use the same ``language'',
they are encoded using the same set of letters (nucleotides), while
proteins use a different set, the amino acids, so when passing from an
RNA to a protein, there is a translation from one language
(nucleotides) to another (amino acids). A diagram of the central dogma
is shown in Fig.~\ref{fig:back:central_dogma}(a).

\begin{figure}[htb]
  \center
  \subfloat[]{\includegraphics[width=5cm]{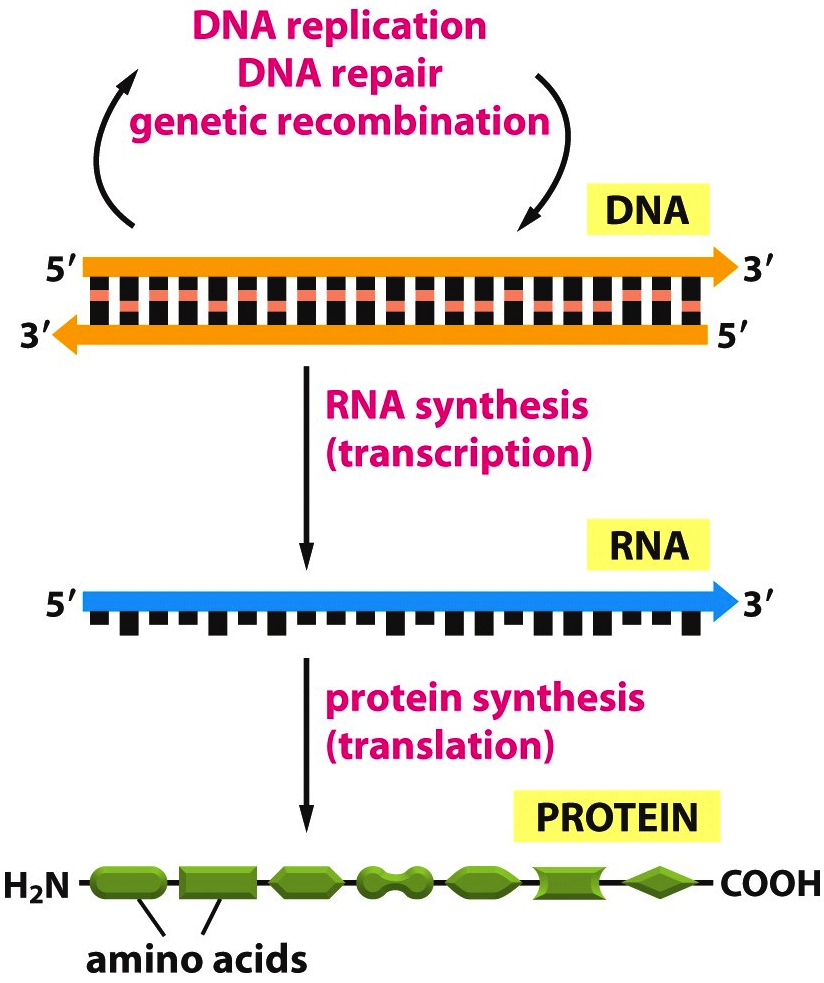}} \hspace{1cm} 
  \subfloat[]{\includegraphics[width=7.5cm]{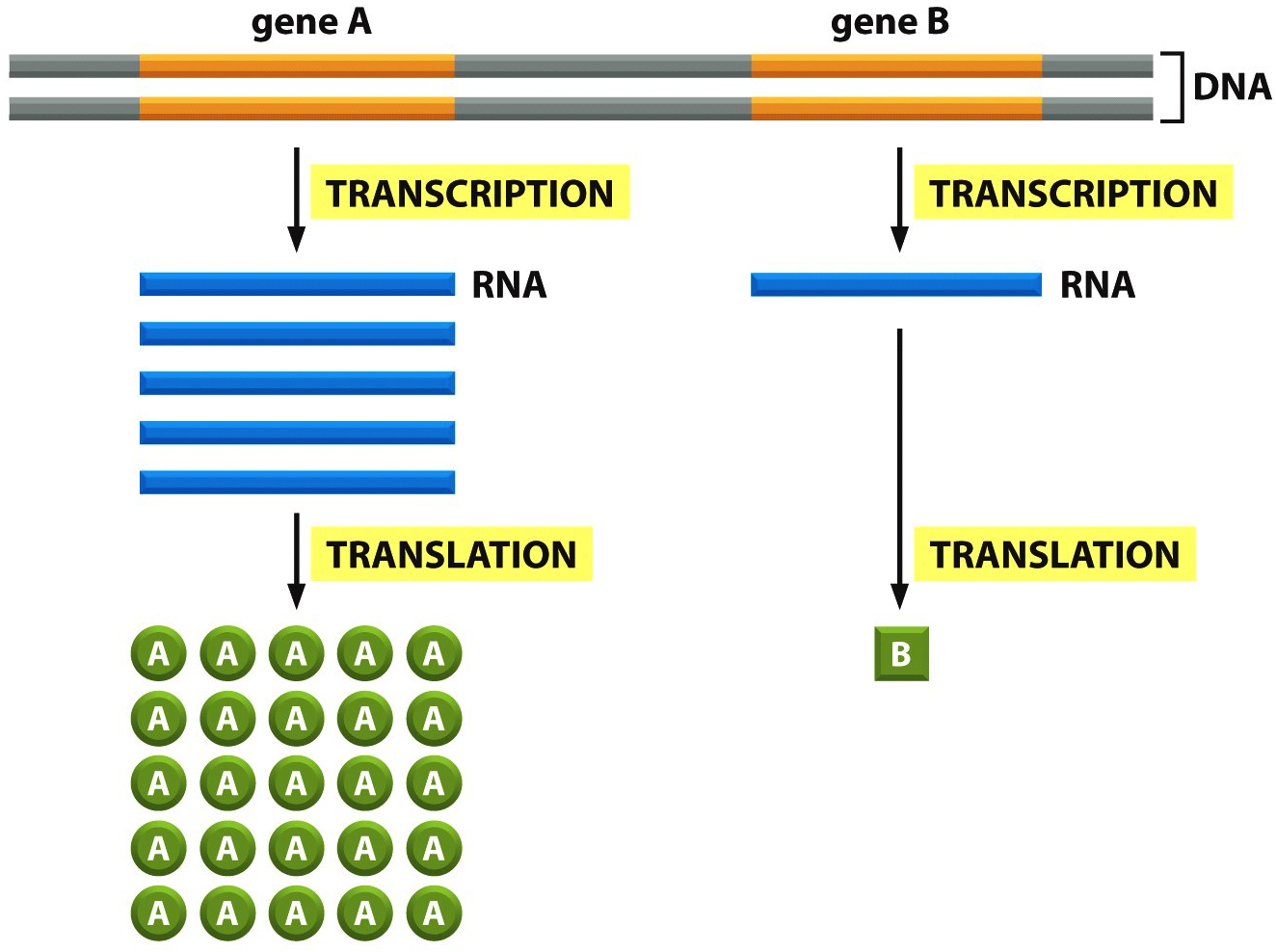}}
  \caption{The central dogma of molecular biology and gene
    expression. (a) The central dogma states that genetic
    information flows from DNA to RNA to proteins. In the diagram,
    DNA is transcribed to RNA, which is then translated to
    proteins. (b) The genes are transcribed and translated at different
    rates to respond to different demands of the proteins corresponding
    to each gene. Two genes are showed in the left, a highly
    expressed gene and, in the right, a lowly expressed
    gene. Reproduced from \cite{Alberts03}.}
  \label{fig:back:central_dogma}
\end{figure}

Classically, a \emph{gene} is a region of a DNA molecule that encodes
for a protein (actually, a polypeptide chain). As stated before,
translation and transcription are two main ways in which the cell
reads out, or \emph{expresses}, their genetic information, or
genes. In a simplified view of the genome, we can assume that each
gene is present in only one copy, and in this case, to respond to
different demands of each protein, the cell has to translate and
transcribe each gene with different efficiencies. A gene that is
transcribed with higher rates is called \emph{highly expressed}; on
the other hand, a gene transcribed at lower rates is
called \emph{lowly expressed}. A diagram of the variable ranges of
gene expression is shown in Fig.~\ref{fig:back:central_dogma}(b).

\subsection{Alternative Splicing} \label{sec:back:AS}

In eukaryotic organisms (that comprises all living organisms except
for bacteria and archaea); the cells have separate compartments for
the nucleus and other structures (organelles). The genetic material
(DNA) is stored in the nucleus which is separated from the cytoplasm
by a membrane. In these organisms, transcription is done inside the
nucleus, the RNA is then processed into an mRNA and exported to the
cytoplasm to be translated into proteins. Another relevant
particularity of eukaryotes is that their genes contain two different
types of regions: \emph{exons} and \emph{introns}. The gene is then
composed by alternating sequences of exons and introns. In the RNA
processing step to produce an mRNA, the majority of the introns are
removed, \emph{spliced}, and a long chain of A's, the \emph{poly-A
  tail}, is added to one of the ends. See
Fig.~\ref{fig:back:alternative_splicing}(a) for a diagram of gene
expression in eukaryotes.

\begin{figure}[htb]
  \vspace{0.5cm}
  \centering
  \subfloat[]{\includegraphics[width=7cm]{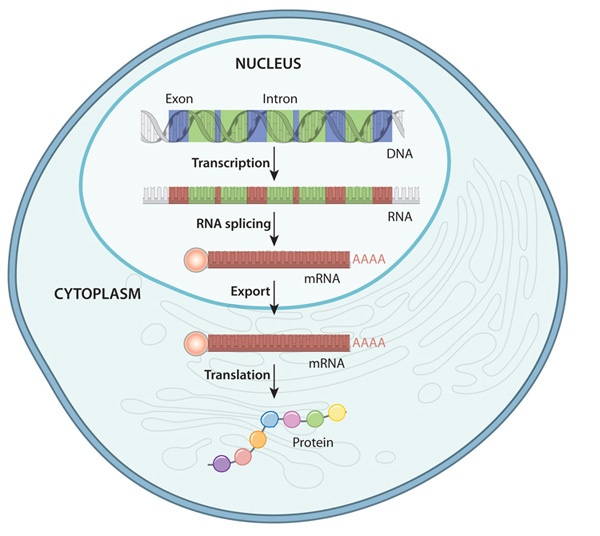}}  
  \subfloat[]{\includegraphics[width=9cm]{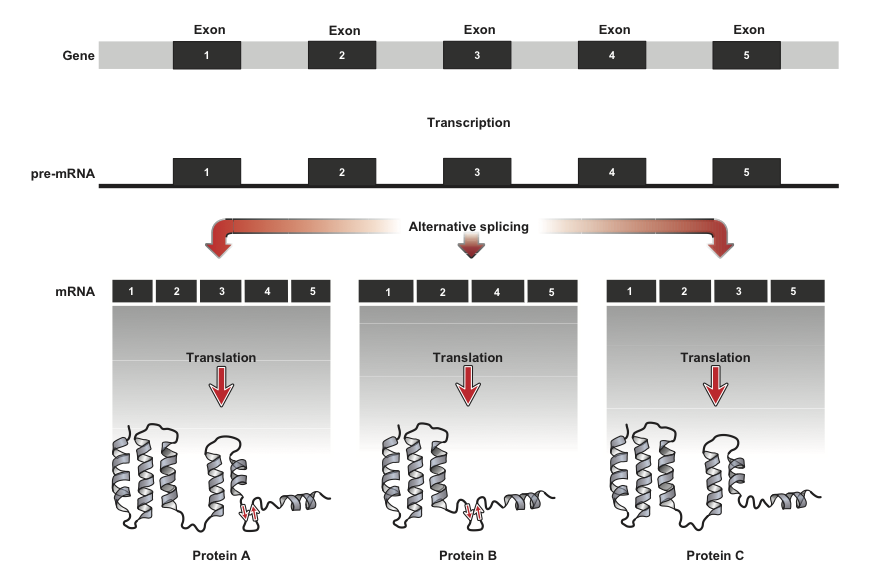}}
  \caption{From DNA to RNA to protein in eukaryotes. (a) An
    eukaryotic gene transcription and translation is shown. The DNA
    inside the nucleus is transcribed into RNA, which is then processed
    into mRNA still inside the nucleus. Finally, the mature mRNA is
    transported to the cytoplasm to be translated into a polypeptide
    chain. (b) An eukaryotic gene with the introns and exons
    highlighted is shown. The same gene can produce different proteins
    through alternative splicing. In the example, three possible mRNAs
    are shown: one containing all the exons and two others with one of
    the exons skipped. Reproduced from \cite{Schulz10}.}
  \label{fig:back:alternative_splicing}
\end{figure}

The discussion about the central dogma and gene expression may lead us
to think that there is a one-to-one correspondence between genes and
mRNAs (and proteins). In general, that is not the case, one gene can
give rise to several distinct mRNAs (and proteins). Actually, it is
estimated that 95\% of all human genes give rise to more than one mRNA
(\cite{Pan08}). There are three main mechanism responsible for this
variability: \emph{alternative splicing}, \emph{alternative promoters}
and \emph{alternative polyadenilation}. Alternative promoter and
polyadenilation sites change the start and the end of the RNA
transcription, respectively. Alternative splicing, our main interest
here, is a post-transcription modification of the transcribed RNA
(pre-mRNA).  A diagram is shown in
Fig.~\ref{fig:back:alternative_splicing}(b). Different mRNAs
originating from the same gene are called \emph{alternative isoforms}
or simply \emph{isoforms}. An exon is \emph{constitutive} if it is
present in all isoforms, and \emph{alternative} otherwise.

Alternative splicing takes place when the transcribed RNA (pre-mRNA)
is spliced to produce the mature mRNA, instead of only remove all
introns, some of exons may be skipped, included, shortened, or
extended, and some introns may be retained; in each case a single
pre-mRNA produces different mRNA variants (isoforms). An overview of
the five types of alternative splicing events are shown in
Fig.~\ref{fig:back:as_types}. A \emph{splice site} is a sequence
marking the border of a spliced region, usually the beginning or end of
an intron, but it can also occur inside an exon in the case of
alternative 3' or 5' splice sites; in this case, when performing
alternative splicing, the exon is shortened or extended,
respectively. 

\begin{figure}[htbp]
  \centering
  \includegraphics[]{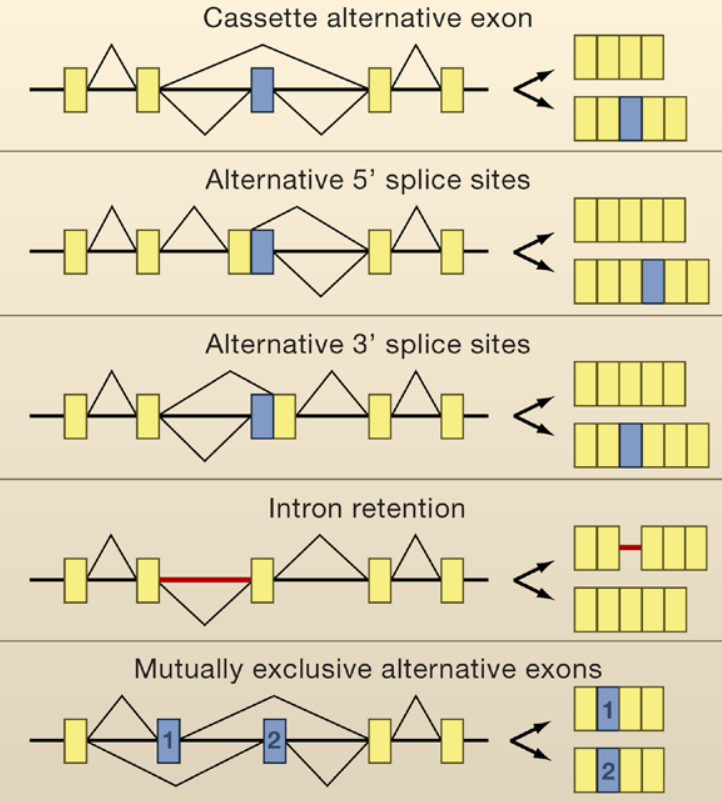}
  \caption{The five types of alternative splicing events are shown. In
    the left hand side, alternating sequences of exons (rectangles)
    and introns (straight lines) are represented; the zig-zag lines
    represent the removal (splicing) of a region. In the right hand
    side, in each case we have the two alternative isoforms after the
    splicing indicated by the zig-zag lines. Constitutive exons are
    shown in yellow, alternative exons in blue and retained introns in
    red. Reproduced and modified from \cite{Blencowe06}.}
  \label{fig:back:as_types}
\end{figure}

\subsection{Next Generation Sequencing (NGS)} \label{subsec:ngs}
The \emph{shotgun sequencing} process was proposed by \cite{Staden79}
to overcome the limitations of the semi-automated chain termination
sequencing method introduced by \cite{Sanger77}. The main difficult of
Sanger's method when applied to whole genome sequencing was the length
of the DNA strands that could be sequenced. Typically, only DNA
strands with at most 1000 base pairs could be \emph{read}, while even
the smallest eukaryotic genome is several orders of magnitude
larger. In order to overcome this, Staden's key idea was to randomly
shear the whole genome of an organism into small fragments, and to
independently sequence each fragment using Sanger's method. The
resulting DNA \emph{reads} would then be combined together, \emph{in
silico}, to reconstruct the original genome.

In the past few years, several sequencing technologies have been
developed to replace Sanger sequencing in the shotgun sequencing
context. These new methods, although producing shorter reads, are
fully-automated and massively parallel and therefore can sequence a
huge number of fragments in a same \emph{run}, resulting in a huge
number of reads in comparable time and for a fraction of the cost of a
Sanger sequencing. These high-throughput approaches are collectively
known as \emph{next generation sequencing (NGS)} methods.  These
technologies have been released as commercial products by several
companies, e.g., the Solexa Genome Analyzer (Illumina, San Diego), the
SOLiD platform (Applied Biosystems; USA) and 454 Genome Sequencers
(Roche Applied Science; Basel). Although the specific details vary
from one method to another, they can be seen as implementations of the
cyclic-array sequencing (\cite{Shendure08}), which can be summarized
as ``the sequencing of a dense array of DNA features by iterative
cycles of enzymatic manipulation and imaging-based data collection''
(\cite{Mitra99}). See Fig.~\ref{fig:back:ngs} for a comparison between
Sanger and cyclic-array methods for the shotgun sequencing process.

\begin{figure}[htbp]
  \centering
  \includegraphics{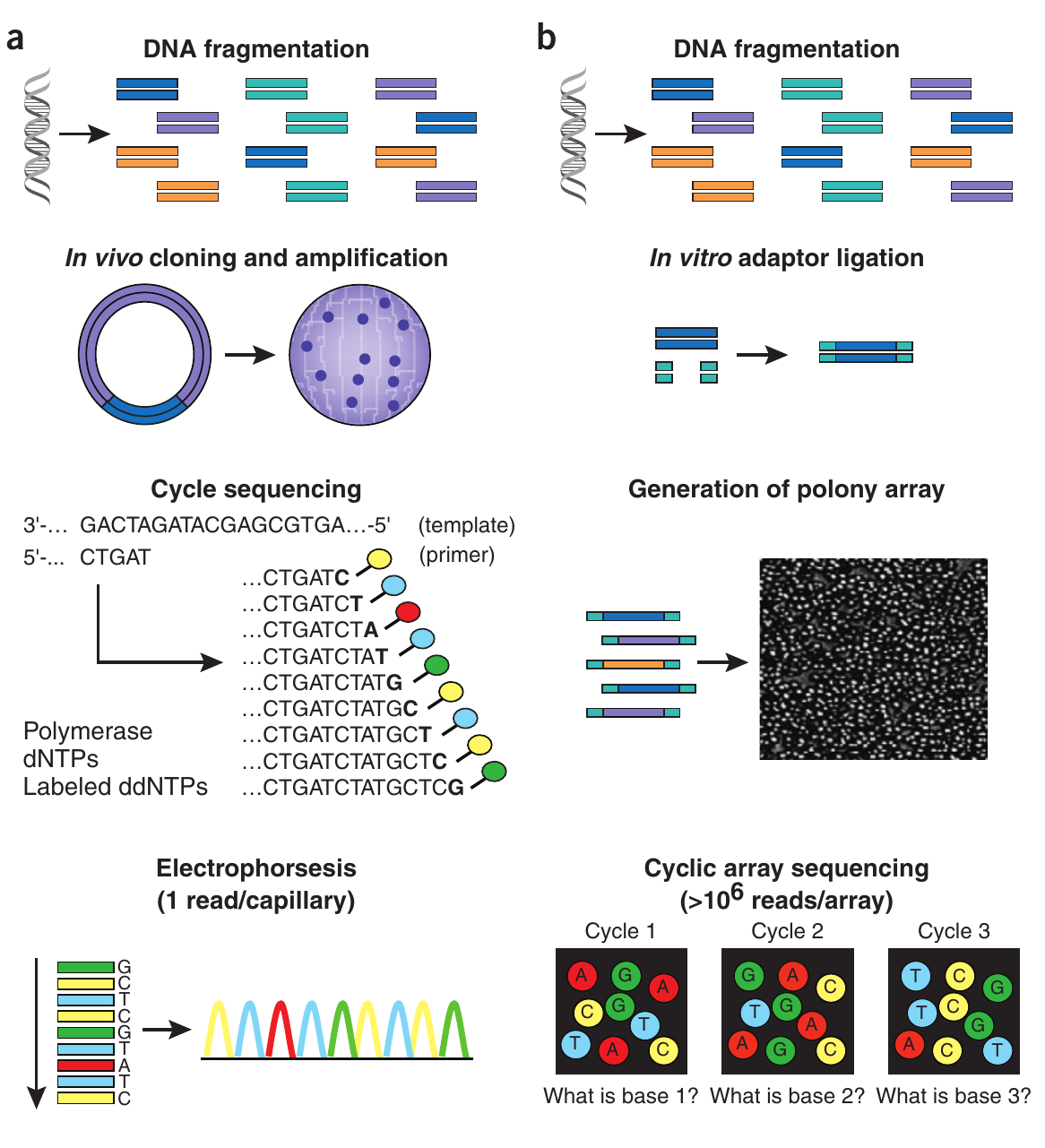}
  \caption{Workflow of shotgun Sanger sequencing versus
    next-generation sequencing. (a) In the high-throughput shotgun
    Sanger sequencing, DNA is first fragmented and subsequently
    integrated into a plasmid vector (a circular bacterial DNA that
    can replicate) that is later inserted into \emph{Escherichia coli}
    to be amplified (copied several times). A single bacterial colony
    is selected for each sequencing reaction and the DNA is
    isolated. Each cycle sequencing reaction creates a ladder of
    dye-labeled products, which are subjected to electrophoretic
    separation in one run of a sequencing instrument. A detector for
    fluorescently labeled fragments of discrete sizes in the
    four-channel emission spectrum facilitates the sequencing
    trace. (b) In the next-generation shotgun sequencing, common
    adaptors are ligated to fragmented genomic DNA. The DNA is treated
    to create millions of immobilized PCR (polymerase chain reaction)
    colonies, called polonies, each containing copies of a single
    shotgun library fragment. In cyclic reactions, sequencing and
    detection of fluorescence labels determine a contiguous
    sequencing read for each polony.  Reproduced
    from \cite{Shendure08}.}  \label{fig:back:ngs}
\end{figure}

The DNA fragmentation process in the shotgun sequencing approach can
be seen as a random sampling process, where each fragment comes from a
random position in the genome. The reads are then obtained from both
ends of each fragment, and usually do not cover the full DNA
fragment. Assuming an uniform sampling, it can be theoretically shown
(\cite{Lander88}) that in order for all bases of the genome to be
included in at least one read with high probability, a given amount of
over-sampling is needed. In other words, for the whole genome to be
sampled at least once, the average number of reads including a given
genomic position, that is the \emph {coverage}, should be higher than
one.

\subsubsection{RNA-seq}

\emph{RNA-seq} or \emph{transcriptome sequencing} is the process of
sequencing transcribed RNA using NGS technologies. Although the term
RNA-seq is usually applied to both mRNA and non-coding RNA (e.g. micro
RNAs) sequencing (\cite{Wang09}), in this work we use it exclusively
for mRNA sequencing. The basic RNA-seq protocol is very similar to
genomic NGS, differing only in the first two phases. The first step is
the extraction of mRNA from the cell, which is possible due to a
distinctive structural property of mRNAs, namely its poly-A
tail\footnote{Actually, there are mRNAs \emph{without} the poly-A tail
  (\cite{Yang11,Encode12}). They are not selected, and thus not
  sequenced, by the standard RNA-seq protocol.}. The second step is
the conversion to complementary DNA (cDNA), or reverse
transcription. Then, the cDNA library is sequenced using the same NGS
methods as for whole genome sequencing. The second step implies that
the result of an RNA-seq is usually not strand specific, i.e. the
reads obtained are a mixture of both the original strand of the mRNA
and its reverse complement. There are alternative RNA-seq protocols
(\cite{Levin10}) where the strand information is not lost, however in
this work we only consider the non strand-specific RNA-seq.

It is important to understand what new information can RNA-seq provide
with regard to genomic sequencing. In addition, of course, to the
information of which region is transcribed in the genome. As stated in
Section~\ref{sec:back:AS}, there is no one-to-one correspondence
between genes and proteins, and in particular mRNAs, which implies
that using only the genomic sequence, it is not possible to infer
which set of mRNAs is expressed. In addition, as stated in
Section~\ref{sec:back:AS}, the expression level may vary from gene to
gene, and this is reflected in the number of mRNA molecules
transcribed for each gene. In that way, the sampling process in the
fragmentation step of the NGS protocol is now a biased sampling
towards the more transcribed mRNAs, i.e. the coverage of the mRNA is a
proxy for its expression level. Therefore, the RNA-seq data brings at
least two new dimensions in comparison to NGS genomic data: the
variability of the mRNAs of a same gene and the expression level of
each mRNA. Since 2008, several works
(\cite{Cloonan08,Cloonan08b,Lister08,Nagalakshmi08,Mortazavi08,Sultan08,Trapnell10,Pickrell10,Peng12})
have used RNA-seq data to shed a new light into the dynamics of gene
expression in eukaryotic cells; see \cite{Cloonan08} and \cite{Wang09}
for comprehensive reviews.


\section{Mathematical Concepts} \label{sec:back:math}

\subsection{Sets, Sequences and Strings}
Given a set $X = \{x_1, \ldots, x_n\}$, the cardinality of $X$ is
denoted by $|X|$. The power set $2^X$ is the set of all subsets of
$X$, including the empty set. A \emph{sequence} $S$ is an ordered
multi-set and is denoted by $(s_1, \ldots, s_n)$. A \emph{subsequence}
of $S$ is a sequence obtained from $S$ by removing some elements,
without changing the order of the others. A \emph{prefix}
(\emph{suffix}) is the subsequence of $S$ obtained after removing
$(s_i, \ldots, s_n)$ ($(s_1, \ldots, s_i)$), where $1 \leq i \leq
n$. The \emph{length} of the sequence, denoted by $|S|$, is the
cardinality of the multi-set. The \emph{concatenation} of $S$ with an
element $x$ is the sequence $(s_1, \ldots , s_n, s_{n+1})$ with
$s_{n+1} = x$ and is denoted by $Sx$. Analogously, the concatenation
of two sequences $S_1,S_2$ is denoted by $S_1S_2$.
 
A \emph{string}\footnote{Although not strictly correct, we may use
  sequence when referring to strings.} is a sequence where each
  element belongs to a set $\Sigma$, the \emph{alphabet}. The prefix,
  suffix, concatenation and notation for the length are defined the
  same way as for sequences. The set of all strings over $\Sigma$ is
  denoted by $\Sigma^*$. An element or \emph{letter} of a string
  $w \in
\Sigma^*$ at the position $i$, where $1 \leq i \leq |w|$, is denoted
by $w[i]$. A \emph{substring} of $w$ is a contiguous subsequence of $w$
denoted by $w[i,j]$, where $1 \leq i \leq j \leq |w|$. A
\emph{$k$-mer} is a substring of length $k$. Given two strings $x,y
\in \Sigma^*$, the \emph{suffix-prefix overlap} or simply
\emph{overlap}, is the longest suffix of $x$ that is also a prefix of
$y$. The \emph{edit distance} for $x,y \in \Sigma^*$ is the minimum of
number edit operations -- substitutions, deletions or insertions --
necessary to transform $x$ in $y$ (or vice-versa, it is
symmetrical). The edit distance for $x,y \in \Sigma^*$ can be computed
in $O(|x||y|)$ time using dynamic programming (\cite{Cormen01}).

\subsection{Graphs} \label{sec:back:graphs}
A \emph{directed graph} $G$ is a pair of sets $(V,E)$ such that $E
\subseteq V^2$ is a set of ordered pairs. An element $v \in V$ is
called a \emph{vertex} of $G$, while an ordered pair $(u,v) \in E$ is
called an \emph{arc} of $G$. Given an arc $e = (u,v) \in E$, the
\emph{head} of $e$ is vertex $u$ and the \emph{tail} is $v$.  An
\emph{undirected graph} $G$ is a pair of sets $(V,E)$ such that $E
\subseteq V^2$ is a set of unordered pairs, i.e. $(u,v) = (v,u)$. An
unordered pair $(u,v) \in E$ is called an \emph{edge}. Given a
directed graph $G = (V,E)$, the \emph{underlying undirected graph} is
the undirected graph $G'$ obtained from $G$ disregarding the arc
directions. Whenever it is clear from the context, or we are referring
to both, we omit terms directed or undirected, saying simply
``graph''. Finally, the graphs considered here, unless otherwise
stated, are simple, that is, do not contain self-loops,
i.e. $(v,v) \notin E$, nor multiple edges, i.e. $E$ is a set not a
multi-set.

Given a directed graph $G = (V,E)$ and a vertex $v \in V$, the in and
out-neighborhoods of $v$ are denoted by $N^+(v)$ and $N^-(v)$,
respectively. For an undirected graph, $N^+(v) = N^-(v)$ and is
denoted by $N(v)$. The \emph{in} and
\emph{out-degree} of $v$ are $d^-(v) = |N^-(v)|$ and $d^+(v) = |N^-(v)|$,
respectively; for an undirected graph, the \emph{degree} of $v$ is
$d(v) = |N(v)|$.  A vertex $v \in V$ is a \emph{source} of $G$ if
$N^-(v) = \emptyset$; symmetrically, it is a \emph{sink} if $N^+(v)
= \emptyset$. The reverse graph of $G = (V,E)$, denoted by $G^R = (V,
E')$, is the directed graph obtained by reversing all arcs of $G$,
i.e. $E' = \{(u,v) | (v,u) \in E \}$. The line graph of a directed
graph $G$ is the directed graph $L(G)$ whose vertex set corresponds to
the arc set of $G$ and there is an arc directed from an arc $e_1$ to
an arc $e_2$ if in $G$, the head of $e_1$ meets the tail of $e_2$. See
Fig.~\ref{fig:line_graph} for an example.

\begin{figure}[htb]
  \center
  \subfloat[]{\includegraphics[width=6cm]{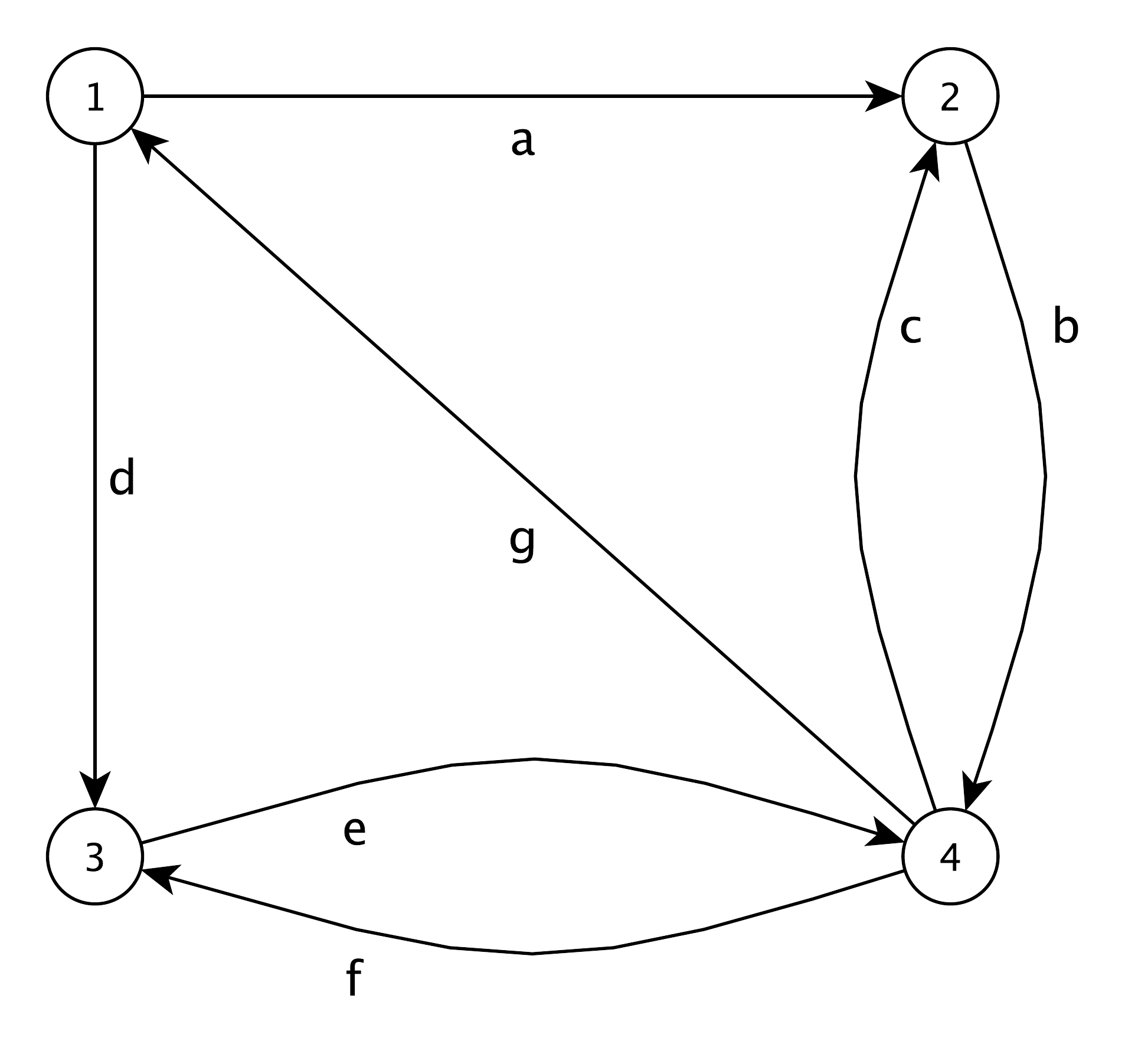}} \hspace{1cm} 
  \subfloat[]{\includegraphics[width=6cm]{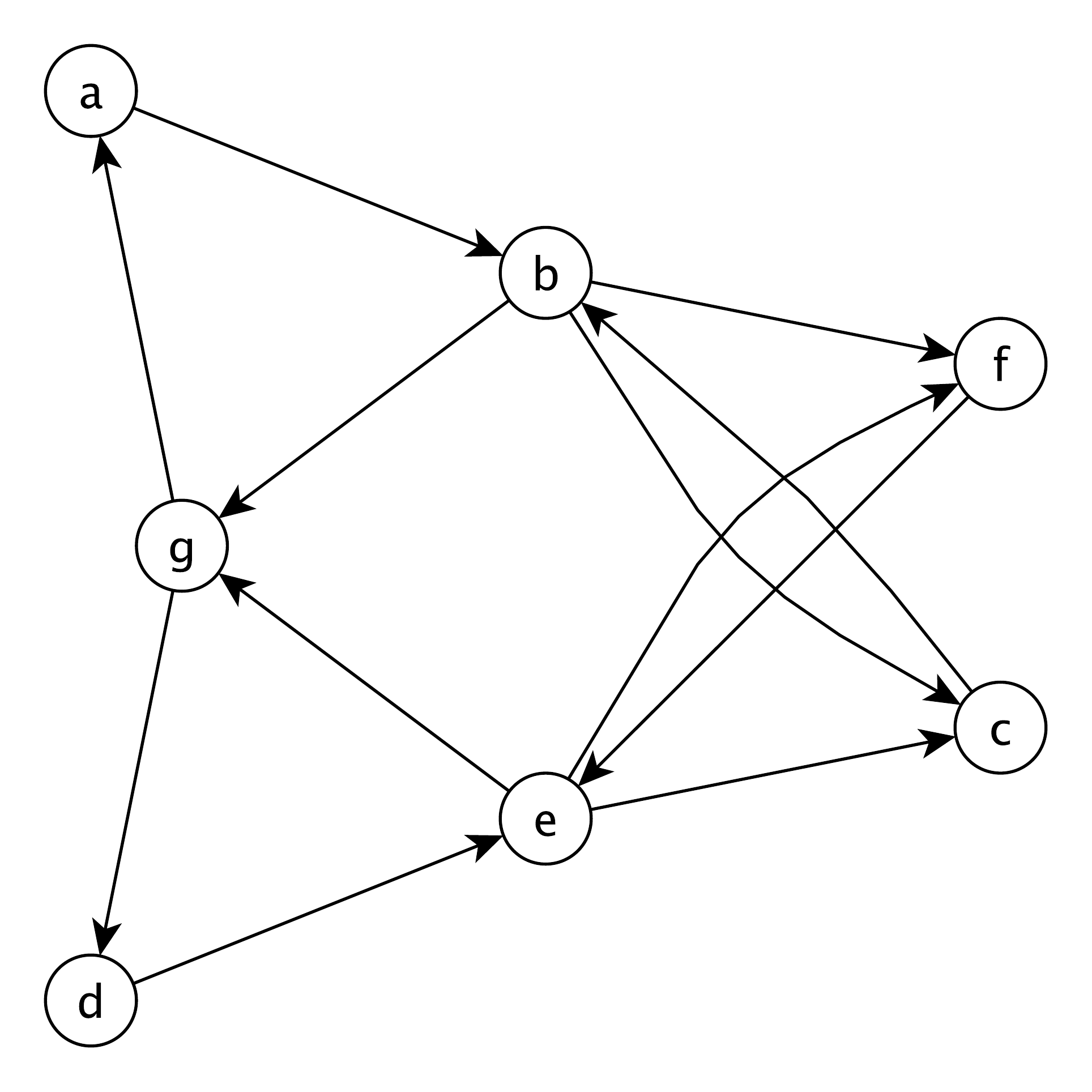}}
  \caption{A line graph example. (a) A directed graph $G$ with the
    vertices and arcs labeled. (b) The line graph $L(G)$, where the
    label of each vertex is the label of the corresponding arc in
    $G$. For example, arc $a$ in $G$ has arc $g$ entering $1$ and arc $b$
    leaving $2$, in $L(G)$ the vertex $a$ has $g$ as an in-neighbor
    and $b$ as an out-neighbor.}
  \label{fig:line_graph}
\end{figure}

A \emph{walk} in $G$ is a sequence of arcs or vertices $p = (v_1,
v_2) \ldots (v_{n-1}, v_n) = (v_1, v_2, \ldots, v_n)$, such that
$(v_{i-1}, v_{i}) \in E$ for $1 \leq i \leq n$. The first (last)
vertex of $p$ is called \emph{source} (\emph{target}).  A
(simple)\footnote{A path is, by definition, simple although we may say
``simple path'' to emphasize the fact that there are no duplicated
vertex.}  \emph{path} is a walk in which all vertices are distinct:
$v_i \neq v_j$ for all distinct $1 \leq i,j \leq n$. The path $p$ from
$s$ to $t$ is called \emph{$st$-path} and is denoted by $s \leadsto
t$, $p_{st}$ or $\pi_{st}$. A \emph{trail} is a walk in which all arcs
are distinct, i.e. $(v_{i-1},v_i) \neq (v_{j-1},v_j)$ for all distinct
$1 \leq i,j \leq n$. A subwalk of $p$ is a subsequence
$(v_{i-1},v_{i}) \ldots (v_{j-1}, v_j)$ of $p = (v_1,v_2) \ldots
(v_{n-1},v_n)$. It is not hard to prove that every walk $p$, such that
$v_1 \neq v_n$, contains a subwalk $p'$ with the same source and
target such that $p'$ is a path. A (simple)\footnote{A cycle is, by
definition, simple although we may say ``simple cycle'' to emphasize
the fact that there are no duplicated vertex.}  cycle $c = (v_1,v_2)
\ldots (v_{n-1},v_n)$ is a closed path, i.e. $v_1 = v_n$. A path or
cycle $p$ is \emph{Hamiltonian} if it includes all vertices of $G$. A
trail $p$ is \emph{Eulerian} if it includes all arcs of $G$.

A graph $H = (V_H, E_H)$ is a \emph{subgraph} of $G = (V,E)$ if $V_H
\subseteq V$ and $E_H \subseteq E$. The subgraph \emph{induced} by a
set of vertices $V' \subseteq V$ is the subgraph $G'=(V',E')$, where
$E'=\{ (u,v):\, (u,v)\in E,\, u,v\in V'\}$ which is denoted by
$G[V']$. The induced subgraph $G[ V\setminus \{u\}]$ for $u \in V$ is
denoted by $G - u$. Similarly for an edge $e \in E$, we adopt the
notation $G-e = (V,E \setminus \{e\})$, and for any $F\subseteq E$,
$G-F=(V,E \setminus F)$. Let $p$ be a walk of $G$; the induced
subgraph $G[V \setminus p]$ is denoted by $G-p$.

A \emph{weighted directed graph} $G = (V,E)$ is a directed graph with
weights $w : E \rightarrow Q$ associated to the arcs. The weight of a
walk $p = (v_1, v_2) \ldots (v_{n-1}, v_n)$ is the sum of the weights
of the arcs and is denoted by $w(p)$. The \emph{distance} from $s$ to
$t$, denoted by $d_G(s,t)$ (we drop the subscript when the graph is
clear from the context), is the weight of the shortest path from $s$
to $t$. In an unweighted graph $G$, the distance between two vertices
is equal to the distance in the corresponding weighted graph with all
the weights equal to one.

An undirected graph $G = (V,E)$ is \emph{connected} if, for any two
vertices $x,y \in V$, there exits a path from $x$ to $y$. A
\emph{connected component} is a maximal connected subgraph. Any
undirected graph can be uniquely partition into connected
components. The connected components of $G = (V,E)$ can be found using
any graph traversal algorithm; in particular, they can be computed in
$O(|V| + |E|)$ time with a depth-first search (DFS) or breadth-first
search (BFS) (\cite{Cormen01}). A directed graph $G = (V,E)$
\emph{strongly connected} if for any $x,y \in V$ there exists the
paths $x \leadsto y$ and $y \leadsto x$. It is \emph{weakly connected}
if its underlying undirected graph is connected. Of course, any
strongly connected graph is also weakly connected

\subsubsection{Trees}
A connected acyclic undirected graph $G$ is called a \emph{(unrooted)
  tree}. A \emph{rooted tree} $T$ is a tree with a special vertex $r$
called \emph{root}. The \emph{parent} of a vertex $v$ in $T$ is the
neighbor of $v$ closer to the root. Every vertex, except the root, has
a unique parent. The root has no parent. A \emph{child} of $v$ is a
vertex of which $v$ is the parent. Intuitively, a rooted tree is a
tree where the edges are directed away from the root. The set of all
children of $v$ is denoted by $N^+(v)$. A vertex $w$ is an
\emph{ancestor} of $v$ if it belongs to the path $v \leadsto
r$. Conversely, $w$ is a \emph{descendent} of $v$ if $v$ belongs to $w
\leadsto r$. The descendent or ancestor is \emph{proper} if it is
different from $v$. A \emph{subtree} of $T$ rooted at $v$, denoted by
$T_v$, is the subgraph of $T$ induced by all descendents of $v$, which
is also a tree, with root at $v$. A \emph{leaf} is a vertex without
any children. The \emph{depth} of a vertex is the length of its unique
path to the root. The \emph{height} of a vertex is the length of the
longest downward path to a leaf from that vertex.

\subsubsection{Biconnected Graphs}
An undirected graph $G = (V,E)$ is \emph{biconnected} if it is
connected and for any $x \in V$ the graph $G - x$ is still
connected. Generalizing the definition of a connected graph, an
undirected graph $G = (V,E)$ is \emph{$2$-connected} (or
\emph{$2$-vertex-connected}) if for any $x,y \in V$ there exist two
internally vertex-disjoint paths from $x$ to $y$. By Menger's theorem
(\cite{Diestel05}), the two definitions are equivalent, except when
$G$ is a single vertex\footnote{We are using the convention that the
  \emph{null} graph, i.e. the graph containing no vertices, is
  connected.} or a single edge; in those cases the graphs are
biconnected but not $2$-connected. Before giving a simple
characterization of the structure of $2$-connected graphs
(Lemma~\ref{lem:back:ear}), we need another definition. Given an
undirected graph $H$, a path $p$ is an \emph{$H$-path} or \emph{ear}
of $H$ if $p$ meets $H$ exactly at its endpoints, i.e. the only
vertices of $p$ in common with $H$ are its endpoints.

\begin{lemma}[\cite{Diestel05}] \label{lem:back:ear}
  A graph is $2$-connected if and only if it can be constructed from a
  cycle by successively adding $H$-paths to the graphs $H$ already
  constructed.
\end{lemma}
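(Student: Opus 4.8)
The plan is to prove both directions of this classical ear-decomposition characterization of $2$-connectivity.

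\medskip

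\noindent\textbf{Easy direction (sufficiency).} First I would show that any graph constructed by starting from a cycle and successively adding $H$-paths (ears) is $2$-connected. This goes by induction on the number of ears added. The base case is a cycle, which is plainly $2$-connected. For the inductive step, suppose $H$ is $2$-connected and $p$ is an $H$-path with endpoints $a,b \in V(H)$; let $H' = H \cup p$. I need to check that for every vertex $x \in V(H')$, the graph $H' - x$ is connected. If $x$ lies on the interior of $p$, then $H' - x$ still contains all of $H$ (which is connected) plus the two pieces of $p$ dangling from $a$ and $b$, so it is connected. If $x \in V(H)$, then $H - x$ is connected by the inductive hypothesis, and the interior vertices of $p$ are attached to $H-x$ via at least one of $a,b$ (at most one of which can equal $x$), so again $H'-x$ is connected. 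Since $H'$ clearly has at least two vertices, $2$-connectivity follows. (I would note the mild convention issue with the null/single-edge cases flagged in the footnote, but these are handled since a cycle has at least three vertices.)

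\medskip

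\noindent\textbf{Hard direction (necessity).} This is where the real work lies. Suppose $G$ is $2$-connected. I would build up $G$ by ears as follows. Since $G$ is $2$-connected it is not a tree, so it contains a cycle $C_0$; take this as the starting cycle. Now maintain a subgraph $H \subseteq G$ that is obtained from $C_0$ by adding ears, and argue that as long as $H \neq G$ we can add one more ear. There are two subcases. If there is an edge $e = (u,v) \in E(G) \setminus E(H)$ with both $u,v \in V(H)$, then $e$ itself is an $H$-path (an ear of length one), so adjoin it. Otherwise, every edge of $G$ not yet in $H$ has at least one endpoint outside $V(H)$; since $G$ is connected and $H \neq G$, pick such an edge $e = (u,w)$ with $u \in V(H)$ and $w \notin V(H)$. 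Now I use $2$-connectivity: $w$ has a neighbour, and more importantly, in $G - u$ there is a path from $w$ back to some vertex of $V(H) \setminus \{u\}$ — this is where I must be careful. The cleanest way: since $G$ is $2$-connected, by Menger's theorem there are two internally disjoint paths from $w$ to $u$; one of them avoids a second prescribed behaviour... Actually the clean argument is: take any vertex $v \in V(H)$ with $v \neq u$ (exists since $H$ contains a cycle, hence $\geq 3$ vertices); by $2$-connectivity there are two internally vertex-disjoint $w$–$v$ paths, and at least one of them does not pass through $u$. Walk along that $u$-avoiding path from $w$ until it first re-enters $V(H)$, say at vertex $v'$; the portion traversed, together with the edge $e=(u,w)$, is a path meeting $H$ exactly at its endpoints $u$ and $v'$ (with $u \neq v'$ since the path avoided $u$), hence an $H$-path. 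Adjoining it strictly enlarges $V(H) \cup E(H)$ because it includes $w$. Repeating, since $G$ is finite and every step strictly grows $H$, we terminate with $H = G$.

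\medskip

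\noindent\textbf{Main obstacle.} The delicate point in the necessity proof is guaranteeing that the ear we extract genuinely meets $H$ only at its two endpoints and that those endpoints are distinct — i.e. correctly invoking the $2$-connectivity (via Menger) to route a return path avoiding the attachment vertex $u$, and truncating that return path at its \emph{first} return to $V(H)$ so that no interior vertex lies in $H$. A secondary bookkeeping issue is ensuring every vertex and edge of $G$ is eventually absorbed: one must handle not just vertices reachable by the "outside" ears but also leftover chords, which is exactly why I split into the two subcases (length-one ears for chords, longer ears for new vertices). Once these are in place, the induction and termination arguments are routine.
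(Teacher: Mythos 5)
The paper does not prove this lemma; it is quoted from \cite{Diestel05} as a known result. Your argument is correct and is essentially the standard proof from that reference: the easy induction for the ``ears give $2$-connectivity'' direction, and for the converse the growth (equivalently, maximality) argument in which a chord yields a length-one ear and otherwise $2$-connectivity supplies a $w$--$H$ path in $G-u$ that, truncated at its first return to $V(H)$, yields a genuine ear with distinct endpoints. The two delicate points you flag --- routing the return path so as to avoid the attachment vertex $u$, and truncating at the first re-entry into $V(H)$ --- are exactly the ones that need care, and you handle both.
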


This process of adding $H$-paths to construct a $2$-connected graph is
also known as an \emph{ear decomposition} (\cite{Bang-Jensen08}). A
similar characterization, using directed ears, can be stated for
strongly connected graphs. Biconnected graphs have other interesting
properties. For instance, given a biconnected graph $G = (V,E)$ and
three distinct vertices $x,y,z \in V$, there is a $xy$-path passing
through $z$. Indeed, let us construct $G'$ by adding a new vertex $w$
and the edges $(x,w)$, $(y,w)$ to $G$; the graph $G'$ is also
biconnected, so it contains two vertex-disjoint paths $p_1, p_2$ from
$w$ to $z$, one passing through $x$ and the other through $y$. Thus,
since $p_1, p_2$ are vertex-disjoint, the concatenation contains a
path in $G$ from $x$ to $y$ passing through $z$. Using a similar
argument, we can also prove the following. For any distinct $x,y \in
V$ and an edge $e \in E$, there is a $xy$-path passing through $e$ in
$G$.

Similarly to connected components, for any undirected graph, a
\emph{biconnected component} (BCC) is a maximal biconnected
subgraph. An \emph{articulation point} or \emph{cut vertex} is a
vertex such that its removal increases the number of connected
components. A biconnected component decomposition uniquely defines a
partition on the edges, but not on the vertices. In other words, two
distinct BCCs may share vertices but not edges.  Actually, as stated
in Lemma~\ref{lem:back:bcc}, a decomposition into BCCs can be defined
as an equivalence relation for the edges, where each equivalence class
is a BCC, and the common vertices are exactly the articulation points.

\begin{lemma}[\cite{Tarjan72}] \label{lem:back:bcc}
  Let $G = (V,E)$ be an undirected graph. We define an equivalence
  relation on the set of edges as follows: two edges are equivalent if
  and only if they belong to a common cycle. Let the distinct
  equivalence classes under this relation be $E_i$, $1 \leq i \leq l$,
  and let $B_i = (V_i, E_i)$, where $V_i$ is the set of edges incident
  to $E_i$ in $G$. Then:
  \begin{enumerate}
    \item $\{B_1, B_2, \ldots, B_l \}$ is the set of biconnected
      components of $G$;
    \item Each articulation point of $G$ occurs more than once among
      the $V_i$, $1 \leq i \leq l$. 
    \item Each non-articulation point of $G$ occurs exactly once among
      the $V_i$, $1 \leq i \leq l$;
    \item The set $V_i \cap V_j$ contains at most one vertex, for any
      $1 \leq i,j \leq l$. Such vertex is an articulation point of
      $G$.
  \end{enumerate}
\end{lemma}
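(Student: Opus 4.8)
The plan is to show that the edge relation $\sim$ given by ``$e_1\sim e_2$ iff $e_1=e_2$ or $e_1,e_2$ lie on a common simple cycle of $G$'' is an equivalence relation whose classes are exactly the edge sets of the biconnected components, and then to deduce the three statements about articulation points. Two auxiliary facts will be used repeatedly. First, in any biconnected graph two distinct edges $e=(x,y)$ and $e'$ always lie on a common cycle: by the property of biconnected graphs established above there is an $xy$-path $P$ through $e'$, and since a simple $xy$-path that uses the edge $(x,y)$ can only be that single edge, $P$ avoids $e$, so $P$ followed by $e$ is the desired cycle. Second, if $G_1,G_2$ are $2$-connected graphs with $|V(G_1)\cap V(G_2)|\ge 2$, then $G_1\cup G_2$ is $2$-connected: it is connected, and for any vertex $v$ the graphs $G_1-v$ and $G_2-v$ are connected and still share a vertex of $V(G_1)\cap V(G_2)$ (at most one of which is removed), so $(G_1\cup G_2)-v$ is connected.

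Reflexivity and symmetry of $\sim$ are immediate, so the first real step is transitivity. Given cycles $C_1\ni e_1,e_2$ and $C_2\ni e_2,e_3$ (with $e_1,e_2,e_3$ distinct), $C_1$ and $C_2$ are $2$-connected and share the edge $e_2$, hence at least two vertices, so $C_1\cup C_2$ is $2$-connected by the second fact and therefore contains a cycle through $e_1$ and $e_3$ by the first; thus $e_1\sim e_3$. Write the classes as $E_1,\dots,E_l$ and set $B_i=(V_i,E_i)$, with $V_i$ the set of endpoints of edges of $E_i$.

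I would then check that each $B_i$ is a biconnected component. The key remark is that if a cycle $C$ of $G$ contains one edge of $E_i$, then all edges of $C$ lie pairwise on $C$ and hence belong to $E_i$, so $C\subseteq B_i$; in particular $B_i$ is connected. If $B_i$ had a cut vertex $v$ (so $|E_i|\ge 2$), I would pick edges $(v,a),(v,b)\in E_i$ with $a,b$ in different components of $B_i-v$; these lie on a common cycle inside $B_i$, forcing an $a$--$b$ path in $B_i-v$, a contradiction. Single-edge classes are biconnected by convention, so every $B_i$ is biconnected. For maximality, if a biconnected $B'$ properly contained $B_i$, I would take $e\in E_i$ and $e'\in E(B')\setminus E_i$, obtain a cycle of $B'\subseteq G$ through $e$ and $e'$ by the first fact, and conclude $e'\in E_i$, a contradiction; hence each $B_i$ is a maximal biconnected subgraph. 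Conversely, any biconnected component $B$ has all its edges pairwise equivalent, so $E(B)\subseteq E_i$ for some $i$, giving $B\subseteq B_i$, and maximality of $B$ forces $B=B_i$. This is item~1.

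For the articulation-point statements I would argue as follows. If $v\ne w$ both lay in $V_i\cap V_j$ with $i\ne j$, then $B_i$ and $B_j$ would share two vertices, making $B_i\cup B_j$ biconnected and properly larger than $B_i$ (since $E_j\setminus E_i\ne\emptyset$), contradicting maximality; hence $|V_i\cap V_j|\le 1$, and when $V_i\cap V_j=\{v\}$ an $E_i$-edge and an $E_j$-edge at $v$ must reach different components of $G-v$, since otherwise a path joining their far endpoints in $G-v$, together with these two edges, would be a cycle meeting both $E_i$ and $E_j$ (the two far endpoints are distinct, for if they coincided the two edges would be equal as $G$ is simple); so $v$ is an articulation point, which gives item~4. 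For item~2, given an articulation point $v$ I would pick neighbours $a,b$ of $v$ in different components of $G-v$; the edges $(v,a),(v,b)$ cannot lie in one class (that would produce an $a$--$b$ path in $G-v$), so $v$ belongs to two distinct $V_i$. Item~3 then follows: a non-articulation vertex incident to at least one edge lies in some $V_i$, and by the contrapositive of the previous step in exactly one. I expect the crux to be transitivity of $\sim$ together with the union lemma for $2$-connected graphs; the rest is bookkeeping built on the biconnectivity facts already available.
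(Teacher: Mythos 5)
The paper states this lemma as a known result of \cite{Tarjan72} and gives no proof of its own, so there is nothing internal to compare against; judged on its own merits, your proof is correct and is essentially the standard argument. The two pillars --- transitivity of the relation via the fact that the union of two $2$-connected subgraphs sharing at least two vertices is $2$-connected, and the fact that any two distinct edges of a biconnected graph with more than one edge lie on a common cycle (which the background section of the paper already records in the form of the $xy$-path-through-$e$ property) --- are exactly the right tools, and the derivations of items 2--4 from item 1 are sound. The only caveats are degenerate ones you largely anticipate: an isolated vertex is a non-articulation point lying in no $V_i$ (so item 3 implicitly assumes there are none), and in the argument for item 4 the case where one of $B_i,B_j$ is a single-edge component deserves a one-line remark since your second auxiliary fact is stated only for $2$-connected graphs --- though that case cannot actually occur, because a bridge whose two endpoints both lay in another component $B_j$ would close a cycle with a $vw$-path inside $B_j$ and hence not be a bridge.
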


As a corollary, we have that a BCC decomposition is also a partition
on the cycles of $G$, i.e. every cycle is contained in exactly one
biconnected component. In addition, for any connected graph $G$, we
have that the BCCs form a tree-like structure, the \emph{block tree},
where two BCCs are adjacent if they share an articulation point. See
Fig.~\ref{fig:back:block_tree} for an example. More precisely, let $A$
be the set of articulation points of $G$ and $\mathcal{B}$ its set of
biconnected components. Then, consider the graph $\mathcal{T}$ whose
vertices are $A \cup \mathcal{B}$ and there is an edge from $a \in A$
to $B \in \mathcal{B}$ if $a \in B$ (there are no edges between two
vertices of $A$ or $\mathcal{B}$, i.e. it is a bipartite graph). The
graph $\mathcal{T}$ is a tree.

\begin{figure}[htb]
  \vspace{0.5cm}
  \centering
    \begin{tikzpicture}
[nodeDecorate/.style={shape=circle,inner sep=1pt,draw,thick,fill=black},%
  lineDecorate/.style={-,dashed},%
  elipseDecorate/.style={color=gray!30},
  scale=0.7]

\draw (5,10) circle (2);
\draw (9,10) circle (2);
\draw (2,10) circle (1);
\draw (-1,10) circle (2);
\draw (5,7) circle (1);
\draw (9,7) circle (1);
\draw [rotate around={55:(-1,8)}] (-1,7) circle (1);
\draw [rotate around={-55:(-1,8)}] (-1,7) circle (1);
\draw (13,10) circle (2);
\draw (13,7) circle (1);
\draw (-4,10) circle (1);
\begin{footnotesize}
\node (7) at (-2,7) [nodeDecorate] {};
\node (14) at (9,11) [nodeDecorate] {};
\end{footnotesize}
\foreach \nodename/\x/\y in {
  0/7/10, 1/5/11,
  2/3/10, 3/1/10, 4/-3/10, 5/-1/10, 6/-1/8, 7/-2/7, 8/-1/11,
  9/0/7,
  11/5/10, 12/4/9, 13/5/8, 14/9/11, 15/11/10, 16/9/9,
  17/9/7 , 18/9/8, 50/5/7, 51/13/11, 52/13/8, 53/13/7, 54/-4/10}
{
  \node (\nodename) at (\x,\y) [nodeDecorate] {};
}

\path
\foreach \startnode/\endnode in {6/7, 6/9, 5/6, 5/3, 5/8, 4/5, 3/2,
2/11, 1/11, 11/12, 11/0, 11/13, 13/50, 0/14, 14/15, 15/16, 16/18,
18/17, 15/51, 15/52, 51/52, 52/53, 54/4}
{
  (\startnode) edge[lineDecorate] node {} (\endnode)
};

\path
\foreach \startnode/\endnode/\bend in { 8/3/20, 6/3/20, 4/8/10,
12/13/20, 1/0/20, 2/1/20, 13/0/20, 0/18/10}
{
  (\startnode) edge[lineDecorate, bend left=\bend] node {} (\endnode)
};

\end{tikzpicture}
  \caption{An example of connected graph $G$ with its biconnected
    components highlighted. The articulation points are precisely the
    nodes in the intersection of the circle (BCCs). The circles plus
    the intersections form the block tree of $G$.}
  \label{fig:back:block_tree}
\end{figure}
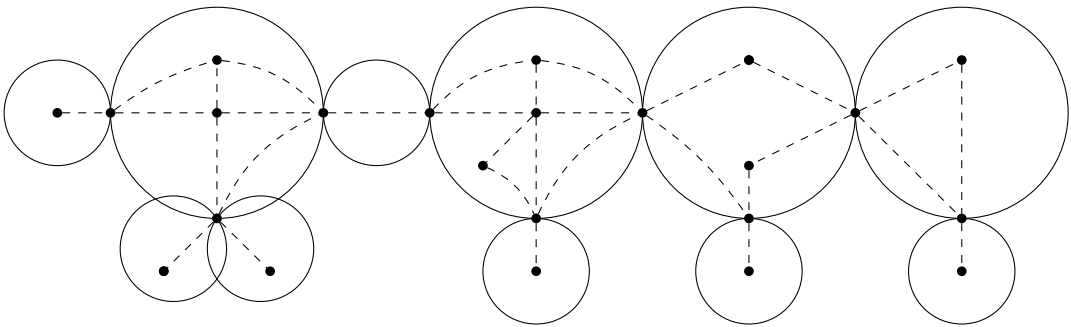

The biconnected components of $G = (V,E)$ can be computed in $O(|V| +
|E|)$ using a modified DFS (\cite{Tarjan72,Cormen01}). The same
algorithm can also be used to find all the articulation points of $G$
in linear time.

\subsection{Modeling and assembling NGS data} \label{sec:back:modeling}
As stated in Section~\ref{subsec:ngs}, the NGS technology when applied
to a genome (or a transcriptome, in the case of RNA-seq) produces a
large number of fragments, or reads, from the original sequences. In
this context, the most natural question is, given the set of reads,
how to reconstruct the original sequence by combining the reads. This
is called genome (transcriptome) assembly problem. At this point, it
is important to make a clear distinction between \emph{de novo} genome
(transcriptome) assembly, which aims to reconstruct the genome
(transcriptome) without using any other information but the reads, and
comparative (re-sequencing) approaches that use knowledge on the
genome of a closely related organism to guide the reconstruction. The
\emph{de novo} assembly problem is, as we show in this section,
NP-hard under three common formalizations. On the other hand,
comparative assembly is a considerably easier task admitting a
polynomial algorithm, basically, it is sufficient to map the reads
back to the reference genome (transcriptome)
(\cite{Flicek09,Pop09}). Provided there exists a close enough
reference, otherwise a mixture of both problems could be
considered. Our main interest is in \emph{de novo}
assembly. Hereafter, we omit the term \emph{de novo} when referring to
it.

One basic assumption commonly made when modeling the assembly problem
is that every read in the input must be present in the original genome
(transcriptome). This neglects the fact that the reads may contain
errors.  Under this hypothesis, the genome (transcriptome) assembly
problem can be formally stated as, given a set of strings $\mathcal{R}
\subset \Sigma^* = \{A,C,T,G\}^*$, such that $r \in \mathcal{R}$ is a
substring of an unknown string $S \in \Sigma^*$ (set of strings $S
\subset \Sigma^*$), reconstruct the original string $S$ (set of
strings $S$). From now on, for the sake of a clear exposition, we
consider only the genome assembly problem, and in the end of the
section we highlight the differences with transcriptome assembly.

A simple way to formulate the assembly problem as an optimization
problem, i.e. a problem of maximizing or minimizing a given objective
function, is to require the reconstructed string to be of minimal
length. Formally, given a set of strings $r \in \mathcal{R}$, find the
minimum length string $S^*$ such that every $r \in \mathcal{R}$ is a
substring of $S^*$. This is precisely the \emph{shortest superstring
  problem}, which is known to be NP-hard for $|\Sigma| \geq 2$
(\cite{Garey79}). Despite that fact, some assemblers
(\cite{Ssake,Sharcgs}) employed this formulation. Of course, they do
not solve the shortest superstring problem exactly; an exponential
algorithm would indeed be impractical for all, but very small,
instances. Instead, they employ variations of the following iterative
greedy heuristic: at a given step the algorithm maintains a
superstring $S'$ for a subset $\mathcal{R'} \subseteq R$, then extends
$S'$ with the read $r \in \mathcal{R} \setminus \mathcal{R'}$ such
that the suffix-prefix overlap with $S'$ is maximum and then adds $r$
to $R'$.

\begin{figure}[htbp]
  \centering 
  \includegraphics[width=0.9\linewidth]{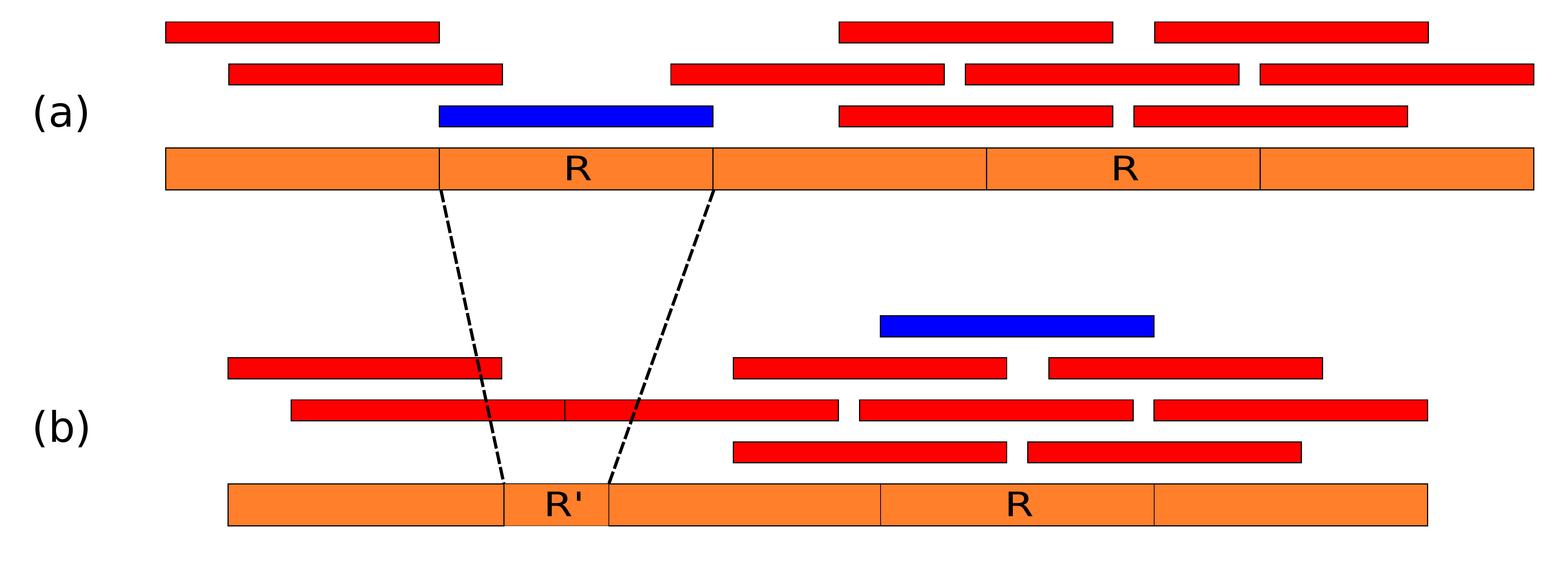} 
  \caption{An example of a shortest superstring with an over-colapsed
    repeat. (a) The original string contains two exact copies of $R$;
    the set of reads is shown abave it, and the read in blue is
    entirely contained in one of the copies of $R$. (b) The shortest
    superstring for the same set of reads; the first copy $R'$ is
    \emph{over-collapsed}, and the blue read is now assigned to the
    second copy of $R$.}
 \label{fig:greedy_counter_ex}
\end{figure}

There are two main problems with the greedy strategy. The first issue
is mainly due to the problem formulation: requiring the superstring to
be of minimal length, although motivated by parsimony, is in the best
case questionable. The reason is that the majority of the genomes have
repeats, i.e. multiple identical, or nearly identical, substrings,
while requiring a superstring of minimum length tends to over-collapse
these substrings in the obtained solution. Consider for instance the
example shown in Fig.~\ref{fig:greedy_counter_ex}. The second problem
is due to the inherent local nature of the greedy heuristic: the
choices are iteratively made without taking into account the global
relationships between the reads. It is likely that in the true
solution, the genome from which the reads were generated, several
suffix-prefix read overlaps are not locally optimal. In order to
address these issues, actually more the second one than the first, two
high-level strategies were proposed (\cite{Pop09}): the
overlap-layout-consensus (OLC) and the Eulerian path. In the core of
each strategy is a representation of the reads set $\mathcal{R}$ in
terms of a (weighted) directed graph, the overlap graph for the OLC
and the de Bruijn graph for the Eulerian path. The graph
representation of $\mathcal{R}$ allows for non-local analysis of the
reads which is not possible with the greedy strategy.

\subsubsection{Overlap-layout-consensus (OLC) Strategy}
The overlap-layout-consensus strategy divides the assembly problem
into three major stages. In the overlap stage, similarly to the
approach used in the greedy strategy, for each pair of reads in
$\mathcal{R}^2$ the maximal suffix-prefix overlaps are computed. In
the layout stage, the overlap graph is constructed. That is, a
complete weighted directed graph where each read of $\mathcal{R}$ is a
vertex, and there is a directed edge between any pair of reads $(r_1,
r_2)$ with weight equal to the length of the suffix-prefix overlap
between $r_1$ and $r_2$. The formal definition is given below
(Definition~\ref{def:overlap_graph}). Next, still in the layout stage,
the overlap graph is simplified. Finally, in the consensus stage, the
genome sequence is obtained as consensus sequence, corresponding to a
path or walk in the simplified overlap graph.

\begin{definition}[Overlap Graph] \label{def:overlap_graph}
  Given a set of reads $\mathcal{R} \subseteq \Sigma^*$, the
  \emph{overlap graph} $G(\mathcal{R})= (V,E)$, $w : E \rightarrow
  \mathbf{N}$ is a complete weighted directed graph such that:
  
  \begin{enumerate}
    \item $V = \mathcal{R}$ and $E = \mathcal{R}^2$;
    \item $w(u,v) = $ length of the maximal suffix of $u$ that is
      equal\footnote{For simplicity, we do not consider the more
        general definition where a small number of mismatches is
        allowed for the suffix-prefix overlap.} to a prefix of $v$.
  \end{enumerate}
\end{definition}

The main goal of the graph simplification in the layout stage is to
reduce the complexity of the overlap graph. The first step is usually
to remove all arcs that have weights below a given threshold. In
practice, those edges are not even added to the original graph. A
possible way to further reduce the complexity, proposed by
\cite{Myers05}, is to perform a transitive reduction in the graph,
that is, to remove from the graph all the edges that are transitive
inferable, i.e. consider the edges $(x,y)$, $(y,z)$ and $(x,z)$, the
last edge $(x,z)$ is transitive inferable since there is still a path
from $x$ to $z$ after removing $(x,z)$. The subgraph of the overlap
graph obtained by this process is called \emph{string graph}
(\cite{Myers05,Medvedev07,Pop09}).

In the consensus stage, the problem of finding a walk in the graph
corresponding to a consensus sequence can be formulated as an
optimization problem by considering a constrained walk in the string
graph and requiring it to be of minimum length
(\cite{Medvedev07}). The walk is constrained in the sense that some
arcs of the string graph should be present at least once and others
exactly once. Formally, we have a selection function $s$ that
classifies the arcs of string graph in the three categories:
\emph{optional} (no constraint), \emph{required} (present at least
once) and \emph{exact} (present exactly once). The rationale for this
classification is that some portions of the graph correspond to
repeats in the genome, implying that they should be present more than
once in the consensus sequence (walk), whereas other correspond to
unique sequences, that should be present exactly once. This arc
classification can be computed using the A-statistics (\cite{Celera}),
as shown in \cite{Myers05}.  For a given selection function $s$ and a
string graph $G$, a walk of $G$ respecting $s$ is called an
\emph{$s$-walk}. \cite{Medvedev07} showed, using a reduction from
Hamiltonian path, that the problem of finding a minimum length
$s$-walk is NP-hard.

Similarly to the shortest superstring problem, despite the fact that
the minimum $s$-walk problem is NP-hard, several assemblers employed
this formulation, using diverse heuristics to simplify the graph,
i.e. make it as linear as possible, and find the consensus
sequence. The OLC strategy was used by assemblers for various whole
genome shotgun sequencing technologies, not only NGS technologies, for
instance, the Celera assembler (\cite{Celera}), Arachne2
(\cite{Arachne2}) and Cap3 (\cite{Cap3}) for Sanger reads; Newbler
(\cite{Newbler}) and Cabog (\cite{Cabog}) for 454 reads; Edena
(\cite{Edena}) and Shorty (\cite{Shorty}) for Illumina reads. For
longer reads, i.e. Sanger and 454, the OLC seemed to be the more
suitable approach (\cite{Pop09}). However, for shorter reads and much
deeper coverages, the overlap computation step becomes a computational
bottleneck. For that reason, most of the more recent assemblers use
the Eulerian path strategy. With the notable exception of SGA
(\cite{SGA}) where they manage to overcome the overlap computation
bottleneck using a FM-index (\cite{Ferragina05}), which is a full-text
compressed index based on the Burrows-Wheelers transformation allowing
for fast substring queries.

\subsubsection{Eulerian Path Strategy}

The de Bruijn graph of order $k \in \mathbf{N}$ of a set of reads
$\mathcal{R}$ is a directed graph where each $k$-mer present in
$\mathcal{R}$ corresponds\footnote{From now on, when considering de
  Bruijn graphs, we make no distinction between the $k$-mer
  corresponding to a vertex and the vertex itself.} to a vertex and
there is an arc $(u,v)$ if the $k$-mers corresponding to $u$ and $v$
share a suffix-prefix overlap of size $k-1$ and the corresponding
$(k+1)$-mer, the $k$-mer $u$ concatenated with the last character of
$v$, is present in $\mathcal{R}$. The formal definition is given below
(Definition~\ref{def:dbg}). Actually, this is a \emph{subgraph} of the
de Bruijn graph under its classical combinatorial definition
(\cite{Bang-Jensen08}).  However, following the terminology common to
the bioinformatics literature, we still call it a de Bruijn graph. One
of the most important aspects of de Bruijn graphs is that, unlike
overlap graphs, they are not subjected to the overlap computation
bottleneck.  De Bruijn graphs can be efficiently computed using
hashing or sorting. Indeed, given a read set $\mathcal{R}$, we can
build a de Bruijn graph $G_k(\mathcal{R})$ using a hash table
(\cite{Cormen01}) to store all $(k+1)$-mers present in
$\mathcal{R}$. As each insertion and membership query in the hash
table takes $O(1)$ (expected) time, the de Bruijn graph can be built
in time linear in the size of $\mathcal{R}$, i.e. $O(\sum_{r \in
  \mathcal{R}} |r|)$.

\begin{definition}[De Bruijn Graph] \label{def:dbg}
  Given a set of reads $\mathcal{R} \subseteq \Sigma^*$ and a
  parameter $k \in \mathbf{N}$, the \emph{de Bruijn graph}
  $G_k(\mathcal{R})= (V,E)$ is a directed graph such that:
  
  \begin{enumerate}
  \item $V = $ the set of $k$-mers of $\mathcal{R}$;
  \item $E = $ the set of $(k+1)$-mers of $\mathcal{R}$, in the sense
    that, given a $(k+1)$-mer $e$ of $\mathcal{R}$, we have that $e =
    (u,v)$, where $u = e[1,k]$ and $v = e[2,k+1]$.
  \end{enumerate}
\end{definition}

Although not apparent from their definitions, intuitively, a de Bruijn
can be seen as a special case of the overlap graph, where all the
reads were further divided in $k$-mers and all the suffix-prefix
overlaps have length exactly $k-1$.  Indeed, the arcs in the overlap
and de Bruijn graph represent the same structure, a suffix-prefix
overlap between the strings corresponding to the vertices. In fact, in
the particular case where all the reads of $\mathcal{R}$ have length
exactly $k+1$, the line graph of $G_k(\mathcal{R})$ is exactly the
overlap graph of $\mathcal{R}$ with the arcs of weight zero
removed. Moreover, in a de Bruijn graph there is a loss of information
with regard to the overlap graph: in de Bruijn graphs we do not have
the information that two $k$-mers came from the same read. As a
consequence there are walks in the de Bruijn graph that are not
\emph{read coherent}, i.e. are not entirely covered by an ordered set
of reads where two adjancent reads have a non-empty suffix-prefix
overlap (a tilling of the reads). An example of a de Bruijn and an
overlap graph built from the same set of reads is shown in
Fig.~\ref{fig:dbg_overlap}.

\begin{figure}[htb] 
  \centering 
  \includegraphics[width=15cm]{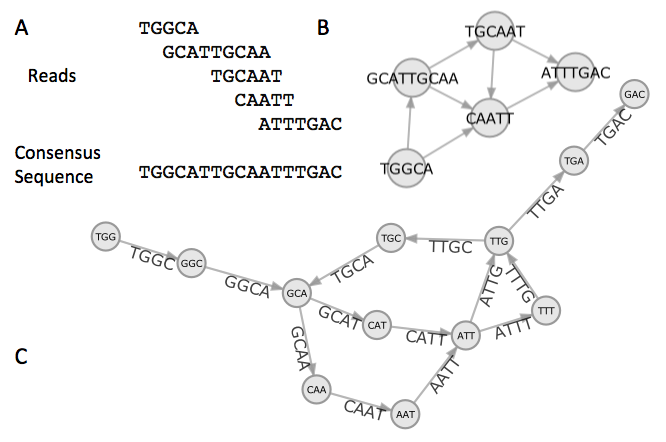} 
  
  \caption{(a) An example of a read set $\mathcal{R} = \{$TGGCA,
    GCATTGCAA, TGCAAT, CAATT, ATTTGAC$\}$ from the genome
    TGGCATTGCAATTGAC. (b) The overlap graph of $\mathcal{R}$ with the
    zero weight arcs not represented is shown: each vertex is labeled
    with the sequence of the corresponding read. (c) The de Bruijn
    graph of $\mathcal{R}$ with $k = 3$ is shown, each vertex is
    labeled with the sequence of the corresponding $k$-mer and each
    arc with the corresponding $(k+1)$-mer. Reproduced from
    \cite{Phast13}.}  \label{fig:dbg_overlap}
\end{figure}

Interestingly, de Bruijn graphs were first used in computational
biology in the context of sequencing by hybridization (SBH)
(\cite{Pevzner89}). The outcome of a SBH experiment is the set of all
distinct substrings of size $k$ in the original sequence. Years later,
it re-appeared in a pre-NGS context as an alternative to the OLC that
could potentially lead to a polynomial algorithm for the genome
assembly problem, although no such algorithm was provided
(\cite{Pevzner01}). The intuition was that differently from the OLC
strategy that models the genome assembly problem as special case of
the Hamiltonian path problem where the goal is to visit all vertices
in the graph, genome assembly in a de Bruijn graph could be modelled
as an Eulerian path (trail) problem, where the goal is to visit all
\emph{arcs} of the graph, for which there are polynomial algorithms
(\cite{Cormen01}). Unfortunately, there can be an exponential number
of Eulerian trails in a graph (\cite{Diestel05}) and in order to
select the one corresponding to the original sequence it is necessary
to impose some constraints to the Eulerian trail, resulting in an
NP-hard problem (\cite{Medvedev07}).

As with the shortest superstring formulation for genome assembly, it
is natural to require that all reads should be substrings of the
solution of the genome assembly problem. In order to transpose this to
the de Bruijn graph context, we observe that every read $r \in
\mathcal{R}$ corresponds to a walk in the de Bruijn graph
$G_k(\mathcal{R})$, possibly containing repeated vertices and
arcs. This means that the solution should be a walk $S$ in the de
Bruijn graph $G_k(\mathcal{R})$, such that the each walk $r_w$
corresponding to a read $r \in \mathcal{R}$ is a subwalk of $S$,
i.e. $S$ is a \emph{superwalk} of $G_k(\mathcal{R})$. Now, motived by
parsimony, the optimization problem can be formulate as the problem of
finding a minimum length superwalk in $G_k(\mathcal{R})$. Using a
reduction from the shortest superstring problem, \cite{Medvedev07}
showed this problem is NP-hard.

The efficiency of a hash-based approach to construct a de Bruijn graph
made it the ideal structure to represent NGS reads as the throughput
of new technologies continued to increase. This is clear as the
majority of the recent genome assemblers, although not trying to solve
the minimum superwalk problem exactly (as with the OLC approaches,
several heuristics to linearize the graph are employed), use de Bruijn
graphs, namely, in chronological order: Euler-SR (\cite{Eulersr}),
Velvet (\cite{Velvet}), ABySS (\cite{Abyss}), Allpaths
(\cite{Allpaths}), SOAPdenovo (\cite{Soapdenovo}), IDBA (\cite{Idba})
and SPAdes (\cite{Spades}). The main steps of a de Bruijn graph based
assembler are shown in Fig.~\ref{fig:assemblers_heuristic}.
 
\begin{figure}[htp]
  \centering
  \includegraphics[width=0.8\linewidth]{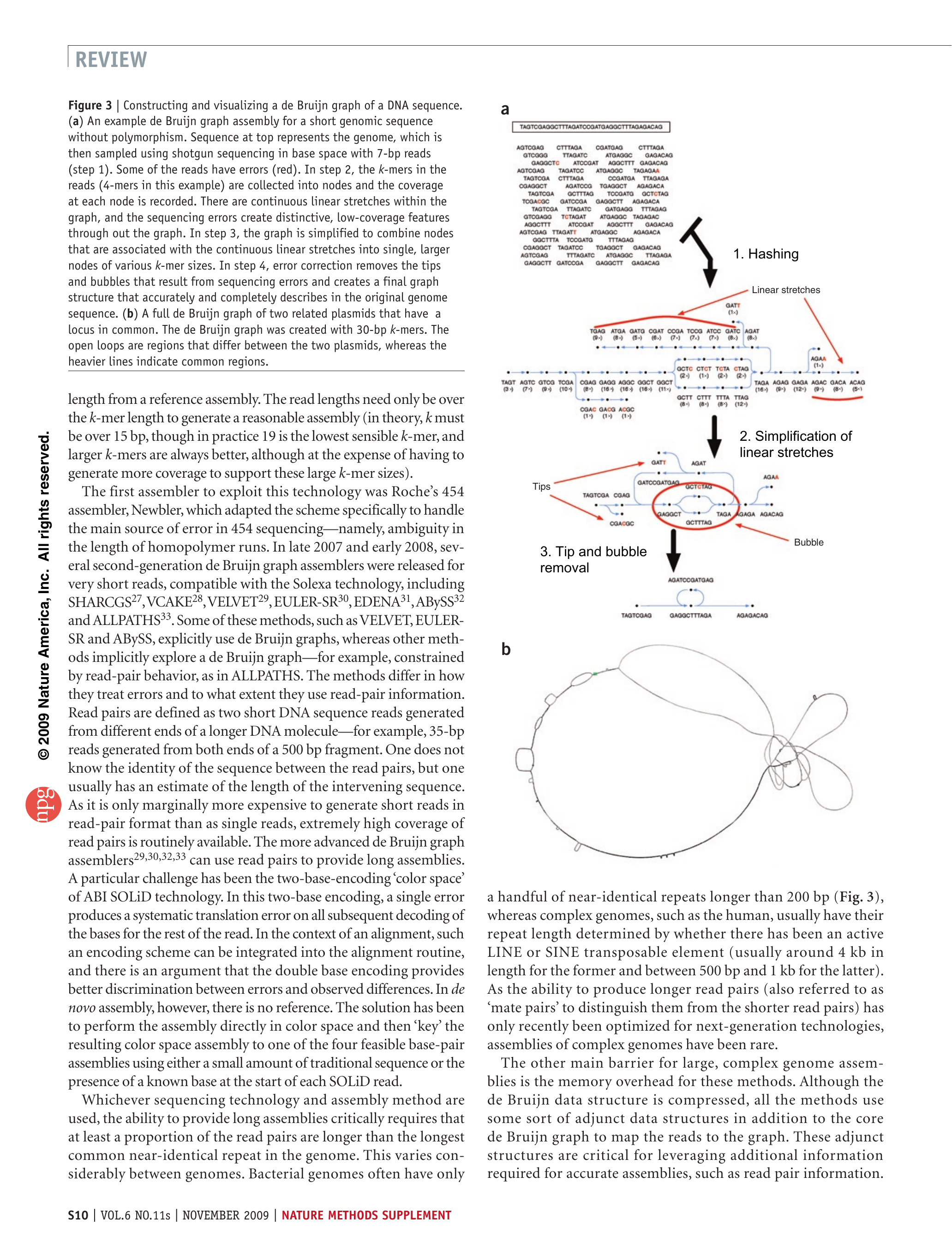} 
  \caption{The three main steps of an assembler based on a de Bruijn
    are shown. (1) The de Bruijn graph is built from the set of reads
    using a hashing-based approach. (2) Lossless graph simplification,
    the linear stretches (non-branching paths) of the graph are
    compressed. (3) Lossy graph simplification, the graph is further
    linearized by removing tips (dead-ends) and bubbles (alternative
    paths). Reproduced and modified from \cite{Flicek09}.}
  \label{fig:assemblers_heuristic}
\end{figure}

\subsubsection{Transcriptome Assembly}
In terms of mathematical formulation, the main difference between
genome and transcriptome assembly is the number of sequences
reconstructed. Indeed, the goal of a genome assembly problem is to
reconstruct one\footnote{This is a simplification used in the
  theoretical models for genome assembly (\cite{Medvedev07}). A genome
  is usually composed of several chromossomes, that is in genome
  assembly, similarly to transcriptome assembly, several strings
  should be reconstructed. There is, however, an important difference
  in scale when compared to transcriptomes: the majority of the known
  species have less than 100 chromossomes, whereas the number of
  transcripts is several orders of magnitude larger.} string (the
genome), whereas the goal of a transcriptome assembly is to
reconstruct a set of strings (the set of transcripts). The three
formulations for the genome assembly problem as optimization problems,
shortest superstring, minimum $s$-walk / superwalk, can easily be
generalized in such a way that the solution is a set of strings in the
first case, or a set of walks in the corresponding graph in the last
two cases. For instance, the minimum superwalk in $G_k(\mathcal{R})$
can be generalized to the problem of finding the set of walks
$\mathcal{S}$, such that $\alpha \leq |\mathcal{S}| \leq \beta$,
i.e. $\mathcal{S}$ contains at least $\alpha$ \emph{non-empty} walks
and at most $\beta$; each read $r \in \mathcal{R}$ is a subwalk of
some $s \in \mathcal{S}$; and the total sum of the lengths of the
walks in $\mathcal{S}$ is minimum. The bounds $\alpha, \beta$ are part
of the input of the problem and reflect the expected number of
transcripts. Now, by choosing $\alpha = \beta = 1$ we obtain exactly
the minimum superwalk problem, thus the generalization is also
NP-hard. The same holds for the other two generalizations.

The majority of the transcriptome assemblers use de Bruijn graphs to
represent the set of reads, for instance, in chronological order:
Trans-ABySS (\cite{Transabyss}), Trinity (\cite{Trinity}), Oases
(\cite{Oases}) and IDBA-tran (\cite{Idbatran}). As in the genome
assembly case, in practice the transcriptome assemblers do not attempt
to solve the generalized minimum superwalk problem exactly, employing
instead various heuristics. There are, however, important differences
in the heuristics used in both cases. Unlike the heuristics for genome
assemblers where the main goal of the heuristics is to simplify the
graph by linearizing it, the heuristics for transcriptome assembly
have three main steps (\cite{Trinity,Oases}): 

\begin{enumerate}
  \item \textbf{Graph simplification.} This step is very similar to
    the de Bruijn graph simplification in genome assembly (see
    Fig.~\ref{fig:assemblers_heuristic}); the goal is the same, remove
    branching structures, that ideally correspond to sequencing
    errors, to transform the graph in a path. However, since
    alternative isoforms also produce branching structures, compared
    to genome assembly this simplification is done in a less
    aggressive way.
 \item \textbf{Graph partition.} In the ideal case where two distinct
   genes do not share any $k$-mer (vertex), each connected component
   of the graph corresponds to the set of alternative isoforms of each
   gene. Unfortunately, genomes contain repeats, so two unrelated
   genes may share $k$-mers. The goal of this step is then to deal
   with these $k$-mers (vertices) linking two genes (connected
   components), in such a way that allows for the graph to be
   partition in subgraphs corresponding to genes.  In the case of
   \cite{Oases} this is, approximately, done by identifying the
   vertices corresponding to repeats and, for each vertex, duplicating
   it and dividing the arcs among the two copies.
 \item \textbf{Path decomposition.} The goal of this step is, for each
   subgraph (gene) obtained in the last step: find the set of paths
   corresponding to the set of alternative isoforms of the gene. In
   the case of \cite{Oases} this done by first applying an heuristic
   to remove cycles, and then iteratively applying a dynamic
   programming algorithm to find the path with largest read support
   (coverage).
\end{enumerate}

We should stress that this is an \emph{heuristic}, there are no
guarantees that all transcripts are going to be (correctly)
assembled. For instance, in the graph simplification step, even using
a less aggressive approach, there are no guarantees that only
sequencing errors are removed, actually it is quite likely that some
alternative isoforms are also removed. Moreover, in the graph
partition step, a gene can be split in two or more subgraphs, that
way, in the path decomposition step, the resulting isoforms are
necessarily fragmented. On the other hand, if two unrelated genes are
in the same subgraph, in the path decomposition step, the resulting
paths can be \emph{chimeras}, containing parts of two unrelated
transcripts. In general, transcripts from highly expressed genes are
better assembled than lowly expressed ones; within the same gene,
dominant isoforms are better assembled than minor ones.

\subsection{Enumeration Algorithms}

In this section, which is based on our paper \cite{Marino14}, we give
a brief introduction to the area of an enumeration algorithms
area. Naturally, the goal of enumeration is to list all feasible
solutions of a given problem. For instance, given a graph $G = (V,E)$,
listing the paths or shortest paths from a vertex $s \in V$ to a
vertex $t \in V$, enumerating cycles, or enumerating all feasible
solutions of a knapsack problem, are classical examples of
\emph{enumeration problems} or \emph{listing problems}. An algorithm
to solve an enumeration problem is called \emph{enumeration algorithm}
or \emph{listing algorithm}.

While an optimization problem aims to find just \emph{the best}
solution according to an objective function, an enumeration problem
aims to find \emph{all} solutions satisfying a given set of
constraints. This is particularly useful when the data is incomplete
or the objective function is not clear: in these cases the best
solution should be chosen among the results obtained by
enumeration. Moreover, enumeration algorithms can be also applied to
solve exactly NP-hard problems, by listing all feasible solutions and
choosing the best one, as well as counting the number of feasible
solutions in \#P-hard problems (\cite{Valiant79}).

\subsubsection{Complexity Classes}
The classical complexity classes: P, NP, co-NP, NP-complete, \#P,
among others; are extremely useful but can only deal with problems
with small (polynomial) outputs with regard to input size, e.g. the
decision problems return 1 (true) or 0 (false). Quite often the number
of solutions in a enumeration problem is exponential in the size of
input, e.g. listing $st$-paths in a graphs, there is $G = (V,E)$ such
that the number of $st$-paths is $\Omega(2^{|V|/2})$. To overcome this
problem, the enumeration complexity classes are defined in an
output-sensitive way; in other words, the time complexity takes into
account the size of input and the \emph{output}. In this way, if the
number of solutions is small, an efficient algorithm has to terminate
after a short (polynomial) time, otherwise it is allowed to spend more
time. According to this idea, the following complexity classes were
defined in \cite{Johnson88}.
\begin{definition}[Polynomial Total Time]
\label{def:poly_total_time}
An enumeration algorithm is \emph{polynomial total time} if the time
required to output all the configurations is bounded by a polynomial
input size and the number of configurations.
\end{definition}

\begin{definition}[Incremental Polynomial Time]
\label{def:inc_poly_time}
An enumeration algorithm is \emph{incremental polynomial time} if it
generates the configurations, one after the other in some order, in
such a way that the time elapsed (delay) until the first is output,
and thereafter the delay between any two consecutive solutions, is
bounded by a polynomial in the input size and the number of
configurations output so far.
\end{definition}

\begin{definition}[Polynomial Delay] 
\label{def:poly_delay}
An enumeration algorithm is \emph{polynomial delay} if it generates
the configurations, one after the other in some order, in such a way
that the delay until the first is output, and thereafter the time
elapsed (delay) between any two consecutive solutions, is bounded by a
polynomial in the input size.
\end{definition}

Intuitively, the polynomial total time definition means that the delay
between any two consecutive solutions is polynomial on \emph{average},
while the polynomial delay definition implies that the maximum delay
is polynomial. Hence, Definition~\ref{def:poly_delay} contains
Definition~\ref{def:poly_total_time} and
Definition~\ref{def:inc_poly_time} is in between. It is important to
stress that these complexity classes impose no restriction on the
space complexity of the algorithms, e.g. a polynomial total time
algorithm can use memory exponential in the input size. However, in
this thesis, we are mainly concerned about polynomial delay algorithm
with space complexity polynomial on the \emph{input size}. In some
sense, these are efficient listing algorithm, or as defined in
\cite{Fukuda97} \emph{strongly P-enumeration} algorithms.

The basic framework for efficient listing algorithms are:
\emph{backtracking} (unconstrained depth-first search with
lexicographic ordering), \emph{binary partition} (branch and
bound-like recursive partition algorithm) and \emph{reverse search}
(traversal on the tree defined by the parent-child relation). In the
remaining of this section, we give a brief introduction for the first
two strategies; the backtracking method is used in Chapters
\ref{chap:kissplice} and \ref{chap:unweighted}, while the binary
partition method is used in Chapter~\ref{chap:unweighted}, and a
variation of if it in Chapter~\ref{chap:weighted}. An introduction of
the reverse search method can be found in \cite{Avis93} and
\cite{Marino14}.

\subsubsection{The Backtracking Method}
The backtracking method is a recursive\footnote{Of course, it can also
  be implemented in an iterative way, but it is not as natural as the
  recursive implementation.} listing technique based on the following
simple idea: given a partial solution (i.e. a set that can be extended
to a solution) recursively try all possible extensions leading to a
solution. This technique has been successfully applied by several
algorithms to list cycles in directed graphs
(\cite{Tiernan70,Tarjan73,Johnson75,Szwarcfiter76}); list bubbles in
directed graphs (\cite{Birmele12}); list maximal cliques in undirected
graphs (\cite{Bron73,Koch01,Eppstein10,Eppstein11}); and list maximal
independent set in undirected graphs (\cite{Johnson88}). The last two
problems are particular cases of the more general problem of listing
(maximal) sets in an \emph{independence system}. This is not a
coincidence, the backtracking method is particularly useful for
listing problems that can be described as the enumeration of (maximal)
sets in an independence system.

A collection of sets $I$ is an independence system if for any set $X
\in I$ all its subsets, $X' \subseteq X$, are also in $I$. More
formally, a family of sets $I$ over an universe $U$, i.e. $I \subseteq
2^U$, is an \emph{independence system} if it satisfies the following
properties: (i) the empty set belongs to $I$; and (ii) every subset of
some set in $I$ also belongs to $I$, i.e. $X' \subseteq X$ and $X \in
I$ implies that $X' \in I$. The sets in an independence system can be
listed using the backtracking method: given a set $X$ (initially the
empty set) we recursively try to extended it, adding one new element
$e \in U \setminus X$, and obtaining $X \cup \{e\}$. However, this is
not enough to guarantee an efficient algorithm; for that, we also need
to efficiently decide if the extension $X \cup \{ e \}$ belongs to
$I$, i.e. we need a polynomial membership oracle for $I$. An example
of an efficient listing algorithm for the sets in an independence
system is shown next. See \cite{Marino14} for an example of an
algorithm to list only the \emph{maximal} sets in an independence
system.

\paragraph{Enumerating all the subsets of a collection $U=\{a_1,\ldots,
a_n\} \subset \mathbb{Z}_{\geq 0}$ whose sum is less than $b$.} The
family of sets $I$ over $U = \{a_1,\ldots, a_n\}$ whose sum is less
than $b$, form an independence system. Indeed, the empty set has sum
0, so it belongs to $I$; since $a_i \geq 0$, the sum of $X' \subseteq
X$ is not greater than $X$, so if $X \in I$ then $X' \in I$. Moreover,
for a given subset $X$ of $U$ we can decide if it belongs to $I$ in
linear time, we just have to compute the sum of elements of $X$. We
already have all requirements to design a backtracking-based algorithm
for this problem: starting from the empty set $S = \emptyset$, we
recursively try to add a new element $a_i \in U \setminus S$ to $S$,
provided the resulting set $S \cup \{a_i\}$ belongs to $I$ (the sum is
smaller than $b$). The pseudocode is shown in
Algorithm~\ref{alg:four}. Now, each iteration outputs a solution, and
takes $O(n)$ time, where $n = |U|$, thus the algorithm spend $O(n)$
time per solution. It is worth observing that by sorting the elements
of $U$, each recursive call can generate a solution in $O(1)$ time,
resulting in optimal $O(1)$ time per solution.

\begin{algorithm}[htbp]
\KwIn{$S$ a set (initially empty) of integers belonging to the collection $U=\{a_1,\ldots, a_n\} \subset \mathbb{Z}_{\geq 0}$}
\KwOut{The subsets of $U$ whose sum is less than $b$.}
  output $S$ \\
  \ForEach{$a_i \in U \setminus S$}{
    \If{$a_i + \sum_{x\in S}x \leq b$}{
      \textsc{SubsetSum}$(S\cup \{a_i\})$}
  }
\caption{\textsc{SubsetSum}$(S)$}
\label{alg:four}
\end{algorithm}

\subsubsection{The Binary Partition Method}
The binary partition method, similarly to the backtracking method, is
a recursive technique; based on the following simple idea: recursively
divide the solution space into two \emph{disjoint} parts until it
becomes trivial, i.e. each part contains exactly one solution. More
formally, let $X$ be a subset of $F$, the set solutions, such that all
elements of $X$ satisfy a property $\mathcal{P}$; recursively
\emph{partition} $X$ into two subsets $X_1$ and $X_2$ (i.e. $X = X_1
\cup X_2$ and $X_1 \cap X_2 = \emptyset$), characterized by disjoint
properties $\mathcal{P}_1$ and $\mathcal{P}_2$, respectively. This
procedure is repeated until the current set of solutions is a
singleton. This technique has been successfully applied to many
listing problems in graphs, including: $st$-paths in undirected graphs
(\cite{Birmele13}), cycles in undirected graphs (\cite{Birmele13}),
perfect matchings in bipartite graphs (\cite{Uno01}), and $k$-trees in
undirected graphs (\cite{Ferreira11}).

The recursion tree of any algorithm implementing the binary partition
method is binary, since there are at most two recursive calls in the
algorithm, one for each set $X_1$ and $X_2$ partitioning
$X$. Moreover, unlike the backtracking method, the solutions are
output only in the leaves of the tree, when the partition is a
singleton. In order to design an efficient algorithm based on this
technique, in any given call, when $X$ is partition into $X_1,X_2$,
before proceeding with the recursion, we have to decide if $X_1$ and
$X_2$ are non-empty, otherwise we would have many calls leading to no
solution. Assuming that we have a polynomial (in the input size)
oracle to decide if $X_1$ and $X_2$ are non-empty, and the height of
the tree is bounded by a polynomial in the input size; the resulting
algorithm has polynomial delay. Indeed, considering the recursion
tree, the time elapsed between two solutions being output is
bounded by the time spent in the nodes in any leaf-to-leaf path in the
tree (recall that the solutions are only output in the leaves). As the
height of the tree is polynomial in the input size, the number of
nodes in any leaf-to-leaf path is also polynomial in the input size;
and since the emptiness oracle is polynomial in the input size, the
time spent in each node is also polynomial in the input size. An
example of application of binary partition method is presented bellow.

\paragraph{Enumerating all the $st$-paths in an undirected graph $G = (V,E)$.} 
The first requirement to apply the binary partition method is a
property that allows to recursively partition the set of solutions,
$st$-paths in this case.  Let $v$ be \emph{any neighbor} of $s$, the
set $X$ of all $st$-paths in $G$ can be partition in two sets: $X_1$,
the set of $st$-paths that do not include the edge $(s,v)$; and $X_2$,
the set of $st$-paths that include it. Actually, these sets can be
described in a more ``recursive'' way: $X_1$, the set of $st$-paths in
$G - (s,v)$; and $X_2$, the set of $vt$-paths in $G - s$ concatenated
with the edge $(s,v)$; in that way, both $X_1$ and $X_2$ are described
in terms of the sets of $xt$-paths in a graph $G'$.  Thus, a procedure
$st$\textsc{Paths}$(s,t,G)$, to list $st$-paths in $G$, can be
implemented with recursive calls to $st$\textsc{Paths}$(s,t,G-(s,v))$,
corresponding to the $st$-paths in the partition $X_1$; and
$st$\textsc{Paths}$(v,t,G-s)$ with paths prepended with $(s,v)$,
corresponding to the $st$-paths in the partition $X_2$. In the base
case, where $s = t$ and the current partition has only one solution,
the corresponding $st$-path is output. The pseudocode in shown in
Algorithm~\ref{alg:six}.

Recall that in order to have an efficient algorithm, before performing
a recursive call we need to efficiently decide if the corresponding
partition, $X_1$ or $X_2$, is not empty, and only perform the call in
that case. Clearly, $X_1$ is not empty if and only if there is at
least one $st$-path in $G - (s,v)$, and $X_2$ is not empty if and only
if there is at least one $vt$-path in $G - s$. In both cases, the test
can be done in $O(|V| + |E|)$ time using a DFS traversal. Let us now
analyzed the delay of the algorithm. The height of the recursion is
bounded by $|V| + |E|$, since at every call one vertex or one edge is
removed from $G$, after $|V| + |E|$ calls the graph is empty. Hence,
there are at most $2 (|V| + |E|)$ nodes in any leaf-to-leaf path in
the recursion tree. As the time spend in each node is $O(|V| + |E|)$,
the delay is thus $O((|V| + |E|)^2)$.

\begin{algorithm}
  \KwIn{An undirected graph $G$, vertices $s$ and $t$, and a path $\pi$ 
    (initially empty).}
  \KwOut{The paths from $s$ to $t$ in $G$.}
  \If {$s=t$}{
    output S\\ 
    \Return
  }
  choose an edge $e=(s,v)$ \\
  \If {there is a $vt$-path in $G-s$}{
    $st$\textsc{Paths}$(G-s, v, t, \pi(s,v))$\\ 
  }
  \If {there is a $st$-path in $G-e$}{
    $st$\textsc{Paths}$(G-e, s, t,\pi)$ \\
  }
  \caption{$st$\textsc{Paths}$(G, s, t, \pi)$}
  \label{alg:six}
\end{algorithm}

\chapter[Kissplice: calling alternative splicing from RNA-seq]{Kissplice: de novo calling alternative splicing events from RNA-seq data}
\label{chap:kissplice}
\minitoc This chapter is strongly based on our paper \cite{Sacomoto12}. Here,
we address the problem of identifying and quantifying variations
(alternative splicing and genomic polymorphism) in RNA-seq data when
no reference genome is available, without assembling the full
transcripts. Based on the fundamental idea that each variation
corresponds to a recognizable pattern in a de Bruijn graph constructed
from the RNA-seq reads, we propose a general model for all variations
in such graphs. We then introduce an exact algorithm, called \ks, to
extract alternative splicing events. Finally, we show that it enables
to identify more correct events than general purpose transcriptome
assemblers. The algorithm presented in this chapter corresponds to \ks
version 1.6. Further improvements in time and memory efficiency are
shown in Chapters
\ref{chap:weighted} and \ref{chap:dbg}, respectively. The current
implementation of \ks (version 2.0) already includes those
improvements.


\bigskip
\bigskip


\section{Introduction}
Thanks to recent technological advances, sequencing is no longer
restricted to genomes and can now be applied to many new areas,
including the study of gene expression and splicing. As stated in
Section~\ref{subsec:ngs}, the so-called RNA-seq protocol consists in
applying fragmentation and reverse transcription to an RNA sample
followed by sequencing the ends of the resulting cDNA fragments. The
short sequencing reads then need to be reassembled to get back to the
initial RNA molecules. As stated in Section~\ref{sec:back:modeling}, a
lot of effort has been put on this assembly task, whether in the
presence or in the absence of a reference genome but the general goal
of identifying and quantifying all RNA molecules initially present in
the sample remains hard to reach.

The main challenge is certainly that reads are short, and can
therefore be ambiguously assigned to multiple transcripts.  In
particular, in the case of alternative splicing (AS for short), reads
stemming from constitutive exons can be assigned to any alternative
transcript containing this exon. Finding the correct transcript is
often not possible given the data we have, and any choice will be
arguable.  As pointed out in \cite{Martin11}, reference-based and de
novo assemblers each have their own limitations. Reference-based
assemblers (\cite{Scripture,Cufflinks,Fluxcapacitor,Ireckon,Express})
depend on the quality of the reference while only a small number of
species currently have a high-quality reference genome available. De
novo assemblers (\cite{Transabyss,Trinity,Oases,Idbatran}), as stated
in Section~\ref{sec:back:modeling}, implement reconstruction
heuristics which may lead them to miss infrequent alternative
transcripts while genes sharing repeats are likely to be assembled
together and create chimeras.

We argue here that it is not always necessary to aim at the difficult
goal of assembling full-length molecules. Instead, identifying the
variable parts between molecules is already very valuable and does not
require to solve the problem of assigning a read from a constitutive
exon to the correct transcript.  We therefore focus in this work on
the simpler task of identifying variations in RNA-seq data. Three kinds
of variations have to be considered: (i) alternative splicing (AS) that
produces several alternative transcripts for a same gene, (ii) single
nucleotide polymorphism (SNPs) that may also produce several
transcripts for a same gene whenever they affect transcribed regions,
and (iii) genomic insertions or deletions (indels).  Our contribution
in this chapter is double: we first give a general model which
captures these three types of variations by linking them to
characteristic structural patterns called ``bubbles'' in the de Bruijn
graph (DBG for short) built from a set of RNA-seq reads, and second,
we propose a method dedicated to the problem of identifying AS events
in a DBG, including read-coverage quantification.  We notice here that
only splicing events but not transcriptional events, such as
alternative start and polyadenylation sites, are covered by our
method.

The identification of bubbles or bulges in DBG has been studied before
in the context of genome assembly (\cite{Pevzner04,Velvet,Abyss}), but
the goal was not list them as variation-related structures, instead to
simplify the de Bruijn graph. On the other hand, methods to identify
variations as a restricted type of bubbles were proposed
(\cite{Peterlongo10,Cortex,Bubbleparse}), these works deal only with
genomic NGS data and the variations considered are genomic
polymorphisms, mainly SNPs and small indels. More
recently, \cite{Marygold} presented a method to list bubble-like
structures in the metagenomic context using the same graph
decomposition previously proposed in \cite{Sacomoto12}.

When no reference genome is available, efforts have focused on
assembling the full-length RNA molecules, not the variable parts which
are our interest here. As stated in Section~\ref{sec:back:modeling},
most RNA-seq assemblers (\cite{Transabyss,Trinity,Oases,Idbatran}) do
rely on the use of a DBG, but, since the primary goal of an assembler
is to produce the longest contigs, heuristics are applied, such as tip
or bubble removal, in order to linearize the graph.  The application
of such heuristics results in a loss of information which may in fact
be crucial if the goal is to study expressed variations (alternative
splicing and genomic polymorphism).

To our knowledge, this work is the first attempt to characterize
variations in RNA-seq data without assembling full-length transcripts.
We stress that it is not a general purpose transcriptome assembler and
when we benchmark it against such methods, we only focus on the
specific task of AS event calling.  Finally, our method can be used in
a comparative framework with two or more conditions and our
quantification module outputs a coverage (number of reads mapped) for
both the shorter and the longer isoform(s) of each AS event, in each
experiment.

The chapter is organized as follows. We first present the model
(Section~\ref{sec:kissplice:dbg_models}) linking structures of the DBG
for a set of RNA-seq reads to variations (AS, SNPs and indels), and
then introduce a method, that we call \ks, for identifying DBG
structures associated with AS events
(Section~\ref{sec:kissplice:algorithm}).  We show in
Section~\ref{sec:results} the results of using \ks compared with other
methods on simulated and real data.

\section{Methods}
\subsection{De Bruijn graph models} \label{sec:kissplice:dbg_models}
\subsubsection{De Bruijn graph}

In Section~\ref{sec:back:modeling}, we defined a de Bruijn graph for a
read set $\mathcal{R} \subset \{A,C,T,G\}^*$ as a directed graph where
each vertex corresponds to a $k$-mer and the arcs represent
suffix-prefix overlaps of size $k-1$ and correspond to a
$(k+1)$-mer. See Fig.~\ref{fig:kissplice:bidir_vs_dir}(a) for an
example of \emph{directed} de Bruijn graph. One problem with this
definition is that it does not capture very well the double stranded
nature of the DNA molecule, that is each $k$-mer present in the reads
has a reverse complementary $k$-mer essentially representing the same
information. Recall that, even though we are dealing with mRNAs
sequencing data, and RNA is single stranded, one of the early steps of
the RNA-seq protocol is reverse transcription, where the more stable
double stranded cDNA is obtained from the mRNA extracted from the
cell. Thus, RNA-seq data is also double\footnote{As stated in
Section~\ref{subsec:ngs}, there are strand specific RNA-seq
protocols. } stranded.

\begin{figure}[htbp]
  \centering
  \includegraphics[width=\linewidth]{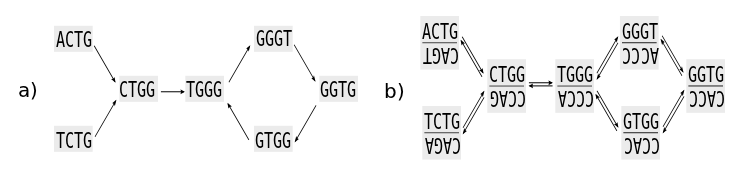}
  \caption{(a) The \emph{directed} de Bruijn graph, with $k=4$, for
    the set of reads $\mathcal{R} = \{ACTGG, TCTGGG, CTGGGTGGG\}$ is
    shown. (b) The \emph{bidirected} de Bruijn graph, with $k=4$, for
    the same set of reads is shown.}
  \label{fig:kissplice:bidir_vs_dir}
\end{figure}

In order to better model the DNA double stranded nature,
\cite{Medvedev07}, based on \cite{Kececioglu92}, modified the de
Bruijn graph definition to associate to each vertex not only a $k$-mer
$w \in \{A,C,T,G\}^k$ but its reverse complement $\overline{w} \in
\{A,C,T,G\}^k$. In such a context, a de Bruijn graph is a
directed\footnote{The original definition of \cite{Medvedev07} for the
  bidirected de Bruijn graph is based on \emph{bidirected} graphs
  (\cite{Edmonds70}). However, for the sake of a clearer exposition,
  Definition~\ref{def:biDBG} is based on directed multigraph with arc
  labels instead. It can be shown that they are equivalent.}
multigraph $G = (V,E)$, where each vertex $v \in V$ associated to a
$k$-mer $w$ and its reverse complement $\overline{w}$. The sequence
$w$, denoted by $F(v)$, is the \emph{forward sequence} of $v$, while
$\overline{w}$, denoted by $R(v)$, is the reverse complement sequence
of $v$.  An arc exists from vertex $v_1$ to vertex $v_2$ if the suffix
of length $k-1$ of $F(v_1)$ or $R(v_1)$ overlaps perfectly with the
prefix of $F(v_2)$ or $R(v_2)$. See
Fig.~\ref{fig:kissplice:bidir_vs_dir}(b) for an example of
\emph{bidirected} de Bruijn graph. This is formally stated in the
following definition.

\begin{definition}[Bidirected de Bruijn Graph] \label{def:biDBG}
  Given a set of reads $\mathcal{R} \subseteq \Sigma^*$ and a
  parameter $k \in \mathbf{N}$, the \emph{bidirected de Bruijn graph}
  $B_k(\mathcal{R})= (V,E)$ is a directed multigraph such that:
  
  \begin{enumerate}
  \item $V = \{ \{ w,\overline{w} \} | w \text{ is a $k$-mer of }
    \mathcal{R} \}$, 
  \item $E = \{ (x,y) \in V^2 | F(x) \text{ or } R(x) \text{ has a
    $k-1$ suffix-prefix overlap with } F(y) \mbox{ or } R(y)\}$,
  \end{enumerate}
  where $\overline{w}$ is the reverse complement of $w$, and $F,R : V
  \rightarrow \Sigma^k$ are functions such that, for\footnote{Given a
    vertex $v \in V$ and its two corresponding $k$-mers, it is
    arbitrary, but fixed, which $k$-mer is the forward $w = F(v)$ and
    reverse $\overline{w} = R(v)$.} $v = \{w, \overline{w}\} \in V$,
  $F(v) = w$ and $R(v) = \overline{w}$.
\end{definition}

It is convenient to augment this definition with arc labels in the set
$\{F,R\}^2$.  The first letter of the arc label indicates which of
$F(v_1)$ or $R(v_1)$ has a suffix-prefix overlap with $F(v_2)$ or
$R(v_2)$, this latter choice being indicated by the second letter.
Moreover, because of the reverse complements, a suffix-prefix overlap
from $v_1$ to $v_2$ induces a symmetrical suffix-prefix overlap from
the corresponding complementary $k$-mers of $v_2$ to $v_1$. As a
result, there is an even number of arcs in the bidirected de Bruijn
graph: if there is an arc from $v_1$ to $v_2$ then, necessarily, there
is a \emph{twin arc} from $v_2$ to $v_1$ with the corresponding label
(i.e. if the first arc has label $FF,RF,FR,RR$ then the second has
label $RR,RF,FR,FF$, respectively). An example of a bidirected de
Bruijn graph with the corresponding arc labels is shown in
Fig.~\ref{fig:kissplice:bidir_arc_labels}. The de Bruijn graphs
considered in this chapter are all bidirected, so we omit this term,
referring to them simply as de Bruijn graphs (or DBG for short).

\begin{figure}[htbp]
  \centering
  \includegraphics[width=0.8\linewidth]{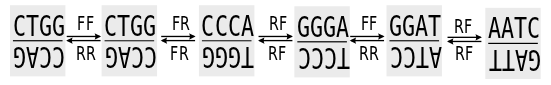}
  \caption{An example of a bidirected de Bruijn graph with arc labels.}
  \label{fig:kissplice:bidir_arc_labels}
\end{figure}

\begin{definition}[Valid path] \label{def:kissplice:validpath}
  Given a bidirected de Bruijn graph $B_k(\mathcal{R}) = (V,E)$, a
  simple path $p = (v_1,v_2) \ldots (v_{n-1},v_n)$ is \emph{valid} if
  for any two adjacent arcs $(v_{i-1},v_i)$ and $(v_i,v_{i+1})$ the
  labels are of the form $L_1L_2$ and $L_2L_3$, respectively, where
  $L_1,L_2,L_3 \in \{R,F\}$.
\end{definition}

Consider the arc $e = (x,y)$ with label $L_1L_2 \in \{R,F\}^2$, we say
that $e$ enters $y$ in the forward or reverse direction if $L_2 = F$
or $L_2 = R$, respectively, analogously for $e$ leaving
$x$. Basically, Definition~\ref{def:kissplice:validpath} says that for
a path to be valid all pairs of adjacent arcs should enter and leave a
vertex in the same direction. For instance, for the graph shown in
Fig.~\ref{fig:kissplice:bidir_arc_labels}, the path from the leftmost
vertex ($CTGG / CCAG$) going to the vertex $GGAT / ATCC$ is valid,
with $(FF, FR, RF, FF)$ being the corresponding sequence of arc-labels. On
the other hand, the path from the leftmost vertex ($CTGG / CCAG$) to the
rightmost vertex ($AATC / GATT$) is not valid, since there is no arc
leaving the \emph{forward} part of $GGAT / ATCC$ and entering $AATC /
GATT$. Finally, due the reverse complement relationship between the
pair of labels, every valid path $p = s \leadsto t$ induces a
complementary valid path $\overline{p} = t \leadsto s$, where each arc
is substituted by its twin.

\begin{figure}[htbp]
  \centering
  \includegraphics[width=0.8\linewidth]{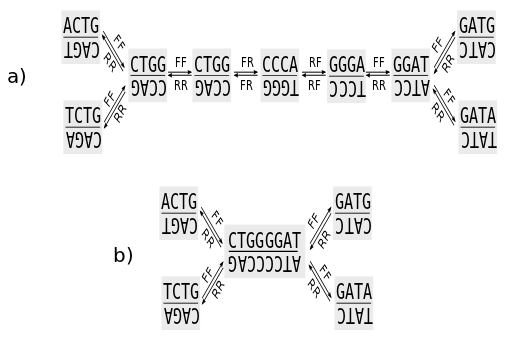}
  \caption{(a) An example of a de Bruijn graph, with $k=4$. (b) The
    corresponding compressed de Bruijn graph. The compressed path in
    the DBG has 5 vertices, and the corresponding vertex in the cDBG
    has a pair of sequences each of length $k + (i-1) = 4 + 4 = 8$.}
  \label{fig:kissplice:cdbg}
\end{figure}

A DBG can be compressed without loss of information by merging the
vertices of non-branching valid paths. A non-branching valid path $p =
s \leadsto t$ is a valid path such that for any internal vertex there
is only one valid extension. See Fig.~\ref{fig:kissplice:cdbg}(a) for
an example of a non-branching valid path from $CTGG / CCAG$ to $GGAT /
ATCC$. Let $d_F^+(v)$ be the number of arcs leaving $v$ in the forward
direction, analogously for incoming arcs and the reverse
direction. More explicitly, a path is non-branching if $d_F^+(s) = 1$
and $d_F^-(t) = 1$, and for every internal vertex $v$ we have that
$d_F^+(v) = d_F^-(v) = 1$ (the conditions can be stated considering
the reverse direction instead, as one implies the other).  Two
adjacent vertices in a non-branching valid path are merged into one by
removing the redundant information, that is keeping only one copy of
the $k-1$ suffix-prefix overlap. A valid path composed by $i>1$
vertices is merged into one vertex containing as labels, not a pair of
$k$-mers, but a pair of sequences of length $k+(i-1)$ as each vertex
in the path adds one new character to the first vertex.  See
Fig.~\ref{fig:kissplice:cdbg} for an example of DBG and the
corresponding compressed DBG. In the remaining of the chapter, we
denote by cDBG a compressed DBG. Moreover,
Definition~\ref{def:kissplice:validpath} also applies to cDBG.

\subsubsection{Bubble patterns in the cDBG}\label{ssec:bubbleevents}

Variations (alternative splicing events and genomic polymorphisms) in
a transcriptome, correspond to recognizable patterns in the cDBG,
which we call a
\emph{bubble}. Intuitively, the variable parts will correspond to
alternative paths and the common parts will correspond to the
beginning and end points of these paths.  See
Fig.~\ref{fig:kissplice:bubble} for an example of a bubble in a
cDBG. We now formally define the notion of bubble, taking carefully
into account the bidirected and arc labeled nature of the cDBG.

\begin{definition}[Switching Vertex] \label{def:kissplice:sv}
  Given a path $p = (v_1,v_2) \ldots (v_{n-1},v_n)$ in a bidirected de
  Bruijn graph (or a cDBG), a vertex $v_i \in p$ is a \emph{switching
    vertex} of $p$ if the arc $(v_i,v_{i+1})$ leaves $v_i$ in the
  complementary direction the arc $(v_i, v_{i-1})$ enters $v_i$.
\end{definition}

\begin{definition}[Bubble] \label{def:kissplice:bubble}
  Given a bidirected de Bruijn graph (or a cDBG) $B_k(\mathcal{R}) =
  (V,E)$, a \emph{bubble} is a cycle with at least four distinct
  vertices such that there are exactly two switching vertices, denoted
  $S_{left}$ and $S_{right}$.
\end{definition}

It follows directly from this definition, that for any bubble there
are two valid paths, not sharing any internal vertex, from $S_{left}$
to $S_{right}$.  In the remaining of the chapter, we refer to these
two paths as the paths of the bubble. If they differ in length, we
refer to, respectively, the longer and the shorter path of the
bubble. Where the length of a valid path in a cDBG is the length of
the corresponding sequence of that path, not the number of
vertices. In the example of Fig.~\ref{fig:kissplice:bubble} the
switching vertices are encircled in blue and the longer path is shown
above the shorter path.

\begin{figure}[htbp]
  \centering
  \includegraphics[width=0.4\linewidth]{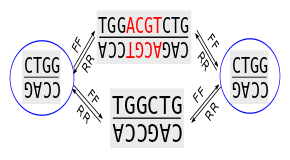}
  \caption{An example of a bubble in a bidirected de Bruijn graph. The
    bubble was generated by the sequences: CTGG{\color{red}ACGT}CTGG
    ($asb$) and CTGGCTGG ($ab$). The switching vertices are encircled
    in blue.}
  \label{fig:kissplice:bubble}
\end{figure}

In general, any process generating patterns $asb$ and $as'b$ in the
sequences, with $a,b,s,s' \in \Sigma^*$, $|a| \geq k, |b|\geq k$ and
$s$ and $s'$ not sharing any $k$-mer, creates a bubble in the cDBG.
Indeed, all $k$-mers entirely contained in $a$ (resp. $b$) compose the
vertex $S_{left}$ (resp. $S_{right}$). Since $|a| \geq k$ and $s \neq
s'$, there is at least one pair of $k$-mers, one in $as$ and the other
in $as'$, sharing the $k-1$ prefix and differing by the last letter,
thus creating a branch in $S_{left}$ from which the two paths in the
bubble diverge. The same applies for $sb$, $s'b$ and $S_{right}$,
where the paths merge again. All $k$-mers contained in $s$
(resp. $s'$) and in the junctions $as$ and $sb$ (resp. $as'$ and
$s'b$) compose the paths of the bubble. In the case where $s$ is
empty, the shorter path is composed of $k$-mers covering the junction
$ab$. As we show later most AS events fall into this case.

We show next that this model is general as it captures SNPs, indels
and AS events. However, the main focus of the algorithm we present in
this work is the detection of bubbles generated by AS events.

\subsubsection{Bubbles generated by AS events} 

As stated in Section~\ref{sec:back:AS}, a single gene may give rise to
multiple alternative spliceforms through the process of AS.
Alternative spliceforms differ locally from each other by the
inclusion or exclusion of subsequences. These subsequences may
correspond to exons (exon skipping), exon fragments (alternative donor
or acceptor sites) or introns (intron retention) as shown in
Fig.~\ref{fig:splicing_events}(a).  We should stress that we do not
model mutually exclusive exons (another less frequent type of AS),
since, as we show next, it does not correspond to the same pattern in
terms of path lengths, and is therefore harder to treat.
Additionally, alternative start and polyadenylation sites
(\cite{Alberts03}), which are not considered as AS events but as
transcriptional events, are also not taken into account.

An alternative splicing event corresponds to a local variation between
two alternative transcripts. It is characterized by two common
sequences ($a$ and $b$ in the AS events given in
Fig.~\ref{fig:splicing_events}(a)) and a single variable part ($s$ in
Fig.~\ref{fig:splicing_events}(a)). As stated in the last section, if
$|a| \geq k$, $|b| \geq k$, then the patterns $asb$ and $ab$ generate
a bubble in the cDBG. See Fig.~\ref{fig:splicing_events}(b) for an
example of bubble generated by an AS event. In this example, the
flanking sequences $a$ and $b$ correspond to the switching vertices,
the variable part $s$ to the longer path, and the shorter path
corresponds to the $k$-mers covering the junction $ab$. Moreover, as
there are $k-1$ $k$-mers covering the junction between the two common
sequences $a$ and $b$, the shorter path is composed of \emph{exactly}
$k-1$ $k$-mers. This however is not true in general. These two
properties -- correspondence between flanking sequences and switching
vertices and exactly $k-1$ $k$-mers in the shorter path -- do not hold
in general. They are not true when $a$ and $s$ share a suffix or $b$
and $s$ share a prefix. This case (which actually happens in more than
50\% of the AS events, since it suffices that 1 out of 4 possible
nucleotides are shared) is illustrated in
Fig~\ref{fig:splicing_events}(c). In this example, the sequence of the
switching vertex opening the bubble is $a$ concatenated with the
longest common prefix between $b$ and $s$, and the shorter path
contains $k-3$ $k$-mers. In general, the length of the shorter path
for a bubble generated by the pattern $asb$ and $ab$ is
$k-1-lcp(s,b)-lcs(s,a)$, where $lcp(s,b)$ (resp. $lcs(s,a)$) is the
length of the longest common prefix (resp. suffix) between $s$ and $b$
(resp. $a$).  Overall, a bubble generated by an AS event always
corresponds to a local variation between two RNA sequences. The
shorter variant always has a length bounded by $2k-2$. In human, 99\%
of the annotated exon skipping events yield a bubble with a shorter
path length between $2k-8$ and $2k-2$ (\cite{Kuhn09}).

\begin{figure}[htbp]
\centering 
\includegraphics[width=0.8\linewidth]{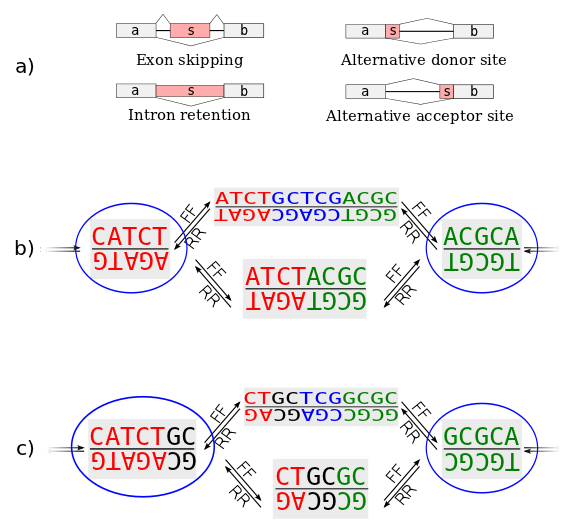}
\caption{(a) AS events generating a bubble in the DBG. These events
  create a bubble in the DBG or cDBG, in which the shorter path is
  composed by $k$-mers covering the $ab$ junction. This path, composed
  by $k-1$ vertices in the DBG, is compressed into a sequence of
  length $2k-2$ in the cDBG. (b) A bubble in a cDBG, with $k=5$, due
  to the variable part $\textcolor{blue}{GCTCG}$ ($s$). This bubble is
  generated by the sequences
  $\textcolor{red}{CATCT}\textcolor{green}{ACGCA}$ ($ab$) and
  $\textcolor{red}{CATCT}\textcolor{blue}{GCTCG}\textcolor{green}{ACGCA}$
  ($asb$). The shorter path has length $2k-2 = 8$. (c) A bubble in a
  cDBG, with $k=5$, due to the skipped exon $GC\textcolor{blue}{TCG}$
  ($s$) with the flanking sequences $\textcolor{red}{CATCT}$ ($a$) and
  $GC\textcolor{green}{GCA}$ ($b$). This bubble is generated by the
  sequences $\textcolor{red}{CATCT}GC\textcolor{green}{GCA}$ ($ab$)
  and
  $\textcolor{red}{CATCT}GC\textcolor{blue}{TCG}\textcolor{green}{GCGCA}$
  ($asb$). Observe that $s$ and $b$ share the prefix $GC$. As a
  result, the $k$-mers $ATCTG$ and $TCTGC$ are common to both paths
  and and represented only once; and the length of the shorter path is
  $2k-2-2= 6$.}
\label{fig:splicing_events}
\end{figure}

\subsubsection{Bubbles generated by SNPs, indels and repeats} \label{sec:kissplice:tandem_repeats}
Variations at the genomic level will necessarily also be present at the
transcriptomic level whenever they affect transcribed regions.  Two
major types of variations can be observed at the genomic level: SNPs and
indels.  As shown in Fig.~\ref{fig:kissplice:other_bubbles}(a) and
Fig.~\ref{fig:kissplice:other_bubbles}(b), they
also generate bubbles in the cDBG.

However, these bubbles have characteristics which enable to
differentiate them from bubbles generated by AS events. Indeed,
bubbles generated by SNPs exhibit two paths of length exactly $2k-1$,
which is larger than $2k-2$, the maximum size of the shorter path in a
bubble generated by an AS event.

Genomic insertions or deletions (indels for short) may also generate
bubbles with similar path lengths as bubbles generated by splicing
events. In this case, the difference of length between the two paths
is usually smaller, less than 3 nt for 85\% of indels in human
transcribed regions (\cite{Sherry01}) whereas it is more than 3 nt for
99\% of AS events. This suggests an initial criterion to separate
between AS and indels: when the difference of path lengths is strictly
below 3 we classify them as an indel; and AS event, otherwise. In
Section~\ref{subsec:kissplice:novel}, we refine the classification by
considering that in an AS event in a coding region the difference of
length is more likely to be a multiple of 3; since each codon is
composed of 3 bases, an AS event with the length of the variable part
not a multiple of 3 would cause a frame shift, potentially change
completely the amino acid sequence.

Finally, inexact repeats may generate bubbles with a similar path
length as bubbles generated by splicing events, but the sequences of
the paths exhibit a clear pattern which can be easily identified: the
longer path contains an inexact repeat. More precisely, as outlined in
Fig~\ref{fig:kissplice:other_bubbles}(c), it is sufficient to compare
the shorter path with one of the ends of the longer path. We treat
this kind of event as false positive, bubbles that do not correspond
to a true variation in the dataset. However, there is a type of true
genomic polymorphism that may be include in this group: copy number
variations (CNVs).

\begin{figure}[htbp]
  \centering 
  \includegraphics[width=\linewidth]{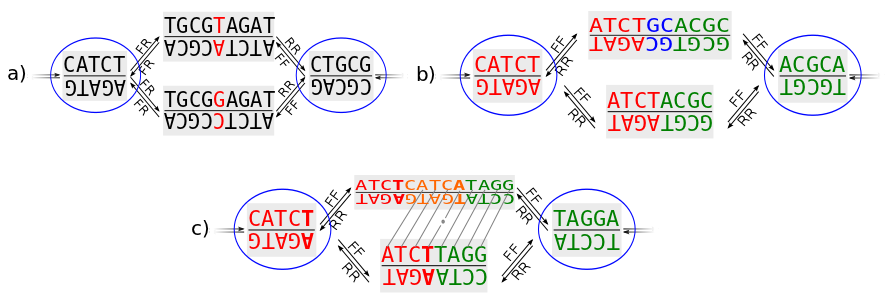} 
  \caption{(a) Bubble due to a SNP (substitution is the red
  letter). Starting from the forward strand in the leftmost
  (switching) vertex would generate the sequences
  $CATCT\textcolor{red}{A}CGCAG$ (upper path) and
  $CATCT\textcolor{red}{C}CGCAG$ (lower path). (b) Bubble due to the
  deletion $\textcolor{blue}{GC}$. This bubble is generated by the
  sequences $\textcolor{red}{CATCT}\textcolor{green}{ACGCA}$ and
  $\textcolor{red}{CATCT}\textcolor{blue}{GC}\textcolor{green}{ACGCA}$.
  (c) Bubble due to an inexact repeat. This bubble is generated by the
  sequences $\textcolor{red}{CATCT}\textcolor{green}{TAGGA}$ and
  $\textcolor{red}{CATCT}\textcolor{orange}{CATCA}\textcolor{green}{TAGGA}$,
  where $\textcolor{red}{CATC{\bf T}}\textcolor{orange}{CATC{\bf A}}$
  is an inexact repeat.}  \label{fig:kissplice:other_bubbles}
\end{figure}

In the following, we focus on bubbles generated by AS events. In the
output of the method we present in the next section, we do provide as
a collateral result three additional collections of bubbles: one
corresponding to putative SNPs, one to short indels, and one to
putative repeats associated bubbles. The post-treatment of these
collections to discard false positives caused by sequencing errors, or
recover the ones corresponding to CNVs, is beyond the scope of this
work.

\subsection{The \ks algorithm} \label{sec:kissplice:algorithm}

The \ks algorithm detects in the cDBG all the bubble patterns
generated by AS events, i.e. the bubbles having a shorter path of
length at most $2k-2$.  Essentially, the algorithm lists all the
\emph{cycles} verifying the following criteria:
\begin{itemize}
\item[{\bf i}] the cycle contains exactly two switching vertices,
  i.e. it corresponds to a bubble;
\item[{\bf ii}] the length of the shorter path linking the two
  switching vertices is smaller than $2k-2$;
\item[{\bf iii}] both paths have length greater than $2k-8$;
\item[{\bf iv}] the length of the longer path is smaller than $\alpha$
  (a parameter, set to 1000 by default).
\end{itemize}
The last condition imposes an upper bound on the length of the exon or
intron skipped in a AS event, and is necessary due to performance
issues, a larger value considerably increases the running
times. Further criteria are applied to make the algorithm more
efficient without loss of information, and to eliminate bubbles that
do not correspond to AS.

Since the number of cycles in a graph may be exponential in the size
of the graph, the naive approach of listing all cycles of the cDBG and
verifying which of them satisfy our conditions is only viable for very
small cases. Nonetheless, \ks is able to enumerate a potentially
exponential number of bubbles for real-sized dataset in very
reasonable time and memory consumption.  This is in part due to the
fact that, previous to cycle enumeration, the graph is pre-processed
in a way that, along with the pruning criteria of Step 4 (see below),
is responsible for a good performance in practice.

\ks is indeed composed of six main steps which are described next.
The pre-processing just mentioned corresponds to Step 2, and the
enumeration algorithm is described in Step 4. This description
corresponds to \ks version 1.6. A memory efficient replacement for
Step 1 is presented in Chapter~\ref{chap:dbg}. And a time efficient
replacement for Step 4 is presented in
Chapter~\ref{chap:weighted}. The current implementation of \ks
(version 2.0) includes both improvements.

\begin{enumerate}

\item \textit{cDBG construction.} Construction of the cDBG of the
  reads of one or several RNA-seq experiments. The first step is to
  obtain the list of unique $k$-mers, with the corresponding
  multiplicities (coverage), from the reads. This is done, using
  constant memory, by applying an algorithm similar to the
  \emph{external merge-sort} (\cite{Knuth98}) to the multiset of
  $k$-mers. Basically, the method works by partitioning the multiset
  of $k$-mers, and performing several iterations where only a fixed
  amount of $k$-mers is loaded in memory, sorted, and re-written on
  the disk. As a result, we obtain a list of $k$-mers and its
  coverage.  In order to get rid of most of the sequencing errors,
  $k$-mers with a minimal $k$-mer coverage of $mkC$ (a parameter) are
  removed. The second step is to actually build the DBG, this is done
  in the naive way by reading the list of $k$-mers and adding the
  corresponding arcs. In the next step, using a greedy non-branching
  path extension algorithm all maximal non-branching valid paths are
  found. Then, we obtain the cDBG by merging each path into a single
  vertex.

\item \textit{Biconnected component (BCC) decomposition.} As stated in
  Section~\ref{sec:back:graphs}, a connected undirected graph is
  biconnected if it remains connected after the removal of any vertex,
  and a BCC of an undirected graph is a maximal biconnected
  subgraph. Moreover, as stated in Lemma~\ref{lem:back:bcc} the BCCs
  of an undirected graph form a partition of the edges with two
  important properties: every cycle is contained in exactly one BCC,
  and every edge not contained in a cycle forms a singleton BCC.

  From Definition~\ref{def:kissplice:bubble}, it is clear that every
  bubble in a cDBG corresponds to a cycle in the underlying undirected
  graph. Thus, applying on the underlying undirected graph of the cDBG
  Tarjan's lowpoint method (\cite{Tarjan72}) which performs a modified
  depth-first search traversal of the graph, Step 2 detects all BCCs,
  and discards the ones with less than 4 vertices, they cannot contain
  any bubble.  Without modifying the results, this considerably
  reduces the memory footprint and the computation time of the whole
  process. To give an idea of the effectiveness of this step, the cDBG
  of a 5M reads dataset had 1.7M vertices, but the largest BCC only
  2961 vertices.

\item \textit{Simple bubbles compression.} Single substitution events
  (SNPs, sequencing errors) generate a large number of cycles
  themselves included into bigger ones, creating a combinatorial
  explosion of the number of possible bubbles. This step of \ks
  detects and compresses all bubbles composed of just four vertices:
  two switching vertices and two \emph{non-branching internal
    vertices} each corresponding to sequences differing by just one
  position. Fig.~\ref{fig:kissplice:other_bubbles}(a) shows an example
  of a simple bubble. Simple bubbles are output as potential SNPs and
  then replaced by a single vertex in the graph.  The two
  non-branching internal vertices are merged into one, storing a
  consensus sequence where the unique substitution is replaced by N.

\item \textit{Bubble enumeration.}  The cycles are detected in the
  cDBG using a simple backtracking procedure proposed
  by \cite{Tiernan70}, which is an unconstrained DFS augmented with
  four pruning criteria. Indeed, from a path prefix $\pi = s \leadsto
  u$ the algorithm recursively explores the vertices of $N^+(u)$ minus
  the internal vertices of $\pi$. Every time a new vertex $v \in
  N^+(u)$ is added to $\pi$ the algorithm checks whether: $\pi \cdot
  (u,v)$ contains more than two switching vertices, the length of the
  shorter path is greater than $2k-2$, the length of the longer path
  is greater than $\alpha$, or the length of one of the paths is
  smaller than $2k-8$; if any of the conditions is satisfied the
  algorithm stops the recursion on that branch. On the other hand, if
  $\pi \cdot (u,v)$ is a cycle, i.e. $v = s$, and it satisfies the
  conditions (i) to (iv) the algorithm outputs a bubble.

  This approach has the same theoretical time complexity as Tiernan's
  algorithm for cycle listing, i.e. in the worst case the complexity
  is proportional to the number of paths in the graph, which might be
  exponential in the size of the graph and the number of
  bubbles. Tiernan's algorithm is worse than Tarjan's
  (\cite{Tarjan73}) or Johnson's (\cite{Johnson75}) polynomial delay
  algorithms, but it appears to be not immediate how to use the
  pruning criteria with them while preserving their theoretical
  complexity.  Moreover, the pruning criteria are very effective for
  the type of instances we are dealing with. In practice, Tiernan's
  algorithm with prunings is faster than a complete cycle listing
  using Tarjan's or Johnson's with a post-processing step to check the
  four conditions.

\item \textit{Results filtration and classification.} The two paths of
  each bubble are aligned. If the whole of the shorter path aligns
  with high similarity to the longer path, we decide that the bubble
  is due to inexact repeats (see
  Section~\ref{sec:kissplice:tandem_repeats}). After this alignment, a
  bubble is classified either as an SNP, AS event, repeat associated
  bubble, or a small indel.

\item \textit{Read coherence and coverage computation.} Reads from
  each input dataset are mapped to each path of the bubble.  If at
  least one nucleotide of a path is covered by no read, the bubble is
  said to be not \emph{read-coherent} and is discarded. The coverage
  of each position of the bubble corresponds to the number of reads
  overlapping this position. 

\end{enumerate}
 
\section{Results} \label{sec:results}
\subsection{Simulated data}
  
In order to assess the sensitivity and specificity of our approach, we
simulated the sequencing of genes for which we are able to control the
number of alternative transcripts.  We show that the method is indeed
able to recover AS events whenever the alternative transcripts are
sufficiently expressed. For our sensitivity tests, we used simulated
RNA-seq single end reads (75 bp) with sequencing errors. We first
tested a pair of transcripts with a 200 nt skipped exon.  Simulated
reads were obtained with MetaSim (\cite{Richter08}) which is a
reference software for simulating sequencing experiments. As in real
experiments, it produces heterogeneous coverage and authorizes to use
realistic error models.

In order to find the minimum coverage for which we are able to work,
we created datasets for several coverages (from 4X to 20X, which
corresponds to 60 to 300 Reads Per Kilobase or RPK for short), with 3
repetitions for each coverage, and tested them with different values
of $k$ ($k=13, \ldots 41$). The purpose of using 3 repetitions for
each coverage was to obtain results which did not depend on
irreproducible coverage biases. For coverages below 8X (120 RPK), \ks
found the correct event in some but not all of the 3 tested
samples. The failure to detect the event was due to the heterogeneous
and thus locally very low coverage around the skipped exon, e.g.  some
nucleotides were not covered by any read or the overlap between the
reads was smaller than $k$-1.  Above 8X (120 RPK), \ks detected the
correct exon skipping event in all samples.

For each successful test, there was a maximal value $k_{max}$ for $k$
above which the event was not found, and a minimal value $k_{min}$
below which \ks also reported false positive events. Indeed, if $k$ is
too small, then the pattern $ab$, $as'b$, with $|a| \geq k, |b|\geq k$
is more likely to occur by chance in the transcripts, therefore
generating a bubble in the DBG. Between these two thresholds, \ks
found only one event: the correct one.  The values of $k_{min}$ and
$k_{max}$ are clearly dependent on the coverage of the gene. At 8X
(120 RPK), the 200 nucleotides exon was found between $k_{min}=17$ and
$k_{max}=29$. At 20X (300 RPK), it was found for $k_{min}=17$ and
$k_{max}=39$.  We performed similar tests on other datasets, varying
the length of the skipped exon. As expected, if the skipped exon is
shorter (longer), \ks needed a lower (higher) coverage to recover it.

Since \ks is, to our knowledge, the first method able to call AS
events without a reference genome, it cannot be easily benchmarked
against other programs.  Here, we compare it to a general purpose
transcriptome assembler, \tri (\cite{Trinity}). Both methods are
compared only on the specific task of AS event calling. The current
version of \tri being restricted to a fixed value of $k=25$, we
systematically verified that this value was included in
$[k_{min},k_{max}]$.

We found out that \tri was able to recover the AS event in all 3
samples only when the coverage was above 18X (270 RPK), which clearly
shows that \ks is more sensitive for this task.  This can be explained
by the fact that \tri uses heuristics which tend to over-simplify the
cDBG.

All these results were obtained using a minimal $k$-mer coverage
($mkC$ for short) of 1. We also tested with $mkC=2$ (i.e. $k$-mers
present only once in the dataset are discarded), leading to the same
main behavior. We noticed however a loss in sensitivity for both
methods, but a significant gain in the running time. \ks found the
event in all 3 samples for a coverage of 12X (180 RPK) which remains
better than the sensitivity of \tri for $mkC=1$.

\subsection{Real data} \label{sec:kissplice:real_data}

We further tested our method on RNA-seq data from human.  Even though
we do not use any reference genome in our method, we applied it to
cases where an annotated reference genome is indeed available in order
to be able to assess if our predictions are correct.

We ran \ks with $k=25$ and $mkC=2$ on a dataset which consists of 32M
reads from human brain and 39M reads from liver from the Illumina Body
Map 2.0 Project (downloaded from the Sequence Read Archive, accession
number ERP000546).  As in all DBG based assemblers, the most memory
consuming step was the DBG construction which we performed on a
cluster.  The memory requirement is directly dependent on the number
of unique $k$-mers in the dataset.

Despite the fact that we do not use any heuristic to discard $k$-mers
(except for the minimum coverage threshold) from our index, our memory
performances are very similar to the ones of Inchworm, the first step
of \tri, as indicated in Fig.~\ref{fig:inchwormDBG}. In addition, for
the specific task of calling AS events, \ks is faster than \tri as
shown in Fig.~\ref{fig:ksVstritime}.

\ks identified 5923 biconnected components which contained at least
one bubble, 664 of which consisted of bubbles generated by repeats
associated events and 1160 which consisted of bubbles generated by
short indels (less than 3 nt).  Noticeably, the BCCs which generated
most cycles and were most time consuming were associated to
repeats. As these bubbles are not of interest for \ks, this
observation prompted us to introduce an additional parameter in \ks to
stop the computation in a BCC if the number of cycles being enumerated
reaches a threshold. This enabled us to have a significant gain of
time.

Out of the 4099 remaining BCCs, we found that 3657 were read-coherent
(i.e. each nucleotide is covered by at least one read) and we next
focused on this set.  For each of the 3657 cases, we tried to align
the two paths of each bubble to the reference genome using Blat
(\cite{Kent02}). If the two paths align with the same initial and
final coordinates, then we consider that the bubble is a real AS
event. If they align with different initial and final coordinates,
then we consider that it is a false positive.  Out of the 3657 BCCs,
3497 (95\%) corresponded to real AS events, while the remaining
corresponded to false positives. A first inspection of these false
positives led to the conclusion that the majority of them correspond
to chimeric transcripts. Indeed, the shorter path and the longer path
both map in two blocks within the same gene, but the second block is
either upstream of the first block, or on the reverse strand, in both
cases contradicting the annotations and therefore suggesting that the
transcripts are chimeric and could have been generated by a genomic
rearrangement or a trans-splicing mechanism.

For each of the 3497 real cases, we further tried to establish if they
corresponded to annotated splicing events. We therefore first computed
all annotated AS events using AStalavista (\cite{Sammeth08}) and the
UCSC Known Genes annotation (\cite{Kuhn09}).  Then, for each aligned
bubble, we checked if the coordinates of the aligned blocks matched
the splice sites of the annotated AS events. If the answer was
positive, then we considered that the AS event we found was known,
otherwise we considered it was novel.  Out of a total of 3497 cases,
we find that only 1538 are known while 1959 are novel. This clearly
shows that current annotations largely underestimate the number of
alternative transcripts per multi-exon genes as was also reported
recently (\cite{Wang08}).

Additionally, we noticed that 719 BCCs contained more than one AS
event, which all mapped to the same gene. This corresponds to complex
splicing events which involve more than 2 transcripts. Such events
have been described in \cite{Sammeth09}. Their existence suggests that
more complex models could be established to characterize them as one
single event, and not as a collection of simple pairwise events.  An
example of novel complex AS event is given in
Fig.~\ref{fig:complexASevent}.

\begin{figure}[htbp]
\centering
\includegraphics[width=\linewidth]{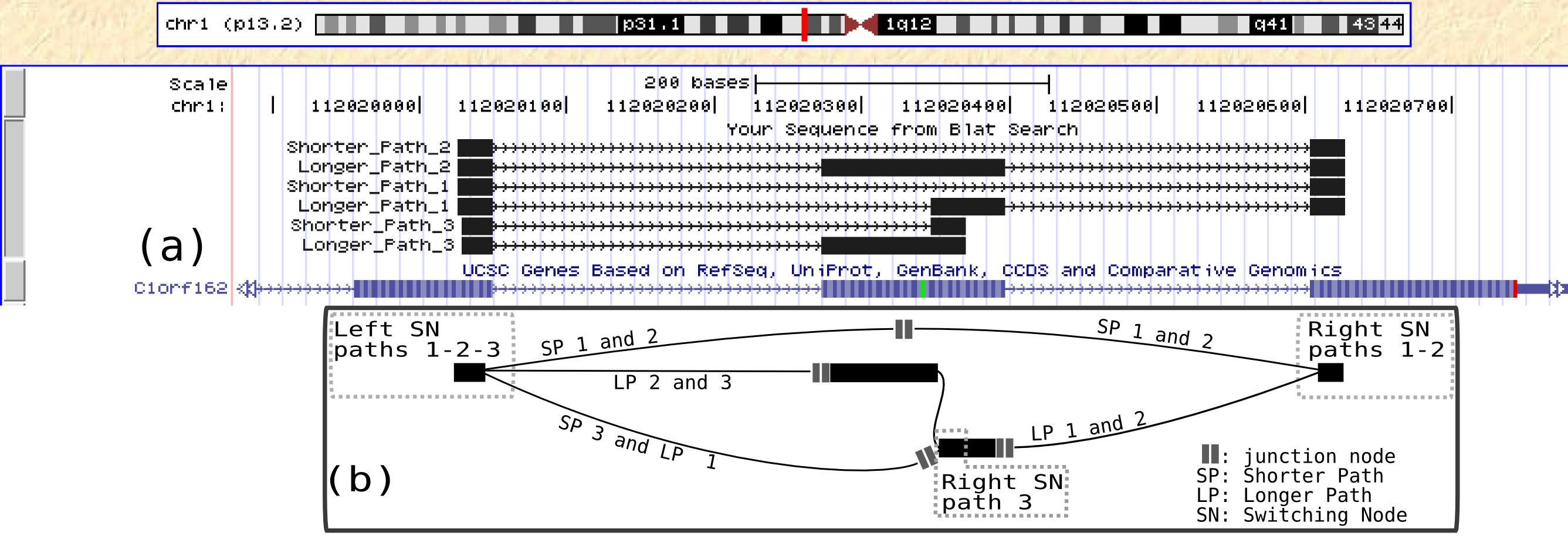}
\caption{BCC corresponding to a novel complex AS event. The
  intermediate annotated exon is either present, partially present, or
  skipped. (a) The annotations (blue track) report only the version
  where it is present while black tracks report all events found by
  \ks. (b) The cDBG associated to this complex event where the
  junction vertices are composed by $2k-2$ nucleotides.}
\label{fig:complexASevent}
\end{figure}

We also found the case where the same AS event maps to multiple
locations on the reference genome (423 cases). We think these
correspond to families of paralogous genes, which are ``collectively''
alternatively spliced. We were able to verify this hypothesis on all
tested instances. In this case, we are unable to decide which of the
genes of the family are producing the alternative transcripts, but we
do detect an AS event.

\subsection{Characterization of novel AS events} \label{subsec:kissplice:novel}

In order to further characterize the 1959 novel AS events we found, we
compared them with annotated events considering their abundance,
length of the variable region and use of splice sites.  For each AS
event, we have 4 abundances, one for each spliceform (i.e. path of the
bubble), and one for each condition. We computed the abundance of an
event as the abundance of the minor spliceform.  As outlined in
Fig.~\ref{fig:abundances}, we show that novel events are less abundant
than annotated events.  This in itself could be one of the reasons why
they had not been annotated so far.  Interestingly, we also found that
while annotated events are clearly more expressed in brain than liver
(median coverage, in reads per nucleotide, of 3.4 Vs 1.2), this trend
was weaker for novel events (2.4 Vs 1.2). This may reflect the fact
that, since tissue-specific splicing in brain has been intensely
studied, annotations may be biased in their favor.

\begin{figure}[htbp]
\centering
\includegraphics[width=0.6\linewidth]{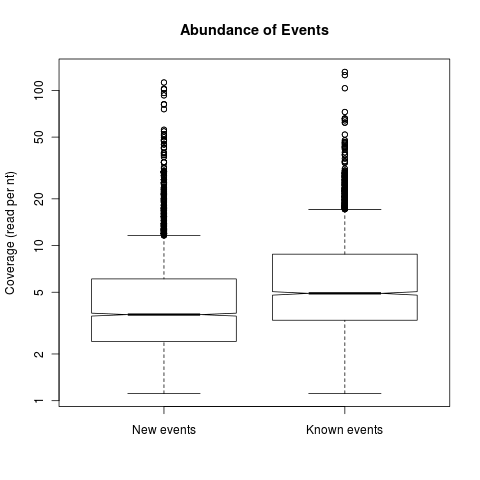}
\caption{Abundance of known and novel events.}
\label{fig:abundances}
\end{figure}

We then computed the length of each event as the difference of the
length between the two paths of the bubble. We found that for
annotated events, there is a clear preference (59\%) for lengths that
are a multiple of 3, which is expected if the event affects a coding
region. However, although still very different from random, this
preference is less strong for novel events (45\%), which, in addition,
are particularly enriched in short lengths as shown in
Fig.~\ref{fig:lengths}.

\begin{figure}[htbp]
\centering
\includegraphics[width=0.6\linewidth]{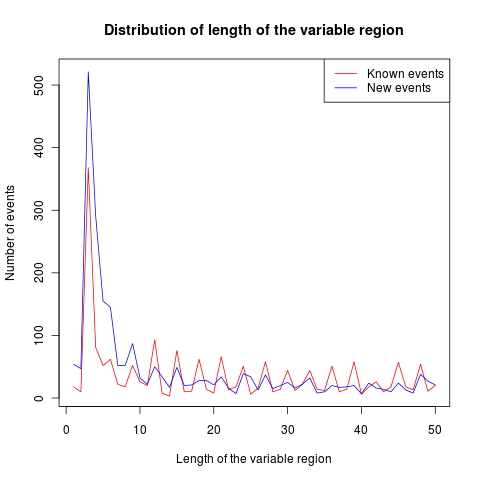}
\caption{Distribution of lengths of the variable regions for known and
  novel events. Only the initial part of the distribution is given.}
\label{fig:lengths}
\end{figure}

Finally, we computed the splice sites of annotated and novel events,
and we found that a vast majority (99.5\%) of known events exhibit
canonical splice sites, while this is again less strong for novel
events (75.3\%). Out of the non canonical cases, 13 correspond to U12
introns, but most correspond to short events.

Altogether, while we cannot discard that short non canonical events do
occur and have been under-annotated so far, we think that the
observations we make on the length and splice site features can be
explained by the presence of genomic indels in our results. We had
indeed already stated in Section~\ref{sec:kissplice:dbg_models} that
while most annotated genomic indels are below 3nt, some may still be
above. In order to assess the proportion of bubbles, with length below
10nt, corresponding to indels and AS events, we mapped them to the
reference genome. The results are shown in
Fig.~\ref{fig:kissplice:indel_vs_as}. It is clear that bubbles with
length smaller than 6nt and not a multiple of 3 are more likely to
correspond to genomic indels than AS events. In \ks (version 2.0) we
changed our criterion to classify events with lengths 1, 2, 4, and 5
nt as indels. Moreover, events larger than 10nt have canonical splice
sites 92.5\% of the cases. More generally, we wish to stress that this
confusion between genomic indels and AS events is currently being made
by all transcriptome assemblers.

\begin{figure}[htbp]
\centering
\includegraphics[width=0.9\linewidth]{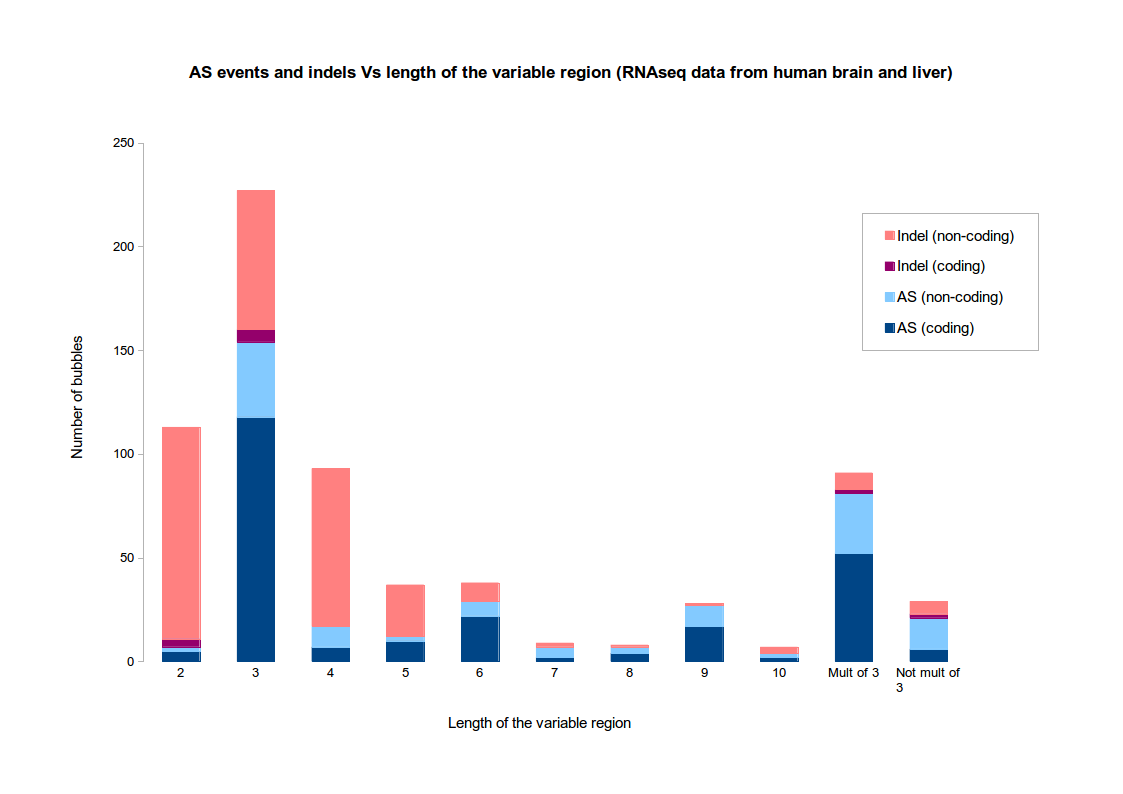}
\caption{Distribution of bubbles corresponding to alternative
  splicing events and indels, according to the length of the variable
  region.}
\label{fig:kissplice:indel_vs_as}
\end{figure}

\subsection{Comparison with \tri}
Finally, in order to further discuss the sensitivity of our method on
real data, we compared our results with \tri. Although \tri is not
tailored to find AS events, we managed to retrieve this information
from the output. Whenever \tri found several alternative transcripts
for one gene, we selected this gene. We further focused on cases which
contained a cycle in the splicing graph reconstructed from this gene
and we compared them with the events found by \ks.  Whenever we found
that both the longer and the shorter path of a bubble were mapping to
the transcripts of a \tri gene, we decided that both methods had found
the same event. In total, \ks found 4099 cases, \tri found 1123 out of
which 553 were common. While the sensitivity is overall larger for
\ks, we see that 570 cases are found by Trinity and not by \ks.  We
then mapped these transcripts to the human genome using Blat. In many
instances (348 cases), the transcripts did not align on their entire
length, or to different chromosomes, indicating that they corresponded
to chimeras.  A first inspection of the remaining 222 cases revealed
that they correspond to the complex BCCs we chose to neglect at an
early stage of the computation, because they contain a very large
number of repeat-associated bubbles. A first simple way to deal with
this issue is to increase the value of $k$. The effect of this is to
break the large BCCs into computable cases, enabling to recover a good
proportion of the missed events. For instance, for $k=35$, we found
back 84 cases.  More generally, this shows that more work on the model
and on the algorithms is still required to characterize better AS
events which are intricate with inexact repeats. We think that \tri
manages to identify some of them because it uses heuristics, which
enables it to simplify these complex graph structures.

\begin{figure}[htbp]
\centering
\subfloat[]{\label{fig:inchwormDBG} \includegraphics[width=0.5\linewidth]{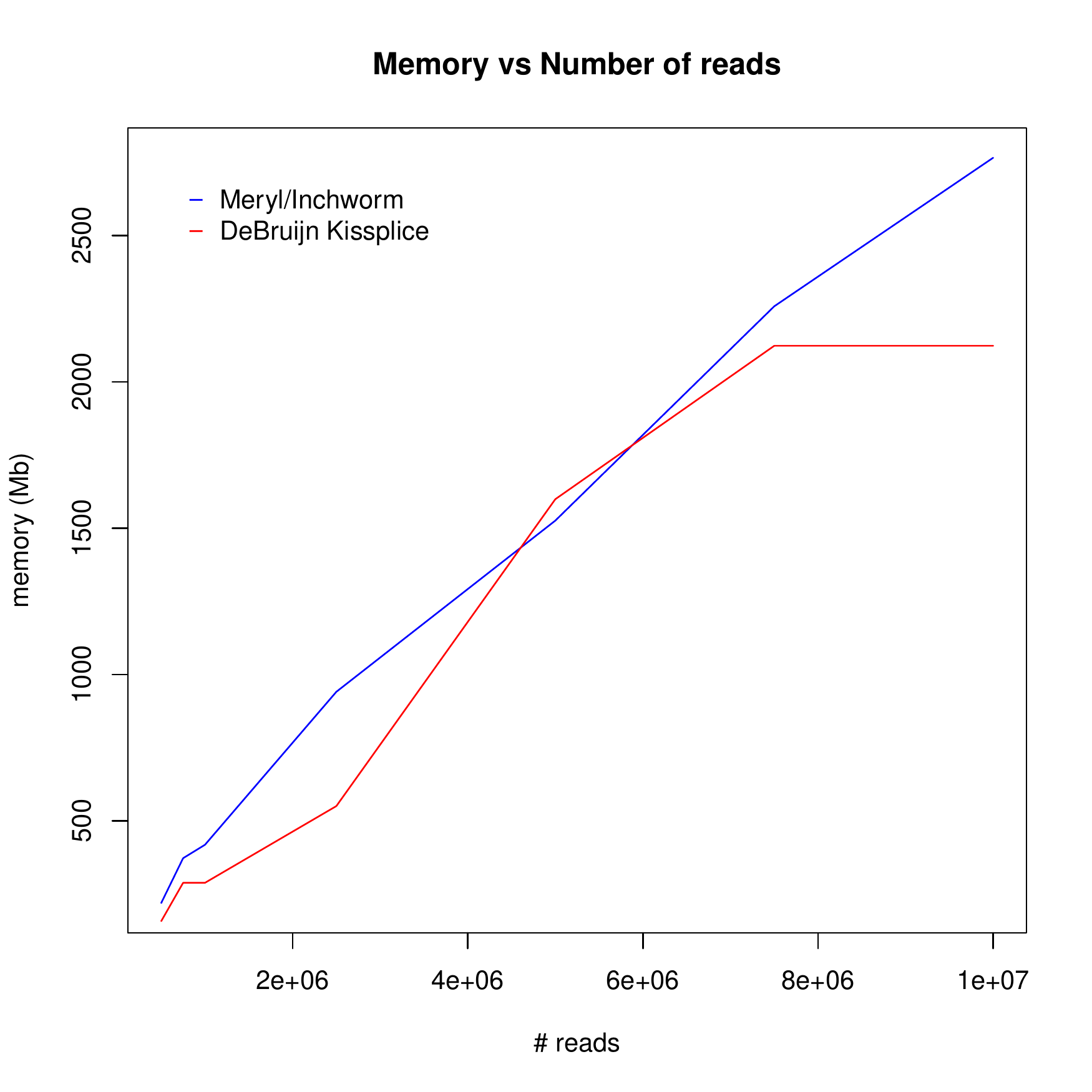}}
\subfloat[]{\label{fig:ksVstritime} \includegraphics[width=0.5\linewidth]{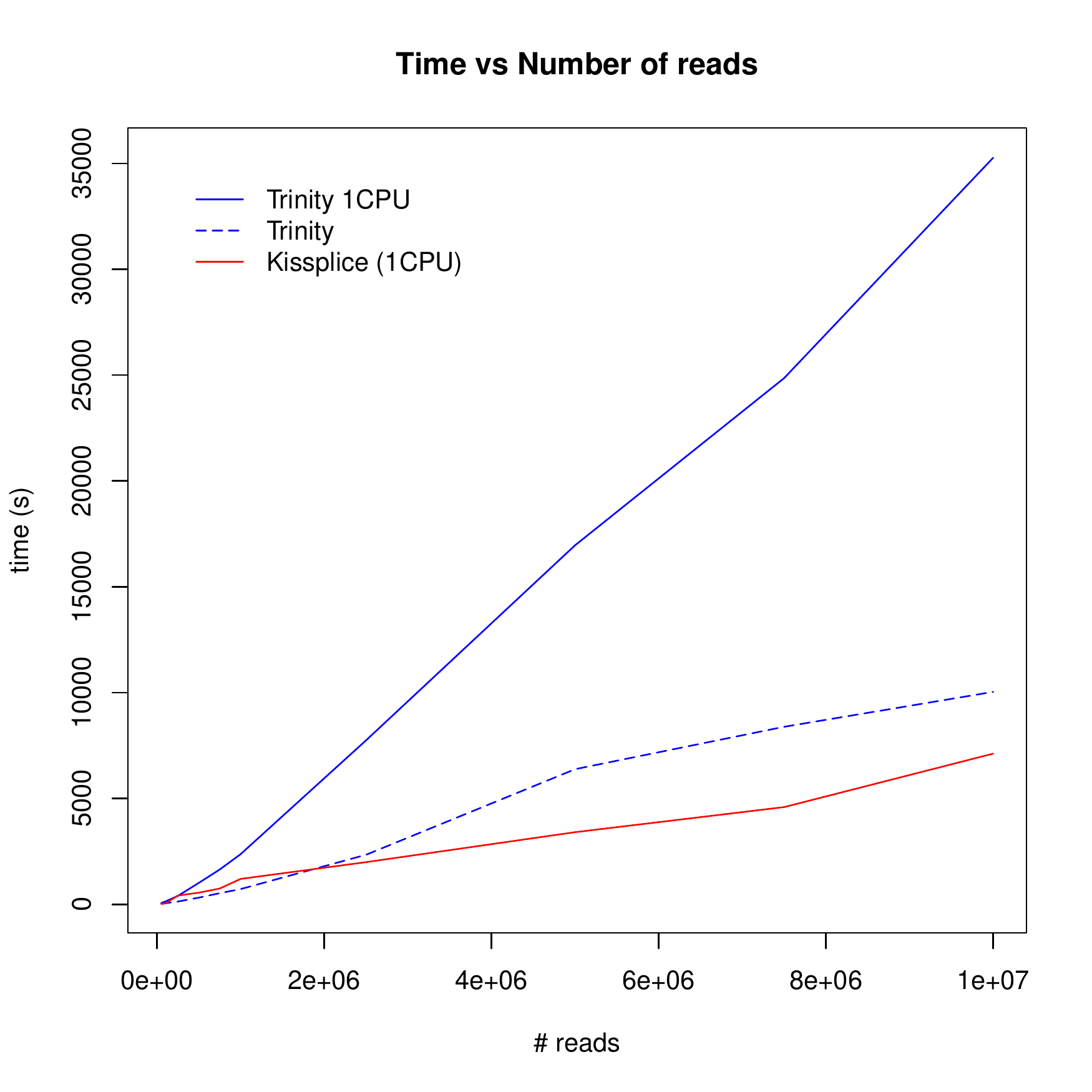}}
\caption{(a) Memory usage of \ks and Inchworm as a function of input
  size. (b) Time performances of \ks and \tri as a function of input
  size.}
\end{figure}

\section{Discussion and conclusions}

This chapter presents two main contributions. First, we introduced a
general model for detecting variations in de Bruijn graphs, and second,
we developed an algorithm, \ks, to detect AS events in such
graphs. This approach enables to tackle the problem of finding AS
events without assembling the full-length transcripts, which may be
time consuming and uses heuristics that may lead to a loss of
information. To our knowledge, this approach is new and should
constitute a useful complement to general purpose transcriptome
assemblers.

Results on human data show that this approach enables de novo calling
of AS events with a higher sensitivity than obtained by the approaches
based on a full assembly of the reads, while using similar memory
requirements and less time. 5\% of the extracted events correspond to
false positives, while the 95\% remaining can be separated into known
(44\%) and novel events (56\%). Novel events exhibit similar sequence
features as known events as long as we focus on events longer than 10
bp. Below this, novel events seem to be enriched in genomic indels.

\ks is an user-friendly tool under active development available for
download at \url{http://kissplice.prabi.fr/}, which is mature enough
to be used in real life projects to establish a more complete catalog
of AS events in any species, whether it has a reference genome or
not.  Despite the fact that more and more genomes are now being
sequenced, the new genome assemblies obtained usually do not reach the
level of quality of the ones we have for model organisms. Hence, we
think that methods which do not rely on a reference genome are not
going to be easily replaced in the near future.

There is of course room for further improvements. For instance, the
current bubble listing algorithm, the core of the \ks pipeline, is not
entirely satisfactory. In Chapter~\ref{chap:unweighted}, we present a
linear delay algorithm to list all cycles satisfying condition (i) of
Section~\ref{sec:kissplice:algorithm}, that is to directly list all
bubbles in a de Bruijn graph. In Chapter~\ref{chap:weighted}, we
propose an improved, completely unrelated, polynomial delay algorithm
to list all cycles satisfying conditions (i), (ii) and (iv), and
experimentally show that this method outperforms the algorithm of
Section~\ref{sec:kissplice:algorithm}.

Another point not satisfactory in the initial \ks (version 1.6)
pipeline is the memory consumption. As stated in
Section~\ref{sec:kissplice:real_data}, the memory bottleneck is the de
Bruijn graph construction. We address this issue in
Chapter~\ref{chap:dbg} where we propose a practical algorithm to build
the de Bruijn improving over the state of the art.

In addition, the coverage could be used for distinguishing SNPs from
sequencing errors, and the splicing site signature, i.e. canonical
splicing sites (\cite{Burset00}) GT-AG, could be used to distinguish
between intron retention and the others AS events.  Moreover, the
sequences surrounding the bubbles could be locally assembled using a
third party tool (\cite{Mapsembler}). This would allow to output their
context or the full contig they belong to.

Last, the complex structure of BCCs associated to repeats seems to
indicate that more work on the model and on the algorithms is required
to efficiently deal with the identification of repeat associated
bubbles, which may be highly intertwined with other events.

\chapter{Listing in unweighted graphs}
\label{chap:unweighted}
\minitoc
In this chapter, we are mainly concerned with listing problems
in \emph{unweighted} graphs. In directed graphs, we consider the
problem of listing bubbles, defined as a pair of internally
vertex-disjoint paths (Chapter~\ref{chap:kissplice}). In undirected
graphs, we consider the classical problem of listing $st$-path and
cycles. The chapter is divided in two main parts.

The first part (Section~\ref{sec:unweighted:bubble}) is strongly based
on our paper \cite{Birmele12}. The goal is to show a non-trivial
adaptation of Johnson's cycle\footnote{Johnson uses the
term \emph{elementary circuits}.}  listing algorithm
(\cite{Johnson75}) to identify all bubbles in a directed graph
maintaining the same complexity. For a directed graph with $n$
vertices and $m$ arcs, containing $\eta$ bubbles, the method we
propose lists all bubbles with a given source in $O((n + m)(\eta +
1))$ total time and $O(m+n)$ delay. For the general problem of listing
bubbles, this algorithm is exponentially faster than the algorithm
based on Tiernan's algorithm (\cite{Tiernan70}) presented in
Chapter~\ref{chap:kissplice}. However, it should be noted that,
contrary to Chapter~\ref{chap:kissplice}, the graph here is not a
bidirected de Bruijn graph.

The second part (Section~\ref{sec:unweighted:cycle}) is strongly based
on our paper \cite{Birmele13}. The goal is to show an algorithm to
list cycles in undirected graphs improving over the state of the art
(Johnson's algorithm). Indeed, we present the first optimal solution
to list all the simple cycles in an undirected graph
$G$. Specifically, let $\setofcycles(G)$ denote the set of all these
cycles.  For a cycle $c \in \setofcycles(G)$, let $|c|$ denote the
number of edges in~$c$. Our algorithm requires $O(m
+ \sum_{c \in \setofcycles(G)}{|c|})$ time and is asymptotically
optimal: $\Omega(m)$ time is necessarily required to read $G$ as
input, and $\Omega(\sum_{c \in \setofcycles(G)}{|c|})$ time is
required to list the output.  We also present the first optimal
solution to list all the simple paths from $s$ to $t$ in an undirected
graph $G$.


\bigskip
\bigskip


\section{Efficient bubble enumeration in directed graphs} \label{sec:unweighted:bubble}
\subsection{Introduction}

In the previous chapter, a method (\ks) to identify variants
(alternative splicing events, SNPs, indels and inexact tandem repeats)
in RNA-seq data without a reference genome was introduced. Each
variant corresponds to a recognizable pattern in a (bidirected) de
Bruijn graph built from the reads of the RNA-seq experiment.  In each
case, the pattern corresponds to a bubble defined as two
vertex-disjoint paths between a pair of source and target vertices $s$
and $t$. Properties on the lengths or sequence similarity of the paths
then enable to differentiate between the different types of variants.

Bubbles have been studied before in the context of genome assembly
(\cite{Idba,Soapdenovo,Abyss,Velvet}) where they also have been called
bulges (\cite{Pevzner04}).  However, the purpose in these works was
not to list all bubbles, but ``only'' to remove them from the graph in
order to provide longer contigs for a genome assembly.  More recently,
ad-hoc listing methods have been proposed but are restricted to
(almost) non-branching bubbles
(\cite{Peterlongo10,Cortex,Bubbleparse}), i.e.  each vertex from the
bubble has in-degree and out-degree 1, except for $s$ and
$t$. Furthermore, in all these applications
(\cite{Pevzner04,Velvet,Abyss,Idba,Soapdenovo,Cortex,Bubbleparse}),
since the patterns correspond to SNPs or sequencing errors, the
authors only considered paths of length smaller than a constant.

On the other hand, bubbles of arbitrary length have been considered in
the context of splicing graphs (\cite{Sammeth09}). However, in this
context, a notable difference is that the graph is a
DAG. Additionally, in the case of
\cite{Cortex} the vertices are colored and only unicolor paths are
then considered for forming bubbles. Finally, the concept of bubble
also applies to the area of phylogenetic networks (\cite{Gusfield04}),
where it corresponds to the notion of a recombination cycle.  Again
for this application, the graph is a DAG. To our knowledge, no
enumeration algorithm for recombination cycles has been proposed.

In this chapter, we consider the more general problem of listing all
bubbles in an arbitrary directed graph. That is, our solution is not
restricted to acyclic or de Bruijn graphs, neither imposes
restrictions on the path length or the degrees of the internal
nodes. This problem is quite general but it remained an open question
whether a polynomial delay algorithm could be proposed for solving it.
The algorithm briefly presented in Chapter~\ref{chap:kissplice} (also
in \cite{Sacomoto12}) was an adaptation of Tiernan's algorithm for
cycle listing (\cite{Tiernan70}) which is not polynomial
delay. Actually, since in the worst case Tiernan's algorithm can
explore all the $st$-paths while the graph only contains a constant
number of cycles, the algorithm of Chapter~\ref{chap:kissplice} is not
even polynomial total time. The time spent by the algorithm is, in the
worst case, exponential in the size of the input graph and the number
of bubbles output.

The first part of this chapter is organized as follows. We start by
discussing in Section~\ref{sec:DBG-bubbles} the correspondence between
bubbles in bidirected de Bruijn graphs (Chapter~\ref{chap:kissplice})
and directed de Bruijn graphs (Chapter~\ref{chap:back}). We then
explain in Section~\ref{sec:bubble-cycle} how to transform the
directed graph where we want to list the bubbles into a new directed
graph such that the bubbles correspond to cycles satisfying some extra
properties. We present in Section~\ref{sec:algo} the algorithm to list
all cycles corresponding to bubbles in the initial graph and prove in
Section~\ref{sec:complexity} that this algorithm has linear delay.
Finally, we briefly describe, in Section~\ref{sec:oneone}, a slightly
more complex version of the algorithm that could lead to a more space
and time efficient implementation, but with the same overall
complexity.

\subsection{De Bruijn graphs and bubbles}
\label{sec:DBG-bubbles}

In the previous chapter, we defined bubbles
(Definition~\ref{def:kissplice:bubble}) as a pair of vertex-disjoint
valid paths in a bidirected de Bruijn graph. Recall that, a bidirected
de Bruijn graph is directed multigraph where each vertex is labeled by
a $k$-mer and its reverse complement and the arcs represent a $k-1$
suffix-prefix overlap and are labeled depending on which $k$-mer,
forward or reverse, the overlap refers to, whereas a (directed) de
Bruijn graph (Definition~\ref{def:dbg}) is a directed graph where each
vertex is labeled by a $k$-mer and the arcs correspond to $k-1$
suffix-prefix overlaps. Here, we consider bubbles in a directed de
Bruijn graph.

\begin{definition}[$(s,t)$-bubble] \label{def:unweighted:bubble}
  Given a directed graph $G = (V,E)$, an $(s,t)$-bubble is a pair of
  internally vertex-disjoint $st$-paths.
\end{definition}

Both de Bruijn graph definitions are roughly equivalent. Indeed, given
a bidirected DBG we transform it into a regular DBG by splitting every
vertex in two vertices, one corresponding to the forward $k$-mer and
the other to the reverse $k$-mer, and maintaining the arcs
accordingly.  This transformation, however, does not induce a
one-to-one correspondence between bubbles in the bidirected DBG
(Definition~\ref{def:kissplice:bubble}) and $(s,t)$-bubbles in the
corresponding DBG.  Indeed, every valid path in the bidirected DBG
corresponds to a simple path in the directed DBG, but the converse is
not true, a simple path in the directed DBG containing a $k$-mer and
its reverse complement is not a valid path in the bidirected DBG,
implying that, every bubble in a bidirected DBG corresponds to a
$(s,t)$-bubble in the directed DBG, but the converse is not true. We,
however, disregard this nonequivalence, since no true bubble is lost
by considering the directed DBG.

From now on, we consider the more general problem of listing bubbles
in an arbitrary directed graph, not necessary a DBG.

\begin{problem}[Listing bubbles] \label{prob:unweighted:bubbles}
  Given a directed graph $G = (V,E)$, output all $(s,t)$-bubbles in
  $G$, for all pairs $s,t \in V$.
\end{problem}

In order to solve Problem~\ref{prob:unweighted:bubbles}, we consider
the problem of listing all bubbles with a given source
(Problem~\ref{prob:unweighted:stbubbles}). Indeed, by trying all
possible sources $s$ we can list all $(s,t)$-bubbles.

\begin{problem}[Listing $(s,*)$-bubbles] \label{prob:unweighted:stbubbles}
  Given a directed graph $G = (V,E)$ and vertex $s$, output all
  $(s,t)$-bubbles in $G$, for all $t \in V$.
\end{problem}

The number of vertices and arcs of $G$ is denoted by $n$ and $m$,
respectively.

\subsection{Turning bubbles into cycles}
\label{sec:bubble-cycle}

Let $G=(V,E)$ be a directed graph, and let $s \in V$. We want to find
all $(s,t)$-bubbles for all possible target vertices $t$. We transform
$G$ into a new graph $G'_s=(V'_s,E'_s)$ where $|V'_s| = 2 |V|$ and
$|E'_s| = O(|V|+|E|)$. Namely,

$$V'_s=\{ v, \overline{v} \ | \ v \in V\}$$
$$E'_s= \{(u,v),(\overline{v},\overline{u}) \ | \ (u,v) \in E \textrm{ and }v\neq s\} \cup \{(v,\overline{v} )\ | \ v \in V\textrm{ and }v\neq s \} \cup \{(\overline{s},s)\}$$

Let us denote by $\overline{V}$ the set of vertices of $G'_s$ that
were not already in $G$, that is $\overline{V}=V' _s\setminus V$.  The
two vertices $x\in V$ and $\overline{x} \in \overline{V}$ are said to
be \emph{twin vertices}.  Observe that the graph $G'_s$ is thus built
by adding to $G$ a reversed copy of itself, where the copy of each
vertex is referred to as its \emph{twin}. The arcs incoming to $s$
(and outgoing from $\overline{s}$) are not included so that the only
cycles in $G'_s$ that contain $s$ also contain $\overline{s}$. New
arcs are also created between each pair of twins: the new arcs are the
ones leading from a vertex $u$ to its twin $\bar{u}$ for all $u$
except for $s$ where the arc goes from $\overline{s}$ to $s$. An
example of a transformation is given in Figure~\ref{fig:graphtrans}.

\begin{figure}[htb]
\centering
\subfloat[\label{fig:graph} Graph $G$]{

\begin{tikzpicture}
[nodeDecorate/.style={shape=circle,inner sep=1pt,draw,thick,fill=black},%
  lineDecorate/.style={thick},%
  scale=0.7]

\node (s) at (0,-1)
[nodeDecorate,label=below left:$s$,font=\small ] {};

\foreach \nodename/\x/\y in {
a/2/1, b/4/1, c/6/1, d/8/1, e/5/3}
{
  \node (\nodename) at (\x,-\y)
  [nodeDecorate,label=below right:$\nodename$,font=\small ] {};
}

\path [->]
\foreach \startnode/\endnode in {
s/a, a/b, b/c, c/d}
{
  (\startnode) edge[lineDecorate] node {} (\endnode)
};

\path [->]
\foreach \startnode/\endnode in {
s/e, b/e, e/d, e/c}
{
  (\startnode) edge[lineDecorate, bend left=-30] node {} (\endnode)
};

\path [->]
\foreach \startnode/\endnode in {
b/s}
{
  (\startnode) edge[lineDecorate, bend left=40] node {} (\endnode)
};
\end{tikzpicture}
} \quad\quad
\subfloat[\label{fig:trans} Graph $G'_s$]{

\begin{tikzpicture}
[nodeDecorate/.style={shape=circle,inner sep=1pt,draw,thick,fill=black},%
  lineDecorate/.style={thick},%
  scale=0.7]
  
  \node (s) at (0,-1)
  [nodeDecorate,label=below left:$s$,font=\small ] {};
  \node (s2) at (0,1)
  [nodeDecorate,label=above left:$\overline{s}$,font=\small ] {};

\foreach \nodename/\x/\y in {
a/2/1, b/4/1, c/6/1, d/8/1, e/5/3}
{
  \node (\nodename) at (\x,-\y)
  [nodeDecorate,label=below right:$\nodename$,font=\small ] {};
  \node (\nodename2) at (\x,\y)
  [nodeDecorate,label=above right:$\overline{\nodename}$,font=\small ] {};
}

\path [->]
\foreach \startnode/\endnode in {
s/a, a/b, b/c, c/d}
{
  (\startnode) edge[lineDecorate] node {} (\endnode)
  (\endnode2) edge[lineDecorate] node {} (\startnode2)
};

\path [->]
\foreach \startnode/\endnode in {
s/e, b/e, e/d, e/c}
{
  (\startnode) edge[lineDecorate, bend left=-30] node {} (\endnode)
  (\endnode2) edge[lineDecorate, bend right=30] node {} (\startnode2)
};

\path [->]
\foreach \startnode in {
a, b, c, d, e}
{
  (\startnode) edge[lineDecorate,dashed] node {} (\startnode2)
};

\path [->]
\foreach \startnode in {
s}
{
  (\startnode2) edge[lineDecorate,dashed] node {} (\startnode)
};
\end{tikzpicture}
}
\caption{\label{fig:graphtrans}Graph $G$ and its transformation
  ${G'}_s$.  We have that $\langle
  s,{e},\overline{e},\overline{b},\overline{a},\overline{s},s \rangle$
  is a bubble-cycle with swap arc $({e},\overline{e})$ that has a
  correspondence to the $(s,e)$-bubble composed by the two
  vertex-disjoint paths $\langle s,e \rangle$ and $\langle s,a,b,e
  \rangle$.  }
\end{figure}
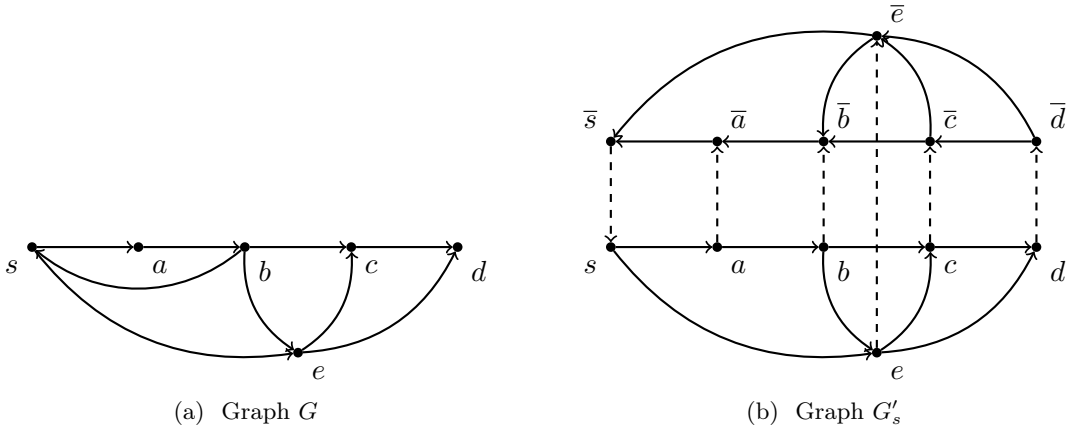

We define a cycle of $G'_s$ as being {\em bipolar} if it contains
vertices of both $V$ and $\overline{V}$. As the only arc from
$\overline{V}$ to $V$ is $(\overline{s},s)$, then every bipolar cycle
$C$ contains also only one arc from $V$ to $\overline{V}$. This arc,
which is the arc $(t,\overline{t})$ for some $t\in V$, is called the
{\em swap arc} of $C$. Moreover, since $(\bar{s},s)$ is the only
incoming arc of $s$, all the cycles containing $s$ are bipolar. We say
that $C$ is {\em twin-free} if it contains no pair of twins except for
$(s,\overline{s})$ and $(t,\overline{t})$.

\begin{definition}[Bubble-cycle] A \emph{bubble-cycle} in $G'_s$ is a twin-free 
cycle of size greater than four\footnote{The only twin-free cycles in
  of size four in $G'_s$ are generated by the outgoing edges of
  $s$. There are $O(|V|)$ of such cycles.}.
\end{definition}

\begin{proposition}\label{onetotwoproposition}
Given a vertex $s$ in $G$, there is a one-to-two correspondence
between the set of $(s,t)$-bubbles in $G$ for all $t \in V$, and the
set of bubble-cycles of $G'_s$.
\end{proposition}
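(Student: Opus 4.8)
The plan is to establish the one-to-two correspondence in two directions. First I would take an $(s,t)$-bubble in $G$, that is a pair of internally vertex-disjoint $st$-paths $P_1 = \langle s = u_0, u_1, \ldots, u_k = t \rangle$ and $P_2 = \langle s = v_0, v_1, \ldots, v_\ell = t \rangle$, and show it gives rise to exactly two bubble-cycles in $G'_s$. The construction: traverse $P_1$ forward in the ``$V$-copy'' of $G'_s$ from $s$ to $t$, then cross the swap arc $(t, \overline{t})$, then traverse $\overline{P_2}$ — the reversed copy of $P_2$ — in the ``$\overline{V}$-copy'' from $\overline{t}$ back to $\overline{s}$, and finally close up with the arc $(\overline{s}, s)$. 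I need to check that every arc used genuinely exists in $E'_s$: the forward arcs of $P_1$ are in $E'_s$ provided none of them points into $s$, which holds because $P_1$ is a path starting at $s$ (so $s$ has in-degree $0$ along it); similarly the arcs $(\overline{v_{i}}, \overline{v_{i-1}})$ exist because $(v_{i-1}, v_i) \in E$ with $v_i \neq s$; the arc $(t,\overline{t})$ exists since $t \neq s$; and $(\overline{s}, s)$ is always present. Then I must verify the resulting closed walk is in fact a \emph{simple} cycle and is \emph{twin-free}: the $V$-part visits only $u_0,\ldots,u_k$, the $\overline{V}$-part visits only $\overline{v_0},\ldots,\overline{v_\ell}$, and since $P_1,P_2$ share only their endpoints $s$ and $t$, the only twin pairs among the visited vertices are $(s,\overline{s})$ and $(t,\overline{t})$ — exactly what the twin-free definition allows. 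Its size is $|P_1| + |P_2| + 2 > 4$ because each path has at least one internal vertex when the bubble is nontrivial (two genuinely internally-vertex-disjoint $st$-paths that together form more than the four-cycle case). Swapping the roles of $P_1$ and $P_2$ gives the second bubble-cycle, and these two are distinct because the swap arc is the same but the $V$-side path differs.

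Conversely, I would take a bubble-cycle $C$ in $G'_s$ — a twin-free cycle of size $> 4$ — and recover an $(s,t)$-bubble. Since $C$ must contain $s$ (any cycle through the added structure is bipolar, and in particular the only way to re-enter $s$ is via $(\overline{s},s)$; more carefully, a twin-free cycle of size $>4$ cannot live entirely in the $V$-copy because... — here I would invoke that a purely-$V$ cycle would be a cycle of $G$ avoiding $s$, but such a cycle has no twin vertices at all, contradicting that a bubble-cycle by the size-$>4$ stipulation is meant to be bipolar; I should double check the definitions force $C$ to be bipolar, perhaps by noting the intended reading is that bubble-cycles are the ``interesting'' twin-free cycles, i.e. the bipolar ones). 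Granting $C$ is bipolar, it has a unique swap arc $(t, \overline{t})$ and contains $(\overline{s},s)$. Cutting $C$ at these two arcs decomposes it into a path $s \leadsto t$ in the $V$-copy, call it $P_1$, and a path $\overline{t} \leadsto \overline{s}$ in the $\overline{V}$-copy, which reverses to a path $s \leadsto t$ in $G$, call it $P_2$. Twin-freeness of $C$ forces $P_1$ and $P_2$ to be internally vertex-disjoint: a common internal vertex $w$ would appear as $w$ in $P_1$ and as $\overline{w}$ in the reversed copy, i.e. $C$ would contain the twin pair $(w,\overline{w})$ with $w \notin \{s,t\}$. So $(P_1, P_2)$ is an $(s,t)$-bubble, and applying the forward construction to it returns $C$ together with its ``mirror image'' — the two bubble-cycles whose existence we asserted.

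The main obstacle I anticipate is not any single computation but pinning down the edge cases around small cycles and the precise scope of ``bubble-cycle.'' The footnote in the statement flags that size-four twin-free cycles arise from the out-arcs of $s$ (the cycle $s \to e \to \overline{e} \to \overline{s} \to s$ for each out-neighbor $e$ of $s$), and these correspond to the degenerate $(s,e)$-bubble where one path is the single arc $\langle s, e\rangle$ and the ``other path'' would also be $\langle s,e \rangle$ — not a valid bubble since the two $st$-paths are not distinct. So the size-$>4$ cutoff is exactly what excludes these degenerate self-overlaps, and I would need to argue carefully that for genuine bubbles (two distinct internally-vertex-disjoint $st$-paths) the associated cycle always has size at least five, and conversely every size-$>4$ twin-free bipolar cycle yields two genuinely distinct $st$-paths. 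The bijection bookkeeping — that the map bubble $\mapsto$ \{two cycles\} and cycle $\mapsto$ bubble are mutually inverse in the appropriate two-to-one/one-to-two sense — is then a matter of chasing the constructions through, which I expect to be routine once the definitions are nailed down.
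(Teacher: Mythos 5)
Your proposal is correct and follows essentially the same route as the paper: build the two cycles by concatenating one path, the swap arc $(t,\overline{t})$, the reversed copy of the other path, and $(\overline{s},s)$; recover the bubble from a bubble-cycle by cutting at the swap arc and $(\overline{s},s)$ and using twin-freeness for internal disjointness; and use the size-$>4$ condition to exclude the degenerate cycles $s,v,\overline{v},\overline{s}$. The only nitpick is that a genuine bubble may have one path with no internal vertex (a direct arc $(s,t)$), so the size bound comes from the fact that not \emph{both} paths can be that arc, rather than from each path having an internal vertex — but your parenthetical shows you already see this.
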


\begin{proof}
Let us consider an $(s,t)$-bubble in $G$ formed by two vertex-disjoint
$st$-paths $P$ and $Q$. Consider the cycle of $G'_s$ obtained by
concatenating $P$ (resp. $Q$), the arc $(t,\overline{t})$, the
inverted copy of $Q$ (resp. $P$), and the arc $(\overline{s},s)$. Both
cycles are bipolar, twin-free, and have $(t,\overline{t})$ as swap
arc.  Therefore both are bubble-cycles.

Conversely, consider any bubble-cycle $C$ and let $(t,\overline{t})$
be its swap arc. $C$ is composed by a first subpath $P$ from $s$ to
$t$ that traverses vertices of $V$ and a second subpath $\overline{Q}$
from $\overline{t}$ to $\overline{s}$ composed of vertices of
$\overline{V}$ only. By definition of $G'_s$, the arcs of the subpath
$P$ form a path from $s$ to $t$ in the original graph $G$; given that
the vertices in the subpath $\overline{Q}$ from $\overline{t}$ to
$\overline{s}$ are in $\overline{V}$ and use arcs that are those of
$E$ inverted, then $Q$ corresponds to another path from $s$ to $t$ of
the original graph $G$. As no internal vertex of $\overline{Q}$ is a
twin of a vertex in $P$, these two paths from $s$ to $t$ are
vertex-disjoint, and hence they form an $(s,t)$-bubble.

Notice that there is a cycle $s,v,\overline{v},\overline{s}$ for each
$v$ in the out-neighborhood of $s$. Such cycles do not correspond to
any bubble in $G$, and the condition on the size of $C$ allows us to
rule them out.
\end{proof}


\subsection{The algorithm}
\label{sec:algo}

\cite{Johnson75} introduced a polynomial delay algorithm for the cycle
enumeration problem in directed graphs. We propose to adapt the
principle of this algorithm, the pruned backtracking, to enumerate
bubble-cycles in $G'_s$. Indeed, we use a similar pruning strategy,
modified to take into account the twin nodes.
Proposition~\ref{onetotwoproposition} then ensures that running our
algorithm on $G'_s$ for every $s\in V$ is equivalent to the
enumeration of (twice) all the bubbles of $G$.  To do so, we explore
$G'_s$ by recursively traversing it while maintaining the following
three variables. We denote by $N^+(v)$ the set of out-neighbors and
$N^-(v)$ as the set of in-neighbors of $v$.

\begin{enumerate}
\item A variable {\em stack} which contains the vertices of a path
  (with no repeated vertices) from $s$ to the current vertex. Each
  time it is possible to reach $\overline{s}$ from the current vertex
  by satisfying all the conditions to have a bubble-cycle, this stack
  is completed into a bubble-cycle and its content output.
\item A variable {\em status$(v)$} for each vertex $v$ which can take
  three possible values:
\begin{description}
\item[$free$:] $v$ should be explored during the traversal of $G'_s$; 
\item[$blocked$:] $v$ should not be explored because it is already in
  the stack or because it is not possible to complete the current
  stack into a cycle by going through $v$ -- notice that the key idea
  of the algorithm is that a vertex may be blocked without being on
  the stack, avoiding thus useless explorations;
\item[$twinned$:] $v\in \overline{V}$ and its twin is already in the
  stack, so that $v$ should not be explored.
\end{description}
\item A set $B(v)$ of in-neighbors of $v$ where vertex $v$ is blocked
  and for each vertex $w \in B(v)$ there exists an arc $(w,v)$ in
  $G'_s$ (that is, $w \in N^-(v)$). If a modification in the stack
  causes that $v$ is unblocked and it is possible to go from $v$ to
  $\bar{s}$ using free vertices, then $w$ should be unblocked if it is
  currently blocked.
\end{enumerate}

Algorithm~\ref{alg:main} enumerates all the bubble-cycles in $G$
(Problem~\ref{prob:unweighted:bubbles}) by fixing the source $s$ of
the $(s,t)$-bubble, computing the transformed graph $G'_s$ and then
listing all bubble-cycles with source $s$ in $G'_s$
(Problem~\ref{prob:unweighted:stbubbles}).  This procedure is repeated
for each vertex $s \in V$. To list the bubble-cycles with source $s$,
procedure $\cyclebubble(s)$ is called.  As a general approach,
Algorithm~\ref{cyclealgo} uses classical backtracking with a pruned
search tree. The root of the recursion corresponds to the enumeration
of all bubble-cycles in $G'_s$ with starting point $s$. The algorithm
then proceeds recursively: for each free out-neighbor $w$ of $v$ the
algorithm enumerates all bubble-cycles that have the vertices in the
current stack plus $w$ as a prefix. If $v\in V$ and $\overline{v}$ is
twinned, the recursion is also applied to the current stack plus
$\overline{v}$, $(v,\overline{v})$ becoming the current swap arc. A
base case of the recursion happens when $\overline{s}$ is reached and
the call to $\cyclebubble(\overline{s})$ completed. In this case, the
path in
\emph{stack} is a twin-free cycle and, if this cycle has more than 4
vertices, it is a bubble-cycle to output.

The key idea that enables to make this pruned backtracking efficient
is the block-unblock strategy. Observe that when $\cyclebubble(v)$ is called,
$v$ is pushed in the stack and to ensure twin-free extensions, $v$ is
blocked and $\bar{v}$ is twinned if $v\in V$. Later, when
backtracking, $v$ is popped from the stack but it is \emph{not
  necessarily} marked as free. If there were no twin-free cycles with
the vertices in the current stack as a prefix, the vertex $v$ would
remain blocked and its status would be set to free only at a later
stage.  The intuition is that either $v$ is a dead-end or there remain
vertices in the stack that block all twin-free paths from $v$ to
$\overline{s}$. In order to manage the status of the vertices, the
sets $B(w)$ are used.  When a vertex $v$ remains blocked while
backtracking, it implies that every out-neighbor $w$ of $v$ has been
previously blocked or twinned. To indicate that each out-neighbor $w
\in N^+(v)$ (also, $v \in N^-(w)$ is an \emph{in-neighbor} of $w$)
blocks vertex $v$, we add $v$ to each $B(w)$. When, at a later point
in the recursion, a vertex $w \in N^+(v)$ becomes unblocked, $v$ must
also be unblocked as possibly there are now bubble-cycles that include
$v$.  Algorithm~\ref{alg:unblock} implements this recursive unblocking
strategy.

\begin{algorithm}
\caption{Main algorithm} \label{alg:main}
\DontPrintSemicolon
\For{$s\in V$}{
  stack = $\emptyset$\;
  \For{$v\in G'_s$}{
    $status(v) = free$\;
    $B(v) = \emptyset$\; 
  }
  $\cyclebubble(s)$\;
}
\end{algorithm}

\begin{algorithm}
\caption{Procedure $\unblock(v)$} \label{alg:unblock}
\DontPrintSemicolon
  \tcc{recursive unblocking of vertices for which popping $v$ creates a path to $\overline{s}$}
  $status(v) = free$\;

  \For{$w\in B(v)$}{
    delete $w$ from $B(v)$\;
    \If {$status(w) = blocked$}{
      $\unblock(w)$\;
    }
  }
\end{algorithm}

\begin{algorithm}
\caption{Procedure $\cyclebubble(v)$} \label{cyclealgo}
\DontPrintSemicolon
$f$ = false\;
push $v$\; \nllabel{stacking}
$status(v) = blocked$\; \nllabel{blockstatus}

\tcc{Exploring forward the edges going out from $v\in V$}
\If {$v\in V$}{
  \If{$status(\overline{v}) = free$}{
    $status(\overline{v}) = twinned$\;
  }
  \For {$w\in N^+(v)\cap V$}{
    \If{$status(w) = free$}{
      \If{$\cyclebubble(w)$}{ $f$ = true\;}
    }
  } 
  \If{$status(\overline{v}) = twinned$}{ \nllabel{condtwintoblock}
    \If{$\cyclebubble(\overline{v})$ }{ \nllabel{twintoblock}
      $f$ = true\;  
    }
  }
}

\tcc{Exploring forward the edges going out from $v \in \overline{V}$}
\Else { \nllabel{swapvertex}
  \For {$w\in N^+(v)$ }{
       \If{$w = \overline{s}$}{
          output the cycle composed by the stack followed by $\overline{s}$ and $s$\;
          $f$ = true\;
       }       
       \ElseIf{status$(w)$ = free}{  \nllabel{swapeffect}
         \If{$\cyclebubble(w)$}{$f$ = true\;}
       }
  } 
}

\If{$f$}{ \nllabel{backtracking}
  $\unblock(v)$\; \nllabel{unblocking}  
}
\Else{
  \For{$w\in N^+(v)$}{
     \If{$v \notin B(w)$}{ 
       $B(w) = B(w) \cup \{ v \}$\; 
     }
  }
}

pop $v$\;
return $f$\;
\end{algorithm}

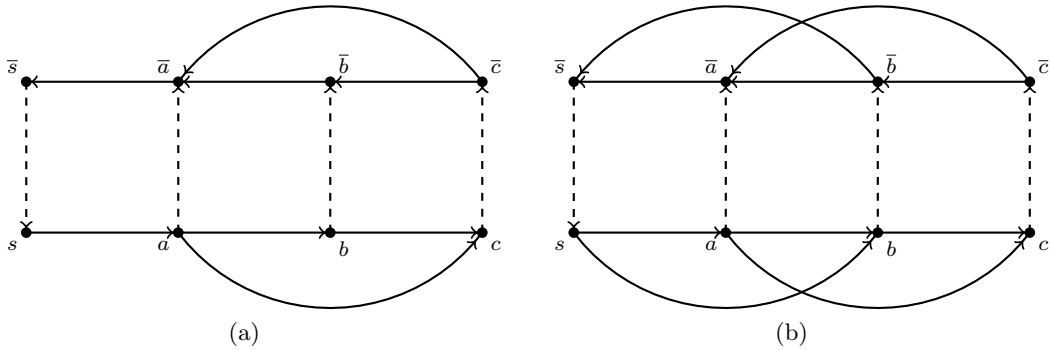
\begin{figure}[htb]
\centering
\subfloat[\label{fig:counter_ex_twined} ]{
\definecolor{qqqqff}{rgb}{0,0,0}
\definecolor{cqcqcq}{rgb}{0,0,0}
\begin{tikzpicture}
[nodeDecorate/.style={shape=circle,inner sep=1pt,draw,thick,fill=black},%
  lineDecorate/.style={thick},%
  scale=1,line width=0.8pt]
\draw [->-,dashed] (-3,4) -- (-3,2);
\draw [->-,dashed] (-1,2) -- (-1,4);
\draw [->-,dashed] (1,2) -- (1,4);
\draw [->-,dashed] (3,2) -- (3,4);
\draw [->-] (-3,2) -- (-1,2);
\draw [->-] (-1,2) -- (1,2);
\draw [->-] (1,2) -- (3,2);
\draw [->-] (3,4) -- (1,4);
\draw [->-] (1,4) -- (-1,4);
\draw [->-] (-1,4) -- (-3,4);
\draw [->-,shift={(1,2.5)}] plot[domain=0.64:2.5,variable=\t]({1*2.5*cos(\t r)+0*2.5*sin(\t r)},{0*2.5*cos(\t r)+1*2.5*sin(\t r)});
\draw [->-,shift={(1,3.5)}] plot[domain=3.79:5.64,variable=\t]({1*2.5*cos(\t r)+0*2.5*sin(\t r)},{0*2.5*cos(\t r)+1*2.5*sin(\t r)});
\begin{scriptsize}
\fill [color=qqqqff] (-3,2) circle (2pt);
\draw[color=qqqqff] (-3,2) node[below left] {$s$};
\fill [color=qqqqff] (-1,2) circle (2pt);
\draw[color=qqqqff] (-1,2) node[below left] {$a$};
\fill [color=qqqqff] (1,2) circle (2pt);
\draw[color=qqqqff] (1,2) node[below right] {$b$};
\fill [color=qqqqff] (3,2) circle (2pt);
\draw[color=qqqqff] (3,2) node[below right] {$c$};
\fill [color=qqqqff] (-3,4) circle (2pt);
\draw[color=qqqqff] (-3,4) node[above left] {$\overline{s}$};
\fill [color=qqqqff] (-1,4) circle (2pt);
\draw[color=qqqqff] (-1,4) node[above left] {$\overline{a}$};
\fill [color=qqqqff] (1,4) circle (2pt);
\draw[color=qqqqff] (1,4) node[above right] {$\overline{b}$};
\fill [color=qqqqff] (3,4) circle (2pt);
\draw[color=qqqqff] (3,4) node[above right] {$\overline{c}$};
\end{scriptsize}
\end{tikzpicture}
}
\subfloat[\label{fig:counter_ex_visits} ]{
\definecolor{qqqqff}{rgb}{0,0,0}
\definecolor{cqcqcq}{rgb}{0,0,0}
\begin{tikzpicture}
[nodeDecorate/.style={shape=circle,inner sep=1pt,draw,thick,fill=black},%
  lineDecorate/.style={thick},%
  scale=1,line width=0.8pt]
\draw [->-,dashed] (-3,4) -- (-3,2);
\draw [->-,dashed] (-1,2) -- (-1,4);
\draw [->-,dashed] (1,2) -- (1,4);
\draw [->-,dashed] (3,2) -- (3,4);
\draw [->-] (-3,2) -- (-1,2);
\draw [->-] (-1,2) -- (1,2);
\draw [->-] (1,2) -- (3,2);
\draw [->-] (3,4) -- (1,4);
\draw [->-] (1,4) -- (-1,4);
\draw [->-] (-1,4) -- (-3,4);
\draw [->-,shift={(-1,2.5)}] plot[domain=0.64:2.5,variable=\t]({1*2.5*cos(\t r)+0*2.5*sin(\t r)},{0*2.5*cos(\t r)+1*2.5*sin(\t r)});
\draw [->-,shift={(1,2.5)}] plot[domain=0.64:2.5,variable=\t]({1*2.5*cos(\t r)+0*2.5*sin(\t r)},{0*2.5*cos(\t r)+1*2.5*sin(\t r)});
\draw [->-,shift={(-1,3.5)}] plot[domain=3.79:5.64,variable=\t]({1*2.5*cos(\t r)+0*2.5*sin(\t r)},{0*2.5*cos(\t r)+1*2.5*sin(\t r)});
\draw [->-,shift={(1,3.5)}] plot[domain=3.79:5.64,variable=\t]({1*2.5*cos(\t r)+0*2.5*sin(\t r)},{0*2.5*cos(\t r)+1*2.5*sin(\t r)});
\begin{scriptsize}
\fill [color=qqqqff] (-3,2) circle (2pt);
\draw[color=qqqqff] (-3,2) node[below left] {$s$};
\fill [color=qqqqff] (-1,2) circle (2pt);
\draw[color=qqqqff] (-1,2) node[below left] {$a$};
\fill [color=qqqqff] (1,2) circle (2pt);
\draw[color=qqqqff] (1,2) node[below right] {$b$};
\fill [color=qqqqff] (3,2) circle (2pt);
\draw[color=qqqqff] (3,2) node[below right] {$c$};
\fill [color=qqqqff] (-3,4) circle (2pt);
\draw[color=qqqqff] (-3,4) node[above left] {$\overline{s}$};
\fill [color=qqqqff] (-1,4) circle (2pt);
\draw[color=qqqqff] (-1,4) node[above left] {$\overline{a}$};
\fill [color=qqqqff] (1,4) circle (2pt);
\draw[color=qqqqff] (1,4) node[above right] {$\overline{b}$};
\fill [color=qqqqff] (3,4) circle (2pt);
\draw[color=qqqqff] (3,4) node[above right] {$\overline{c}$};
\end{scriptsize}
\end{tikzpicture}
}
   \caption{ (a) Example where the twin $\overline{v}$ is already
     blocked when the algorithm starts exploring $v$. By starting in
     $s$ and visiting first $(s,a)$ and $(a,b)$, the vertex
     $\overline{c}$ is already blocked when the algorithm starts
     exploring $c$.  (b) Counterexample for the variant of the
     algorithm visiting first the twin and then the regular
     neighbors. By starting in $s$ and visiting first $(s,a)$ and
     $(a,b)$, the algorithm misses the bubble-cycle $ \langle
     s,a,c,\overline{c},\overline{b},\overline{s} \rangle$.  }
     \label{fig:counterex}
\end{figure}

An important difference between the algorithm introduced here and
Johnson's is that we now have three possible states for any vertex,
\emph{i.e.} free, blocked and twinned, instead of only the first
two. The twinned state is necessary to ensure that the two paths of
the bubble share no internal vertex. Whenever $\overline{v}$ is
twinned, it can only be explored from $v$. On the other hand, a
blocked vertex should never be explored. A twin vertex $\overline{v}$
can be already blocked when the algorithm is exploring $v$, since it
could have been unsuccessfully explored by some other call. In this
case, it is necessary to verify the status of $\overline{v}$, as it is
shown in the graph of Figure~\ref{fig:counterex}a. Indeed, consider
the algorithm starting from $s$ with $(s,a)$ and $(a,b)$ being the
first two arcs visited in the lower part. Later, when the calls
$\cyclebubble(\bar{c})$ and $\cyclebubble(\bar{b})$ are made, since $\bar{a}$ is
twinned, both $\bar{b}$ and $\bar{c}$ remain blocked. When the
algorithm backtracks to $a$ and explores $(a,c)$, the call $\cyclebubble(c)$
is made and $\bar{c}$ is already blocked.

Another important difference with respect to Johnson's algorithm is
that there is a specific order in which the out-neighborhood of a
vertex should be explored. In particular, notice that the order in
which Algorithm~\ref{cyclealgo} explores the neighbors of a vertex $v$
is: first the vertices in $N^+(v) \setminus \{ \bar{v} \}$ and then
$\bar{v}$. A variant of the algorithm where this order would be
reversed, visiting first $\bar{v}$ and then the vertices in $N^+(v)
\setminus \{ \bar{v} \}$, would fail to enumerate all the
bubbles. Indeed, intuitively a vertex can be blocked because the only
way to reach $\bar{s}$ is through a twinned vertex and when that
vertex is untwinned the first one is not unblocked.  Indeed, consider
the graph in Figure~\ref{fig:counterex}b and the twin-first variant
starting in $s$ with $(s,a)$ and $(a,b)$ being the first two arcs
explored in the lower part of the graph. When the algorithm starts
exploring $b$ the stack contains $\langle s,a,b \rangle$. After, the
call $\cyclebubble(\bar{b})$ returns \emph{true} and $\cyclebubble(c)$ returns
\emph{false} because $\bar{a}$ and $\bar{b}$ are twinned. After
finishing exploring $b$, the blocked list $B(b)$ is empty. Thus, the
only vertex unblocked is $b$, $c$ (and $\bar{c}$) remaining
blocked. Finally, the algorithm backtracks to $a$ and explores the
edge $(a,c)$, but $c$ is blocked, and it fails to enumerate $\langle
s,a,c,\overline{c},\overline{b},\overline{s} \rangle$.

One way to address the problem above would be to modify the algorithm
so that every time a vertex $\bar{v}$ is untwinned, a call to
$\unblock(\bar{v})$ is made.  All the bubble-cycles would be correctly
enumerated. However, in this case, it is not hard to find an example
where the delay would then no longer be linear. Intuitively, visiting
first $N^{+}(v) \setminus \{\bar{v} \}$ and, then $\bar{v}$, works
because every vertex $u$ that was blocked (during the exploration of
$N^{+}(v) \setminus \{\bar{v} \}$) should remain blocked when the
algorithm explores $\bar{v}$. Indeed, a bubble would be missed only if
there existed a path starting from $\overline{v}$, going to
$\overline{s}$ through $u$ and avoiding the twinned vertices. This is
not possible if no path from $N^{+}(v) \setminus \{\bar{v} \}$ to $u$
could be completed into a bubble-cycle by avoiding the twinned
vertices, as we will show later on.


\subsection{Proof of correctness and complexity analysis}
\label{sec:complexity}

\subsubsection{Proof of correctness: Algorithm~\ref{cyclealgo} enumerates all bubbles with source $s$}

\begin{lemma}\label{blockedverticeslemma}
Let $v$ be a vertex of $G'_s$ such that $status(v)=blocked$, $S$ the
set of vertices currently in the stack, and $T$ the set of vertices
whose status is equal to twinned. Then $S\cup T$ is a
$(v,\overline{s})$ separator, that is, each path, if any exists, from
$v$ to $\overline{s}$ contains at least one vertex in $S\cup T$.
\end{lemma}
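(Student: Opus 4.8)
The plan is to prove the invariant by induction on the execution of the algorithm, tracking all the places where the relevant quantities change: when a vertex is pushed/popped on the stack (in $\cyclebubble$), when a vertex's status is set to $twinned$ or back to $free$, and when a vertex's status is set to $blocked$. The key observation is that $status(v)=blocked$ is set in exactly two situations: either $v$ has just been pushed onto the stack (so $v \in S$ and the claim is trivial, since $v$ itself is in $S \cup T$, intersecting every path from $v$), or $v$ remains blocked after backtracking from $\cyclebubble(v)$ returned $f = \false$. I would focus the argument on this second, substantive case.

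First I would establish the base cases and the easy bookkeeping: when $v$ is pushed, it is immediately blocked (line~\ref{blockstatus}) and lies on the stack, so $S \cup T$ trivially separates $v$ from $\overline{s}$. When $\unblock(v)$ is called, it sets $status(v) = free$ and recursively unblocks the in-neighbors $w \in B(v)$ that are still blocked; I must check that this does not break the invariant for vertices still blocked. Since unblocking only removes vertices from the ``blocked'' pool — never adds to $S \cup T$ — the danger is that some path from a still-blocked $u$ to $\overline{s}$ that used to be cut by $v$ (when $v$ was blocked but not on the stack?) becomes uncut. But $v$ being blocked was never itself part of the separator $S \cup T$; only $S$ and $T$ are. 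So unblocking a vertex cannot shrink $S \cup T$. The only subtle point is untwinning: when $\cyclebubble(v)$ for $v \in V$ returns and we pop $v$, we must check that $\overline{v}$'s status is handled correctly — but $\overline{v}$ is set to $twinned$ only while $v$ is on the stack and is cleared appropriately; the key structural fact (used in the counterexample discussion of Figure~\ref{fig:counterex}b) is that the algorithm explores $N^+(v) \setminus \{\overline{v}\}$ \emph{before} $\overline{v}$, which is exactly what makes the invariant survive the untwinning step.

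The heart of the proof is the inductive step for the case ``$v$ stays blocked after $\cyclebubble(v)$ returned $\false$.'' At that moment, $f = \false$ means that \emph{no} out-neighbor $w \in N^+(v)$ of $v$ yielded a successful completion: for each such $w$, either $w$ was not $free$ when we looked at it (so $w$ was $blocked$ or $twinned$), or the recursive call $\cyclebubble(w)$ returned $\false$. In the former case, by the inductive hypothesis applied \emph{at the time $w$ was blocked} — combined with a monotonicity argument that $S$ only shrinks and $T$ changes in a controlled way between then and now — $S \cup T$ separates $w$ from $\overline{s}$; in the $twinned$ case $w \in T$ directly. In the latter case, $\cyclebubble(w)$ returning $\false$ means $w$ ended up blocked (it was not unblocked on its own backtrack), so again by induction $S \cup T$ separates $w$ from $\overline{s}$ at that point, and I propagate this forward to the current moment. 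Since \emph{every} neighbor $w$ of $v$ is thus separated from $\overline{s}$ by $S \cup T$, and $v$ itself is not on the stack (we are about to pop it) but any path from $v$ to $\overline{s}$ must use some $w \in N^+(v)$, the set $S \cup T$ separates $v$ from $\overline{s}$ as well. Here I would also need to record that when $v$ stays blocked, the algorithm adds $v$ to every $B(w)$ with $w \in N^+(v)$, which is the mechanism guaranteeing $v$ gets unblocked later precisely when some $w$ does — this is what keeps the invariant tight over time.

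The main obstacle I anticipate is the \emph{temporal} bookkeeping: the inductive hypothesis about neighbor $w$ holds at the moment $w$ became blocked, but I need it \emph{now}, and in between the stack $S$ has changed (only by popping, hence $S$ shrank along the path back to the current call) and the twinned set $T$ has changed. I would argue that between the blocking of $w$ (which happened inside the subtree of the recursion rooted at the current $\cyclebubble(v)$ call, or at an ancestor before $v$ was explored) and now, any vertex that was twinned and got untwinned was untwinned only \emph{after} its controlling stack vertex was popped — and crucially, by the ``explore $N^+(v)\setminus\{\overline v\}$ before $\overline v$'' ordering, any such untwinning triggers an $\unblock$ that would have unblocked $w$ if the separation had been destroyed. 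So the vertices still separating $w$ are still present, or $w$ would no longer be blocked — a contradiction with $w$ being blocked now. Making this ``untwinning forces unblocking along the right edges'' argument precise is the delicate part, and it is exactly the reason the algorithm's neighbor-visiting order cannot be reversed.
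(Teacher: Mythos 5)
Your overall strategy --- a temporal induction maintaining the separator invariant throughout the execution --- is a legitimately different scheme from the paper's, and you correctly isolate every ingredient that matters: the order in which $N^+(v)\setminus\{\overline v\}$ and $\overline v$ are visited, the fact that a vertex left blocked is inserted into $B(w)$ for every out-neighbor $w$, and the fact that no $\unblock$ can occur inside a recursion subtree whose root call returns $\false$ (so the neighbor statuses at the moment $v$ is popped are as you describe). The gap is exactly where you say the delicate part is, and your proposed resolution of it is circular. The sentence ``any such untwinning triggers an $\unblock$ that would have unblocked $w$ if the separation had been destroyed, so the vertices still separating $w$ are still present or $w$ would no longer be blocked'' assumes that the unblocking cascade reaches every blocked vertex whose separator has been destroyed --- but that is precisely the content of the lemma. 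To see why this cannot be waved through: when a vertex $u$ with $f=\true$ is popped, $\unblock(u)$ only recurses on the vertices of $B(u)$ that are currently \emph{blocked}; a hypothetical newly-opened path from a still-blocked $x$ to $\overline s$ through $u$ may have \emph{free} intermediate vertices, on which the cascade does not recurse, so you cannot directly chain $B$-list memberships along that path to conclude that $x$ is reached.

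The paper closes this with a device your sketch is missing: argue by contradiction and induct on the \emph{length} of a hypothetical path from a blocked $v\notin S\cup T$ to $\overline s$ avoiding $S\cup T$. If the first vertex $w$ after $v$ on that path is blocked, the induction hypothesis applied to the shorter suffix already yields the contradiction; so one may assume $w$ is \emph{free}. Since $\cyclebubble(v)$ returned $\false$, $w$ was blocked or twinned when $v$ was popped and $v\in B(w)$; the conditional at line~\ref{condtwintoblock} guarantees that a twinned vertex can only become free by first becoming blocked, hence $w$'s transition to free necessarily went through a call to $\unblock(w)$, which would have triggered $\unblock(v)$ --- contradicting that $v$ is blocked now. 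This length induction is what lets one avoid ever having to prove directly that the cascade ``restores'' the invariant after each pop; your temporal induction would need an equivalent argument at the point where you currently assert it, so as written the proof is not complete.
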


\begin{proof}
The result is obvious for the vertices in $S\cup T$.  Let $v$ be a
vertex of $G'_s$ such that $status(v)=blocked$ and $v \notin S\cup T$.
This means that when $v$ was popped for the last time, $\cyclebubble(v)$ was
equal to {\em false} since $v$ remained blocked.

Let us prove by induction on $k$ that each path to $\overline{s}$ of
length $k$ from a blocked vertex not in $S\cup T$ contains at least
one vertex in $S\cup T$.

We first consider the base case $k=1$.  Suppose that $v$ is a
counter-example for $k=1$.  This means that there is an arc from $v$
to $\overline{s}$ ($\overline{s}$ is an out-neighbor of $v$). However,
in that case the output of $\cyclebubble(v)$ is {\em true}, a contradiction
because $v$ would then be unblocked.

Suppose that the result is true for $k-1$ and, by contradiction, that
there exists a blocked vertex $v \notin S\cup T$ and a path
$(v,w,\ldots,\overline{s})$ of length $k$ avoiding $S\cup T$. Since
$(w,\ldots,\overline{s})$ is a path of length $k-1$, we can then
assume that $w$ is free. Otherwise, if $w$ were blocked, by induction,
the path $(w,\ldots,\overline{s})$ would contain at least one vertex
in $S\cup T$, and so would the path $(v,w,\ldots,\overline{s})$.

Since the call to $\cyclebubble(v)$ returned {\em false} ($v$ remained
blocked), either $w$ was already blocked or twinned, or the call to
$\cyclebubble(w)$ made inside $\cyclebubble(v)$ gave an output equal to {\em
  false}. In any case, after the call to $\cyclebubble(v)$, $w$ was
blocked or twinned and $v$ put in $B(w)$.

The conditional at line~\ref{condtwintoblock} of the $\cyclebubble$
procedure ensures that when untwinned, a vertex immediately becomes
blocked.  Thus, since $w$ is now free, a call to $\unblock(w)$ was made
in any case, yielding a call to $\unblock(v)$.  This contradicts the
fact that $v$ is blocked.
\end{proof}

\begin{theorem}
The algorithm returns only bubble-cycles. Moreover, each of those
cycles is returned exactly once.
\end{theorem}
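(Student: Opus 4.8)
The plan is to establish two things: (1) completeness --- every bubble-cycle with source $s$ is output at least once; and (2) soundness with uniqueness --- every cycle output is a bubble-cycle, and no bubble-cycle is output twice. I would prove (2) first, since it is the easier half. For soundness, observe that the stack always holds a simple path from $s$ (vertices are blocked on push and only a \textbf{free} vertex is ever pushed), that the \emph{twinned} status guarantees no internal vertex of the first subpath reappears among the second, and that the only way a cycle is output is in the $\overline{s}$-branch of \texttt{cycle}$(\overline{s})$, so the emitted stack is a twin-free bipolar cycle through $(\overline{s},s)$; the size check at output time (inherited from Johnson-style backtracking, with the degenerate $4$-cycles excluded) makes it a bubble-cycle by Proposition~\ref{onetotwoproposition}'s structural description. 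For uniqueness, I would argue that the search tree is a genuine tree traversal: each recursive call is indexed by the current stack contents, the out-neighbors of $v$ are iterated in a fixed order (first $N^+(v)\cap V$, then $\overline{v}$ in the $v\in V$ case), and distinct bubble-cycles correspond to distinct sequences of choices, so each bubble-cycle is associated with exactly one leaf of the recursion.

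The main work is completeness. Fix a bubble-cycle $C = \langle s, u_1, \dots, u_j, t, \overline{t}, \overline{w}_1, \dots, \overline{w}_\ell, \overline{s}\rangle$ in $G'_s$. I would argue by contradiction: suppose $C$ is never output, and look at the deepest prefix $\langle s, u_1, \dots, u_i\rangle$ of $C$ that appears on the stack during the execution (this prefix is nonempty since \texttt{cycle}$(s)$ is the root). Let $v = u_i$ be its last vertex, so that while $v$ is on the stack the next vertex $v' $ of $C$ (either some $u_{i+1}\in V$, or $\overline{v}$ itself, or --- if $v=u_j$ --- the swap move to $\overline{t}$, or a vertex in $\overline{V}$) is never explored from $v$. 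Since $v'$ is the next vertex on a $C$-path and $C$ is twin-free, $v'$ is not a twin of any stacked vertex nor already on the stack, so when \texttt{cycle}$(v)$ runs, $v'$ is \emph{not} in the stack and \emph{not} twinned. Hence the only way \texttt{cycle}$(v)$ fails to recurse on $v'$ is that $v'$ is \textbf{blocked} at the relevant moment. Now I invoke Lemma~\ref{blockedverticeslemma}: if $v'$ is blocked, then $S\cup T$ (current stack vertices together with currently twinned vertices) is a $(v',\overline{s})$-separator. But the suffix of $C$ from $v'$ to $\overline{s}$ is a path, so it must meet $S\cup T$; its vertices lie beyond $v$ on $C$, hence are not among the first $i$ stacked vertices, and a twinned vertex in $T$ would have to be the twin of some stacked $u_h$ ($h\le i$) --- contradicting twin-freeness of $C$ --- \emph{unless} it is $\overline{v}$ entered via the allowed swap move. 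This is exactly the delicate case where the ``twin-last'' exploration order matters: I would show that because $N^+(v)\cap V$ is explored \emph{before} $\overline{v}$, at the point where \texttt{cycle}$(v)$ would explore $v'$ the vertex $\overline{v}$ is still \texttt{twinned} (it only leaves that state inside the $\overline{v}$-branch, which comes last), so the separator argument still forces a genuine stacked vertex on the $C$-suffix, the contradiction we want.

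The step I expect to be the real obstacle is pinning down \emph{when} Lemma~\ref{blockedverticeslemma} applies during \texttt{cycle}$(v)$ --- the lemma is a statement about a configuration, and I need it precisely at the instant the loop in \texttt{cycle}$(v)$ reaches $v'$ and finds it blocked, with $S$ and $T$ being the stack and twinned set at \emph{that} moment; earlier in the same call $v'$ might have been free. Making this rigorous requires a careful reading of how $B(\cdot)$ and \texttt{unblock} interleave with the loop, and handling separately the subcases $v'\in V$, $v' = \overline{v}$ (swap arc), and $v'\in\overline{V}\setminus\{\overline{v}\}$, each of which lands in a different branch of \texttt{cycle}. Once that case analysis is organized, the contradiction in each subcase follows from twin-freeness of $C$ plus Lemma~\ref{blockedverticeslemma}, completing the completeness proof and hence the theorem.
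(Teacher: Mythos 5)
Your architecture is the same as the paper's: soundness from the fact that only free vertices are pushed, uniqueness from distinct stacks at distinct leaves, and completeness by applying Lemma~\ref{blockedverticeslemma} to the suffix of the bubble-cycle, which avoids $S\cup T$ precisely because the cycle is twin-free. The paper runs completeness as a direct induction on the length of the prefix on the stack rather than your minimal-counterexample contradiction, but these are the same argument in different clothing, and the ``delicate case'' you worry about (when exactly the lemma applies) is handled the same way: the lemma is a statement about any configuration in which a vertex is blocked, so it applies at the moment the stack equals the prefix $\{v_0,\dots,v_{i-1}\}$ and forces $v_i$ to be free, whether $v_i\in V$, $v_i=\overline{t}$ (explored at the twin line), or $v_i\in\overline{V}$.

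The one place your sketch would not survive as written is the twin-freeness half of soundness. You claim ``the \emph{twinned} status guarantees no internal vertex of the first subpath reappears among the second,'' but a twin $\overline{v}$ need not be twinned while $v$ is on the stack: if $\overline{v}$ was already \emph{blocked} when $v$ was pushed (it was unsuccessfully explored by an earlier call), it is never marked twinned at all --- this is exactly the situation of Figure~\ref{fig:counterex}(a), so the case genuinely occurs. The paper closes it by invoking Lemma~\ref{blockedverticeslemma} a second time: a blocked $\overline{v}$ stays blocked at least until $v$ is popped, so it still cannot enter the output cycle. You have the lemma in hand (you use it for completeness), so the fix is one sentence, but without it the soundness argument has a hole.
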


\begin{proof} 
Let us first prove that only bubble-cycles are output.  As any call to
$\unblock$ (either inside the procedure $\cyclebubble$ or inside the
procedure $\unblock$ itself) is immediately followed by the popping of
the considered vertex, no vertex can appear twice in the stack.  Thus,
the algorithm returns only cycles. They are trivially bipolar as they
have to contain $s$ and $\overline{s}$ to be output.

Consider now a cycle $C$ output by the algorithm with swap arc
$(t,\overline{t})$. Let $(v,w)$ in $C$ with $v \neq s$ and $v \neq
t$. If $\overline{v}$ is free when $v$ is put on the stack, then
$\overline{v}$ is twinned before $w$ is put on the stack and cannot be
explored until $w$ is popped. If $\overline{v}$ is blocked when $v$
is put on the stack, then by Lemma~\ref{blockedverticeslemma} it
remains blocked at least until $v$ is popped. Thus, $\overline{v}$
cannot be in $C$, and consequently the output cycles are twin-free.

So far we have proven that the output produces bubble-cycles. Let us
now show that all cycles
$C=\{v_0=s,v_1,\ldots,v_{l-1},v_{l}=\overline{s},v_0\}$ satisfying
those conditions are output by the algorithm, and each is output
exactly once.

The fact that $C$ is not returned twice is a direct consequence of the
fact that the stack is different in all the leaves of a backtracking
procedure.  To show that $C$ is output, let us prove by induction that
the stack is equal to $\{v_0,\ldots,v_i\}$ at some point of the
algorithm, for every $0\leq i\leq l-1$. Indeed, it is true for
$i=0$. Moreover, suppose that at some point, the stack is
$\{v_0,\ldots,v_{i-1}\}$. 

Suppose that $v_{i-1}$ is different from $t$.  As the cycle contains
no pair of twins except for those composing the arcs
$(s,\overline{s})$ and $(t,\overline{t})$, the path
$\{v_i,v_{i+1},\ldots,v_{l}\}$ contains no twin of
$\{v_0,\ldots,v_{i-1}\}$ and therefore no twinned vertex. Thus, it is
a path from $v_i$ to $\overline{s}$ avoiding $S\cup
T$. Lemma~\ref{blockedverticeslemma} then ensures that at this point
$v_i$ is not blocked. As it is also not twinned, its status is free.
Therefore, it will be explored by the backtracking procedure and the
stack at some point will be $\{v_0,\ldots,v_{i}\}$. If $v_{i-1}=t$,
$v_i=\overline{t}$ is not blocked using the same arguments. Thus it
was twinned by the call to $\cyclebubble(t)$ and is therefore explored
at Line~\ref{twintoblock} of this procedure.  Again, the stack at some
point will be $\{v_0,\ldots,v_{i}\}$.
\end{proof}

\subsubsection{Analysis of complexity: Algorithm~\ref{cyclealgo} has linear delay}

As in \cite{Johnson75}, we show that Algorithm~\ref{cyclealgo} has
delay $O(|V| + |E|)$ by proving that a cycle has to be output between
two successive unblockings of the same vertex and that with linear
delay some vertex has to be unblocked again. To do so, let us first
prove the following lemmas.

\begin{lemma} \label{lemma:returns_cycle}
  Let $v$ be a vertex such that $\cyclebubble(v)$ returns true. Then a
  cycle is output after that call and before any call to $\unblock$.
\end{lemma}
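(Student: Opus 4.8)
The plan is to prove the lemma by a careful analysis of the control flow of Algorithm~\ref{cyclealgo}, tracking only the two events we care about: outputting a cycle and invoking \unblock. First I would record two structural facts. (i) Inside a single invocation $\cyclebubble(v)$, the \emph{only} call to \unblock is $\unblock(v)$ on Line~\ref{unblocking}; it sits inside the \texttt{if}~$f$ block guarded at Line~\ref{backtracking}, it is reached only \emph{after} every nested recursive call $\cyclebubble(\cdot)$ made by $\cyclebubble(v)$ has returned, and it is executed only when the flag $f$ is already \true. (ii) Hence, if $\cyclebubble(v)$ returns \true, then $f$ is \true when Line~\ref{backtracking} is reached, so $\unblock(v)$ is executed; in particular, at least one call to \unblock takes place during the execution of $\cyclebubble(v)$.

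Next I would consider the \emph{first} call to \unblock that occurs during the execution of $\cyclebubble(v)$ — call it $\unblock(u)$ (it may be $\unblock(v)$ itself, or one performed inside some nested invocation $\cyclebubble(u)$). By fact~(i), this $\unblock(u)$ is the Line~\ref{unblocking} call of the invocation $\cyclebubble(u)$, so the flag $f$ of $\cyclebubble(u)$ must have been set to \true at some earlier moment of that invocation. The heart of the proof is then a case analysis on \emph{how} $f$ became \true inside $\cyclebubble(u)$: either (a) we are in the branch where $u \in \overline V$ (the \texttt{else} at Line~\ref{swapvertex}) and some out-neighbour of $u$ is $\overline s$, in which case a cycle is output on the very line that then sets $f \leftarrow \true$; or (b) some nested call $\cyclebubble(w)$ returned \true and thereby set $f \leftarrow \true$ in $\cyclebubble(u)$. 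I would rule out (b): by fact~(ii) applied to $\cyclebubble(w)$, that nested invocation must itself have executed an \unblock call before returning, hence strictly before $\cyclebubble(u)$ reached Line~\ref{backtracking} and called $\unblock(u)$ — contradicting that $\unblock(u)$ is the \emph{first} \unblock during the execution of $\cyclebubble(v)$. So case (a) holds: a cycle is output before the first \unblock call, and since the whole analysis is confined to the time interval during which $\cyclebubble(v)$ sits on the recursion stack, this is exactly the claim.

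The main thing to get right — the only real obstacle — is the completeness of the case split in (b): I must be sure that, within $\cyclebubble(u)$, the flag $f$ is assigned \true only on the explicit cycle-output line (the $w=\overline s$ case) or as a consequence of some recursive call returning \true, and that there is no code path reaching \unblock that bypasses the \texttt{if}~$f$ guard. A quick scan of the three assignments $f \leftarrow \true$ in $\cyclebubble$ (one in the $V$-branch just after a successful $\cyclebubble(w)$, one at Line~\ref{twintoblock} just after $\cyclebubble(\overline v)$, and the two in the $\overline V$-branch) together with the fact that $\unblock(v)$ appears only under the guard at Line~\ref{backtracking} confirms this. I would also state precisely what ``after that call'' means: the cycle is output during the time window in which $\cyclebubble(v)$ is on the stack, strictly before the first \unblock executed in that window (which, by fact~(ii), exists). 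With these points pinned down the argument is just the two facts plus the minimality observation above; no computation is involved, so I expect no further difficulty.
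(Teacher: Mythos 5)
Your proof is correct and follows essentially the same route as the paper's: identify the first \unblock call during the execution of $\cyclebubble(v)$, observe that the corresponding invocation must have returned \true{} while all of its nested recursive calls returned \false{} (else an earlier \unblock), and conclude that the only remaining way its flag became \true{} is the $\overline{s}\in N^+(\cdot)$ branch, which outputs a cycle. Your explicit check that every assignment $f\leftarrow\true$ other than the output line comes from a recursive call returning \true{} is the same completeness argument the paper makes implicitly.
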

\begin{proof}
 Let $y$ be the first vertex such that $\unblock(y)$ is called inside
 $\cyclebubble(v)$. Since $\cyclebubble(v)$ returns true, there is a
 call to $\unblock(v)$ before it returns, so that $y$
 exists. Certainly, $\unblock(y)$ was called before $\unblock(v)$ if
 $y \neq v$. Moreover, the call $\unblock(y)$ was done inside
 $\cyclebubble(y)$, from line~\ref{unblocking}, otherwise it would
 contradict the choice of $y$. So, the call to $\cyclebubble(y)$ was
 done within the recursive calls inside the call to
 $\cyclebubble(v)$. $\cyclebubble(y)$ must then return true as $y$ was
 unblocked from it.

 All the recursive calls $\cyclebubble(z)$ made inside
 $\cyclebubble(y)$ must return false, otherwise there would be a call
 to $\unblock(z)$ before $\unblock(y)$, contradicting the choice of
 $y$. Since $\cyclebubble(y)$ must return true and the calls to all
 the neighbors returned false, the only possibility is that
 $\overline{s} \in N^+(y)$. Therefore, a cycle is output before
 $\unblock(y)$.
\end{proof}

\begin{lemma} \label{lemma:returns_true}
  Let $v$ be a vertex such that there is a $v\overline{s}$-path
  $P$ avoiding $S \cup T$ at the moment a call to $\cyclebubble(v)$ is
  made. Then the return value of $\cyclebubble(v)$ is true.
\end{lemma}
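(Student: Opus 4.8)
\textbf{Proof plan for Lemma~\ref{lemma:returns_true}.}

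The plan is to argue by induction on the length of a shortest $v\overline{s}$-path avoiding $S \cup T$, where $S$ is the current stack and $T$ the set of twinned vertices at the moment $\cyclebubble(v)$ is invoked. For the base case, if such a path has length one, then $\overline{s} \in N^+(v)$; inspection of $\cyclebubble(v)$ shows that when $v$ is processed as a vertex of $\overline{V}$ (the only way $\overline{s}$ can be an out-neighbour), the loop over $N^+(v)$ reaches $w = \overline{s}$, outputs a cycle and sets $f =$ true, so the return value is true. If $v \in V$, a length-one path to $\overline{s}$ is impossible since $\overline{s}$ has only $s$ as an in-neighbour in $V$'s copy; so the base case only concerns $v \in \overline{V}$, and there $s$ itself would be handled by the $w = \overline{s}$ branch after the swap arc. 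I would treat the $v \in V$ and $v \in \overline{V}$ cases separately throughout.

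For the inductive step, let $P = (v, w, \ldots, \overline{s})$ be a shortest such path, so $(w,\ldots,\overline{s})$ has length one less and still avoids $S \cup T$ once $v$ is pushed (because $w \notin S \cup T$ and, crucially, $w$ is not the twin of $v$ unless $v = t$ and $w = \overline{v}$, which I handle via the line~\ref{twintoblock} call). The key sub-claim is that $w$ has status \emph{free} at the moment $\cyclebubble(v)$ makes (or would make) its recursive call on $w$. It is not in the stack and not twinned by hypothesis on $P$; the only remaining danger is that $w$ is \emph{blocked}. But Lemma~\ref{blockedverticeslemma} says that if $w$ were blocked, then $S \cup T$ would separate $w$ from $\overline{s}$, contradicting that $(w,\ldots,\overline{s})$ avoids $S \cup T$. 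Hence $w$ is free, $\cyclebubble(w)$ is called, and by the induction hypothesis (the path from $w$ still avoids the possibly-enlarged $S \cup T$, since only $v$ was added and $v \notin P \setminus \{v\}$ as $P$ is a simple path) it returns true, so $f$ is set to true and $\cyclebubble(v)$ returns true.

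The main obstacle I anticipate is bookkeeping the exploration \emph{order} inside $\cyclebubble(v)$ when $v \in V$: the algorithm first loops over $w \in N^+(v) \cap V$ and only afterwards, at line~\ref{condtwintoblock}, possibly recurses on $\overline{v}$. I must check that whichever of these two cases the second vertex $w$ of $P$ falls into, it is still free when its turn comes — in particular that exploring the earlier neighbours of $v$ did not block $w$. This is exactly the subtlety flagged in the discussion after Figure~\ref{fig:counterex}: a vertex blocked during the $N^+(v) \cap V$ loop stays blocked, but such a vertex cannot lie on a path to $\overline{s}$ avoiding $S \cup T$, again by Lemma~\ref{blockedverticeslemma} applied at that later moment. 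So the induction hypothesis should be stated with respect to the stack/twinned sets as they are \emph{when the recursive call on $w$ is actually made}, not merely when $\cyclebubble(v)$ starts, and I would verify that $P$ restricted to $(w,\ldots,\overline{s})$ still avoids those sets — which holds because the only vertices added meanwhile are $v$ and siblings already shown not to be on $P$.
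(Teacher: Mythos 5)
Your induction is a genuinely different route from the paper's proof (which first shortcuts $P$ into a twin-free path $P'$ via the first repeated twin pair and then invokes the already-proved completeness of the enumeration), but as written it has a gap at exactly the point where the twin structure of $P$ matters. When $v\in V$ is pushed, the set that grows is not only the stack $S$ (by $v$) but also the twinned set $T$ (by $\overline{v}$, which is set to \emph{twinned} at that moment). Your justification that the suffix $(w,\ldots,\overline{s})$ still avoids $S\cup T$ after the push only invokes ``only $v$ was added and $v\notin P\setminus\{v\}$'' and the special case $w=\overline{v}$; it does not rule out $\overline{v}$ occurring in the \emph{interior} of $P$, i.e.\ somewhere in the $\overline{V}$-part of $P$ after the swap arc. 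If $\overline{v}$ lies there, the suffix passes through a vertex of the enlarged $T$ and your induction hypothesis no longer applies to it. The same issue recurs at every level of the induction with the twins of the vertices pushed along the way. This is not a pathological worry: the whole point of the paper's proof is that $P$ need not be twin-free, and the lemma would be essentially trivial without this complication.

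The good news is that your own setup already contains the repair. Since every arc $(y,\overline{y})$ with $y\neq s$ is present in $G'_s$ and the only arc from $\overline{V}$ back to $V$ is $(\overline{s},s)$, a \emph{shortest} $v\overline{s}$-path avoiding $S\cup T$ cannot contain a vertex $y$ and its twin $\overline{y}$ except as the two endpoints of its swap arc: otherwise the shortcut through $(y,\overline{y})$ yields a strictly shorter avoiding path. Hence for your shortest $P$, $\overline{v}$ is either absent or equal to $w$ (the case you already route through line~\ref{twintoblock}), and the suffix does avoid the enlarged $S\cup T$. You should state and prove this twin-freeness property explicitly --- it is the counterpart of the paper's construction of $P'$ and is the one nontrivial idea in either proof. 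With that added, and with your (correct) use of Lemma~\ref{blockedverticeslemma} to show $w$ cannot have become blocked by the time its turn comes in the exploration order, the induction goes through; minor slips aside (e.g.\ $\overline{s}$ has \emph{no} in-neighbours in $V$, which is why the base case forces $v\in\overline{V}$), the rest is sound. Compared with the paper's argument, your version is self-contained but re-proves inline the completeness induction that the paper simply cites; the paper's version is shorter but leans on the correctness theorem having been established first.
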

\begin{proof}
 First notice that if there is such a path $P$, then $v$ belongs to a
 cycle in $G'_s$. This cycle may however not be a bubble-cycle in the
 sense that it may not be twin-free, that is, it may contain more than
 two pairs of twin vertices.  Indeed, since the only constraint that
 we have on $P$ is that it avoids all vertices that are in $S$ and $T$
 when $v$ is reached, then if $v \in V$, it could be that the path $P$
 from $v$ to $\overline{s}$ contains, besides $s$ and ${\overline s}$,
 at least two more pairs of twin vertices.  An example is given in
 Figure~\ref{fig:trans}. It is however always possible, by
 construction of $G'_s$ from $G$, to find a vertex $y \in V$ such that
 $y$ is the first vertex in $P$ with ${\overline y}$ also in $P$. Let
 $P'$ be the path that is a concatenation of the subpath $s \leadsto
 y$ of $P$, the arc $(y,{\overline y})$, and the subpath ${\overline
   y} \leadsto {\overline s}$ in $P$. This path is twin-free, and a
 call to $\cyclebubble(v)$ will, by correctness of the algorithm, return
 true. 
\end{proof}

\begin{theorem}
Algorithm~\ref{cyclealgo} has linear delay.
\end{theorem}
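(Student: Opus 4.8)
The plan is to follow the structure of Johnson's original delay analysis, adapted to the three-state (\emph{free}/\emph{blocked}/\emph{twinned}) setting. The delay is the time that can elapse between two consecutive outputs, so I want to charge the work done in the search tree between outputs against a budget that is linear in $|V|+|E|$. The two ingredients are already isolated as Lemma~\ref{lemma:returns_cycle} and Lemma~\ref{lemma:returns_true}: the first says that whenever a recursive call returns \texttt{true} an output has just occurred (before any \unblock), and the second says that whenever a call $\cyclebubble(v)$ is made while a twin-free-compatible $v\overline{s}$-path exists, the call returns \texttt{true}. Combining them, any ``productive'' call eventually produces an output with no intervening unblocking; the remaining task is to bound the ``unproductive'' exploration.

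First I would bound the work done between the moment a vertex is unblocked (or first pushed) and the next output. The key invariant is that a call $\cyclebubble(v)$ returning \texttt{false} leaves $v$ blocked, and by Lemma~\ref{blockedverticeslemma} $v$ then stays blocked at least until some vertex of $S\cup T$ is popped, i.e.\ until the stack shrinks past the level at which $v$ was reached. So between two unblockings of the same vertex, a bubble-cycle must have been output: if no output occurred, then by Lemma~\ref{lemma:returns_true} every call made from a vertex that could reach $\overline{s}$ avoiding $S\cup T$ returned \texttt{true}, contradicting (via Lemma~\ref{lemma:returns_cycle}) the assumption of no output. Hence each vertex is unblocked at most once per output.

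Next I would account for the total cost. Between two consecutive outputs, each vertex $v$ is unblocked at most once, and each such unblocking costs $O(d^-(v))$ for scanning $B(v)$, summing to $O(|E|)$; the forward exploration in \cyclebubble\ that does \emph{not} lead to an output can be charged, edge by edge, to the blocking events (each arc $(v,w)$ is traversed, $v$ is subsequently added to $B(w)$, and is removed only upon an unblocking of $w$), again $O(|E|)$ per output; pushes/pops and status updates are $O(|V|)$. The subtle point, and the part I expect to be the main obstacle, is justifying that the specific neighbour-exploration order — first $N^+(v)\setminus\{\overline v\}$, then $\overline v$ — does not create a situation where a vertex that was blocked during the exploration of the regular neighbours must be re-examined (and possibly erroneously kept blocked, or re-unblocked at cost) when $\overline v$ is later explored; the counterexample of Figure~\ref{fig:counterex}b shows the reversed order fails, so the argument must pin down exactly why, with the chosen order, any vertex blocked while exploring $N^+(v)\setminus\{\overline v\}$ genuinely has no twin-free path to $\overline{s}$ even after $\overline v$ becomes reachable. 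This amounts to showing that a twin-free $\overline v\overline{s}$-path through such a $u$ would induce a twin-free path from some vertex of $N^+(v)\setminus\{\overline v\}$ through $u$ to $\overline{s}$, contradicting that $u$ was left blocked — an argument that relies on the mirror symmetry of $G'_s$ and on Lemma~\ref{blockedverticeslemma}. Once that is established, concatenating the per-output bounds gives delay $O(|V|+|E|)$, completing the proof.
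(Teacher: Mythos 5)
Your plan follows the paper's proof essentially verbatim: it combines Lemma~\ref{lemma:returns_true} and Lemma~\ref{lemma:returns_cycle} to show that a cycle must be output between two successive unblockings of any vertex, and then bounds the delay by counting how many times each arc can be touched (forward exploration, insertion into a $B$-list, unblocking) between two consecutive outputs. The only divergence is that the difficulty you flag about the neighbour-exploration order is already discharged by the correctness lemmas (in particular Lemma~\ref{blockedverticeslemma}), so it does not need to be revisited inside the delay analysis.
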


\begin{proof}
Let us first prove that between two successive unblockings of any
vertex $v$, a cycle is output. Let $w$ be the vertex such that a call
to $\unblock(w)$ at line~\ref{unblocking} of Algorithm~\ref{cyclealgo}
unblocks $v$ for the first time. Let $S$ and $T$ be, respectively, the
current sets of stack and twinned vertices after popping $w$.  The
recursive structure of the unblocking procedure then ensures that
there exists a $vw$-path avoiding $S\cup T$. Moreover, as the call to
$\unblock(w)$ was made at line~\ref{unblocking}, the answer to
$\cyclebubble(w)$ is {\em true} so there exists also a
$w\overline{s}$-path avoiding $S\cup T$. The concatenation of both
paths is a again a $v\bar{s}$-path avoiding $S \cup T$. Let $x$ be the
first vertex of this path to be visited again. Note that, if no vertex
in this path is visited again there is nothing to prove, since $v$ is
free, $\cyclebubble(v)$ needs to be called before any $\unblock(v)$
call. When $\cyclebubble(x)$ is called, there is a $x
\overline{s}$-path avoiding the current $S \cup T$. 
of stack and twinned vertices.  Thus, applying
Lemma~\ref{lemma:returns_true} and then
Lemma~\ref{lemma:returns_cycle}, we know that a cycle is output before
any call to $\unblock$. As no call to $\unblock(v)$ can be made before
the call to $\cyclebubble(x)$, a cycle is output before the second
call to $\unblock(v)$.

Let us now consider the delay of the algorithm. In both its
exploration and unblocking phases, the algorithm follows the arcs of
the graph and transforms the status or the $B$ lists of their
endpoints, which overall require constant time. Thus, the delay only
depends on the number of arcs which are considered during two
successive outputs.  An arc $(u,v)$ is considered once by the
algorithm in the three following situations: the exploration part of a
call to $\cyclebubble(u)$; an insertion of $u$ in $B(v)$; a call to
$\unblock(v)$. As shown before, $\unblock(v)$ is called only once
between two successive outputs.  $\cyclebubble(u)$ cannot be called
more than twice. Thus the arc $(u,v)$ is considered at most $5$ times
between two outputs.  This ensures that the delay of the algorithm is
$O(m+n)$.
\end{proof}

\subsection{Practical speedup}
\label{sec:oneone}

\paragraph{Speeding up preprocessing.}
In Section~\ref{sec:bubble-cycle}, the bubble enumeration problem was
reduced to the enumeration of some particular cycles in the
transformed graph $G'_s$ for each $s$. It is worth observing that this
does not imply building from scratch $G'_s$ for each $s$. Indeed,
notice that for any two vertices $s_1$ and $s_2$, we can transform
$G'_{s_1}$ into $G'_{s_2}$ by: (a) removing from $G'_{s_1}$ the arcs
$(\overline{s}_1, {s_1})$, $({s_2},\overline{s}_2)$, $(v, {s_2})$, and
$(\overline{s}_2,\overline{v})$ for each $v\in N^-({s_2})$ in $G$; (b)
adding to $G'_{s_1}$ the arcs $({s_1}, \overline{s}_1)$,
$(\overline{s}_2,{s_2})$, $(v, {s_1})$, and
$(\overline{s}_1,\overline{v})$ for each $v\in N^-({s_1})$ in $G$.

\paragraph{Avoiding duplicate bubbles.}
The one-to-two correspondence between cycles in $G'_s$ and bubbles
starting from $s$ in $G$, claimed by Proposition
\ref{onetotwoproposition}, can be reduced to a one-to-one
correspondence in the following way. Consider an arbitrary order on
the vertices of $V$, and assign to each vertex of $\overline{V}$ the
order of its twin. Let $C$ be a cycle of $G'_s$ that passes through
$s$ and contains exactly two pairs of twin vertices. Denote again by
$t$ the vertex such that $(t,\overline{t})$ is the arc through which
$C$ swaps from $V$ to $\overline{V}$. Denote by {\em swap predecessor}
the vertex before $t$ in $C$ and by {\em swap successor} the vertex
after $\overline{t}$ in $C$.

\begin{proposition}\label{onetooneproposition}
There is a one-to-one correspondence between the set of
$(s,t)$-bubbles in $G$ for all $t \in V$, and the set of cycles of
$G'_s$ that pass through $s$, contain exactly two pairs of twin
vertices and such that the swap predecessor is greater than the swap
successor.
\end{proposition}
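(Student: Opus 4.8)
The statement refines Proposition~\ref{onetotwoproposition} from a one-to-two to a one-to-one correspondence, so the natural approach is to start from that proposition and show that exactly one of the two cycles associated to each $(s,t)$-bubble satisfies the extra condition ``swap predecessor $>$ swap successor''. First I would recall the construction in the proof of Proposition~\ref{onetotwoproposition}: an $(s,t)$-bubble consists of two internally vertex-disjoint $st$-paths $P$ and $Q$, and it yields two bubble-cycles in $G'_s$, namely $C_1 = P \cdot (t,\overline t) \cdot \overline Q \cdot (\overline s, s)$ and $C_2 = Q \cdot (t,\overline t) \cdot \overline P \cdot (\overline s, s)$, where $\overline Q$ (resp.\ $\overline P$) denotes the reversed twin copy of $Q$ (resp.\ $P$). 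The key observation is that in $C_1$ the swap predecessor is the second-to-last vertex of $P$ (the vertex $p$ with $(p,t)\in P$) and the swap successor is the twin of the second vertex of $Q$ along $Q$ from $t$ backwards, i.e.\ the vertex $\overline q$ where $q$ is the neighbour of $t$ on $Q$; symmetrically, in $C_2$ the swap predecessor is the neighbour of $t$ on $Q$ and the swap successor is $\overline p$ where $p$ is the neighbour of $t$ on $P$. Since we assigned to each twin vertex the order of its original, comparing swap predecessor and swap successor in $C_1$ amounts to comparing $p$ and $q$ (as elements of $V$ with the chosen order), and in $C_2$ to comparing $q$ and $p$ — the opposite comparison.

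The core of the argument is then a case split. If $p \neq q$, then exactly one of the two comparisons ``$p > q$'' and ``$q > p$'' holds, so exactly one of $C_1, C_2$ satisfies the required property, giving the desired one-to-one correspondence for this bubble. The potentially delicate case is $p = q$: this would mean $P$ and $Q$ share the vertex adjacent to $t$, contradicting the assumption that $P$ and $Q$ are internally vertex-disjoint $st$-paths (the vertex adjacent to $t$ is internal to both unless it equals $s$, and if $P$ and $Q$ are genuinely distinct paths of length $\geq 2$ this shared neighbour of $t$ is an internal vertex of both, a contradiction; the length-$1$ degenerate cases correspond to the small four-vertex cycles already excluded by the bubble-cycle size condition). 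So $p = q$ cannot occur for a genuine bubble, and the case split is exhaustive. I would also need to check the direction from cycles to bubbles: given a cycle $C$ of $G'_s$ through $s$ with exactly two twin pairs and swap predecessor greater than swap successor, the proof of Proposition~\ref{onetotwoproposition} already extracts an $(s,t)$-bubble $(P,Q)$ from it; I must verify that this map is the inverse of the one above, i.e.\ that the bubble $(P,Q)$ maps back to $C$ and not to its ``partner'' cycle — which follows because the partner cycle has the swap predecessor and successor swapped, hence fails the inequality.

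The step I expect to require the most care is the bookkeeping of ``swap predecessor'' and ``swap successor'' in terms of the original paths $P$ and $Q$: one has to be precise about the orientation of the reversed twin path $\overline Q$ inside $C_1$ (it runs from $\overline t$ to $\overline s$, so traversing $C_1$ the vertex immediately after $\overline t$ is the twin of the $P$-or-$Q$ neighbour of $t$ — getting the right path here is exactly the point of the whole proposition) and to confirm that the order assigned to twin vertices makes the two comparisons genuinely opposite. Everything else — uniqueness coming from the trichotomy of a total order, the exclusion of $p = q$, the inverse check — is then a short and routine argument building directly on Proposition~\ref{onetotwoproposition}.
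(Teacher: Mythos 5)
Your proof is correct and follows essentially the same route as the paper: both reduce the choice between the two cycles of Proposition~\ref{onetotwoproposition} to comparing the next-to-last vertices of the two paths (which are distinct by internal vertex-disjointness), so that exactly one of the two cycles has the larger one as swap predecessor. One small correction: in this proposition the degenerate four-vertex cycles $s,v,\overline{v},\overline{s}$ are excluded not by the size-greater-than-four condition (which is not part of the hypothesis here) but by the order condition itself, since $\overline{s}$ is assigned the same order as $s$ and hence the swap predecessor is not strictly greater than the swap successor.
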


\begin{proof}
The proof follows the one of
Proposition~\ref{onetotwoproposition}. The only difference is that, if
we consider a bubble composed of the paths $P_1$ and $P_2$, one of
these two paths, say $P_1$, has a next to last vertex greater than the
next to last vertex of $P_2$. Then the cycle of $G'_s$ made of $P_1$
and $\overline{P_2}$ is still considered by the algorithm whereas the
cycle made of $P_2$ and $\overline{P_1}$ is not.  Moreover, the cycles
of length four which are of the type
$\{s,t,\overline{t},\overline{s}\}$ are ruled out as $\overline{s}$ is
of the same order as $s$.  
\end{proof}


\bigskip
\bigskip


\section{Optimal listing of cycles and $st$-paths in undirected graphs} \label{sec:unweighted:cycle}

\subsection{Introduction}
\label{sec:introduction}
Listing all the simple cycles (hereafter just called cycles) in a
graph is a classical problem whose efficient solutions date back to
the early 70s. For a graph with $n$ vertices and $m$ edges containing
$\eta$ cycles, the best known solution in the literature is given by
Johnson's algorithm (\cite{Johnson75}) and takes $O((\eta+1)(m+n))$
time.

\subsubsection{Previous work}
The classical problem of listing all the cycles of a graph has been
extensively studied for its many applications in several fields,
ranging from the mechanical analysis of chemical
structures~\cite{Sussenguth65} to the design and analysis of reliable
communication networks, and the graph isomorphism problem
(\cite{Welch66}).  In particular, at the turn of the seventies several
algorithms for enumerating all cycles of an undirected graph have been
proposed.  There is a vast body of work, and the majority of the
algorithms listing all the cycles can be divided into the following
three classes (see \cite{Bezem87} and \cite{Mateti76} for excellent
surveys).

\begin{enumerate}
\item \textit{Search space algorithms.}  According to this approach,
  cycles are looked for in an appropriate search space.  In the case
  of undirected graphs, the \emph{cycle vector space}
  (\cite{Diestel05}) turned out to be the most promising choice: from
  a basis for this space, all vectors are computed and it is tested
  whether they are a cycle. Since the algorithm introduced in
  \cite{Welch66}, many algorithms have been proposed: however, the
  complexity of these algorithms turns out to be exponential in the
  dimension of the vector space, and thus in $n$. For planar graphs,
  an algorithm listing cycles in $O((\eta + 1)n)$ time was presented
  in \cite{Syslo81}.

\item \textit{Backtrack algorithms.} By this approach, all paths are
  generated by backtrack and, for each path, it is tested whether it
  is a cycle. One of the first algorithms is the one proposed
  in~\cite{Tiernan70}, which is however exponential in $\eta$. By
  adding a simple pruning strategy, this algorithm has been
  successively modified in~\cite{Tarjan73}: it lists all the cycles in
  $O(nm(\eta+1))$ time. Further improvements were proposed
  in (\cite{Johnson75,Szwarcfiter76,Read75}), leading to
  $O((\eta+1)(m+n))$-time algorithms that work for both directed and
  undirected graphs.  Apart from the algorithm in~\cite{Tiernan70},
  all the algorithms based on this approach are
  \textit{polynomial-time delay}, that is, the time elapsed between
  the outputting of two cycles is polynomial in the size of the graph
  (more precisely, $O(nm)$ in the case of the algorithm
  of~\cite{Tarjan73} and $O(m)$ in the case of the other three
  algorithms).

\item \textit{Using the powers of the adjacency matrix.}  This
  approach uses the so-called \emph{variable adjacency matrix}, that
  is, the formal sum of edges joining two vertices. A non-zero element
  of the $p$-th power of this matrix is the sum of all walks of length
  $p$: hence, to compute all cycles, we compute the $n$th power of the
  variable adjacency matrix. This approach is not very efficient
  because of the non-simple walks. Algorithms based on this approach
  (e.g.\mbox{} \cite{Ponstein66} and \cite{Yau67}) basically differ
  only on the way they avoid to consider walks that are neither paths
  nor cycles.
\end{enumerate}

Almost 40 years after Johnson's algorithm~\cite{Johnson75}, the
problem of efficiently listing all cycles of a graph is still an
active area of research
(e.g.~\cite{Halford04,Horvath04,Liu06,Sankar07,Wild08,Schott11}).  New
application areas have emerged in the last decade, such as
bioinformatics: for example, two algorithms for this problem have been
proposed in \cite{Klamt06} and \cite{Klamt09} while studying
biological interaction graphs. Nevertheless, no significant
improvement has been obtained from the theory standpoint: in
particular, Johnson's algorithm is still the theoretically most
efficient. His $O((\eta+1)(m+n))$-time solution is surprisingly not
optimal for undirected graphs as we show in this chapter.

\subsubsection{Results}  

We present the first optimal solution to list all the cycles in an
undirected graph~$G$.  Specifically, let $\setofcycles(G)$ denote the
set of all these cycles ($|\setofcycles(G)| = \eta$).  For a cycle $c
\in \setofcycles(G)$, let $|c|$ denote the number of edges in~$c$. Our
algorithm requires $O(m + \sum_{c \in \setofcycles(G)}{|c|})$ time and
is asymptotically optimal: indeed, $\Omega(m)$ time is necessarily
required to read $G$ as input, and $\Omega(\sum_{c \in
  \setofcycles(G)}{|c|})$ time is necessarily required to list the
output. Since $|c| \leq n$, the cost of our algorithm never exceeds
$O(m + (\eta+1) n)$ time.

Along the same lines, we also present the first optimal solution to
list all the simple paths from $s$ to $t$ (shortly, $st$-paths) in an
undirected graph $G$. Let $\setofpaths_{st}(G)$ denote the set of
$st$-paths in $G$ and, for an $st$-path $\pi \in \setofpaths_{st}(G)$,
let $|\pi|$ be the number of edges in $\pi$.  Our algorithm lists all
the $st$-paths in~$G$ optimally in $O(m + \sum_{\pi \in
  \setofpaths_{st}(G)}{|\pi|})$ time, observing that $\Omega(\sum_{\pi
  \in \setofpaths_{st}(G)}{|\pi|})$ time is necessarily required to
list the output.

We prove the following reduction to relate $\setofcycles(G)$ and
$\setofpaths_{st}(G)$ for some suitable choices of vertices $s,t$: if
there exists an optimal algorithm to list the $st$-paths in $G$, then
there exists an optimal algorithm to list the cycles in $G$.  Hence,
we can focus on listing $st$-paths.

\subsubsection{Difficult graphs for Johnson's algorithm}

It is worth observing that the analysis of the time complexity of
Johnson's algorithm is not pessimistic and cannot match the one of our
algorithm for listing cycles.  For example, consider the sparse
``diamond'' graph $D_n = (V, E)$ in Fig.~\ref{fig:johnsoncounter} with
$n=2k+3$ vertices in $V = \{a,b,c, v_1, \ldots, v_k, u_1, \ldots,
u_k\}$. There are $m = \Theta(n)$ edges in $E = \{ (a,c)$, $(a,v_i)$,
$(v_i,b)$, $(b,u_i)$, $(u_i,c)$, for $1 \leq i \leq k\}$, and three
kinds of (simple) cycles: (1)~$(a, v_i), (v_i, b), (b, u_j), (u_j, c),
(c, a)$ for $1 \leq i, j \leq k$; (2)~$(a, v_i), (v_i, b), (b, v_j),
(v_j, a)$ for $1 \leq i < j \leq k$; (3)~$(b, u_i), (u_i, c), (c,
u_j), (u_j, b)$ for $1 \leq i < j \leq k$, totalizing $\eta =
\Theta(n^2)$ cycles.  Our algorithm takes $\Theta(n + k^2) =
\Theta(\eta) = \Theta(n^2)$ time to list these cycles.  On the other
hand, Johnson's algorithm takes $\Theta(n^3)$ time, and the discovery
of the $\Theta(n^2)$ cycles in~(1) costs $\Theta(k) = \Theta(n)$ time
each: the backtracking procedure in Johnson's algorithm starting at
$a$, and passing through $v_i$, $b$ and $u_j$ for some $i,j$, arrives
at $c$: at that point, it explores all the vertices $u_l$ $(l \neq i)$
even if they do not lead to cycles when coupled with $a$, $v_i$, $b$,
$u_j$, and $c$.

\begin{figure}[htbp]
\centering
\definecolor{cqcqcq}{rgb}{0,0,0}
\begin{tikzpicture}[scale=1.5,line cap=round,line join=round,>=triangle 45,x=1.0cm,y=1.0cm]
\draw (-2,2)-- (-1,3);
\draw [dash pattern=on 5pt off 5pt] (-2,2)-- (-1,2.52);
\draw [dash pattern=on 5pt off 5pt] (-2,2)-- (-1,2);
\draw (-1,3)-- (0,2);
\draw [dash pattern=on 5pt off 5pt] (-1,2.52)-- (0,2);
\draw [dash pattern=on 5pt off 5pt] (-1,2)-- (0,2);
\draw (-2,2)-- (-1,1);
\draw (-1,1)-- (0,2);
\draw (0,2)-- (1,3);
\draw [dash pattern=on 5pt off 5pt] (0,2)-- (1.04,2.48);
\draw [dash pattern=on 5pt off 5pt] (0,2)-- (1,2);
\draw (0,2)-- (1,1);
\draw (1,3)-- (2,2);
\draw [dash pattern=on 5pt off 5pt] (1.04,2.48)-- (2,2);
\draw [dash pattern=on 5pt off 5pt] (1,2)-- (2,2);
\draw (1,1)-- (2,2);
\draw [dash pattern=on 5pt off 5pt] (-2,2)-- (-1,1.48);
\draw [dash pattern=on 5pt off 5pt] (-1,1.48)-- (0,2);
\draw [dash pattern=on 5pt off 5pt] (0,2)-- (1,1.44);
\draw [dash pattern=on 5pt off 5pt] (1,1.44)-- (2,2);
\draw [shift={(0,2)}] plot[domain=0:3.14,variable=\t]({1*2*cos(\t r)+0*2*sin(\t r)},{0*2*cos(\t r)+1*2*sin(\t r)});
\begin{footnotesize}
\fill [color=black] (-2,2) circle (1.5pt);
\draw[color=black] (-2,2) node[left] {$a$};
\fill [color=black] (-1,3) circle (1.5pt);
\draw[color=black] (-1,3) node[above] {$v_1$};
\fill [color=black] (-1,2) circle (1.5pt);
\fill [color=black] (-1,1) circle (1.5pt);
\draw[color=black] (-1,1) node[below] {$v_k$};
\fill [color=black] (-1,1.48) circle (1.5pt);
\fill [color=black] (-1,2.52) circle (1.5pt);
\fill [color=black] (0,2) circle (1.5pt);
\draw[color=black] (0,2) node[above] {$b$};
\fill [color=black] (1,3) circle (1.5pt);
\draw[color=black] (1,3) node[above] {$u_1$};
\fill [color=black] (1,2) circle (1.5pt);
\fill [color=black] (1.04,2.48) circle (1.5pt);
\fill [color=black] (1,1) circle (1.5pt);
\draw[color=black] (1,1) node[below] {$u_k$};
\fill [color=black] (2,2) circle (1.5pt);
\draw[color=black] (2,2) node[right] {$c$};
\fill [color=black] (1,1.44) circle (1.5pt);
\end{footnotesize}
\end{tikzpicture}
\caption{Diamond graph.}
\label{fig:johnsoncounter}
\end{figure}
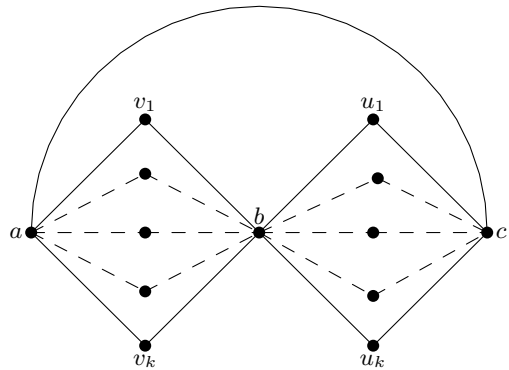

\subsection{Preliminaries}
Let $G=(V,E)$ be an undirected connected graph with $n=|V|$ vertices
and $m=|E|$ edges, without self-loops or parallel edges. For a vertex
$u \in V$, we denote by $N(u)$ the neighborhood of $u$ and by
$d(u)=|N(u)|$ its degree.  $G[V']$ denotes the subgraph \emph{induced}
by $V' \subseteq V$, and $G - u$ is the induced subgraph $G[ V
  \setminus \{u\}]$ for $u \in V$. Likewise for edge $e \in E$, we
adopt the notation $G-e = (V,E \setminus \{e\})$. For a vertex $v \in
V$, the \emph{postorder} DFS number of $v$ is the relative time in
which $v$ was \emph{last} visited in a DFS traversal, i.e. the
position of $v$ in the vertex list ordered by the last visiting time
of each vertex in the DFS.

Paths are simple in $G$ by definition: we refer to a path $\pi$ by its
natural sequence of vertices or edges.  A path $\pi$ from $s$ to $t$,
or $st$-\emph{path}, is denoted by $\pi = s \leadsto t$. Additionally,
$\setofpaths(G)$ is the set of all paths in $G$ and
$\setofpaths_{s,t}(G)$ is the set of all $st$-paths in $G$.  When
$s=t$ we have cycles, and $\setofcycles(G)$ denotes the set of all
cycles in $G$. We denote the number of edges in a path $\pi$ by
$|\pi|$ and in a cycle~$c$ by $|c|$. In this section, we consider the
following problems.

\begin{problem}[Listing $st$-Paths] \label{prob:liststpaths}
  Given an undirected graph $G=(V,E)$ and two distinct vertices $s,t
  \in V$, output all the paths $\pi \in \setofpaths_{s,t}(G)$.
\end{problem} 

\begin{problem}[Listing Cycles] \label{prob:listcycles}
  Given an undirected graph $G=(V,E)$, output all the cycles $c \in
  \setofcycles(G)$.
\end{problem} 

Our algorithms assume without loss of generality that the input graph
$G$ is connected, hence $m \ge n-1$, and use the decomposition of $G$
into biconnected components. Recall that an \emph{articulation point}
(or cut-vertex) is a vertex $u \in V$ such that the number of
connected components in $G$ increases when $u$ is removed. $G$ is
\emph{biconnected} if it has no articulation points. Otherwise, $G$
can always be decomposed into a tree of biconnected components, called
the \emph{block tree}, where each biconnected component is a maximal
biconnected subgraph of $G$ (see Fig.~\ref{fig:beadstring}), and
two biconnected components are adjacent if and only if they share an
articulation point.

\begin{figure}[htbp]
\centering
\begin{tikzpicture}
[nodeDecorate/.style={shape=circle,inner sep=1pt,draw,thick,fill=black},%
  lineDecorate/.style={-,dashed},%
  elipseDecorate/.style={color=gray!30},
  scale=0.7]
\fill [elipseDecorate] (5,10) circle (2);
\fill [elipseDecorate] (9,10) circle (2);
\fill [elipseDecorate] (2,10) circle (1);
\fill [elipseDecorate] (-1,10) circle (2);
\fill [elipseDecorate,rotate around={-55:(-1,8)}] (-1,7) circle (1);

\draw (5,10) circle (2);
\draw (9,10) circle (2);
\draw (2,10) circle (1);
\draw (-1,10) circle (2);
\draw (5,7) circle (1);
\draw (9,7) circle (1);
\draw [rotate around={55:(-1,8)}] (-1,7) circle (1);
\draw [rotate around={-55:(-1,8)}] (-1,7) circle (1);
\draw (13,10) circle (2);
\draw (13,7) circle (1);
\draw (-4,10) circle (1);
\begin{footnotesize}
\node (7) at (-2,7) [nodeDecorate,label=above:$s$] {};
\node (14) at (9,11) [nodeDecorate,label=above:$t$] {};
\end{footnotesize}
\foreach \nodename/\x/\y in {
  0/7/10, 1/5/11,
  2/3/10, 3/1/10, 4/-3/10, 5/-1/10, 6/-1/8, 7/-2/7, 8/-1/11,
  9/0/7,
  11/5/10, 12/4/9, 13/5/8, 14/9/11, 15/11/10, 16/9/9,
  17/9/7 , 18/9/8, 50/5/7, 51/13/11, 52/13/8, 53/13/7, 54/-4/10}
{
  \node (\nodename) at (\x,\y) [nodeDecorate] {};
}

\path
\foreach \startnode/\endnode in {6/7, 6/9, 5/6, 5/3, 5/8, 4/5, 3/2,
2/11, 1/11, 11/12, 11/0, 11/13, 13/50, 0/14, 14/15, 15/16, 16/18,
18/17, 15/51, 15/52, 51/52, 52/53, 54/4}
{
  (\startnode) edge[lineDecorate] node {} (\endnode)
};

\path
\foreach \startnode/\endnode/\bend in { 8/3/20, 6/3/20, 4/8/10,
12/13/20, 1/0/20, 2/1/20, 13/0/20, 0/18/10}
{
  (\startnode) edge[lineDecorate, bend left=\bend] node {} (\endnode)
};

\end{tikzpicture}
\caption{Block tree of $G$ with bead string $\sbeadstring$ in gray.}
\label{fig:beadstring}
\end{figure}

\subsection{Overview and main ideas}
\label{sec:overview}

While the basic approach is simple (see the binary partition in
point~\ref{item:abstract:3}), we use a number of non-trivial ideas to
obtain our optimal algorithm for an undirected (connected) graph $G$
as summarized in the steps below.
\begin{enumerate}
  \item Prove the following reduction. If there exists an optimal
    algorithm to list the $st$-paths in $G$, there exists an optimal
    algorithm to list the cycles in $G$. This relates
    $\setofcycles(G)$ and $\setofpaths_{st}(G)$ for some choices
    $s,t$.


  \item Focus on listing the $st$-paths. Consider the decomposition of
    the graph into biconnected components ({\bcc}s), thus forming a
    tree $T$ where two {\bcc}s are adjacent in $T$ iff they share an
    articulation point. Exploit (and prove) the property that if $s$
    and $t$ belong to distinct {\bcc}s, then $(i)$ there is a unique
    \emph{sequence} $\sbeadstring$ of adjacent {\bcc}s in $T$ through
    which each $st$-path must necessarily pass, and $(ii)$ each
    $st$-path is the concatenation of paths connecting the
    articulation points of these {\bcc}s in $\sbeadstring$.

  \item \label{item:abstract:3} Recursively list the $st$-paths in
    $\sbeadstring$ using the classical binary partition (i.e.\mbox{}
    given an edge $e$ in $G$, list all the cycles containing $e$, and
    then all the cycles not containing~$e$): now it suffices to work
    on the \emph{first} \bcc\ in $\sbeadstring$, and efficiently
    maintain it when deleting an edge $e$, as required by the binary
    partition.

  \item Use a notion of \emph{certificate} to avoid recursive calls
    (in the binary partition) that do not list new $st$-paths.  This
    certificate is maintained dynamically as a data structure
    representing the first \bcc\ in $\sbeadstring$, which guarantees
    that there exists at least one \emph{new} solution in the current
    $\sbeadstring$.

  \item Consider the binary recursion tree corresponding to the binary
    partition.  Divide this tree into \emph{spines}: a spine
    corresponds to the recursive calls generated by the edges $e$
    belonging to the same adjacency list in $\sbeadstring$.  The
    amortized cost for each listed $st$-path $\pi$ is $O(|\pi|)$ when
    there is a guarantee that the amortized cost in each spine $S$ is
    $O(\mu)$, where $\mu$ is a lower bound on the number of $st$-paths
    that will be listed from the recursive calls belonging to~$S$. The
    (unknown) parameter~$\mu$, which is different for each spine~$S$,
    and the corresponding cost $O(\mu)$, will drive the design of the
    proposed algorithms.
\end{enumerate}

\subsubsection{Reduction to \boldmath{$st$}-paths}
\label{sub:reduction-paths}

We now show that listing cycles reduces to listing $st$-paths while
preserving the optimal complexity.

\begin{lemma} \label{lemma:reduction}
  Given an algorithm that solves Problem~\ref{prob:liststpaths} in
  optimal \mbox{$O(m + \sum_{\pi \in \setofpaths_{s,t}(G)}{|\pi|})$}
  time, there exists an algorithm that solves
  Problem~\ref{prob:listcycles} in optimal \mbox{$O(m + \sum_{c \in
      \setofcycles(G)}{|c|})$} time.
\end{lemma}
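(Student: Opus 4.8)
The plan is to reduce Problem~\ref{prob:listcycles} to Problem~\ref{prob:liststpaths} by grouping the cycles of $G$ according to their vertex of smallest index. Fix an arbitrary numbering $v_1,\dots,v_n$ of $V$ and, for $i=1,\dots,n$, let $G_i=G[\{v_i,\dots,v_n\}]$. Each cycle $c\in\setofcycles(G)$ contains a unique vertex of smallest index, say $v_i$; then $c$ is a cycle of $G_i$ through $v_i$, and conversely every cycle of $G_i$ through $v_i$ has $v_i$ as its smallest-index vertex. Thus the families ``cycles of $G_i$ through $v_i$'' ($i=1,\dots,n$) partition $\setofcycles(G)$, and it suffices to list, for each $i$, all cycles of $G_i$ passing through $v_i$. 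Since a cycle with at least three vertices is $2$-connected, by Lemma~\ref{lem:back:bcc} it lies inside a single biconnected component of $G_i$; hence I only need to look at the biconnected components of $G_i$ that contain $v_i$ and have at least three vertices, say $B_i^{(1)},\dots,B_i^{(k_i)}$ (the remaining ones carry no cycle through $v_i$).

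To list the cycles of a block $B=B_i^{(j)}$ through $v_i$ I would turn them into $st$-paths. Build $H$ from $B$ by replacing $v_i$ with two fresh vertices $s$ and $t$, each made adjacent exactly to the neighbours of $v_i$ in $B$, and invoke the assumed optimal algorithm for Problem~\ref{prob:liststpaths} on $(H,s,t)$. A simple cycle $\langle v_i,u_1,\dots,u_p,v_i\rangle$ of $B$ (necessarily $p\ge 2$) is in one-to-two correspondence with the $st$-paths $\langle s,u_1,\dots,u_p,t\rangle$ and $\langle s,u_p,\dots,u_1,t\rangle$ of $H$, which have the same length; the only $st$-paths of $H$ not arising this way are the length-$2$ paths $\langle s,u,t\rangle$, which are spurious (they would traverse the single edge $(v_i,u)$ twice). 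So I would discard, in $O(1)$ time each, every length-$2$ output of the $st$-path algorithm, and among the two copies of a genuine cycle keep the one whose first internal vertex has smaller index than its last; this outputs each cycle of $G_i$ through $v_i$ exactly once, with $O(1)$ overhead per enumerated $st$-path.

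For the time analysis, the call on $H=H_i^{(j)}$ costs $O(|E(H)|+\sum_{\pi}|\pi|)$ where $\pi$ ranges over the $st$-paths of $H$; using $|E(H)|=|E(B_i^{(j)})|+\deg_{B_i^{(j)}}(v_i)=O(|E(B_i^{(j)})|)$, the equal-length correspondence, and that there are at most $\deg_{B_i^{(j)}}(v_i)$ spurious paths, this is $O\bigl(|E(B_i^{(j)})|+\sum_{c}|c|+\deg_{G_i}(v_i)\bigr)$ with $c$ ranging over cycles of $G_i$ through $v_i$. Summing over $i$ and $j$, the term $\sum_i\deg_{G_i}(v_i)=m$ since each edge is incident to its lower-indexed endpoint in exactly one $G_i$, and $\sum_i\sum_{c:\,\min c=v_i}|c|=\sum_{c\in\setofcycles(G)}|c|$. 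It remains to bound $\sum_{i,j}|E(B_i^{(j)})|$: blocks of $G_i$ are pairwise edge-disjoint (Lemma~\ref{lem:back:bcc}), and every edge $e$ of a $2$-connected block $B$ containing $v_i$ lies on a cycle of $B$ through $v_i$ — a standard consequence of the ``$xy$-path through a prescribed edge'' property of biconnected graphs recalled after Lemma~\ref{lem:back:ear}, applied with $x,y$ two neighbours of $v_i$ in $B$ and splicing in the edges $(v_i,x)$, $(v_i,y)$. Such a cycle is counted at iteration $i$, so $\sum_j|E(B_i^{(j)})|\le\sum_{c:\,\min c=v_i}|c|$, whence $\sum_{i,j}|E(B_i^{(j)})|=O(\sum_{c\in\setofcycles(G)}|c|)$. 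Altogether the running time is $O(m+n+\sum_{c\in\setofcycles(G)}|c|)=O(m+\sum_{c\in\setofcycles(G)}|c|)$, matching the lower bound $\Omega(m)$ (read $G$) plus $\Omega(\sum_{c}|c|)$ (write the output), so the algorithm is optimal.

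The main obstacle is the remaining bookkeeping: at each iteration $i$, producing the blocks $B_i^{(j)}$ containing $v_i$ (with their edge lists) in time charged to the bounds above, \emph{without} recomputing a biconnected-component decomposition of $G_i$ from scratch, which would cost $\Theta(nm)$ in the worst case and destroy optimality on graphs with few cycles. I would address this by running the iterations conceptually in reverse, inserting the vertices in the order $v_n,v_{n-1},\dots,v_1$ so that $G_i$ is obtained from $G_{i+1}$ by adding $v_i$ and its incident edges, and maintaining the block-cut tree of the current graph under these insertions; since all insertions are known in advance this is an offline incremental biconnectivity problem solvable in $O(m+n)$ total time, after which the blocks incident to $v_i$ and their edges are read off directly (a cost already charged above), and $G_i$ itself is maintained incrementally in $O(m+n)$ total. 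Carrying out this incremental maintenance, and verifying that it introduces no $\alpha$ or logarithmic factor, is where the real care is needed.
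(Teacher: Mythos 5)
Your decomposition is genuinely different from the paper's: you group cycles by their least vertex $v_i$ and convert ``cycles of $G_i=G[\{v_i,\dots,v_n\}]$ through $v_i$'' into $st$-paths by splitting $v_i$ into $s$ and $t$, whereas the paper works edge-by-edge --- it picks a back edge $b=(s,t)$ inside a biconnected component $B$, lists the $st$-paths of $B-b$ (the cycles through $b$), deletes $b$, recomputes the biconnected components of $B-b$ from scratch, and recurses. Your correspondence (two $st$-paths of equal length per cycle, plus $\deg(v_i)$ spurious length-$2$ paths, all filtered in $O(1)$ each) and your key charging fact (every edge of a $2$-connected block containing $v_i$ lies on a cycle through $v_i$, so $\sum_j|E(B_i^{(j)})|\le\sum_{c:\min c=v_i}|c|$) are both correct, and the latter is the exact vertex-analogue of the fact the paper uses for its back edge $b$.

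The genuine gap is the one you flag yourself and then defer: producing, at each step $i$, the blocks of $G_i$ that contain $v_i$ within the stated time budget. Recomputing the biconnected components of $G_i$ from scratch costs $\Theta(|E(G_i)|)$, which cannot be charged to the cycles through $v_i$ (there may be none), so the naive version is $\Theta(nm)$. Your proposed rescue --- offline incremental biconnectivity under vertex insertions in $O(m+n)$ total time, with block edge-lists readable at each step --- is not a standard off-the-shelf result (the known incremental biconnectivity bounds of Westbrook--Tarjan type carry an $\alpha(m,n)$ factor), so as written the optimality claim rests on an unproven data-structure lemma. The fix does not require any dynamic data structure: localize the recursion to blocks, exactly as the paper does for edge deletions. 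Decompose $G$ into biconnected components once in $O(m)$; for each block $B$ with at least three vertices, let $v=\min(B)$, split $v$ into $s,t$, list the $st$-paths (the cycles of $B$ through $v$), then delete $v$, recompute the biconnected components of $B-v$ from scratch in $O(|E_B|)$, and push the new blocks onto the work list. Your own charging fact now pays for each recomputation, since the cycles output at that very step have total length at least $|E_B|$, and correctness follows because every cycle of $B$ either passes through $v$ or survives intact in exactly one block of $B-v$. With that restructuring your argument closes and matches the paper's bound; without it, the proof is incomplete.
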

\begin{proof}
  Compute the biconnected components of $G$ and keep them in a list
  $L$. Each (simple) cycle is contained in one of the biconnected
  components and therefore we can treat each biconnected component
  individually as follows. While $L$ is not empty, extract a
  biconnected component $B=(V_{B},E_{B})$ from $L$ and repeat the
  following three steps: $(i)$ compute a DFS traversal of $B$ and take
  any back edge $b=(s,t)$ in $B$; $(ii)$ list all $st$-paths in $B-b$,
  i.e.~the cycles in $B$ that include edge~$b$; $(iii)$ remove edge
  $b$ from $B$, compute the new biconnected components thus created by
  removing edge~$b$, and append them to $L$. When $L$ becomes empty,
  all the cycles in $G$ have been listed.

  Creating $L$ takes $O(m)$ time. For every $B \in L$, steps $(i)$ and
  $(iii)$ take $O(|E_B|)$ time.  Note that step $(ii)$ always outputs
  distinct cycles in $B$ (i.e.~$st$-paths in $B-b$) in
  $O(|E_{B}|+\sum_{\pi \in \setofpaths_{s,t}(B-b)}{|\pi|})$ time.
  However, $B-b$ is then decomposed into biconnected components whose
  edges are traversed again. We can pay for the latter cost: for any
  edge $e \neq b$ in a biconnected component $B$, there is always a
  cycle in $B$ that contains both $b$ and $e$ (i.e.\mbox{} it is an
  $st$-path in $B-b$), hence $\sum_{\pi \in
    \setofpaths_{s,t}(B-b)}{|\pi|}$ dominates the term $|E_{B}|$,
  i.e.~$\sum_{\pi \in \setofpaths_{s,t}(B-b)}{|\pi|}=
  \Omega(|E_{B}|)$.  Therefore steps $(i)$--$(iii)$ take $O(\sum_{\pi
    \in \setofpaths_{s,t}(B-b)}{|\pi|})$ time. When $L$ becomes empty,
  the whole task has taken $O(m + \sum_{c \in \setofcycles(G)}{|c|})$
  time.
\end{proof}

\subsubsection{Decomposition in biconnected components}
\label{sec:decomposition}

We now focus on listing $st$-paths
(Problem~\ref{prob:liststpaths}). We use the decomposition of $G$ into
a block tree of biconnected components.  Given vertices $s,t$, define
its \emph{bead string}, denoted by $\sbeadstring$, as the unique
sequence of one or more adjacent biconnected components (the
\emph{beads}) in the block tree, such that the first one contains $s$
and the last one contains $t$ (see Fig.~\ref{fig:beadstring}): these
biconnected components are connected through articulation points,
which must belong to all the paths to be listed.

\begin{lemma} \label{lemma:beadstring}
  All the $st$-paths in $\setofpaths_{s,t}(G)$ are contained in the
  induced subgraph $G[\sbeadstring]$ for the bead string
  $\sbeadstring$. Moreover, all the articulation points in
  $G[\sbeadstring]$ are traversed by each of these paths.
\end{lemma}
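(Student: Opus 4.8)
The statement has two parts, and both follow from the standard structure theory of the block tree (Lemma~\ref{lem:back:bcc} and the surrounding discussion). The plan is to argue by the uniqueness of paths in a tree. First I would recall that the block tree $\mathcal{T}$ is a tree whose nodes are the biconnected components and the articulation points of $G$, with an edge between an articulation point $a$ and a \bcc{} $B$ precisely when $a \in B$. Since $G$ is connected, $s$ lies in some \bcc{} $B_s$ and $t$ lies in some \bcc{} $B_t$ (if $s$ or $t$ is itself an articulation point, pick any incident \bcc; the argument is unaffected). Because $\mathcal{T}$ is a tree, there is a \emph{unique} simple path in $\mathcal{T}$ from $B_s$ to $B_t$; reading off the \bcc{}s along this path (ignoring the articulation-point nodes, which are exactly the shared vertices between consecutive \bcc{}s) gives exactly the bead string $\sbeadstring = (B_1, B_2, \ldots, B_\ell)$ with $B_1 = B_s$ and $B_\ell = B_t$, and consecutive beads $B_i, B_{i+1}$ share a unique articulation point $a_i$.

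For the containment claim, I would take an arbitrary $st$-path $\pi$ and show every edge of $\pi$ lies in $G[\sbeadstring]$. The key fact is that each edge of $G$ belongs to exactly one \bcc{} (Lemma~\ref{lem:back:bcc}, part of the edge partition). Suppose, for contradiction, some edge of $\pi$ lies in a \bcc{} $B \notin \sbeadstring$. Consider the position of $B$ in the block tree: removing $B$ (or rather, the articulation point through which $B$ attaches toward the $B_s$–$B_t$ path) disconnects $B$ from one of $s$ or $t$. More carefully, since $B$ is off the tree path from $B_s$ to $B_t$, there is a single articulation point $a$ on the $B_s$–$B_t$ tree path that separates $B$ from both endpoints in the sense that every walk in $G$ from a vertex strictly inside $B$ to $s$ (or to $t$) must pass through $a$. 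Then $\pi$, being a simple path, would have to enter the "$B$ side" of $a$ and leave it, forcing it to visit $a$ twice — contradicting simplicity. Hence no edge of $\pi$ lies outside $\sbeadstring$, so $\pi$ is a path in $G[\sbeadstring]$.

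For the second part — that every articulation point $a_i$ shared by consecutive beads is traversed by each $st$-path — I would use the separator property of articulation points directly: $a_i$ is a cut vertex of $G[\sbeadstring]$ whose removal puts $s$ (which lies in $B_1 \cup \cdots \cup B_i$) and $t$ (which lies in $B_{i+1} \cup \cdots \cup B_\ell$) in different connected components of $G[\sbeadstring] - a_i$. Since $\pi \subseteq G[\sbeadstring]$ by the first part and $\pi$ connects $s$ to $t$, it cannot avoid the cut vertex $a_i$; therefore $a_i \in \pi$. Iterating over $i = 1, \ldots, \ell-1$ gives that all articulation points of $G[\sbeadstring]$ lie on $\pi$.

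The only delicate point — and the part I would write most carefully — is justifying that an articulation point of $G$ appearing on the $B_s$–$B_t$ tree path really is a cut vertex \emph{of the induced subgraph} $G[\sbeadstring]$ separating $s$ from $t$ there (not merely a cut vertex of $G$). This is immediate from the tree structure of $\mathcal{T}$: the beads $B_1, \ldots, B_i$ on one side of $a_i$ and $B_{i+1}, \ldots, B_\ell$ on the other side meet only at $a_i$, because any vertex common to two beads is an articulation point lying on the tree path, and the only such vertex between the two sides is $a_i$ itself. Everything else is a routine unwinding of definitions, so I do not expect any real obstacle beyond stating this separation cleanly.
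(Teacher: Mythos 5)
Your proposal is correct and follows essentially the same route as the paper's proof: both parts rest on the tree structure of the biconnected-component decomposition, with the containment claim following from the fact that a simple path cannot leave the bead string and return (the paper phrases this via maximality of the \bcc{}s, you via the impossibility of crossing the attaching articulation point twice), and the second claim following from the separator property of the articulation points between consecutive beads. Your write-up is somewhat more explicit than the paper's, in particular in justifying that the articulation points are cut vertices of the induced subgraph $G[\sbeadstring]$ itself, but the underlying argument is the same.
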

\begin{proof}
  Consider an edge $e = (u,v)$ in $G$ such that $u \in \sbeadstring$
  and $v \notin \sbeadstring$. Since the biconnected components of a
  graph form a tree and the bead string $\sbeadstring$ is a path in
  this tree, there are no paths $v \leadsto w$ in $G-e$ for any $w \in
  \sbeadstring$ because the biconnected components in $G$ are maximal
  and there would be a larger one (a contradiction).
  Moreover, let $B_1, B_2, \ldots, B_r$ be the biconnected components
  composing $\sbeadstring$, where $s \in B_1$ and $t \in B_r$. If
  there is only one biconnected component in the path (i.e.~$r=1$),
  there are no articulation points in $\sbeadstring$.  Otherwise, all
  of the $r-1$ articulation points in $\sbeadstring$ are traversed by
  each path $\pi \in \setofpaths_{s,t}(G)$: indeed, the articulation
  point between adjacent biconnected components $B_i$ and $B_{i+1}$ is
  their only vertex in common and there are no edges linking $B_i$ and
  $B_{i+1}$.
\end{proof}

We thus restrict the problem of listing the paths in
$\setofpaths_{s,t}(G)$ to the induced subgraph $G[\sbeadstring]$,
conceptually isolating it from the rest of $G$. For the sake of
description, we will use interchangeably $\sbeadstring$ and
$G[\sbeadstring]$ in the rest of the chapter.

\subsubsection{Binary partition scheme}
\label{sec:basic-scheme}

We list the set of $st$-paths in $\sbeadstring$, denoted by
$\setofpaths_{s,t}(\sbeadstring)$, by applying the binary partition
method (where $\setofpaths_{s,t}(G) = \setofpaths_{s,t}(\sbeadstring)$
by Lemma~\ref{lemma:beadstring}): we choose an edge $e = (s,v)$
incident to~$s$ and then list all the $st$-paths that include $e$ and
then all the $st$-paths that do not include $e$. Since we delete some
vertices and some edges during the recursive calls, we proceed as
follows.

\smallskip

\noindent{\it Invariant:} At a generic recursive step on vertex $u$
(initially, $u:=s$), let $\pi_s = s \leadsto u$ be the path discovered
so far (initially, $\pi_s$ is empty $\{\}$). Let $\beadstring$ be the
current bead string (initially, $\beadstring := \sbeadstring$). More
precisely, $\beadstring$ is defined as follows: $(i)$~remove from
$\sbeadstring$ all the vertices in $\pi_s$ but $u$, and the edges
incident to $u$ and discarded so far; $(ii)$~recompute the block tree
on the resulting graph; $(iii)$~$\beadstring$ is the unique bead
string that connects $u$ to $t$ in the recomputed block tree.

\smallskip

\noindent{\it Base case:} When $u=t$, output the $st$-path $\pi_s$.

\smallskip

\noindent{\it Recursive rule:} Let $\setofpaths(\pi_s, u,
\beadstring)$ denote the set of $st$-paths to be listed by the current
recursive call. Then, it is the union of the following two disjoint
sets, for an edge $e=(u,v)$ incident to~$u$:
\begin{itemize}
\item \emph{Left branching:} the $st$-paths in $\setofpaths(\pi_s
  \cdot e, v, \vbeadstring)$ that use $e$, where $\vbeadstring$ is the
  unique bead string connecting $v$ to $t$ in the block tree resulting
  from the deletion of vertex $u$ from $\beadstring$.
\item \emph{Right branching:} the $st$-paths in $\setofpaths(\pi_s, u,
  \beadstring')$ that do \emph{not} use~$e$, where $\beadstring'$ is
  the unique bead string connecting $u$ to $t$ in the block tree
  resulting from the deletion of edge $e$ from $\beadstring$.
\end{itemize}

\noindent
Hence, $\setofpaths_{s,t}(\sbeadstring)$ (and so
$\setofpaths_{s,t}(G)$) can be computed by invoking $\setofpaths(\{\},
s, \sbeadstring)$. The correctness and completeness of the above
approach is discussed in Section~\ref{sec:intro-cert}.

At this point, it should be clear why we introduce the notion of bead
strings in the binary partition. The existence of the partial path
$\pi_s$ and the bead string $\beadstring$ guarantees that there surely
exists at least one $st$-path. But there are two sides of the coin
when using $\beadstring$.

\begin{enumerate}
\item One advantage is that we can avoid useless recursive calls: If
  vertex $u$ has only one incident edge $e$, we just perform the left
  branching; otherwise, we can safely perform both the left and right
  branching since the \emph{first} bead in $\beadstring$ is always a
  biconnected component by definition (thus there exists both an
  $st$-path that traverses $e$ and one that does not).

\item \label{side_coin:2} The other side of the coin is that we have
  to maintain the bead string $\beadstring$ as $\vbeadstring$ in the
  left branching and as $\beadstring'$ in the right branching by
  Lemma~\ref{lemma:beadstring}. Note that these bead strings are
  surely non-empty since $\beadstring$ is non-empty by induction (we
  only perform either left or left/right branching when there are
  solutions by item~1).
\end{enumerate}

To efficiently address point~\ref{side_coin:2}, we need to introduce
the notion of certificate as described next.

\subsubsection{Introducing the certificate}
\label{sec:intro-cert}

Given the bead string $\beadstring$, we call the \emph{head} of
$\beadstring$, denoted by $\head$, the first biconnected component in
$\beadstring$, where $u \in \head$. Consider a DFS tree of
$\beadstring$ rooted at $u$ that changes along with $\beadstring$, and
classify the edges in $\beadstring$ as tree edges or back edges (there
are no cross edges since the graph is undirected).

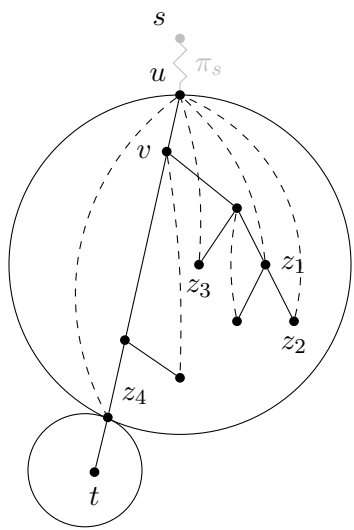
\begin{figure}[htbp]
	\centering
\begin{tikzpicture}
[nodeDecorate/.style={shape=circle,inner sep=1pt,draw,thick,fill=black},%
  lineDecorate/.style={-,dashed},%
  elipseDecorate/.style={color=gray!30},
  scale=0.25]
\draw (10,22) circle (9);
\draw (5,11.1) circle (3);

\node (s) at (10,34) [nodeDecorate,color=lightgray,label=above left:$s$] {};
\node (u) at (10,31) [nodeDecorate,label=above left:{ $u$}] {};

\node (tp) at (6.2,13.9) [nodeDecorate,label=above right:{ $z_4$}] {};
\node (t) at (5.5,11) [nodeDecorate,label=below:$t$] {};

\path {
	(s) edge[snake,-,color=lightgray] node {\quad\quad$\pi_s$} (u)
	(u) edge node {} (tp)
	(tp) edge node {} (t)
};

\node (a) at (9.3,28) [nodeDecorate,label=left:$v$] {};
\node (b) at (7.1,18) [nodeDecorate,] {};
\node (c) at (13,25) [nodeDecorate,] {};
\node (d) at (14.5,22) [nodeDecorate,label=right:$z_1$] {};
\node (e) at (11,22) [nodeDecorate,label=below:$z_3$] {};
\node (f) at (16,19) [nodeDecorate,label=below:$z_2$] {};
\node (g) at (13,19) [nodeDecorate,] {};
\node (h) at (10,16) [nodeDecorate,] {};

\path {
	(a) edge node {} (c)
	(c) edge node {} (d)
	(d) edge node {} (f)
	(c) edge node {} (e)
	(d) edge node {} (g)
	(b) edge node {} (h)
};

\path {
	(u) edge[dashed,bend left=-40] node {} (tp)
	(f) edge[dashed,bend left=-40] node {} (u)
	(g) edge[dashed,bend left=10] node {} (c)
	(e) edge[dashed,bend left=-10] node {} (u)
	(d) edge[dashed,bend left=-20] node {} (u)
	(a) edge[dashed,bend left=5] node {} (h)
};

\end{tikzpicture}
\caption{Example certificate of $B_{u,t}$ \label{fig:Certificate}}
\end{figure}

To maintain $\beadstring$ (and so $\head$) during the recursive calls,
we introduce a \emph{certificate} $C$ (see
Fig.~\ref{fig:Certificate}): It is a suitable data structure that uses
the above classification of the edges in $\beadstring$, and supports
the following operations, required by the binary partition scheme.
\begin{itemize}
\item $\chooseedge(C,u)$: returns an edge $e = (u,v)$ with $v \in
  \head$ such that $\pi_s \cdot (u,v) \cdot u \leadsto t$ is an
  $st$-path such that $u \leadsto t$ is inside $\beadstring$.  Note
  that $e$ always exists since $\head$ is biconnected.  Also, the
  chosen $v$ is the last one in DFS postorder among the neighbors of
  $u$: in this way, the (only) tree edge $e$ is returned when there
  are no back edges leaving from~$u$.  (As it will be clear in
  Sections~\ref{sec:recursion-amortization} and~\ref{sec:certificate},
  this order facilitates the analysis and the implementation of the
  certificate.)
\item $\oracleleft(C,e)$: for the given $e=(u,v)$, it obtains
  $\vbeadstring$ from $\beadstring$ as discussed in
  Section~\ref{sec:basic-scheme}. This implies updating also $\head$,
  $C$, and the block tree, since the recursion continues on~$v$. It
  returns bookkeeping information $I$ for what is updated, so that it
  is possible to revert to $\beadstring$, $\head$, $C$, and the block
  tree, to their status before this operation.
\item $\oracleright(C,e)$: for the given $e=(u,v)$, it obtains
  $\beadstring'$ from $\beadstring$ as discussed in
  Section~\ref{sec:basic-scheme}, which implies updating also $\head$,
  $C$, and the block tree. It returns bookkeeping information $I$ as
  in the case of $\oracleleft(C,e)$.
\item $\undooracle(C,I)$: reverts the bead string to $\beadstring$,
  the head $\head$, the certificate $C$, and the block tree, to their
  status before operation $I := \oracleleft(C,e)$ or $I :=
  \oracleright(C,e)$ was issued (in the same recursive call).
\end{itemize}

Note that a notion of certificate in listing problems has been
introduced in~\cite{Ferreira11}, but it cannot be directly applied to
our case due to the different nature of the problems and our use of
more complex structures such as biconnected components.

Using our certificate and its operations, we can now formalize the
binary partition and its recursive calls $\setofpaths(\pi_s, u,
\beadstring)$ described in Section~\ref{sec:basic-scheme} as
Algorithm~\ref{alg:liststpaths}, where $\beadstring$ is replaced by
its certificate $C$.

\begin{algorithm}[htbp] 
  \caption{$\liststpaths(\pi_s,\,u,\,C)$} \label{alg:liststpaths}
  
  \If{$u=t$}{
    $\routput(\pi_s)$ \label{code:base} \\
    $\return$ \label{code:returnbase}
  }
  $e = (u,v) := \chooseedge( C, u )$ \label{code:choose} \\
  \If{ $e \text{ is back edge}$ \label{code:if_back}}{
    $I := \oracleright(C,e)$  \label{code:right_update} \\
    $\liststpaths(\pi_s,\, u,\,C)$ \label{code:right_branch} \\
    $\undooracle(C, I)$ \label{code:right_undo}
  }
  $I := \oracleleft(C,e)$ \label{code:left_update} \\
  $\liststpaths( \pi_s \cdot (u,v),\, v,\, C)$ \label{code:left_branch} \\
  $\undooracle(C, I)$ \label{code:left_undo}
\end{algorithm}

The base case ($u=t$) corresponds to lines~1--4 of
Algorithm~\ref{alg:liststpaths}. During recursion, the left branching
corresponds to lines~5 and~11-13, while the right branching to
lines~5--10. Note that we perform only the left branching when there
is only one incident edge in $u$, which is a tree edge by definition
of $\chooseedge$. Also, lines~9 and~13 are needed to restore the
parameters to their values when returning from the recursive calls.

\begin{lemma} \label{lemma:correctness_algo_listpaths}
  Given a correct implementation of the certificate $C$ and its
  supported operations, Algorithm~\ref{alg:liststpaths} correctly
  lists all the $st$-paths in $\setofpaths_{s,t}(G)$.
\end{lemma}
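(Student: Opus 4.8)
The plan is to establish the three standard properties of a binary-partition enumeration — \emph{soundness} (only $st$-paths are output), \emph{completeness} (every $st$-path is output), and \emph{irredundancy} (no $st$-path is output twice) — by induction on the size of the current bead string, relying on the assumed specifications of the certificate operations and on Lemma~\ref{lemma:beadstring}.

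First I would fix the invariant maintained along every root-to-node path of the recursion tree: at a call $\liststpaths(\pi_s,\,u,\,C)$, the sequence $\pi_s$ is a simple path $s \leadsto u$ in $G$, and the certificate $C$ represents the bead string $\beadstring$ obtained from $\sbeadstring$ by deleting the internal vertices of $\pi_s$ and the edges discarded so far, i.e.\ exactly the object described in the recursive rule of Section~\ref{sec:basic-scheme}. This holds initially for $\liststpaths(\{\},\,s,\,C_0)$ with $C_0$ representing $\sbeadstring$, and it is preserved because, by the assumed specifications, $\oracleleft(C,e)$ turns $C$ into a certificate for $\vbeadstring$ and $\oracleright(C,e)$ into one for $\beadstring'$ — precisely the bead strings prescribed by the recursion — while $\undooracle$ restores $C$ on return. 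I would also record termination here: the left branch deletes $u$ together with at least the edge $e$, and the right branch deletes $e$, so $|E(\beadstring)|$ strictly decreases at every recursive call.

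Then I would prove, by induction on $|E(\beadstring)|$, the sharp claim that $\liststpaths(\pi_s,\,u,\,C)$ outputs exactly the paths $\pi_s \cdot \rho$, each once, where $\rho$ ranges over the $u \leadsto t$ paths inside $\beadstring$; by Lemma~\ref{lemma:beadstring} this set coincides with the set of $st$-paths of $G$ having $\pi_s$ as prefix and suffix-after-$u$ inside $\beadstring$. The base case $u=t$ is immediate: $\beadstring$ is a single vertex, the empty path is the only completing $\rho$, and the algorithm outputs $\pi_s$ at line~\ref{code:base}. For the inductive step $u \neq t$, the edge $e=(u,v)=\chooseedge(C,u)$ splits the completing paths into those that use $e$ and those that do not. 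The paths using $e$ are the $(u,v)\cdot\rho'$ with $\rho'$ a $v\leadsto t$ path in $\beadstring-u$; by Lemma~\ref{lemma:beadstring} every such $\rho'$ lies in $\vbeadstring$, so the left-branch call $\liststpaths(\pi_s\cdot(u,v),\,v,\,C)$ — legitimate by the invariant and the induction hypothesis applied to the strictly smaller $\vbeadstring$ — outputs exactly the $st$-paths with prefix $\pi_s$ that use $e$. The paths avoiding $e$ are the $u\leadsto t$ paths in $\beadstring-e$; here I use that the first bead $\head$ is a biconnected component containing $u$, so all edges of $\beadstring$ incident to $u$ belong to $\head$, and by the choice rule of $\chooseedge$ the returned $e$ is a tree edge precisely when $u$ has no incident back edge in $\beadstring$ — which, by the structure of a biconnected component (where the root of a DFS has a unique tree child), forces $\head$ to be the degenerate single edge $(u,v)$, hence $u$ has degree one and \emph{no} completing path can avoid $e$: in that case the algorithm correctly performs only the left branch. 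When instead $e$ is a back edge, $u$ has degree at least two in $\beadstring$, $e$ is not a bridge (a biconnected component on $\geq 3$ vertices is $2$-edge-connected), so $\beadstring-e$ still connects $u$ to $t$, its relevant part is $\beadstring'$, and the right-branch call $\liststpaths(\pi_s,\,u,\,C)$ — valid by the invariant and induction on the smaller $\beadstring'$ — outputs exactly the $st$-paths with prefix $\pi_s$ avoiding $e$. Since ``uses $e$'' and ``avoids $e$'' are complementary, the union is precisely the claimed set, and being a disjoint union of sets each enumerated without repetition, every path is produced exactly once.

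Applying the claim to the top-level call $\liststpaths(\{\},\,s,\,C_0)$ with $\beadstring=\sbeadstring$ and $\pi_s=\{\}$, and using Lemma~\ref{lemma:beadstring} to identify the $s\leadsto t$ paths of $\sbeadstring$ with the whole of $\setofpaths_{s,t}(G)$, yields the statement. The main obstacle I anticipate is not the partition bookkeeping but justifying that skipping the right branch when $e$ is a tree edge loses nothing: this requires pinning down the interaction between the DFS-postorder selection rule of $\chooseedge$ and the structure of biconnected components — in particular the root-has-one-tree-child property and the degenerate single-edge bead — and checking that this reasoning stays consistent with the concrete implementation of $\chooseedge$, $\oracleleft$ and $\oracleright$ developed in Sections~\ref{sec:recursion-amortization} and~\ref{sec:certificate}.
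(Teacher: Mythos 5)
Your proposal is correct and follows essentially the same route as the paper's proof: the invariant that $\pi_s$ stays a simple $s\leadsto u$ path (because $u$ is deleted by $\oracleleft$ before the left recursive call), the disjoint partition of the completing paths into those using and those avoiding the chosen edge $e$, and the observation that the right branch may be skipped when $e$ is a tree edge. Your justification of that last point — the tree edge is returned only when $u$ has no remaining back edges, so by the root-has-one-child property of a DFS tree of a biconnected head $u$ has degree one and every completing path must use $e$ — correctly fills in a step the paper states only tersely, and your explicit induction on $|E(\beadstring)|$ additionally makes termination and irredundancy precise.
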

\begin{proof}
  For a given vertex $u$ the function $\chooseedge(C, u)$ returns an
  edge $e$ incident to $u$. We maintain the invariant that $\pi_s$ is
  a path $s \leadsto u$, since at the point of the recursive call in
  line~\ref{code:left_branch}: (i) is connected as we append edge
  $(u,v)$ to $\pi_s$ and; (ii) it is simple as vertex $u$ is removed
  from the graph $G$ in the call to $\oracleleft(C,e)$ in
  line~\ref{code:left_update}. In the case of recursive call in
  line~\ref{code:right_branch} the invariant is trivially maintained
  as $\pi_s$ does not change.
  The algorithm only outputs $st$-paths since $\pi_s$ is
  a $s \leadsto u$ path and $u=t$ when the algorithm outputs, in
  line~\ref{code:base}. 

  The paths with prefix $\pi_s$ that do not use $e$ are listed by the
  recursive call in line~\ref{code:right_branch}. This is done by
  removing~$e$ from the graph (line~\ref{code:right_update}) and thus
  no path can include $e$. Paths that use $e$ are listed in
  line~\ref{code:left_branch} since in the recursive call $e$ is added
  to $\pi_s$. Given that the tree edge incident to $u$ is the last one
  to be returned by $\chooseedge(C,u)$, there is no path that does not
  use this edge, therefore it is not necessary to call
  line~\ref{code:right_branch} for this edge.
\end{proof}

A natural question is what is the time complexity: we must account for
the cost of maintaining~$C$ and for the cost of the recursive calls of
Algorithm~\ref{alg:liststpaths}. Since we cannot always maintain the
certificate in $O(1)$ time, the ideal situation for attaining an
optimal cost is taking $O(\mu)$ time if at least $\mu$ $st$-paths are
listed in the current call (and its nested calls). Unfortunately, we
cannot estimate~$\mu$ efficiently and cannot design
Algorithm~\ref{alg:liststpaths} so that it takes $O(\mu)$ adaptively.
We circumvent this by using a different cost scheme in
Section~\ref{sub:recursion-tree-cost} that is based on the recursion
tree induced by Algorithm~\ref{alg:liststpaths}.
Section~\ref{sec:certificate} is devoted to the efficient
implementation of the above certificate operations according to the
cost scheme that we discuss next.

\subsubsection{Recursion tree and cost amortization}
\label{sub:recursion-tree-cost}

We now show how to distribute the costs among the several recursive
calls of Algorithm~\ref{alg:liststpaths} so that optimality is
achieved. Consider a generic execution on the bead string
$\beadstring$. We trace this execution by using a binary recursion
tree $R$. The nodes of $R$ are labeled by the arguments of
Algorithm~\ref{alg:liststpaths}: specifically, we denote a node in $R$
by the triple $x = \langle \pi_s, u, C \rangle$ iff it represents the
call with arguments $\pi_s$, $u$, and~$C$.\footnote{For clarity, we
  use ``nodes'' when referring to $R$ and ``vertices'' when referring
  to $\beadstring$.}  The left branching is represented by the left
child, and the right branching (if any) by the right child of the
current node.

\begin{lemma}
\label{lem:properties_recursion}
The binary recursion tree $R$ for $\beadstring$ has the following
properties:
\begin{enumerate}
  \setlength{\itemsep}{0pt} 
  \item \label{item:R1} There is a one-to-one correspondence between
    the paths in $\setofpaths_{s,t}(\beadstring)$ and the leaves in
    the recursion tree rooted at node $\langle \pi_s, u, C \rangle$.
  \item \label{item:R2} Consider any leaf and its corresponding
    $st$-path $\pi$: there are $|\pi|$ left branches in the
    corresponding root-to-leaf trace.
  \item \label{item:R3} Consider the instruction $e:=\chooseedge(C,u)$
    in Algorithm~\ref{alg:liststpaths}: unary (i.e.\mbox{}
    single-child) nodes correspond to left branches ($e$ is a tree
    edge) while binary nodes correspond to left and right branches
    ($e$ is a back edge).
  \item \label{item:R4} The number of binary nodes is
    $|\setofpaths_{s,t}(\beadstring)| - 1$.
\end{enumerate}
\end{lemma}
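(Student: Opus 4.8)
The plan is to prove the four properties by analysing the structure of Algorithm~\ref{alg:liststpaths} and the recursion it induces, proceeding roughly in the order the properties are stated (since each leans on the previous ones and on Lemma~\ref{lemma:correctness_algo_listpaths}).

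\textbf{Property~\ref{item:R1}.} First I would observe that every leaf of $R$ is a base case, i.e.\ a call with $u=t$, which by Lemma~\ref{lemma:correctness_algo_listpaths} outputs the $st$-path $\pi_s$ currently accumulated; conversely the recursion only terminates at such calls (every non-base call makes at least the left recursive call on line~\ref{code:left_branch}, since $\chooseedge$ always returns an edge as $\head$ is biconnected). So leaves are in bijection with outputs. It then remains to argue \emph{injectivity}: two distinct leaves produce distinct $st$-paths. This follows from the disjointness of the left/right partition: at a binary node the left subtree lists exactly the paths with prefix $\pi_s$ using the chosen back edge $e$, and the right subtree exactly those not using $e$; at a unary node the single child uses the tree edge $e$, and no path avoiding $e$ exists (the argument already given in the proof of Lemma~\ref{lemma:correctness_algo_listpaths}). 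Hence the sets of paths listed under the two children of any internal node are disjoint, so different leaves give different paths — a one-to-one correspondence. Combined with Lemma~\ref{lemma:correctness_algo_listpaths} (which gives that \emph{all} of $\setofpaths_{s,t}(\beadstring)$ is covered), this is the full correspondence.

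\textbf{Properties~\ref{item:R2} and~\ref{item:R3}.} Property~\ref{item:R3} is essentially a restatement of the branching structure of the algorithm: line~\ref{code:if_back} performs the right branch (line~\ref{code:right_branch}) only when $e$ is a back edge, and the left branch (line~\ref{code:left_branch}) is always performed; so a node is binary iff $\chooseedge$ returned a back edge and unary iff it returned a tree edge. For Property~\ref{item:R2}, I would trace a root-to-leaf path: the left branch on line~\ref{code:left_branch} is the only place where $\pi_s$ is extended, by exactly one edge $(u,v)$, while the right branch on line~\ref{code:right_branch} leaves $\pi_s$ unchanged. So the number of left branches along the trace equals the number of edges appended, and the accumulated $\pi_s$ at the leaf is the output path $\pi$; hence there are exactly $|\pi|$ left branches. (One should note that at the root $\pi_s$ may already be non-empty if we entered recursion from a partial path, but within the subtree rooted at $\langle\pi_s,u,C\rangle$ the count of left branches equals the number of edges added to reach $t$, i.e.\ the length of the corresponding suffix $u\leadsto t$; for the top-level call $\pi_s=\{\}$ and this is $|\pi|$.)

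\textbf{Property~\ref{item:R4}.} This is a counting argument on the binary tree $R$. A full (every internal node unary or binary) rooted tree in which $\ell$ is the number of leaves and $b$ the number of binary (two-child) nodes satisfies $\ell = b + 1$ — by induction on the tree, or by the handshake identity $\sum_{\text{nodes}} (\text{children}) = (\text{\#nodes}) - 1$, noting each unary node contributes $1$ and each binary node $2$. Since by Property~\ref{item:R1} the number of leaves is $|\setofpaths_{s,t}(\beadstring)|$, we get exactly $|\setofpaths_{s,t}(\beadstring)| - 1$ binary nodes. The main obstacle in the whole lemma is really the injectivity half of Property~\ref{item:R1}: one must carefully justify that the certificate operations $\oracleleft$ and $\oracleright$ maintain the invariant that $\beadstring$ (equivalently $C$) always describes precisely the subgraph through which the remaining $st$-paths pass, so that the left/right split is genuinely a partition with no path lost and none duplicated — but this is exactly what Lemma~\ref{lemma:correctness_algo_listpaths} and the assumed correctness of the certificate give us, so the remaining work is bookkeeping rather than a new idea.
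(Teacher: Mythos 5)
Your proposal is correct and follows essentially the same route as the paper's proof: leaves correspond bijectively to paths because the left/right split at each node partitions the solution set disjointly (with completeness delegated to Lemma~\ref{lemma:correctness_algo_listpaths}), left branches are exactly the edge-appending steps, binary versus unary nodes are read off from whether $\chooseedge$ returns a back edge or the tree edge, and the count of binary nodes follows from the standard leaves-minus-one identity for trees whose internal nodes have one or two children. Your added caveats (the non-empty $\pi_s$ at a non-root subtree, and the reliance on the certificate to guarantee every recursive call yields a solution) are consistent with, and slightly more explicit than, the paper's own wording.
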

\begin{proof}
We proceed in order as follows.
\begin{enumerate}
  \item We only output a solution in a leaf and we only do recursive
    calls that lead us to a solution. Moreover every node partitions
    the set of solutions in the ones that use an edge and the ones
    that do not use it. This guarantees that the leaves in the left
    subtree of the node corresponding to the recursive call and the
    leaves in the right subtree do not intersect. This implies that
    different leaves correspond to different paths from $s$ to $t$,
    and that for each path there is a corresponding leaf.
  \item Each left branch corresponds to the inclusion of an edge in
    the path $\pi$.
  \item Since we are in a biconnected component, there is always a
    left branch. There can be no unary node as a right branch: indeed
    for any edge of $\beadstring$ there exists always a path from $s$
    to $t$ passing through that edge. Since the tree edge is always
    the last one to be chosen, unary nodes cannot correspond to back
    edges and binary nodes are always back edges.
  \item It follows from point \ref{item:R1} and from the fact that the
    recursion tree is a binary tree. (In any binary tree, the number
    of binary nodes is equal to the number of leaves minus 1.)
\end{enumerate}
\end{proof}

We define a \emph{spine} of $R$ to be a subset of $R$'s nodes linked
as follows: the first node is a node $x$ that is either the left child
of its parent or the root of $R$, and the other nodes are those
reachable from $x$ by right branching in $R$. Let $x = \langle \pi_s,
u, C \rangle$ be the first node in a spine $S$. The nodes in $S$
correspond to the edges that are incident to vertex $u$ in
$\beadstring$: hence their number equals the degree $d(u)$ of $u$ in
$\beadstring$, and the deepest (last) node in $S$ is always a tree
edge in $\beadstring$ while the others are back edges.
Fig.~\ref{fig:spine} shows the spine corresponding to $B_{u,t}$ in
Fig.~\ref{fig:Certificate}. Summing up, $R$ can be seen as composed by
spines, unary nodes, and leaves where each spine has a unary node as
deepest node. This gives a global picture of $R$ that we now exploit
for the analysis.

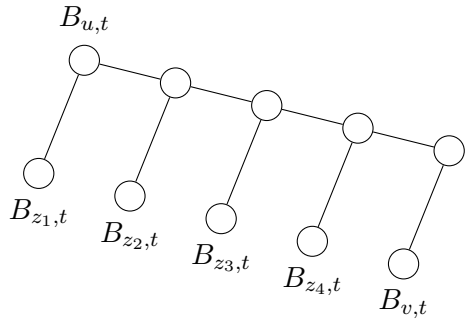
\begin{figure}[htbp]
	\centering
\begin{tikzpicture}
[nodeDecorate/.style={shape=circle,inner sep=4pt,draw,fill=white},%
  lineDecorate/.style={-,dashed},%
  elipseDecorate/.style={color=gray!30},
  scale=0.30]

  \node (a) at (0,0) [nodeDecorate,label=above:$B_{u,t}$] {};
  \node (b) at (4,-1) [nodeDecorate] {};
  \node (c) at (8,-2) [nodeDecorate] {};
  \node (d) at (12,-3) [nodeDecorate] {};
  \node (e) at (16,-4) [nodeDecorate] {};

  \node (a2) at (-2,-5) [nodeDecorate,label=below:$B_{z_1,t}$] {};
  \node (b2) at (2,-6) [nodeDecorate,label=below:$B_{z_2,t}$] {};
  \node (c2) at (6,-7) [nodeDecorate,label=below:$B_{z_3,t}$] {};
  \node (d2) at (10,-8) [nodeDecorate,label=below:$B_{z_4,t}$] {};
  \node (e2) at (14,-9) [nodeDecorate,label=below:$B_{v,t}$] {};

\path {
	(a) edge node {} (b)
	(b) edge node {} (c)
	(c) edge node {} (d)
	(d) edge node {} (e)
	(a) edge node {} (a2)
	(b) edge node {} (b2)
	(c) edge node {} (c2)
	(d) edge node {} (d2)
	(e) edge node {} (e2)
};
\end{tikzpicture}
\caption{Spine of the recursion tree \label{fig:spine}}
\end{figure}


We define the \emph{compact head}, denoted by \mbox{$\Chead = (V_X,
  E_X)$}, as the (multi)graph obtained by compacting the maximal
chains of degree-2 vertices, except $u$, $t$, and the vertices that
are the leaves of its DFS tree rooted at~$u$.

The rationale behind the above definition is that the costs defined in
terms of $\Chead$ amortize well, as the size of $\Chead$ and the
number of $st$-paths in the subtree of $R$ rooted at node $x = \langle
\pi_s, u, C \rangle$ are intimately related (see
Lemma~\ref{lemma:lower_bound_paths_beadstring} in
Section~\ref{sec:recursion-amortization}) while this is not
necessarily true for $\head$.

Recall that each leaf corresponds to a path $\pi$ and each spine
corresponds to a compact head $\Chead=(V_X,E_X)$. We now define the
following abstract cost for spines, unary nodes, and leaves of $R$,
for a sufficiently large constant $c_0 > 0$, that
Algorithm~\ref{alg:liststpaths} must fulfill:
\begin{equation}
  \label{eq:abstrac_cost}
  T(r) =
  \left\{\begin{array}{ll}
      c_0 & \mbox{if $r$ is unary}\\ 
      c_0 |\pi| & \mbox{if $r$ is a leaf}\\ 
      c_0 (|V_X|+|E_X|) \quad & \mbox{if $r$ is a spine}
\end{array}\right. 
\end{equation}

\begin{lemma} \label{lemma:total_cost_recursion_tree}
  The sum of the costs in the nodes of the recursion tree $\sum_{r \in
    R} T(r) = O(\sum_{\pi \in
    \setofpaths_{s,t}(\beadstring)}{|\pi|})$.
\end{lemma}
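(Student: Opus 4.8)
The plan is to split $\sum_{r\in R}T(r)$ according to the three cases in the definition of $T$ --- leaves, unary nodes, and spines --- and bound each part by $O\!\left(\sum_{\pi\in\setofpaths_{s,t}(\beadstring)}|\pi|\right)$. The contribution of the leaves is immediate: by Lemma~\ref{lem:properties_recursion}(\ref{item:R1}) the leaves of $R$ are in bijection with the paths of $\setofpaths_{s,t}(\beadstring)$, so $\sum_{r\text{ leaf}}T(r)=c_0\sum_{\pi}|\pi|$. It therefore remains to charge the unary nodes and the spines to the paths, using throughout the fact (Lemma~\ref{lem:properties_recursion}(\ref{item:R2})) that the root-to-leaf trace ending at the leaf of an $st$-path $\pi$ contains exactly $|\pi|$ left branches, together with Lemma~\ref{lem:properties_recursion}(\ref{item:R3}) that unary nodes are precisely the left branches produced by a tree edge.

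For the unary nodes I would show that their number $U$ satisfies $U\le\sum_{\pi}|\pi|$. Charge every unary node $x$ to the leftmost leaf in the subtree of $R$ rooted at $x$; this leaf exists because the invariant maintained by Algorithm~\ref{alg:liststpaths} guarantees that $\beadstring$ always contains an $st$-path, and it assigns each unary node to a unique leaf. A unary node $x$ that is charged to a leaf $\ell$ must be an ancestor of $\ell$, and the edge of $R$ from $x$ to its only child (which lies on the root-to-$\ell$ trace) is a left branch; distinct unary ancestors of $\ell$ thus contribute distinct left branches of that trace, of which there are exactly $|\pi_\ell|$. Hence $\ell$ is charged at most $|\pi_\ell|$ times, giving $\sum_{r\text{ unary}}T(r)=c_0U\le c_0\sum_{\pi}|\pi|$.

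The main work --- and the main obstacle --- is the spine contribution $\sum_{S}c_0\,(|V_X|+|E_X|)$, where $\Chead=(V_X,E_X)$ is the compact head attached to the first node $x=\langle\pi_s,u,C\rangle$ of the spine $S$. Here I would invoke Lemma~\ref{lemma:lower_bound_paths_beadstring}, which relates the size of the compact head to the number of solutions produced below a spine: the subtree of $R$ rooted at $x$ contains $\Omega(|V_X|+|E_X|)$ leaves, i.e.\ at least $c\,(|V_X|+|E_X|)$ distinct $st$-paths for some constant $c>0$ (and always at least one, since $\beadstring$ is non-empty and $u\neq t$ at a spine node). This is the substantive step: it is exactly where the careful definition of the compact head --- compacting maximal chains of degree-$2$ vertices but retaining $u$, $t$ and the leaves of the DFS tree of $\head$ rooted at $u$ --- is needed, since no such lower bound holds for $\head$ itself. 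Granting it, distribute the cost $c_0\,(|V_X|+|E_X|)$ of spine $S$ evenly over the $\ge c\,(|V_X|+|E_X|)$ paths whose leaf lies in the subtree of $x$, so that each such path receives charge $O(1)$ from $S$. A path $\pi$ receives a charge from $S$ precisely when the first node of $S$ is an ancestor of the leaf of $\pi$, i.e.\ when that node is the root of $R$ or the left child reached by one of the left branches of the root-to-leaf trace of $\pi$; there are $|\pi|+1$ such nodes, so $\pi$ accumulates total charge $O(|\pi|)$ from all spines, and $\sum_{S}c_0\,(|V_X|+|E_X|)=\sum_{\pi}O(|\pi|)=O\!\left(\sum_{\pi}|\pi|\right)$.

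Summing the three contributions yields $\sum_{r\in R}T(r)=O\!\left(\sum_{\pi\in\setofpaths_{s,t}(\beadstring)}|\pi|\right)$. In short, once the lower bound of Lemma~\ref{lemma:lower_bound_paths_beadstring} on the number of solutions below a spine in terms of $|V_X|+|E_X|$ is available, the statement follows from two routine charging arguments on the binary recursion tree $R$, exploiting that the spine decomposition of $R$ is aligned with its left branches and that every root-to-leaf trace carries exactly $|\pi|$ of them.
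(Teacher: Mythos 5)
Your proof is correct and follows essentially the same route as the paper: the same three-way split into leaves, unary nodes and spines, the same charging of unary nodes to the left branches of root-to-leaf traces, and the same amortization of each spine's cost $c_0(|V_X|+|E_X|)$ over the leaves descending from it, with each leaf hit by $O(|\pi|)$ spines. The only point to tighten is that Lemma~\ref{lemma:lower_bound_paths_beadstring} by itself yields only $|E_X|-|V_X|+1$ descending paths; converting this into the $\Omega(|V_X|+|E_X|)$ bound you use requires the density estimate $|E_X|/|V_X|\ge \frac{11}{10}$ of Lemma~\ref{lemma:density} (together with the trivial single-edge/cycle case), which is precisely the role of the compact head $\Chead$ that you correctly identified but left as ``granted''.
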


Section~\ref{sec:recursion-amortization} contains the proof of
Lemma~\ref{lemma:total_cost_recursion_tree} and related properties.
Setting $u:=s$, we obtain that the cost in
Lemma~\ref{lemma:total_cost_recursion_tree} is optimal, by
Lemma~\ref{lemma:beadstring}.

\begin{theorem}  \label{theorem:optimal_paths}
  Algorithm~\ref{alg:liststpaths} solves problem
  Problem~\ref{prob:liststpaths} in optimal $O(m + \sum_{\pi \in
    \setofpaths_{s,t}(G)}{|\pi|})$ time.
\end{theorem}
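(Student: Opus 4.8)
The plan is to combine three ingredients already in place: the reduction of cycle listing to $st$-path listing (Lemma~\ref{lemma:reduction}), the restriction of $st$-paths to the bead string $\sbeadstring$ (Lemma~\ref{lemma:beadstring}), and the correctness of the binary partition algorithm (Lemma~\ref{lemma:correctness_algo_listpaths}). What remains for Theorem~\ref{theorem:optimal_paths} is purely the complexity accounting: showing that the total running time of Algorithm~\ref{alg:liststpaths}, when invoked as $\liststpaths(\{\}, s, C)$ on the initial certificate for $\sbeadstring$, is $O(m + \sum_{\pi \in \setofpaths_{s,t}(G)}{|\pi|})$, and that this is optimal.

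First I would argue optimality of the bound: $\Omega(m)$ time is needed merely to read $G$, and $\Omega(\sum_{\pi} |\pi|)$ is needed to write the output, so any algorithm listing all $st$-paths requires $\Omega(m + \sum_\pi |\pi|)$ time; hence it suffices to prove the matching upper bound. Next I would account for the preprocessing outside the recursion: computing the block tree of $G$, identifying the bead string $\sbeadstring$ connecting $s$ to $t$, and building the initial certificate $C$ for $\sbeadstring$ all take $O(m)$ time (a constant number of DFS-type traversals), and by Lemma~\ref{lemma:beadstring} every $st$-path lies in $G[\sbeadstring]$, so the recursion on $C$ lists exactly $\setofpaths_{s,t}(G)$. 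The core of the proof is then to bound $\sum_{r \in R} T(r)$, the total work charged across the recursion tree $R$ according to the abstract cost scheme in~\eqref{eq:abstrac_cost}. Here I would invoke Lemma~\ref{lemma:total_cost_recursion_tree}, which gives $\sum_{r \in R} T(r) = O(\sum_{\pi \in \setofpaths_{s,t}(\beadstring)}{|\pi|})$ with $u := s$ and $\beadstring := \sbeadstring$; combined with the preprocessing cost this yields $O(m + \sum_{\pi \in \setofpaths_{s,t}(G)}{|\pi|})$ total, as claimed.

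The one remaining obligation — and the main obstacle — is to verify that Algorithm~\ref{alg:liststpaths} can actually \emph{realize} the abstract cost bounds of~\eqref{eq:abstrac_cost}: namely that each unary node costs $O(1)$, each leaf costs $O(|\pi|)$, and each spine costs $O(|V_X| + |E_X|)$ where $\Chead = (V_X, E_X)$ is the compact head at the first node of the spine. This is precisely where the certificate data structure and its four operations ($\chooseedge$, $\oracleleft$, $\oracleright$, $\undooracle$) must be implemented efficiently, and the details are deferred to Section~\ref{sec:certificate}. For the purposes of this theorem, I would simply cite the correct and efficient implementation of the certificate (as developed in Sections~\ref{sec:recursion-amortization} and~\ref{sec:certificate}): given that implementation, a unary node performs one $\chooseedge$ and one $\oracleleft$/$\undooracle$ pair restricted to a tree edge in $O(1)$ amortized time; a leaf emits the path $\pi_s$ of length $|\pi|$ in $O(|\pi|)$ time; and the work over an entire spine, which walks the incident edges of $u$ in $\beadstring$ and performs the corresponding $\oracleright$/$\undooracle$ updates, is bounded by the size of the compact head, $O(|V_X| + |E_X|)$. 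Charging these costs and applying Lemma~\ref{lemma:total_cost_recursion_tree} completes the argument; setting $s = t$'s role via Lemma~\ref{lemma:reduction} then transfers optimality to Problem~\ref{prob:listcycles} as well.

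In summary, the proof is a three-line assembly — preprocessing in $O(m)$, correctness by Lemmas~\ref{lemma:correctness_algo_listpaths} and~\ref{lemma:beadstring}, and the recursion-tree cost bound by Lemma~\ref{lemma:total_cost_recursion_tree} — once the certificate operations are shown to meet the per-node budgets of~\eqref{eq:abstrac_cost}. I expect no genuine difficulty at the level of Theorem~\ref{theorem:optimal_paths} itself; all the technical weight has been pushed into Lemma~\ref{lemma:total_cost_recursion_tree} (the amortization across spines and leaves) and into the certificate implementation, both of which are treated separately. The proof I would write therefore reads essentially: ``$\Omega(m + \sum_\pi |\pi|)$ is a trivial lower bound; the $O(m)$ preprocessing plus Lemma~\ref{lemma:total_cost_recursion_tree} with $u := s$, together with the efficient certificate of Section~\ref{sec:certificate}, give the matching upper bound; correctness is Lemma~\ref{lemma:correctness_algo_listpaths}.''
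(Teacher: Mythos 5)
Your proposal is correct and follows essentially the same route as the paper: the theorem is obtained by combining the recursion-tree cost bound of Lemma~\ref{lemma:total_cost_recursion_tree} (with $u:=s$ and Lemma~\ref{lemma:beadstring}), the certificate implementation of Section~\ref{sec:certificate} showing that Algorithm~\ref{alg:liststpaths} meets the abstract per-node budgets of Eq.~\eqref{eq:abstrac_cost} (the paper packages this as Lemma~\ref{lem:algo_cost}), correctness via Lemma~\ref{lemma:correctness_algo_listpaths}, and the trivial $\Omega(m+\sum_{\pi}|\pi|)$ lower bound. You have correctly identified that all the technical weight lives in the amortization lemma and the certificate operations, exactly as the paper structures it.
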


By Lemma~\ref{lemma:reduction}, we obtain an optimal result for
listing cycles.

\begin{theorem} \label{theorem:optimal_cycles}
  Problem~\ref{prob:listcycles} can be optimally solved in \mbox{$O(m
    + \sum_{c \in \setofcycles(G)}{|c|})$} time.
\end{theorem}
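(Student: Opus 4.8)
The plan is short, because Theorem~\ref{theorem:optimal_cycles} is an immediate corollary of the two results just established. First I would invoke Lemma~\ref{lemma:reduction}, which asserts precisely that an optimal algorithm for Problem~\ref{prob:liststpaths} (listing $st$-paths) yields an optimal algorithm for Problem~\ref{prob:listcycles} (listing cycles). Then I would plug in Theorem~\ref{theorem:optimal_paths}, which provides exactly such an algorithm, running in $O(m + \sum_{\pi \in \setofpaths_{s,t}(G)}{|\pi|})$ time. Composing the two gives the claimed $O(m + \sum_{c \in \setofcycles(G)}{|c|})$ bound, and the matching lower bounds $\Omega(m)$ (time to read $G$) and $\Omega(\sum_{c \in \setofcycles(G)}{|c|})$ (time to write the output) certify optimality.

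For completeness I would recall why the reduction of Lemma~\ref{lemma:reduction} preserves optimality, since that is the only nontrivial point in the chain. One decomposes $G$ into biconnected components in $O(m)$ time; each simple cycle lies entirely within one component, so the components are processed independently; within a component $B$ one picks any back edge $b=(s,t)$ of a DFS tree of $B$, lists the cycles of $B$ through $b$ as the $st$-paths of $B-b$, then deletes $b$, recomputes the biconnected components of $B-b$, appends them to the working list, and recurses. The amortization works because for every edge $e \neq b$ of $B$ there is a cycle of $B$ containing both $b$ and $e$, hence $\sum_{\pi \in \setofpaths_{s,t}(B-b)}{|\pi|} = \Omega(|E_B|)$, so the cost of re-traversing the edges of $B-b$ during the re-decomposition is absorbed by the output size; summing over all components gives the stated running time.

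If instead one wanted to reconstruct the whole chain from scratch, the genuinely hard part would be Theorem~\ref{theorem:optimal_paths} itself, not this corollary: it rests on restricting the binary-partition scheme of Algorithm~\ref{alg:liststpaths} to the bead string $\sbeadstring$ (justified by Lemma~\ref{lemma:beadstring}), on the abstract cost accounting of Lemma~\ref{lemma:total_cost_recursion_tree} over the spine/unary-node/leaf structure of the recursion tree, and on an $O(1)$-amortized implementation of the certificate operations $\chooseedge$, $\oracleleft$, $\oracleright$, $\undooracle$. All of these are, however, available here as already-proved ingredients, so the proof of Theorem~\ref{theorem:optimal_cycles} is just the one-line composition of Lemma~\ref{lemma:reduction} with Theorem~\ref{theorem:optimal_paths} described above.
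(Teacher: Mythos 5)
Your proposal is correct and matches the paper exactly: Theorem~\ref{theorem:optimal_cycles} is stated there as an immediate consequence of Lemma~\ref{lemma:reduction} applied to Theorem~\ref{theorem:optimal_paths}, and your recap of the reduction (BCC decomposition, listing cycles through a chosen back edge $b$ as $st$-paths in $B-b$, and amortizing the re-decomposition cost via $\sum_{\pi \in \setofpaths_{s,t}(B-b)}{|\pi|} = \Omega(|E_B|)$) is precisely the paper's argument for that lemma. Nothing is missing.
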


\subsection{Amortization strategy}
\label{sec:recursion-amortization}

We devote this section to prove
Lemma~\ref{lemma:total_cost_recursion_tree}. Let us split the sum in
Eq.~\eqref{eq:abstrac_cost} in three parts, and bound each part
individually, as
\begin{equation}
  \label{eq:sum_R}
  \sum_{r \in R} T(r) \leq \sum_{r:\,\mathrm{unary}} T(r) + \sum_{r:\,\mathrm{leaf}} T(r) + \sum_{r:\,\mathrm{spine}} T(r).
\end{equation}

We have that $\sum_{r:\,\mathrm{unary}} T(r) = O(\sum_{\pi \in
  \setofpaths_{s,t}(G)}{|\pi|})$, since there are
$|\setofpaths_{s,t}(G)|$ leaves, and the root-to-leaf trace leading to
the leaf for $\pi$ contains at most $|\pi|$ unary nodes by
Lemma~\ref{lem:properties_recursion}, where each unary node has cost
$O(1)$ by Eq.~\eqref{eq:abstrac_cost}.

Also, $\sum_{r:\,\mathrm{leaf}} T(r) = O(\sum_{\pi \in
  \setofpaths_{s,t}(G)}{|\pi|})$, since the leaf $r$ for $\pi$ has
cost $O(|\pi|)$ by Eq.~\eqref{eq:abstrac_cost}.

It remains to bound $\sum_{r\,\mathrm{spine}} T(r)$. By
Eq.~\eqref{eq:abstrac_cost}, we can rewrite this cost as
$\sum_{\Chead} c_0 (|V_X| + |E_X|)$, where the sum ranges over the
compacted heads $\Chead$ associated with the spines $r$. We use the
following lemma to provide a lower bound on the number of $st$-paths
descending from $r$.

\begin{lemma} \label{lemma:lower_bound_paths_beadstring}
  Given a spine $r$, and its bead string $\beadstring$ with head
  $\head$, there are at least $|E_X| - |V_X| + 1$ $st$-paths in $G$
  that have prefix $\pi_s = s \leadsto u$ and suffix $u \leadsto t$
  internal to $\beadstring$, where the compacted head is $\Chead =
  (V_X, E_X)$.
\end{lemma}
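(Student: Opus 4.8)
The plan is to reduce the count to the number of $u \leadsto z$ paths living inside the head $\head$, for a suitable terminal vertex $z \in \head$, and then to bound that number from below by the cyclomatic number of $\head$ using an ear decomposition.

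First I would fix $z$: if $\head$ is the only bead of $\beadstring$, set $z := t$; otherwise let $z$ be the unique articulation point joining $\head$ to the next bead of $\beadstring$. In either case $z \in \head$, and since $z$ is either $t$ or a leaf of the DFS tree of $\head$ rooted at $u$, it is not removed by compaction, so $u,z \in V_X$. Because the prefix $\pi_s$ is fixed, the $st$-paths to be counted are in bijection with the $u \leadsto t$ paths internal to $\beadstring$. By Lemma~\ref{lemma:beadstring} each such path passes through $z$, hence splits as a $u \leadsto z$ path internal to $\head$ followed by a $z \leadsto t$ path internal to the rest of $\beadstring$; conversely, since $\head$ and the remaining beads share only the vertex $z$, concatenating any $u \leadsto z$ path of $\head$ with any $z \leadsto t$ path of the rest yields a \emph{simple} $u \leadsto t$ path of $\beadstring$, and distinct pairs yield distinct paths. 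As $\beadstring$ is a valid bead string there is at least one $z \leadsto t$ path in the rest, so the quantity we want is at least the number $N$ of $u \leadsto z$ paths internal to $\head$. Finally, compacting maximal degree-$2$ chains changes neither the set of $u \leadsto z$ paths (each such chain is traversed entirely or not at all, and $u,z$ are not compacted) nor the value $|E|-|V|+1$; hence $N$ equals the number of $u \leadsto z$ paths in $\Chead$, and $|E_X|-|V_X|+1 = |E(\head)|-|V(\head)|+1 =: \beta(\head)$, the cyclomatic number of $\head$.

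It then suffices to prove the following claim and apply it with $H := \head$, $x := u$, $y := z$: a biconnected graph $H$ with two distinct vertices $x \ne y$ contains at least $\beta(H)+1$ distinct $xy$-paths. (We may assume $\head$ is $2$-connected with $u \ne z$; the remaining cases, $\head$ a single vertex or a single edge, have $|E_X|-|V_X|+1 \le 0$ and are trivial, and if $u = z$ then $\head$ collapses to the vertex $u$ and the statement is vacuous.) I would prove the claim by induction along an ear decomposition of $H$ (Lemma~\ref{lem:back:ear}). For the base case, $2$-connectivity (Menger) supplies two internally vertex-disjoint $xy$-paths whose union is a cycle $O$ through $x$ and $y$; take $O$ as the initial cycle, noting $O$ has exactly $2 = \beta(O)+1$ distinct $xy$-paths. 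For the inductive step, write $H' = H \cup P$ with $P$ an $H$-path (ear) of endpoints $a,b$, so $\beta(H') = \beta(H)+1$ and every $xy$-path of $H$ survives in $H'$. If $P$ has an internal vertex $v$, then $v \notin V(H)$, and since $H'$ is biconnected the property ``through any three distinct vertices of a biconnected graph there is an $xy$-path visiting the third'' produces an $xy$-path of $H'$ through $v$, which is necessarily absent from $H$. If instead $P$ is a single edge $(a,b)$, then $(a,b) \notin E(H)$ (as $H$ is simple and ears introduce new edges), and the analogous property ``in a biconnected graph there is an $xy$-path through any prescribed edge'' produces an $xy$-path of $H'$ using $(a,b)$, again absent from $H$. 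Either way the number of $xy$-paths strictly increases while $\beta$ increases by exactly $1$, so the induction yields at least $\beta(H')+1$ paths.

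The main obstacle I anticipate is the inductive ear step: one must make sure the ``new'' path created by adding an ear genuinely does not already occur in $H$, which is exactly why it is convenient to route it through a vertex or an edge of $P$ that is not present in $H$; and one must check that the various incarnations of ``biconnected'' (single vertex, single edge, $2$-connected, and the multigraphs produced by compaction) are handled consistently — which is why I prefer to run the ear argument on the simple graph $\head$ itself and transfer the conclusion to $\Chead$ through the two compaction-invariance observations. A secondary delicate point is Step~1 when $u$ is itself an articulation point of $\beadstring$: there one has to argue that the relevant head degenerates, so the inequality is vacuous rather than requiring the ear argument.
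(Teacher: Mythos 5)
Your proof is correct and follows essentially the same route as the paper's: an induction along an ear decomposition of the (compacted) head, starting from a cycle through the two terminal vertices and observing that each ear raises $|E|-|V|+1$ by one while contributing at least one new path. The extra care you take — reducing $st$-paths to $u\leadsto z$ paths inside the head, checking compaction invariance, and routing the new path through a vertex or edge of the ear so that it is guaranteed not to already exist — only fills in steps the paper's terser argument leaves implicit.
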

\begin{proof}
  $\Chead$ is biconnected. In any biconnected graph $B = (V_B, E_B)$
  there are at least $|E_B| - |V_B| + 1$ $xy$-paths for any $x,y \in
  V_B$. Find an ear decomposition (see Chapter~\ref{chap:back},
  Lemma~\ref{lem:back:ear}) of $B$ and consider the process of forming
  $B$ by adding ears one at the time, starting from a single cycle
  including $x$ and $y$. Initially $|V_B|=|E_B|$ and there are 2
  $xy$-paths. Each new ear forms a path connecting two vertices that
  are part of a $xy$-path, increasing the number of paths by at least
  1. If the ear has $k$ edges, its addition increases $V$ by $k-1$,
  $E$ by $k$, and the number of $xy$-paths by at least 1. The result
  follows by induction.
\end{proof}

The implication of Lemma~\ref{lemma:lower_bound_paths_beadstring} is
that there are at least $|E_X| - |V_X| + 1$ leaves descending from the
given spine $r$. Hence, we can charge to each of them a cost of
$\frac{c_0 (|V_X| + |E_X|)}{|E_X| - |V_X| +
  1}$. Lemma~\ref{lemma:density} allows us to prove that the latter
cost is $O(1)$ when $\head$ is different from a single edge or a
cycle. (If $\head$ is a single edge or a cycle, $\Chead$ is a single
or double edge, and the cost is trivially a constant.)
\begin{lemma} \label{lemma:density}
  For a compacted head $\Chead = (V_X, E_X)$, its density is
  $\frac{|E_X|}{|V_X|} \geq \frac{11}{10}$.
\end{lemma}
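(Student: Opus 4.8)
The plan is to reduce the whole statement to a bound on the cyclomatic number of $\Chead$. Write $\beta := |E_X| - |V_X| + 1$ for the number of independent cycles of the multigraph $\Chead$. The lemma is meant for the case in which $\head$ is neither a single edge nor a single cycle (the text already disposes of those), so $\head$ is biconnected and has a vertex of degree at least $3$; since $\sum_{v}(d(v)-2) = 2|E_X| - 2|V_X| = 2(\beta-1)$ for any connected multigraph, this forces $\beta \ge 2$. Because $|E_X| = |V_X| + \beta - 1$, the target $|E_X|/|V_X| \ge 11/10$ is equivalent to $|V_X| \le 10(\beta-1)$, and I would in fact prove the stronger $|V_X| \le 3\beta$: together with $\beta \ge 2$ this gives $|E_X|/|V_X| = 1 + (\beta-1)/|V_X| \ge 1 + (\beta-1)/(3\beta) \ge 7/6 > 11/10$.

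To bound $|V_X|$, split $V_X = D_2 \cup D_{\ge 3}$ into the vertices of $\Chead$ of degree exactly $2$ and of degree $\ge 3$ (there are no degree-$1$ vertices, $\Chead$ being biconnected). Two observations carry the argument. First, compaction of a maximal chain of degree-$2$ vertices removes one vertex and merges its two incident edges into one, changing neither the degree of any surviving vertex nor the cyclomatic number; hence $\beta(\Chead) = \beta(\head)$, and every degree-$2$ vertex of $\Chead$ was already of degree $2$ in $\head$, so it was \emph{not} compacted, i.e.\ it is one of the excepted vertices: $u$, $t$, or a leaf of the DFS tree of $\head$ rooted at $u$. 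Writing $\ell$ for the number of such DFS leaves, this yields $|D_2| \le \ell + 2$. Second, since $\head$ is biconnected and not a single edge, every leaf of its DFS tree has a back edge to a proper ancestor, and these $\ell$ back edges are pairwise distinct; as $\beta(\head)$ equals the number of back edges, we get $\beta = \beta(\head) \ge \ell$. Finally, from $\sum_{v\in V_X}(d(v)-2) = 2(\beta-1)$ and the fact that each vertex of $D_{\ge 3}$ contributes at least $1$ to the left-hand side (while those of $D_2$ contribute $0$), we obtain $|D_{\ge 3}| \le 2\beta - 2$. Combining, $|V_X| = |D_2| + |D_{\ge 3}| \le (\ell + 2) + (2\beta - 2) = \ell + 2\beta \le 3\beta$, which is what we needed.

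The handshake-type identities and the closing arithmetic are routine. The two places that need genuine care, which I would isolate as short sublemmas, are: (i) that compaction preserves every surviving vertex's degree and the cyclomatic number even when collapsing a chain produces parallel edges or a self-loop (this is what makes ``degree $2$ in $\Chead$'' entail ``not compacted'', hence ``special''); and (ii) the structural fact that in the DFS tree of a biconnected $\head$ rooted at $u$ each leaf carries a back edge to a proper ancestor — which is exactly the leverage biconnectivity gives, and which is also what makes $\beta(\head) \ge \ell$ hold. Everything else is bookkeeping, and no case analysis on the shape of $\Chead$ is required.
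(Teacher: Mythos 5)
Your proof is correct, but it is organized around a different pivot than the paper's. The paper writes the density directly as $\frac{x+y+d(r)}{2(|V_3|+|V_2|+1)}$, where $x,y$ are the degree sums over the vertices of degree $\geq 3$ and degree $2$ respectively, feeds in $|V_3|\le x/3$, $|V_2|=y/2$, $d(r)\ge 2$, and extracts the single structural inequality $y\le 2x$ from the fact that every tree edge at a degree-$2$ leaf lands on an internal (degree $\ge 3$) vertex; a short continuous minimization over $0\le y\le 2x$, $x\ge 3$ then yields $\frac{9x+6}{8x+6}\ge\frac{11}{10}$. You instead route everything through the cyclomatic number $\beta=|E_X|-|V_X|+1$ and prove the vertex bound $|V_X|\le 3\beta$, combining three ingredients the paper also implicitly relies on but packages differently: degree-$2$ vertices of $\Chead$ are exactly the uncompacted special vertices (so at most $\ell+2$ of them), each DFS leaf of the biconnected $\head$ owns a distinct back edge (so $\ell\le\beta$), and the handshake identity $\sum_v(d(v)-2)=2(\beta-1)$ caps $|D_{\ge 3}|$ at $2\beta-2$. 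What your version buys is twofold: it avoids the optimization step entirely, and it actually delivers the stronger constant $7/6$ (tight, e.g., for a compacted theta graph), whereas the paper's $11/10$ is only attained at the infeasible fractional point $x=3$, $y=6$, $d(r)=2$ (which would force $|E_X|=11/2$). Your two isolated sublemmas — that chain compaction preserves surviving degrees and the cyclomatic number even when it creates parallel edges, and that leaves of a DFS tree of a biconnected graph carry back edges to proper ancestors — are exactly the right points to make explicit; the only caveat worth recording is that the "leaves" in the definition of $\Chead$ are leaves of the certificate's DFS tree restricted to $\head$, but since $\head$ is a biconnected block of $\beadstring$ any such leaf still has all its non-parent edges inside $\head$ as back edges, so the count $\ell\le\beta(\head)$ survives.
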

\begin{proof}
  Consider the following partition $V_X = \{r\} \cup V_2 \cup V_3$
  where: $r$ is the root; $V_2$ is the set of vertices with degree 2
  and; $V_3$, the vertices with degree $\geq 3$.  Since $H_X$ is
  compacted DFS tree of a biconnected graph, we have that $V_2$ is a
  \emph{subset} of the leaves and $V_3$ contains the set of internal
  vertices (except $r$). There are no vertices with degree 1 and $d(r)
  \geq 2$. Let $x = \sum_{v \in V_3} d(v)$ and $y = \sum_{v \in V_2}
  d(v)$.  We can write the density as a function of $x$ and $y$,
  namely,
  $$\frac{|E_X|}{|V_X|} = \frac{x + y + d(r)}{2 (|V_3| + |V_2| + 1)} $$
  
  Note that $|V_3| \le \frac{x}{3}$ as the vertices in $V_3$ have at
  least degree 3, $|V_2| = \frac{y}{2}$ as vertices in $V_2$ have
  degree exactly 2. Since $d(r) \ge 2$, we derive the following bound
  $$\frac{|E_X|}{|V_X|} \ge \frac{x + y + 2}{\frac{2}{3}x + y + 2} $$
  
  Consider any graph with $|V_X|>3$ and its DFS tree rooted at
  $r$. Note that: (i) there are no tree edges between any two leaves,
  (ii) every vertex in $V_2$ is a leaf and (iii) no leaf is a child of
  $r$.  Therefore, every tree edge incident in a vertex of $V_2$ is
  also incident in a vertex of $V_3$. Since exactly half the incident
  edges to $V_2$ are tree edges (the other half are back edges) we get
  that $y \le 2x$.
  
  With $|V_X| \ge 3$ there exists at least one internal vertex in
  the DFS tree and therefore $x \ge 3$.
  \begin{alignat*}{2}
	 \text{minimize }\quad   & \frac{x + y + 2}{\frac{2}{3}x + y + 2}\ \\
	 \text{subject to }\quad  & 0 \le y \le 2x, \\
	 		    & x \ge 3.
  \end{alignat*}

 Since for any $x$ the function is minimized by the maximum $y$ s.t.
 $y \le 2x$ and for any $y$ by the minimum $x$, we get
 $$\frac{|E_X|}{|V_X|} \ge \frac{9x + 6}{8x + 6} \ge
 \frac{11}{10}.$$ 
\end{proof}

Specifically, let $\alpha = \frac{11}{10}$ and write $\alpha = 1 +
2/\beta$ for a constant $\beta$: we have that $|E_X| + |V_X| = (|E_X|
- |V_X|) + 2 |V_X| \leq (|E_X| - |V_X|) + \beta (|E_X| - |V_X|) =
\frac{\alpha+1}{\alpha-1} (|E_X| - |V_X|)$. Thus, we can charge each
leaf with a cost of $\frac{c_0 (|V_X| + |E_X|)}{|E_X| - |V_X| + 1}
\leq c_0 \frac{\alpha+1}{\alpha-1} = O(1)$. This motivates the
definition of $\Chead$, since Lemma~\ref{lemma:density} does not
necessarily hold for the head $\head$ (due to the unary nodes in its
DFS tree).

One last step to bound $\sum_{\Chead} c_0 (|V_X| + |E_X|)$: as noted
before, a root-to-leaf trace for the string storing $\pi$ has $|\pi|$
left branches by Lemma~\ref{lem:properties_recursion}, and as many
spines, each spine charging $c_0 \frac{\alpha+1}{\alpha-1} = O(1)$ to
the leaf at hand. This means that each of the $|\setofpaths_{s,t}(G)|$
leaves is charged for a cost of $O(|\pi|)$, thus bounding the sum as
$\sum_{r\,\mathrm{spine}} T(r) = \sum_{\Chead} c_0 (|V_X| + |E_X|) =
O(\sum_{\pi \in \setofpaths_{s,t}(G)}{|\pi|})$. This completes the
proof of Lemma~\ref{lemma:total_cost_recursion_tree}. As a corollary,
we obtain the following result.

\begin{lemma} \label{lemma:amortized_cost_per_path}
  The recursion tree $R$ with cost as in Eq.~\eqref{eq:abstrac_cost}
  induces an $O(|\pi|)$ amortized cost for each $st$-path $\pi$.
\end{lemma}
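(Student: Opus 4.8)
The plan is to derive this statement directly from the cost distribution already established in the proof of Lemma~\ref{lemma:total_cost_recursion_tree}, making the charging implicit there explicit on a per-path basis. By Lemma~\ref{lem:properties_recursion}(\ref{item:R1}), the leaves of the recursion tree $R$ are in bijection with the paths of $\setofpaths_{s,t}(\beadstring)$, so proving an ``$O(|\pi|)$ amortized cost per path $\pi$'' amounts to exhibiting an assignment of the total cost $\sum_{r \in R} T(r)$ to the leaves in which the leaf associated with $\pi$ receives only $O(|\pi|)$. I would define this assignment along each root-to-leaf trace and verify that it covers all of $R$.

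First I would account for the unary nodes and the leaves. Fix a path $\pi$ and its root-to-leaf trace in $R$. By Lemma~\ref{lem:properties_recursion}(\ref{item:R2})--(\ref{item:R3}), the trace contains exactly $|\pi|$ left branches, and every unary node is a left branch; hence the trace passes through at most $|\pi|$ unary nodes, each of cost $c_0$ by Eq.~\eqref{eq:abstrac_cost}. Charging each such unary node to the leaf of $\pi$ contributes $O(|\pi|)$. The leaf itself has cost $c_0 |\pi|$ by Eq.~\eqref{eq:abstrac_cost}, another $O(|\pi|)$ term.

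Next I would account for the spines. The first node of a spine is either the root of $R$ or a left child, so the spines met along the trace of $\pi$ are precisely those rooted at its $|\pi|$ left branches (plus possibly the spine rooted at the root), i.e.\ at most $|\pi|+1 = O(|\pi|)$ spines. For a spine with compacted head $\Chead = (V_X, E_X)$, Lemma~\ref{lemma:lower_bound_paths_beadstring} guarantees at least $|E_X| - |V_X| + 1$ descending leaves, while Lemma~\ref{lemma:density}, together with the degenerate cases where $\head$ is a single edge or a cycle (in which $\Chead$ is a single or double edge, so the cost is trivially constant), gives $|V_X| + |E_X| = O(|E_X| - |V_X| + 1)$. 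Therefore the spine's cost $c_0(|V_X| + |E_X|)$ can be split evenly among its descending leaves, charging $O(1)$ to each, in particular $O(1)$ to the leaf of $\pi$. Summed over the $O(|\pi|)$ spines on the trace, this again totals $O(|\pi|)$.

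Adding the three contributions, the leaf of $\pi$ is charged $O(|\pi|)$ in total, and — as already argued in the proof of Lemma~\ref{lemma:total_cost_recursion_tree} — every node of $R$ is accounted for by this scheme, so the amortization is valid and the claim follows. The one point that needs care is the bookkeeping: checking that the charging covers each node of $R$ exactly once, with no cost lost or double counted, and that the even split of a spine's cost never over-charges any single leaf. Both follow from the leaf-path bijection of Lemma~\ref{lem:properties_recursion} and the lower bound on descending leaves of Lemma~\ref{lemma:lower_bound_paths_beadstring}; since all the quantitative work is done by those lemmas and by Lemma~\ref{lemma:density}, this corollary is essentially a per-path restatement of Lemma~\ref{lemma:total_cost_recursion_tree}.
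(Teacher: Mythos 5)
Your proposal is correct and follows essentially the same route as the paper: the paper derives this lemma as a corollary of the charging argument used to prove Lemma~\ref{lemma:total_cost_recursion_tree}, i.e.\ each leaf absorbs $O(|\pi|)$ from the unary nodes and the leaf cost along its trace, plus an $O(1)$ charge from each of the $O(|\pi|)$ spines whose first node lies on that trace, justified exactly by Lemma~\ref{lemma:lower_bound_paths_beadstring} and Lemma~\ref{lemma:density}. Your slightly more explicit bookkeeping (counting the spine at the root and the degenerate single-edge/cycle heads) is consistent with, and no different in substance from, the paper's argument.
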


\subsection{Certificate implementation and maintenance}
\label{sec:certificate}


The certificate $C$ associated with a node $\langle \pi_s, u, C
\rangle$ in the recursion tree is a compacted and augmented DFS tree
of bead string $\beadstring$, rooted at vertex~$u$. The DFS tree
changes over time along with $\beadstring$, and is maintained in such
a way that $t$ is in the leftmost path of the tree.
We compact the DFS tree by contracting the vertices that have
degree~2, except $u$, $t$, and the leaves (the latter surely have
incident back edges). Maintaining this compacted representation is not
a difficult data-structure problem. From now on we can assume
w.l.o.g.\mbox{} that $C$ is an augmented DFS tree rooted at $u$ where
internal nodes of the DFS tree have degree $\ge 3$, and each vertex
$v$ has associated the following information.

\begin{enumerate}
  \setlength{\itemsep}{0pt} 
  \item A doubly-linked list $lb(v)$ of back edges linking $v$
    to its descendants $w$ sorted by postorder DFS numbering.
  \item A doubly-linked list $ab(v)$ of back edges linking $v$
    to its ancestors $w$ sorted by preorder DFS numbering.
  \item \label{item:point3} An integer $\gamma(v)$, such that if $v$ is an
    ancestor of $w$ then $\gamma(v) < \gamma(w)$.
  \item \label{item:point4} The smallest $\gamma(w)$ over all
    $w$, such that $(h,w)$ is a back edge and $h$ is in the
    subtree of $v$, denoted by $\mathit{lowpoint}(v)$.
\end{enumerate}

Given three vertices $v,w,x \in C$ such that $v$ is the parent of $w$
and $x$ is not in the subtree\footnote{The second condition is always
  satisfied when $w$ is not in the leftmost path, since $t$ is not in
  the subtree of $w$.} of $w$, we can efficiently test if $v$ is an
articulation point, i.e.\mbox{} $\mathit{lowpoint}(w) \leq
\gamma(v)$. (Note that we adopt a variant of $\mathit{lowpoint}$ using
$\gamma(v)$ in place of the preorder numbering~\cite{Tarjan72}: it has
the same effect whereas using $\gamma(v)$ is preferable since it is
easier to dynamically maintain.)

\begin{lemma}
  \label{lem:certificate_scratch}
 The certificate associated with the root of the recursion can be
 computed in $O(m)$ time.
\end{lemma}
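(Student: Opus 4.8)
The statement to prove is Lemma~\ref{lem:certificate_scratch}: the certificate associated with the root of the recursion can be computed in $O(m)$ time. Recall the root of the recursion is the call $\liststpaths(\{\}, s, C)$, where the bead string is the full $\sbeadstring$ and the head $\head$ is the first biconnected component containing $s$. So the task decomposes into: (a) compute the block tree of $G$ and extract the bead string $\sbeadstring$; (b) within the head $\head$, compute a DFS tree rooted at $s$ with $t$ on the leftmost path; (c) compact the degree-2 chains; (d) compute the four pieces of auxiliary data per vertex ($lb$, $ab$, $\gamma$, $\mathit{lowpoint}$). I would argue each of these runs in $O(m)$ time (or in time linear in the size of $\sbeadstring$, which is $O(m)$).

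For step (a), I would invoke Tarjan's linear-time biconnected-components algorithm (Lemma~\ref{lem:back:bcc} and the surrounding discussion), which gives the block tree in $O(n+m)$ time; since the block tree is a tree, identifying the unique path of {\bcc}s from the component containing $s$ to the one containing $t$ is a single traversal, so $\sbeadstring$ and its head $\head$ are found within the same $O(m)$ budget. For step (b), I would run a single DFS of $\head$ starting at $s$. To ensure $t$ lies on the leftmost root-to-leaf path, note that since $\head$ is biconnected there is an $st$-path; I would first compute any $st$-path (e.g.\ by BFS or via the DFS tree itself) and, when launching the DFS from $s$, always recurse into the neighbor on this path first — this forces the tree path $s \leadsto t$ to be the leftmost branch. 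Classifying edges as tree edges or back edges (no cross edges, since the graph is undirected) is standard and free within the DFS. Step (c), contracting maximal chains of degree-2 vertices other than $s$, $t$, and the DFS leaves, is a straightforward linear pass over the tree: walk the tree, and whenever a non-exempt vertex has tree-degree~2, splice it out, concatenating the adjacency information; this is $O(\text{size of }\head)=O(m)$.

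For step (d), the doubly-linked lists $lb(v)$ and $ab(v)$ of descendant/ancestor back edges can be built during or just after the DFS: each back edge $(h,w)$ is appended to $lb(w)$ (for the ancestor endpoint $w$) and to $ab(h)$ (for the descendant endpoint $h$); to get them sorted by postorder (resp.\ preorder) DFS numbering, I would either process back edges in the order the DFS discovers them — which already yields the preorder/postorder order for one of the two lists — or do one final $O(m)$ counting/bucket sort on the DFS numbers. The integer labels $\gamma(v)$ satisfying "$v$ ancestor of $w \Rightarrow \gamma(v)<\gamma(w)$" can simply be the preorder DFS numbers, computed for free during the traversal. Finally $\mathit{lowpoint}(v)$ is computed by the usual bottom-up recurrence during the DFS: $\mathit{lowpoint}(v) = \min\big(\{\gamma(v)\}\cup\{\gamma(w):(v,w)\text{ a back edge}\}\cup\{\mathit{lowpoint}(c): c \text{ a child of }v\}\big)$, folding in the $\gamma$-based variant as noted in the excerpt. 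All of this is absorbed into a constant number of $O(n+m)$ passes, and since $G$ is connected $m \ge n-1$, the total is $O(m)$.

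**Main obstacle.** I expect the only delicate point to be the bookkeeping that makes the compacted representation (step (c)) coexist cleanly with the per-vertex data of step (d): when a degree-2 vertex is contracted, its back-edge lists and lowpoint contributions must be correctly merged into the endpoints of the super-edge, and the $\gamma$ ordering must remain consistent. This is not algorithmically hard but requires care to state precisely; everything else is a direct assembly of textbook linear-time graph routines (DFS, Tarjan's {\bcc} decomposition, bucket sort), so the $O(m)$ bound follows by simply adding up a bounded number of linear passes.
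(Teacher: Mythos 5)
Your proposal is correct and follows essentially the same route as the paper: a constant number of linear-time DFS passes that force $t$ onto the leftmost branch, assign $\gamma$, compute lowpoints bottom-up, prune subtrees outside the bead string, fill the $ab$/$lb$ lists in preorder/postorder, and compact the degree-2 chains (the paper folds the restriction to $\sbeadstring$ into the DFS itself via the lowpoint test $\mathit{lowpoint}(v) \leq \gamma(z)$ rather than running Tarjan's \bcc\ algorithm as a separate preprocessing step, but both variants are $O(m)$). One small imprecision to fix: the DFS must span the whole bead string $\sbeadstring$, not only the head $\head$, since $t$ generally lies in a later bead and must end up on the leftmost root-to-leaf path; with that correction your construction matches the paper's.
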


\begin{proof}
	In order to set $t$ to be in the leftmost path, we perform a
        DFS traversal of graph $G$ starting from $s$ and stop when we
        reach vertex $t$. We then compute the DFS tree, traversing the
        path $s \leadsto t$ first. When visiting vertex $v$, we set
        $\gamma(v)$ to depth of $v$ in the DFS. Before going up on the
        traversal, we compute the lowpoints using the lowpoints of the
        children. Let $z$ be the parent of $v$. If
        $\mathit{lowpoint}(v) \leq \gamma(z)$ and $v$ is not in the
        leftmost path in the DFS, we cut the subtree of $v$ as it does
        not belong to $B_{s,t}$.  When first exploring the
        neighborhood of $v$, if $w$ was already visited,
        i.e. $e=(u,w)$ is a back edge, and $w$ is a descendant of $v$;
        we add $e$ to $ab(w)$. This maintains the DFS preordering in
        the ancestor back edge list. Now, after the first scan of
        $N(v)$ is over and all the recursive calls returned (all the
        children were explored), we re-scan the neighborhood of
        $v$. If $e=(v,w)$ is a back edge and $w$ is an ancestor of
        $v$, we add $e$ to $lb(w)$. This maintains the DFS
        postorder in the descendant back edge list. This procedure
        takes at most two DFS traversals in $O(m)$ time.  This DFS
        tree can be compacted in the same time bound.  
\end{proof}

\begin{lemma}
	\label{lem:choose}
	Operation $\chooseedge(C,u)$ can be implemented in $O(1)$ time.
\end{lemma}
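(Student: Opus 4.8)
The plan is to unpack what $\chooseedge(C,u)$ must return and to argue that the required edge can be read off the certificate $C$ in a constant number of pointer accesses. Recall that $\chooseedge(C,u)$ must return an edge $e=(u,v)$ with $v\in\head$ such that the current partial path extended by $(u,v)$ can still be completed to an $st$-path staying inside $\beadstring$, and moreover $v$ must be the neighbor of $u$ that is last in DFS postorder; in particular, when $u$ has no back edges leaving it, the unique tree edge out of $u$ must be returned. Since the certificate is a compacted augmented DFS tree of $\beadstring$ rooted at $u$, the neighbors of $u$ inside $\head$ are exactly: the children of $u$ in the DFS tree, plus the endpoints of the back edges stored in the descendant-list $lb(u)$ (there are no ancestor back edges at the root $u$). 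Among all these, the last one in DFS postorder is always either a descendant reached by a back edge in $lb(u)$ — and $lb(u)$ is kept sorted by postorder DFS numbering, so the correct one is at the head (or tail) of that doubly-linked list and is retrieved in $O(1)$ — or, if $lb(u)$ is empty, it is the tree edge to the leftmost child of $u$, which is likewise a single pointer access in the compacted DFS tree.

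First I would make precise the claim that the edge returned this way is indeed a legal choice, i.e.\ that the prefix $\pi_s\cdot(u,v)\cdot v\leadsto t$ with $v\leadsto t$ inside $\beadstring$ exists. This follows from the fact that $\head$ is biconnected (it is the head bead, which by construction of the bead string is a biconnected component, not a bridge, whenever we are allowed to branch), so for any neighbor $v$ of $u$ in $\head$ there is a path from $v$ to the articulation point (or to $t$) avoiding $u$; concatenating gives the desired $st$-path. Hence any choice is legal, and in particular the postorder-last one is; the postorder-last preference is imposed only so that the tree edge is selected last, which is the property exploited in Lemma~\ref{lemma:correctness_algo_listpaths} and in the analysis of the recursion tree.

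Next I would tally the operations: consulting whether $lb(u)$ is empty is $O(1)$; if nonempty, reading its first element is $O(1)$; otherwise, following the pointer to the leftmost child of $u$ in the compacted DFS tree and reporting the corresponding tree edge is $O(1)$. No traversal of the adjacency list, of $\beadstring$, or of the subtree is needed — all the relevant information (which back edges leave $u$, sorted by postorder; who the children of $u$ are) is already materialized in the certificate and maintained across recursive calls by $\oracleleft$, $\oracleright$, and $\undooracle$. Therefore the total cost is $O(1)$.

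The main obstacle, such as it is, is not the time bound itself but the bookkeeping that justifies it: one must be sure that the sorted back-edge lists $lb(u)$, $ab(u)$ and the leftmost-child pointers are genuinely in the state the certificate invariant claims at the moment $\chooseedge$ is called. That invariant is established initially by Lemma~\ref{lem:certificate_scratch} and is the responsibility of the $\oracleleft$/$\oracleright$/$\undooracle$ operations to preserve — so the cleanest presentation is to state $\chooseedge$'s correctness \emph{conditional} on the certificate invariant, retrieve the edge in $O(1)$ as above, and defer the maintenance of the invariant to the lemmas that implement the update operations.
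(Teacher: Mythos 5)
Your proposal is correct and matches the paper's proof essentially verbatim: test whether $lb(u)$ is empty, return the last back edge of that postorder-sorted list if not, and otherwise return the tree edge to the child of $u$ — each a constant number of pointer accesses, with legality deferred to the biconnectivity of $\head$ and the certificate invariant maintained by the update operations. The only cosmetic difference is that the paper observes $u$ has a \emph{unique} child (it is the root of a DFS tree of the biconnected head), where you say ``leftmost child''; this does not affect the argument.
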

\begin{proof}
If the list $lb(v)$ is empty, return the tree edge $e=(u,v)$ linking $u$
to its only child $v$ (there are no other children).  Else,
return the last edge in $lb(v)$. 
\end{proof}

We analyze the cost of updating and restoring the
certificate $C$. We can reuse parts of~$C$,
namely, those corresponding to the vertices that are not in the
compacted head $H_X = (V_X,E_X)$ as defined in
Section~\ref{sub:recursion-tree-cost}.
%
%
%
We prove that, given a unary node $u$ and its tree edge $e=(u,v)$, the
subtree of $v$ in~$C$ can be easily made a certificate for the left
branch of the recursion.

\begin{lemma}
	\label{lem:unary_left}
	On a unary node, $\oracleleft(C,e)$ takes $O(1)$
	time.
\end{lemma}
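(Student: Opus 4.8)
The plan is to show that on a unary node the head of the current bead string is degenerate, so that the update performed by $\oracleleft$ is purely local. First I would recall, via Lemma~\ref{lem:choose}, that a node $\langle \pi_s, u, C \rangle$ is \emph{unary} exactly when $\chooseedge(C,u)$ returns a tree edge, i.e.\ when the descendant back-edge list $lb(u)$ is empty. Now $\head$ is biconnected and $u$ is the root of its DFS tree inside $C$, so $u$ has exactly one child $v$ (two children would make $u$ an articulation point of $\head$); moreover every non-tree edge incident to the root of a DFS tree of an undirected graph joins it to one of its descendants, hence would belong to $lb(u)$. Therefore $lb(u)=\emptyset$ forces $u$ to have degree~$1$ in $\head$, which means $\head$ is the degenerate biconnected component formed by the single edge $e=(u,v)$; consequently $v$ is either $t$ or the articulation point separating the first bead of $\beadstring$ from the remaining beads.

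Next I would spell out the update itself. Deleting $u$ from $\beadstring$ simply deletes $u$ and the edge $e=(u,v)$ (there are no other edges incident to $u$, and no edge joins $u$'s bead to another bead except at $v$), so, since $\beadstring$ is a path of beads whose first bead is $\{e\}$, the recomputed block tree is obtained by discarding that degenerate leaf bead and demoting $v$ to an ordinary vertex --- an $O(1)$ pointer update --- and the new bead string $\vbeadstring$ is precisely the sub-path of beads starting at $v$ (or the trivial bead string on $t$ when $v=t$). For the certificate it then suffices to drop the root $u$ together with $e$: the subtree of $v$ in $C$ is already a valid compacted, augmented DFS tree of $\vbeadstring$ rooted at $v$. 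Removing the former root from the top changes no ancestor/descendant relation among the surviving vertices and deletes no back edge between them, so the tree shape, the lists $lb(\cdot)$ and $ab(\cdot)$, the labels $\gamma(\cdot)$ of item~\ref{item:point3}, and the lowpoints of item~\ref{item:point4} of all surviving vertices still satisfy their invariants; $v$, now the root, is exempt from the degree-$\ge 3$ compaction rule just as $u$ was (the case $v=t$ is likewise an exception), and the property ``$t$ lies on the leftmost path'' is inherited since that path already passed through $v$. Hence no recompaction or relabelling is needed, the whole operation is a constant number of list and pointer manipulations, and the bookkeeping $I$ handed to $\undooracle$ records only $u$, the edge $e$, and the removed degenerate bead, so the reversal is symmetric and also $O(1)$.

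The only point that deserves care is convincing oneself that no certificate invariant is silently broken when the root disappears --- in particular the ordering constraint on $\gamma(\cdot)$ and the lowpoint values along the leftmost path towards $t$. I would discharge this item by item against the tree now rooted at $v$: the back-edge lists of surviving vertices are unchanged set-wise and still correctly sorted; $\gamma$ only needs ancestors to receive smaller values, which is inherited; and the lowpoints depend solely on back edges inside subtrees, all of which are untouched. Since there is no traversal involved, the operation runs in $O(1)$ worst-case time, which proves the lemma.
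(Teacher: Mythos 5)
Your proof is correct and follows essentially the same route as the paper's: you identify that a unary node forces the head of the bead string to be the single bridge edge $e=(u,v)$, delete $u$ and $e$, re-root the certificate at $v$, and observe that no $\gamma$, lowpoint, or back-edge-list updates are required, so everything is a constant number of pointer operations. The only cosmetic difference is that the paper additionally trims $v$'s child list to its first child (a defensive $O(1)$ step accounting for lazily detached branches), which your observation that the subtree of $v$ already spans exactly $\vbeadstring = \beadstring - u$ makes unnecessary.
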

\begin{proof}
	Take edge $e=(u,v)$. Remove edge $e$ and set $v$ as the root
	of the certificate. Since $e$ is the only edge incident in
	$v$, the subtree $v$ is still a DFS tree. Cut the list of children
	of $v$ keeping only the first child. (The other children are no
	longer in the bead string and become part of $I$.) There is
	no need to update $\gamma(v)$. 
\end{proof}

We now devote the rest of this section to show how to efficiently
maintain $C$ on a spine.  Consider removing a back edge $e$ from $u$:
the compacted head $H_X=(V_X,E_X)$ of the bead string can be divided
into smaller biconnected components.  Many of those can be excluded
from the certificate (i.e. they are no longer in the new bead string,
and so they are bookkept in $I$) and additionally we have to update
the lowpoints that change. We prove that this operation can be
performed in $O(|V_X|)$ total time on a spine of the recursion tree.

\begin{lemma}
  \label{lem:removebackedge}
  The total cost of all the operations $\oracleright(C,e)$ in a
  spine is $O(|V_X|)$ time.
\end{lemma}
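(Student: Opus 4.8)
The claim is that across an entire spine $S$ of the recursion tree — that is, the sequence of right-branching calls generated by the back edges incident to a fixed vertex $u$, together with the final tree-edge call — the total time spent in all the $\oracleright(C,e)$ operations is $O(|V_X|)$, where $H_X = (V_X, E_X)$ is the compacted head of the bead string $\beadstring$ at the top of the spine. The plan is to describe what a single $\oracleright(C,e)$ must do, argue that the work it performs is ``paid for'' by portions of the certificate that leave the bead string permanently (within this spine) or by a lowpoint recomputation that can be bounded globally, and then sum over the spine.

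**What one $\oracleright$ call does.** When we remove a back edge $e = (u,v)$ from $H_X$, the head may fracture: some vertices/subtrees that were held inside the biconnected component only via $e$ now hang off as separate beads beyond an articulation point, and hence drop out of $\beadstring$. Concretely, I would have $\oracleright(C,e)$ do the following: delete $e$ from $lb(v)$ (and from $ab(u)$); walk the tree path from $v$ up toward $u$ (in the compacted tree), and at each vertex $w$ on this path recompute $\mathit{lowpoint}(w)$ from its children's lowpoints and its own back-edge lists; as soon as we find a vertex $z$ on this path that has become an articulation point — detected by $\mathit{lowpoint}$ of its child on the path exceeding $\gamma(z)$ — we detach the subtree below $z$ that has left the bead string, record it in the bookkeeping $I$, and stop. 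The cost of this call is $O(1)$ for the edge-list surgery, plus $O(\ell)$ for the path walk of length $\ell$, plus $O(1)$ per detached subtree root (the detached material itself is just unlinked in $O(1)$, not traversed).

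**Amortizing over the spine.** The key observation is that the compacted head $H_X$ is a compacted DFS tree rooted at $u$ in which every internal node has degree $\ge 3$, so $|V_X|$ bounds the number of internal (branching) nodes up to a constant, and in particular the number of edges of the compacted tree is $O(|V_X|)$. Each $\oracleright(C,e)$ walks up a tree path from the endpoint of $e$; I would choose the order in which back edges of $u$ are processed (the postorder-DFS order already fixed by $\chooseedge$, which returns the deepest neighbor first) precisely so that successive path-walks traverse essentially disjoint pieces of the tree: once a vertex $w$ has had its lowpoint recomputed in response to removing the deepest back edges under it, it need not be revisited by the walk for a shallower back edge unless that edge's path genuinely passes through $w$ again, and I would argue that the total length of all these walks telescopes to $O(|V_X|)$ — this is the same amortization idea as in the standard linear-time biconnectivity/lowpoint computation, restricted to the compacted head. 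Similarly, each detached subtree is removed at most once over the whole spine (it is bookkept in $I$ and never re-enters $\beadstring$ within this spine), contributing $O(|V_X|)$ in total; and the edge-list updates are $O(1)$ each over at most $d(u) = O(|V_X|)$ back edges. Hence the whole spine costs $O(|V_X|)$, matching the spine cost $c_0(|V_X|+|E_X|)$ budgeted in Eq.~\eqref{eq:abstrac_cost}.

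**Where the difficulty lies.** The delicate point is the amortization of the lowpoint recomputations along the up-walks: naively each $\oracleright$ could walk all the way from a deep vertex up to $u$, and with $\Theta(|V_X|)$ back edges this would give $\Theta(|V_X|^2)$. Making this $O(|V_X|)$ over the spine requires (i) committing to the deepest-first processing order so that a vertex's lowpoint is only touched when a back edge strictly below it is deleted, and stabilizes thereafter until the walk for the next relevant edge reaches it; and (ii) a careful argument — essentially a charging scheme assigning each unit of walk either to a compacted-tree edge that is about to be cut off into $I$, or to the single $O(1)$ ``progress'' we make toward the next articulation point — so that no tree edge of $H_X$ is charged more than a constant number of times. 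I expect that is where the real work of the proof goes; the rest (edge-list surgery, detachment bookkeeping, restoring via $\undooracle$) is routine pointer manipulation consistent with the certificate structure of Section~\ref{sec:certificate}.
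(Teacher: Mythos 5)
Your skeleton matches the paper's: process the back edges of $lb(u)$ in reverse DFS postorder, walk from each endpoint $z_i$ up toward $u$ recomputing lowpoints and cutting subtrees that leave the bead string, and argue that the walks are (essentially) edge-disjoint so the total is $O(|V_X|)$. But the one idea that actually makes the walks disjoint is exactly the piece you defer to "where the real work goes", and the concrete halting rule you do commit to is not the right one. You stop the upward walk at the first vertex that becomes an articulation point; nothing forces such a vertex to appear early (or at all), so under your rule a single walk can still climb all the way to $u$, and with $\Theta(|V_X|)$ back edges the $\Theta(|V_X|^2)$ worst case you worry about is not excluded. The paper's halting criterion is different: stop as soon as the \emph{recomputed} $\mathit{lowpoint}(w)$ equals $\gamma(u)$. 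Because the remaining back edges $b_{i+1},\ldots$ still reach $u$, this happens precisely at the first common ancestor of $z_i$ and $z_{i+1}$, and a short contradiction argument then shows no tree edge is traversed by two different walks. (Pruning a subtree does not stop the walk; the same $\gamma(u)$ test governs both the lowpoint updates and the cuts.)

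There is a second, smaller gap: you recompute $\mathit{lowpoint}(w)$ "from its children's lowpoints", i.e.\ by scanning the children of $w$. The stopping vertex of a walk can be the stopping vertex of many walks (e.g.\ a high-degree common ancestor of many $z_i$), so you must be able to decide in $O(1)$ time whether its lowpoint actually changes, without touching its children. The paper does this by dynamically maintaining, for each vertex $w$, the list $l(w)$ of children whose lowpoint equals $\gamma(u)$: the lowpoint of $w$ changes only if $l(w)$ is empty and the first edge of $ab(w)$ does not reach $u$. Without this (or an equivalent device), even with the correct halting criterion the per-visit cost at repeated stopping vertices breaks the $O(|V_X|)$ bound.
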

\begin{proof}
  In the right branches along a spine, we remove all back edges in
  $lb(u)$. This is done by starting from the last edge in $lb(u)$,
  i.e. proceeding in reverse DFS postorder.
  For back edge $b_i = (z_i,u)$, we traverse the vertices in the path
  from $z_i$ towards the root $u$, as these are the only lowpoints
  that can change.
  While moving upwards on the tree, on each vertex $w$, we update
  $\mathit{lowpoint}(w)$. This is done by taking the endpoint $y$ of
  the first edge in $ab(w)$ (the back edge that goes the topmost in
  the tree) and choosing the minimum between $\gamma(y)$ and the
  lowpoint of each child\footnote{If $\mathit{lowpoint}(w)$ does not
    change we cannot pay to explore its children. For each vertex we
    dynamically maintain a list $l(w)$ of its children that have
    lowpoint equal to $\gamma(u)$. Then, we can test in constant time
    if $l(w) \neq \emptyset$ and $y$ is not the root $u$. If both
    conditions are true $\mathit{lowpoint}(w)$ changes, otherwise it
    remains equal to $\gamma(u)$ and we stop.} of $w$. We stop when
  the updated $\mathit{lowpoint}(w) = \gamma(u)$ since it implies that
  the lowpoint of the vertex can not be further reduced.  Note that we
  stop before $u$, except when removing the last back edge in $lb(u)$.
  
  To prune the branches of the DFS tree that are no longer in
  $B_{u,t}$, consider again each vertex $w$ in the path from $z_i$
  towards the root $u$ and its parent $y$. We check if the updated
  $\mathit{lowpoint}(w) \leq \gamma(y)$ and $w$ is not in the leftmost
  path of the DFS. If both conditions are satisfied, we have that $w
  \notin B_{u,t}$, and therefore we cut the subtree of $w$ and keep it
  in $I$ to restore later. We use the same halting criterion as in the
  previous paragraph.
  
  The cost of removing all back edges in the spine is $O(|V_X|)$:
  there are $O(|V_X|)$ tree edges and, in the paths from $z_i$ to $u$,
  we do not traverse the same tree edge twice since the process
  described stops at the first common ancestor of endpoints of back
  edges $b_i$. Additionally, we take $O(1)$ time to cut a subtree of
  an articulation point in the DFS tree.
\end{proof}

To compute $\oracleleft(C,e)$ in the binary nodes of a spine, we use
the fact that in every left branching from that spine, the graph is
the same (in a spine we only remove edges incident to $u$ and on a
left branch from the spine we remove the vertex $u$) and therefore its
block tree is also the same. However, the certificates on these nodes
are not the same, as they are rooted at different vertices. Using the
reverse DFS postorder of the edges, we are able to traverse each edge
in $H_X$ only a constant number of times in the spine.

\begin{lemma} \label{lem:promotebackedge}
  The total cost of all operations $\oracleleft(C,e)$ in a spine is
  amortized $O(|E_X|)$.
\end{lemma}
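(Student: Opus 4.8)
The statement to prove is Lemma~\ref{lem:promotebackedge}: the total cost of all operations $\oracleleft(C,e)$ performed at the binary nodes of a single spine is amortized $O(|E_X|)$, where $\Chead = (V_X, E_X)$ is the compacted head associated to that spine.

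\begin{proof}[Proof plan]
The plan is to exploit the fact that, for every binary node of the spine, the left branch operates on the \emph{same} graph $\beadstring - u$: the call $\oracleleft(C, b_i)$ for the back edge $b_i = (u, z_i)$ must produce the certificate of the bead string $\vbeadstring$ that connects $z_i$ to $t$ inside $\beadstring - u$, and only the root $z_i$ varies from one binary node to the next. Hence the block tree of $\beadstring - u$ is fixed across all these calls, and the task reduces to $(i)$ re-rooting a DFS tree of the relevant biconnected piece of $\beadstring - u$ successively at the vertices $z_i$, and $(ii)$ pruning — and later, on $\undooracle$, restoring — the biconnected components of $\beadstring - u$ that fall off the $z_i$-to-$t$ path; these exist because $\head$ is biconnected but removing $u$ destroys exactly the back edges incident to $u$, so some vertices on the path from $z_i$ towards the old root become articulation points. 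First I would fix, exactly as in the proof of Lemma~\ref{lem:removebackedge}, the order in which the binary nodes are visited: this is the DFS postorder of the endpoints $z_i$ induced by the list $lb(u)$, the same order used there for $\oracleright$.

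The core of the argument is the cost of one re-rooting. To re-root the certificate at $z_i$ I would walk the tree path from $z_i$ towards the current root inside $\Chead$, reversing parent pointers, recomputing $\gamma$ and $\mathit{lowpoint}$ on the affected vertices, moving each back edge incident to a vertex on that path between the lists $ab(\cdot)$ and $lb(\cdot)$ of its endpoints according to the new ancestor/descendant orientation, and finally detaching the subtrees cut off at the new articulation points into the bookkeeping record $I$, so that $\undooracle(C,I)$ reverses everything in symmetric time; detaching a subtree from an articulation point and re-attaching it costs $O(1)$ since its internal structure is untouched. The remaining cost is that of the path walks plus that of the back-edge relocations. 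For the path walks, the amortization device of Lemma~\ref{lem:removebackedge} applies verbatim: processing the $z_i$ in DFS postorder, each walk stops at the first already-visited common ancestor, so every tree edge of $H_X$ is traversed $O(1)$ times over the spine, and likewise each $\mathit{lowpoint}$ is recomputed $O(1)$ times, for a total of $O(|V_X|)$. For the back-edge relocations, the claim to establish is that each of the $|E_X| - |V_X| + 1$ back edges of $\Chead$ changes orientation $O(1)$ times over the whole spine, which bounds their cost by $O(|E_X|)$. Adding the two terms and using $|V_X| = O(|E_X|)$ for a compacted head (Lemma~\ref{lemma:density}) gives the amortized $O(|E_X|)$ bound.

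The hard part will be twofold. On the correctness side, re-rooting a DFS tree is delicate: one must verify that after re-rooting at $z_i$ the resulting tree is still a valid DFS tree of $\vbeadstring$ with $t$ on its leftmost root-to-leaf path and with the compacted-representation invariant (internal nodes of degree $\ge 3$) preserved — this is where the special position of $t$ on the leftmost path of $\head$, and the precise structure of $\head - u$, must be used, possibly together with a partial walk towards $t$. On the accounting side, the subtlety is that an $\undooracle$ is executed between two consecutive binary left branches (the certificate is restored to its $u$-rooted state), so the re-rootings do not literally chain; the $O(|E_X|)$ bound must instead be recovered either by charging each re-rooting only against the tree and back edges of $H_X$ that it re-orients \emph{for the first time} in the postorder scan — the analogue of the ``stop at the first common ancestor'' argument — or, failing a clean such accounting, by amortizing the spine's cost against the at least $|E_X| - |V_X| + 1$ $st$-paths guaranteed by Lemma~\ref{lemma:lower_bound_paths_beadstring}. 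Making both the correctness invariants and this linear charge hold simultaneously is the crux of the proof.
\end{proof}
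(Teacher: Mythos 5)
Your plan correctly isolates the two difficulties, but it resolves neither of them, and the devices you propose would not go through. First, the re-rooting primitive is unsound: reversing parent pointers along the tree path from $z_i$ to the old root does not in general yield a DFS tree of $\beadstring - u$. Non-tree edges whose endpoints become incomparable in the re-rooted tree turn into cross edges (re-root a path-plus-back-edge at an internal vertex of the cycle and the back edge joins two siblings), and once that happens the $lb/ab$ classification, the $\gamma$-based lowpoints and the articulation test $\mathit{lowpoint}(w)\leq\gamma(v)$ all lose their meaning. The paper does not re-root: for each $z_i$ it reruns the construction of Lemma~\ref{lem:certificate_scratch} from $z_i$, restricted to $H_X$ and with the exit articulation point $t'$ as target, and the only material reused from $C_{i-1}$ is whole subtrees hanging below articulation points outside the first bead common to $B_{z_i,t}$ and $B_{z_{i-1},t}$ --- grafting those is safe precisely because a DFS subtree below an articulation point does not depend on how the search entered the component above it.

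Second, and more importantly, the accounting. The expensive case is when several consecutive endpoints $z_i,\ldots,z_j$ all lie in the same bead of $\beadstring - u$: the bead strings $B_{z_i,t},\ldots,B_{z_j,t}$ then coincide, but their certificates are rooted at different vertices, so the DFS of that common bead, with edge set $E_X'$, must genuinely be recomputed $|j-i|$ times, at a cost of $\Theta(|j-i|\cdot|E_X'|)$, which can exceed $|E_X|$ by a $\Theta(|V_X|)$ factor. Your first device (charging only edges re-oriented ``for the first time,'' in the manner of Lemma~\ref{lem:removebackedge}) cannot absorb this: these are distinct DFS computations over the same edges, not walks that halt at a previously visited common ancestor. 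Your fallback (dividing by the at least $|E_X|-|V_X|+1$ paths of Lemma~\ref{lemma:lower_bound_paths_beadstring}) also fails: even with the density bound of Lemma~\ref{lemma:density} it yields only $O(|V_X|)$ per path \emph{per spine}, hence $O(|\pi|\,|V_X|)$ per listed path after summing over the $|\pi|$ spines on its root-to-leaf trace, which is not the $O(|\pi|)$ required by Eq.~\eqref{eq:abstrac_cost}. The missing idea is a cross-level charge: the common bead $E_X'$ reappears as the head of the bead string at some node strictly inside the left subtree of the $i$th spine node --- namely the first node on that leftmost recursion path whose head is exactly $E_X'$ --- and the paper charges the $O(|E_X'|)$ re-exploration to that node (a spine start if $|E_X'|>1$, a unary node or leaf otherwise). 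Because the charge always lands on the leftmost path below the $i$th node, no recursion-tree node is charged twice, and the cost attributed to the spine itself stays $O(|E_X|)$.
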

\begin{proof}
  Let $t'$ be the last vertex in the path $u \leadsto t$ s.t.  $t' \in
  V_X$. Since $t'$ is an articulation point, the subtree of the DFS
  tree rooted in $t'$ is maintained in the case of removal of vertex
  $u$. Therefore the only modifications of the DFS tree occur in the
  compacted head $H_X$ of $B_{u,t}$.
  Let us compute the certificate $C_i$: this is the certificate of the
  left branch of the $i$th node of the spine where we augment the path
  with the back edge $b_i = (z_i,u)$ of $lb(u)$ in the order defined
  by $\chooseedge(C,u)$.

  For the case of $C_1$, we remove $u$ and rebuild the certificate
  starting form $z_1$ (the last edge in $lb(u)$) using the algorithm
  from Lemma~\ref{lem:certificate_scratch} restricted to $H_X$ and
  using $t'$ as target and $\gamma(t')$ as a baseline to $\gamma$
  (instead of the depth). This takes $O(|E_X|)$ time.

  For the general case of $C_i$ with $i>1$ we also rebuild (part) of
  the certificate starting from $z_i$ using the procedure from
  Lemma~\ref{lem:certificate_scratch} but we use information gathered
  in $C_{i-1}$ to avoid exploring useless branches of the DFS
  tree. The key point is that, when we reach the first bead in common
  to both $B_{z_i,t}$ and $B_{z_{i-1},t}$, we only explore edges
  internal to this bead.  If an edge $e$ leaving the bead leads to
  $t$, we can reuse a subtree of $C_{i-1}$. If $e$ does not lead to
  $t$, then it has already been explored (and cut) in $C_{i-1}$ and
  there is no need to explore it again since it will be discarded.
  Given the order we take $b_i$, each bead is not added more than
  once, and the total cost over the spine is $O(|E_X|)$.

  Nevertheless, the internal edges $E_X'$ of the first bead in common
  between $B_{z_i,t}$ and $B_{z_{i-1},t}$ can be explored several
  times during this procedure.\footnote{Consider the case where $z_i,
    \ldots, z_j$ are all in the same bead after the removal of
    $u$. The bead strings are the same, but the roots $z_i, \ldots,
    z_j$ are different, so we have to compute the corresponding DFS of
    the first component $|j-i|$ times.}  We can charge the cost
  $O(|E'_X|)$ of exploring those edges to another node in the
  recursion tree, since this common bead is the head of at least one
  certificate in the recursion subtree of the left child of the $i$th
  node of the spine.  Specifically, we charge the first node in the
  \emph{leftmost} path of the $i$th node of the spine that has exactly
  the edges $E'_X$ as head of its bead string: (i) if $|E'_X| \le 1$
  it corresponds to a unary node or a leaf in the recursion tree and
  therefore we can charge it with $O(1)$ cost; (ii) otherwise it
  corresponds to a first node of a spine and therefore we can also
  charge it with $O(|E'_X|)$. We use this charging scheme when $i \neq
  1$ and the cost is always charged in the leftmost recursion path of
  $i$th node of the spine.  Consequently, we never charge a node in
  the recursion tree more than once.
\end{proof}

\begin{lemma} \label{lem:restore} 
  On each node of the recursion tree, $\undooracle(C,I)$ takes time
  proportional to the size of the modifications kept in $I$.
\end{lemma}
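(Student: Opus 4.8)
The plan is to show that every elementary modification that $\oracleleft(C,e)$ or $\oracleright(C,e)$ makes to the certificate is recorded in $I$ as a constant-size entry that can be individually reverted in $O(1)$ time; then $\undooracle(C,I)$ just pops the entries of $I$ in reverse (LIFO) order and undoes each one, for a total cost $\Theta(|I|)$. Recall that $\undooracle(C,I)$ is invoked in the same recursive call as the update that produced $I$, so no other update intervenes and $I$ indeed describes exactly the delta between the two states of $\beadstring$, $\head$, $C$ and the block tree.

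First I would enumerate the primitive changes performed by the update routines, reading them off from the proofs of Lemmas~\ref{lem:unary_left}, \ref{lem:removebackedge} and~\ref{lem:promotebackedge}: (i) removal of a tree edge or a back edge from one of the doubly-linked lists $lb(\cdot)$, $ab(\cdot)$, or from an auxiliary children list $l(\cdot)$; (ii) detaching from $C$ a subtree rooted at a vertex $w$ that has left $\beadstring$; (iii) rerooting the certificate (as happens in $\oracleleft$ on a unary node, or when rebuilding from some $z_i$); (iv) overwriting an integer field of a vertex, namely $\gamma(v)$ or $\mathit{lowpoint}(v)$, along the path from some $z_i$ towards the old root; and (v) the local bookkeeping needed to (de)compact a degree-$2$ chain. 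For (i), because the lists are doubly linked, a deletion yields handles to the two neighbours of the deleted cell, so reinsertion at the original position is $O(1)$. For (ii), the detached subtree is stored in $I$ \emph{by reference} and is never traversed or copied while it is parked, so detaching and re-attaching it each cost $O(1)$ pointer updates, independent of the subtree's size. For (iii) we store a single pointer (the previous root) and for (iv) a single previous value per field; (v) is of the same constant-size flavour. In every case the record has $O(1)$ size and admits an $O(1)$-time inverse, so $\undooracle(C,I)$ runs in $O(|I|)$ time, which is the claim.

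Plugging this into the amortized analysis, the restoration cost incurred at a unary node or along a spine is dominated by the corresponding update cost bounded in Lemmas~\ref{lem:unary_left}, \ref{lem:removebackedge} and~\ref{lem:promotebackedge}, and hence does not affect the abstract cost of Eq.~\eqref{eq:abstrac_cost} asymptotically. The point I expect to spend the most care on is a bookkeeping invariant: that no field of the certificate (the $lb/ab$ and children lists, $\gamma$, $\mathit{lowpoint}$, parent pointers, the compaction data, and the current root) is ever mutated by $\oracleleft$ or $\oracleright$ without a matching entry being appended to $I$ — completeness of the log — together with the fact that parked subtrees are left strictly untouched, which is exactly what makes reattachment constant-time rather than size-dependent. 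Once this invariant is formulated and verified against each of the three update procedures, the bound $O(|I|)$ follows immediately.
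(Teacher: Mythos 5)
Your proposal is correct and follows essentially the same route as the paper's proof: the paper likewise represents $C$ with linked-list structures maintaining a stack of modifications that can be reverted in $O(|I|)$ time, and makes your key observation that when a branch leaves $B_{u,t}$ only the single attaching edge is cut and logged, while the detached subtree is left in place untraversed, so reattachment is $O(1)$ regardless of its size. Your version merely spells out the log entries more explicitly than the paper's appeal to persistent data structures.
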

\begin{proof}
  We use standard data structures (i.e.~linked lists) for the
  representation of certificate $C$.  Persistent versions of these
  data structures exist that maintain a stack of modifications applied
  to them and that can restore its contents to their previous states.
  Given the modifications in $I$, these data structures take $O(|I|)$
  time to restore the previous version of $C$.

  Let us consider the case of performing $\oracleleft(C,e)$. We cut at
  most $O(|V_X|)$ edges from $C$. Note that, although we conceptually
  remove whole branches of the DFS tree, we only remove edges that
  attach those branches to the DFS tree. The other vertices and edges
  are left in the certificate but, as they no longer remain attached
  to $B_{u,t}$, they will never be reached or explored. In the case of
  $\oracleright(C,e)$, we have a similar situation, with at most
  $O(|E_X|$) edges being modified along the spine of the recursion
  tree.
\end{proof}

From Lemmas~\ref{lem:choose} and
\ref{lem:removebackedge}--\ref{lem:restore}, it follows that on a
spine of the recursion tree we have the costs: $\chooseedge(u)$ on
each node which is bounded by $O(|V_X|)$ time as there are at most
$|V_X|$ back edges in $u$; $\oracleright(C,e)$, $\undooracle(C,I)$
take $O(|V_X|)$ time; $\oracleleft(C,e)$ and $\undooracle(C,I)$ are
charged $O(|V_X|+|E_X|)$ time.  We thus have the following result,
completing the proof of Theorem~\ref{theorem:optimal_paths}.

\begin{lemma} 	\label{lem:algo_cost}
  Algorithm~$\ref{alg:liststpaths}$ can be implemented with a cost
  fulfilling Eq.~\eqref{eq:abstrac_cost}, thus it takes total
  \mbox{$O(m+\sum_{r \in R} T(r)) = O(m+\sum_{\pi \in
      \setofpaths_{s,t}(\beadstring)}{|\pi|})$} time.
\end{lemma}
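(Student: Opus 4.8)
The plan is to collect the per-operation cost bounds already established for the certificate and verify that, summed over the appropriate structural units of the recursion tree $R$, they satisfy the abstract cost function in Eq.~\eqref{eq:abstrac_cost}; the theorem then follows by plugging into Lemma~\ref{lemma:total_cost_recursion_tree}. Concretely, I would first set up the initialization: by Lemma~\ref{lem:certificate_scratch} the certificate for the root of the recursion (with $u:=s$ and $\beadstring := \sbeadstring$) is built in $O(m)$ time, and the bead-string decomposition and block tree are computed in $O(m)$ time as well. This accounts for the additive $O(m)$ term, and leaves the recursion itself to be charged against $\sum_{r\in R} T(r)$.

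Next I would go node by node / spine by spine through Algorithm~\ref{alg:liststpaths} and match each line to an entry of Eq.~\eqref{eq:abstrac_cost}. For a \emph{unary} node (a tree-edge step, by Lemma~\ref{lem:properties_recursion}, item~\ref{item:R3}): $\chooseedge$ is $O(1)$ by Lemma~\ref{lem:choose}, $\oracleleft$ is $O(1)$ by Lemma~\ref{lem:unary_left}, and $\undooracle$ restores only $O(1)$-sized bookkeeping $I$ by Lemma~\ref{lem:restore}; so a unary node costs $O(1) = c_0$, matching the first case. For a \emph{leaf} ($u=t$), the only work is $\routput(\pi_s)$, which is $O(|\pi|)$, matching the second case. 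For a \emph{spine} $S$ rooted at $\langle \pi_s, u, C\rangle$ with compacted head $\Chead=(V_X,E_X)$: the spine has $d(u)$ nodes, so the $\chooseedge$ calls cost $O(d(u)) = O(|V_X|)$ in total (at most $|V_X|$ back edges at $u$, plus the one tree edge); the right-branch updates $\oracleright$ over the whole spine cost $O(|V_X|)$ by Lemma~\ref{lem:removebackedge}; the left-branch updates $\oracleleft$ over the whole spine cost amortized $O(|E_X|)$ by Lemma~\ref{lem:promotebackedge}; and the matching $\undooracle$ calls cost $O(|V_X|+|E_X|)$ in aggregate by Lemma~\ref{lem:restore}, since the modifications bookkept along a spine have total size $O(|V_X|+|E_X|)$. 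Summing, a spine costs $O(|V_X|+|E_X|) = c_0(|V_X|+|E_X|)$ for a suitable constant $c_0$, matching the third case.

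Having verified Eq.~\eqref{eq:abstrac_cost}, I would invoke Lemma~\ref{lemma:total_cost_recursion_tree} to get $\sum_{r\in R} T(r) = O(\sum_{\pi\in\setofpaths_{s,t}(\beadstring)}|\pi|)$, and then Lemma~\ref{lemma:beadstring} to identify $\setofpaths_{s,t}(\beadstring) = \setofpaths_{s,t}(G)$ (all $st$-paths lie in $G[\sbeadstring]$). Adding the $O(m)$ preprocessing term yields the claimed $O(m + \sum_{\pi\in\setofpaths_{s,t}(G)}|\pi|)$ bound, which is exactly the statement of Lemma~\ref{lem:algo_cost} and completes the proof of Theorem~\ref{theorem:optimal_paths}.

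The subtle point — really the only place where care is needed — is the amortized charging in the $\oracleleft$ bound along a spine (Lemma~\ref{lem:promotebackedge}): the internal edges $E_X'$ of the first bead common to consecutive $B_{z_i,t}$ and $B_{z_{i-1},t}$ can be rescanned for several $i$, and the cost of those rescans is not paid by the current spine but charged to a node (a unary node, leaf, or head of a spine) in the leftmost recursion subtree of the $i$th node of the spine. I would re-examine that this charging never hits the same recursion-tree node twice — which holds because each such common bead becomes the head of a certificate strictly deeper in the recursion, and the charge is always directed down the leftmost path — so the charges redistribute to $O(1)$ per unary/leaf node and $O(|E'_X|)$ per spine head, none of which is charged more than once. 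Once that accounting is confirmed, the rest is bookkeeping, and the theorem follows.
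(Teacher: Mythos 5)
Your proposal is correct and follows essentially the same route as the paper: it verifies that the certificate-operation costs established in Lemmas~\ref{lem:choose}, \ref{lem:unary_left}, and \ref{lem:removebackedge}--\ref{lem:restore} realize the abstract cost of Eq.~\eqref{eq:abstrac_cost} on unary nodes, leaves, and spines (with the $O(m)$ root certificate from Lemma~\ref{lem:certificate_scratch} as the additive term), and then invokes Lemma~\ref{lemma:total_cost_recursion_tree}. Your closing check that the $\oracleleft$ charging scheme never hits the same recursion-tree node twice is exactly the point the paper delegates to Lemma~\ref{lem:promotebackedge}, so nothing is missing.
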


\subsection{Extended analysis of operations}
\label{sec:extend-analys-oper}

In this section, we present all details and illustrate with figures
the operations $\oracleright(C,e)$ and $\oracleleft(C,e)$ that are
performed along a spine of the recursion tree. In order to better
detail the procedures in Lemma~\ref{lem:removebackedge} and
Lemma~\ref{lem:promotebackedge}, we divide them in smaller parts.  We
use bead string $B_{u,t}$ from Fig.~\ref{fig:Certificate} and the
respective spine from Fig.~\ref{fig:spine} as the base for the
examples. This spine contains four binary nodes corresponding to the
back edges in $lb(u)$ and an unary node corresponding to the tree edge
$(u,v)$. Note that edges are taken in order of the endpoints
$z_1,z_2,z_3,z_4,v$ as defined in operation $\chooseedge(C,u)$.

By Lemma~\ref{lemma:beadstring}, the impact of operations
$\oracleright(C,e)$ and $\oracleleft(C,e)$ in the certificate is
restricted to the biconnected component of $u$. Thus we mainly focus
on maintaining the compacted head $H_X = (V_X,E_X)$ of the bead
string~$B_{u,t}$.

\subsubsection{Operation $\oracleright(C,e)$} 

\begin{figure}[!ht]
\centering
\subfloat[Step 1]{
\begin{tikzpicture}
[nodeDecorate/.style={shape=circle,inner sep=1pt,draw,thick,fill=black},%
  lineDecorate/.style={-,dashed},%
  elipseDecorate/.style={color=gray!30},
  scale=0.20]

\draw (10,22) circle (9);
\draw (5,11.1) circle (3);

\node (s) at (10,34) [nodeDecorate,color=lightgray,label=above left:$s$] {};
\node (u) at (10,31) [nodeDecorate,label=above left:{ $u$}] {};

\node (tp) at (6.2,13.9) [nodeDecorate,label=right:{ $z_4$}] {};
\node (t) at (5.5,11) [nodeDecorate,label=left:$t$] {};

\path {
	(s) edge[snake,-,color=lightgray] node {\quad\quad$\pi_s$} (u)
	(u) edge node {} (tp)
	(tp) edge node {} (t)
};

\node (a) at (9.3,28) [nodeDecorate,label=left:$v$] {};
\node (b) at (7.1,18) [nodeDecorate,] {};
\node (c) at (13,25) [nodeDecorate,] {};
\node (d) at (14.5,22) [nodeDecorate,label=right:$z_1$] {};
\node (e) at (11,22) [nodeDecorate,label=below:$z_3$] {};
\node (f) at (16,19) [nodeDecorate,label=below:$z_2$] {};
\node (g) at (13,19) [nodeDecorate,] {};
\node (h) at (10,16) [nodeDecorate,] {};

\path {
	(a) edge node {} (c)
	(c) edge node {} (d)
	(d) edge node {} (f)
	(c) edge node {} (e)
	(d) edge node {} (g)
	(b) edge node {} (h)
};

\path {
	(u) edge[dashed,bend left=-40] node {} (tp)
	(f) edge[dashed,bend left=-40] node {} (u)
	(g) edge[dashed,bend left=10] node {} (c)
	(e) edge[dashed,bend left=-10] node {} (u)
	(a) edge[dashed,bend left=5] node {} (h)
};
\end{tikzpicture}
}
\subfloat[Step 2]{
\begin{tikzpicture}
[nodeDecorate/.style={shape=circle,inner sep=1pt,draw,thick,fill=black},%
  lineDecorate/.style={-,dashed},%
  elipseDecorate/.style={color=gray!30},
  scale=0.20]
\draw (10,22) circle (9);
\draw (5,11.1) circle (3);

\node (s) at (10,34) [nodeDecorate,color=lightgray,label=above left:$s$] {};
\node (u) at (10,31) [nodeDecorate,label=above left:{ $u$}] {};

\node (tp) at (6.2,13.9) [nodeDecorate,label=right:{ $z_4$}] {};
\node (t) at (5.5,11) [nodeDecorate,label=left:$t$] {};

\path {
	(s) edge[snake,-,color=lightgray] node {\quad\quad$\pi_s$} (u)
	(u) edge node {} (tp)
	(tp) edge node {} (t)
};

\node (a) at (9.3,28) [nodeDecorate,label=left:$v$] {};
\node (b) at (7.1,18) [nodeDecorate,] {};
\node (c) at (13,25) [nodeDecorate,] {};
\node (e) at (11,22) [nodeDecorate,label=below:$z_3$] {};
\node (h) at (10,16) [nodeDecorate,] {};

\path {
	(a) edge node {} (c)
	(c) edge node {} (e)
	(b) edge node {} (h)
};

\path {
	(u) edge[dashed,bend left=-40] node {} (tp)
	(e) edge[dashed,bend left=-10] node {} (u)
	(a) edge[dashed,bend left=5] node {} (h)
};

\end{tikzpicture}
}
\subfloat[Step 3]{
\begin{tikzpicture}
[nodeDecorate/.style={shape=circle,inner sep=1pt,draw,thick,fill=black},%
  lineDecorate/.style={-,dashed},%
  elipseDecorate/.style={color=gray!30},
  scale=0.20]
\draw (10,22) circle (9);
\draw (5,11.1) circle (3);

\node (s) at (10,34) [nodeDecorate,color=lightgray,label=above left:$s$] {};
\node (u) at (10,31) [nodeDecorate,label=above left:{ $u$}] {};

\node (tp) at (6.2,13.9) [nodeDecorate,label=right:{ $z_4$}] {};
\node (t) at (5.5,11) [nodeDecorate,label=left:$t$] {};

\path {
	(s) edge[snake,-,color=lightgray] node {\quad\quad$\pi_s$} (u)
	(u) edge node {} (tp)
	(tp) edge node {} (t)
};

\node (a) at (9.3,28) [nodeDecorate,label=left:$v$] {};
\node (b) at (7.1,18) [nodeDecorate,] {};
\node (h) at (10,16) [nodeDecorate,] {};

\path {
	(b) edge node {} (h)
};

\path {
	(u) edge[dashed,bend left=-40] node {} (tp)
	(a) edge[dashed,bend left=5] node {} (h)
};

\end{tikzpicture}
}
\subfloat[Step 4 (final)]{

\begin{tikzpicture}
[nodeDecorate/.style={shape=circle,inner sep=1pt,draw,thick,fill=black},%
  lineDecorate/.style={-,dashed},%
  elipseDecorate/.style={color=gray!30},
  scale=0.20]
\draw (10,22) circle (9) [color=gray!30];
\draw (5,11.1) circle (3);
\draw (9.6,29.5) circle (1.5);
\draw (8,23) circle (5);
\draw (6.7,15.9) circle (2.1);

\node (s) at (10,34) [nodeDecorate,color=lightgray,label=above left:$s$] {};
\node (u) at (10,31) [nodeDecorate,label=above left:{ $u$}] {};

\node (tp) at (6.2,13.9) [nodeDecorate,label=right:{ $z_4$}] {};
\node (t) at (5.5,11) [nodeDecorate,label=left:$t$] {};

\path {
	(s) edge[snake,-,color=lightgray] node {\quad\quad$\pi_s$} (u)
	(u) edge node {} (tp)
	(tp) edge node {} (t)
};

\node (a) at (9.3,28) [nodeDecorate,label=below left:$v$] {};
\node (b) at (7.1,18) [nodeDecorate,] {};
\node (h) at (10,19) [nodeDecorate,] {};

\path {
	(b) edge node {} (h)
};

\path {
	(a) edge[dashed,bend left=5] node {} (h)
};

\end{tikzpicture}
}

\caption{Example application of $\oracleright(C,e)$ on a spine of the recursion tree}
\label{fig:oracleleftexample}
\end{figure}
\begin{lemma}
  \emph{(Lemma~\ref{lem:removebackedge} restated)} In a spine of the
  recursion tree, operations $\oracleright(C,e)$ can be implemented in
  $O(|V_X|)$ total time.
\end{lemma}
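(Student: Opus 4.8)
The plan is to show that the sequence of $\oracleright(C,e)$ operations along a spine, which corresponds to deleting all back edges $b_1, b_2, \ldots$ incident to $u$ (in reverse DFS-postorder of their lower endpoints $z_1, z_2, \ldots$), costs $O(|V_X|)$ total. The key structural fact I would use is that removing a back edge $b_i = (z_i, u)$ can only change the $\mathit{lowpoint}$ of vertices lying on the tree path from $z_i$ up to $u$, and can only cause pruning of subtrees hanging off that same path; nothing below $z_i$ or outside that path is affected. So the work done for $b_i$ is proportional to the length of the path $z_i \leadsto u$ in the current DFS tree, plus $O(1)$ per subtree cut off (charged to the cut edge).

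First I would describe the update for a single $b_i$ precisely: starting from $z_i$, walk up the tree toward $u$; at each vertex $w$ on the path, recompute $\mathit{lowpoint}(w)$ from the topmost ancestor-back-edge in $ab(w)$ and the lowpoints of $w$'s children (using the auxiliary list $l(w)$ of children whose lowpoint equals $\gamma(u)$ so that the recomputation is $O(1)$ when nothing changes), and stop climbing as soon as the updated lowpoint equals $\gamma(u)$ — since then it cannot decrease further by removing more edges from $u$. While climbing, for each vertex $w$ with parent $y$, test whether $\mathit{lowpoint}(w) \le \gamma(y)$ and $w$ is not on the leftmost ($u \leadsto t$) path; if so, $w$ has become an articulation point cutting $w$'s subtree out of $B_{u,t}$, so detach that subtree (constant time, bookkept in $I$).

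Then I would argue the amortization. Because the back edges are processed in reverse DFS-postorder of $z_1, z_2, \ldots$, the path $z_i \leadsto u$ that we climb for $b_i$ shares with the previously-climbed paths only the already-reached common-ancestor portion, at which point the halting criterion (lowpoint has reached $\gamma(u)$, or the vertex was already removed) fires. Hence no tree edge of $H_X$ is traversed more than a constant number of times over the whole spine, giving $O(|V_X|)$ for the climbing, since a DFS tree on $H_X$ has $O(|V_X|)$ tree edges; the subtree-cutting is charged $O(1)$ per detached edge, again $O(|V_X|)$ total. Therefore the sum of the costs of all $\oracleright(C,e)$ on the spine is $O(|V_X|)$.

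The main obstacle I expect is making the "no tree edge is re-traversed" claim airtight: one must verify that the stopping rule (stop when the recomputed $\mathit{lowpoint}(w) = \gamma(u)$, and stop before reaching $u$ except when deleting the last back edge in $lb(u)$) genuinely coincides with "reach the first ancestor already processed," and that the auxiliary children-lists $l(w)$ can be maintained within the same budget as lowpoints change. I would handle this by noting that once $\mathit{lowpoint}(w)$ is driven up to $\gamma(u)$ it is monotone under further deletions of $u$-incident edges, so the processed portion of the path is an initial upward segment that only grows, and each subsequent $b_i$ climbs a disjoint lower segment plus the boundary vertex where it halts — yielding the telescoping bound $\sum_i |z_i \leadsto u| = O(|V_X|)$.
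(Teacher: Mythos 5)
Your proposal is correct and follows essentially the same route as the paper: the same per-edge climb from $z_i$ toward $u$ with lowpoint recomputation via the first edge of $ab(w)$ and the auxiliary children lists $l(w)$, the same halting criterion (stop once the recomputed $\mathit{lowpoint}(w)$ equals $\gamma(u)$), the same $O(1)$-per-cut pruning of subtrees bookkept in $I$, and the same amortization showing that, thanks to the reverse DFS-postorder processing of the $b_i$, no tree edge of $H_X$ is traversed more than a constant number of times so the total is $O(|V_X|)$. The only cosmetic difference is that you phrase the disjointness of the climbed paths as a telescoping/monotonicity argument, whereas the paper argues it by contradiction at the first common ancestor of $z_i$ and $z_{i+1}$; these are the same observation.
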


In the right branches along a spine, we remove all back edges in
$lb(u)$. This is done by starting from the last edge in $lb(u)$,
i.e. proceeding in reverse DFS postorder. In the example from
Fig.~\ref{fig:Certificate}, we remove the back edges $(z_1,u) \ldots
(z_4,u)$. To update the certificate corresponding to $B_{u,t}$, we
have to (i) update the lowpoints in each vertex of $H_X$; (ii) prune
vertices that cease to be in $B_{u,t}$ after removing a back edge. For
a vertex $w$ in the tree, there is no need to update~$\gamma(w)$.

Consider the update of lowpoints in the DFS tree.  For a back edge
$b_i = (z_i,u)$, we traverse the vertices in the path from $z_i$
towards the root $u$. By definition of lowpoint, these are the only
lowpoints that can change.  Suppose that we remove back edge $(z_4,u)$
in the example from Fig.~\ref{fig:Certificate}, only the lowpoints of
the vertices in the path from $z_4$ towards the root $u$ change.
Furthermore, consider a vertex $w$ in the tree that is an ancestor of
at least two endpoints $z_i, z_j$ of back edges $b_i$, $b_j$. The
lowpoint of $w$ does not change when we remove $b_i$.  These
observations lead us to the following lemma.

\begin{lemma}  \label{lem:lowpoints}
  In a spine of the recursion tree, the update of lowpoints in the
  certificate by operation $\oracleright(C,e)$ can be done in
  $O(|V_X|)$ total time. 
\end{lemma}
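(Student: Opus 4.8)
Lemma~\ref{lem:lowpoints} claims that, summed over all the right branchings along a single spine (one per back edge in $lb(u)$), the total work spent updating $\mathit{lowpoint}$ values is $O(|V_X|)$. The plan is to charge the update cost to the tree edges of the compacted head $H_X = (V_X, E_X)$ and argue that each such tree edge is traversed only a constant number of times over the whole spine.

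First I would recall the precise mechanics: the back edges $b_1 = (z_1,u), \ldots, b_d = (z_d,u)$ in $lb(u)$ are processed in reverse DFS postorder, which by the choice made in $\chooseedge(C,u)$ means $z_1, z_2, \ldots$ appear in a specific order along the DFS tree. When $b_i$ is removed, the only lowpoints that can change are those of vertices on the tree path $P_i$ from $z_i$ up towards the root $u$, and the walk up this path halts (via the list $l(w)$ of children with lowpoint $\gamma(u)$, and the test on the topmost endpoint of $ab(w)$) as soon as a vertex's lowpoint stays equal to $\gamma(u)$. So the cost of processing $b_i$ is $O(1)$ plus a constant per tree edge actually walked on $P_i$.

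The key step is then the combinatorial claim that no tree edge of $H_X$ is walked twice across the whole spine. Here I would use the reverse-postorder processing order together with the halting rule: once $b_i$ has been removed, every vertex $w$ that was walked on $P_i$ and whose lowpoint did \emph{not} drop below $\gamma(u)$ after this removal has its lowpoint frozen at $\gamma(u)$; because subsequent back edges $b_j$ with $j > i$ have endpoints $z_j$ that (by the postorder) lie in a part of the tree already past $w$ or disjoint from the already-processed subtree, the upward walk for $b_j$ will stop at or below the first common ancestor of $z_i$ and $z_j$ and will not re-traverse the portion of $P_i$ strictly above that ancestor. More carefully: the first common ancestor of $z_i$ and $z_j$ is the point where $P_i$ and $P_j$ merge, and the halting criterion guarantees we never climb past it a second time, so the union $\bigcup_i (\text{edges walked on } P_i)$ is a set of edge-disjoint tree paths, hence has size $O(|V_X|)$ since $H_X$ has $O(|V_X|)$ tree edges. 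Adding the $O(1)$ per back edge and the fact that there are at most $|V_X|$ back edges in $lb(u)$, the total is $O(|V_X|)$.

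The main obstacle I expect is making the "edge-disjoint paths" argument fully rigorous: one must verify that the halting test — which stops at the first vertex whose updated lowpoint equals $\gamma(u)$ — genuinely coincides with "stop at the first common ancestor of $z_i$ with a previously processed $z_j$," and that this holds precisely because back edges are taken in reverse DFS postorder (the counterexample-style reasoning in Section~\ref{sec:algo} shows order matters). I would handle this by an induction on $i$: maintaining the invariant that after processing $b_1,\ldots,b_{i-1}$ the walked tree edges form edge-disjoint paths and every vertex at which a walk halted has lowpoint exactly $\gamma(u)$, then showing the walk for $b_i$ either halts at a vertex already known to have lowpoint $\gamma(u)$ (a previous halting point, re-traversing nothing new above it) or extends the walked set by a path edge-disjoint from all previous ones. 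The pruning of subtrees that leave $B_{u,t}$ (checking $\mathit{lowpoint}(w) \le \gamma(y)$ for the parent $y$ and $w$ not on the leftmost path) is a constant-time side operation per walked vertex and is subsumed by the same charging, so it does not affect the asymptotics.
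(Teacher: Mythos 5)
Your proposal is correct and follows essentially the same route as the paper's proof: bound the work by the number of tree edges actually walked, use the $l(w)$ lists and the first element of $ab(w)$ for an $O(1)$ halting test, and argue that no tree edge is traversed twice because each upward walk stops at the first common ancestor of the current $z_i$ and a not-yet-removed $z_j$ (the paper phrases this as a direct contradiction on consecutive back edges rather than your induction, but the content is identical). No gap to report.
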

\begin{proof}
  Take each back edge $b_i = (z_i,u)$ in the order defined by
  $\chooseedge(C,u)$. Remove $b_i$ from $lb(u)$ and $ab(z_i)$.
  Starting from $z_i$, consider each vertex $w$ in the path from $z_i$
  towards the root $u$.  On vertex $w$, we update
  $\mathit{lowpoint}(w)$ using the standard procedure: take the
  endpoint $y$ of the first edge in $ab(w)$ (the back edge that goes
  the nearest to the root of the tree) and choosing the minimum
  between $\gamma(y)$ and the lowpoint of each child of $w$.  When the
  updated $\mathit{lowpoint}(w) = \gamma(u)$, we stop examining the
  path from $z_i$ to $u$ since it implies that the lowpoint of the
  vertex can not be further reduced (i.e. $w$ is both an ancestor to
  both $z_i$ and $z_{i+1}$).

  If $\mathit{lowpoint}(w)$ does not change we cannot pay to explore
  its children. In order to get around this, for each vertex we
  dynamically maintain, throughout the spine, a list $l(w)$ of its
  children that have lowpoint equal to $\gamma(u)$. Then, we can test
  in constant time if $l(w) \neq \emptyset$ and $y$ (the endpoint of
  the first edge in $ab(w)$) is not the root $u$. If both conditions
  are satisfied $\mathit{lowpoint}(w)$ changes, otherwise it remains
  equal to $\gamma(u)$ and we stop. The total time to create the lists
  is $O(|V_X|)$ and the time to update is bounded by the number of
  tree edges traversed, shown to be $O(|V_X|)$ in the next paragraph.

  The cost of updating the lowpoints when removing all back edges
  $b_i$ is $O(|V_X|)$: there are $O(|V_X|)$ tree edges and we do not
  traverse the same tree edge twice since the process described stops
  at the first common ancestor of endpoints of back edges $b_i$ and
  $b_{i+1}$. By contradiction: if a tree edge $(x,y)$ would be
  traversed twice when removing back edges $b_i$ and $b_{i+1}$, it
  would imply that both $x$ and $y$ are ancestors of $z_i$ and
  $z_{i+1}$ (as edge $(x,y)$ is both in the path $z_i$ to $u$ and the
  path $z_{i+1}$ to $u$) but we stop at the first ancestor of $z_i$
  and $z_{i+1}$.
\end{proof}

Let us now consider the removal of vertices that are no longer in
$B_{u,t}$ as consequence of operation $\oracleright(C,e)$ in a spine
of the recursion tree. By removing a back edge $b_i = (z_i,u)$, it is
possible that a vertex $w$ previously in $H_X$ is no longer in the
bead string $B_{u,t}$ (e.g. $w$ is no longer biconnected to $u$ and
thus there is no simple path $u \leadsto w \leadsto t$).

\begin{lemma} \label{lem:cutbranches}
  In a spine of the recursion tree, the branches of the DFS that are
  no longer in $B_{u,t}$ due to operation $\oracleright(C,e)$ can be
  removed from the certificate in $O(|V_X|)$ total time.
\end{lemma}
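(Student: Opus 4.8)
The plan is to fold the pruning of detached subtrees into the same upward walks over the DFS tree that Lemma~\ref{lem:lowpoints} already uses to update lowpoints, so that the pruning adds only $O(1)$ work per visited vertex and per cut and the $O(|V_X|)$ charging of Lemma~\ref{lem:lowpoints} carries over unchanged. The first observation to record is that a vertex $w \in H_X$ can cease to lie in $B_{u,t}$ after deleting a back edge $b_i=(z_i,u)$ only if $w$ gets separated from $u$ (equivalently, there is no longer a simple path $u \leadsto w \leadsto t$); since deleting $b_i$ can only change the connectivity of subtrees that contain an endpoint of a removed back edge, such a $w$ must lie on the tree path from some $z_i$ to the root $u$ --- exactly the vertices that are visited while recomputing lowpoints in Lemma~\ref{lem:lowpoints}. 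Hence no extra traversal is needed.

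Next, while walking up from $z_i$ toward $u$, after recomputing $\mathit{lowpoint}(w)$ for the current vertex $w$ with tree parent $y$, I would perform the standard $O(1)$ articulation-point test at $y$ relative to $w$ --- a comparison of $\mathit{lowpoint}(w)$ with $\gamma(y)$ deciding whether $y$ now separates the subtree of $w$ from every ancestor of $y$ --- together with the guard that $w$ is not on the leftmost path of the DFS tree. Since $t$ is maintained on the leftmost path, this guard certifies that $t$ is not in the subtree of $w$; combined with the articulation test, the two conditions hold exactly when the subtree of $w$ is no longer biconnected to $u$ and contains neither $u$ nor $t$, i.e.\ when it is disjoint from $B_{u,t}$. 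In that case I would cut the subtree of $w$ by removing the single tree edge $(y,w)$ that attaches it to the DFS tree (which disconnects it in the tree), recording $(y,w)$ and the detached root $w$ in the bookkeeping $I$; the detached vertices and edges are left unreachable from $u$ and never explored again, and by Lemma~\ref{lem:restore} the operation $\undooracle(C,I)$ reattaches them in time $O(|I|)$. Each cut costs $O(1)$, and the walk continues past $y$, since a deletion can create further articulation points higher up.

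For the running time, Lemma~\ref{lem:lowpoints} already shows that over the whole spine the lowpoint-update walks traverse $O(|V_X|)$ tree edges, and hence visit $O(|V_X|)$ vertices in total (consecutive walks stop at the first common ancestor of their endpoints, so no tree edge, and no vertex, is visited twice, and a vertex already lying in a cut subtree is simply gone). Adding $O(1)$ for the articulation test and for a possible cut at each visited vertex leaves the total at $O(|V_X|)$; moreover, a subtree once cut leaves $H_X$, so there are at most $|V_X|$ cuts overall. This yields the claimed bound.

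I anticipate the main obstacle to be correctness rather than timing: showing that the purely local test carried out along the lowpoint walk catches \emph{every} subtree that genuinely falls out of $B_{u,t}$ and \emph{nothing} that should remain, even though the back edges $b_1,\dots,b_k$ of $lb(u)$ are deleted one per spine node rather than all at once. The facts to pin down are that deleting $b_i$ from both $lb(u)$ and $ab(z_i)$ before the walk makes the recomputed lowpoints reflect exactly the edges still present, so the articulation test is against the current graph; that separation from $u$ in a DFS tree is always witnessed by an articulation point on the tree path to $u$, which is created only where some lowpoint strictly increased, so the candidate set scanned by the walk is complete; that the ``not on the leftmost path'' guard isolates precisely the one side that still leads to $t$; and that the halting rule inherited from Lemma~\ref{lem:lowpoints} (stop once $\mathit{lowpoint}(w)=\gamma(u)$) never skips a needed cut, since once a lowpoint re-stabilizes no ancestor's lowpoint changes and no new articulation point can appear above that vertex.
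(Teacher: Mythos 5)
Your proposal is correct and follows essentially the same route as the paper: piggyback the pruning on the lowpoint-update walks of Lemma~\ref{lem:lowpoints}, detect detached branches via the $\mathit{lowpoint}(w) \le \gamma(y)$ articulation test combined with the leftmost-path guard for $t$, cut each subtree in $O(1)$ by detaching its root edge into $I$, and reuse the same halting criterion and edge-charging to get $O(|V_X|)$ over the spine. The only nuance worth tightening is that the vertices leaving $B_{u,t}$ are entire subtrees, of which only the \emph{roots} lie on the $z_i \leadsto u$ walk (your cutting mechanism already handles this correctly), and you could note, as the paper does, that no $lb$/$ab$ lists of surviving vertices need updating since a cut subtree has no back edges into $B_{u,t}$.
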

\begin{proof}
  To prune the branches of the DFS tree that are no longer in $H_X$,
  consider again each vertex $w$ in the path from $z_i$ towards the
  root $u$ and the vertex $y$, parent of $w$. It is easy to check if
  $y$ is an articulation point by verifying if the updated
  $\mathit{lowpoint}(w) \leq \gamma(y)$ and there exists $x$ not in
  the subtree of $w$. If $w$ is not in the leftmost path, then $t$ is
  not in the subtree of $w$. If that is the case, we have that $w
  \notin B_{u,t}$, and therefore we cut the subtree of $w$ and
  bookkeep it in $I$ to restore later. Like in the update the
  lowpoints, we stop examining the path $z_i$ towards $u$ in a vertex
  $w$ when $\mathit{lowpoint}(w) = \gamma(u)$ (the lowpoints and
  biconnected components in the path from $w$ to $u$ do not change).
  When cutting the subtree of $w$, note that there are no back edges
  connecting it to $B_{u,t}$ ($w$ is an articulation point) and
  therefore there are no updates to the lists $lb$ and $ab$ of the
  vertices in $B_{u,t}$.  Like in the case of updating the lowpoints,
  we do not traverse the same tree edge twice (we use the same halting
  criterion).
\end{proof}

With Lemma~\ref{lem:lowpoints} and Lemma~\ref{lem:cutbranches} we
finalize the proof of Lemma~\ref{lem:removebackedge}.
Fig.~\ref{fig:oracleleftexample} shows the changes the bead string
$B_{u,t}$ from Fig.~\ref{fig:Certificate} goes through in the
corresponding spine of the recursion tree.

\subsubsection{Operation $\oracleleft(C,e)$} 
In the binary nodes of a spine, we use the fact that in every left
branching from that spine the graph is the same (in a spine we only
remove edges incident to $u$ and on a left branch from the spine we
remove the vertex $u$) and therefore its block tree is also the
same. In Fig.~\ref{fig:blocktree_without_u}, we show the resulting
block tree of the graph from Fig.~\ref{fig:Certificate} after having
removed vertex $u$. However, the certificates on these left branches
are not the same, as they are rooted at different vertices. In the
example we must compute the certificates $C_1 \ldots C_4$
corresponding to bead strings $B_{z_1,t} \ldots B_{z_4,t}$. We do not
account for the cost of the left branch on the last node of spine
(corresponding to $B_{v,t}$) as the node is unary and we have shown in
Lemma~\ref{lem:unary_left} how to maintain the certificate in $O(1)$
time.

By using the reverse DFS postorder of the back edges, we are able to
traverse each edge in $H_X$ only an amortized constant number of times
in the spine.
\begin{lemma}
  \emph{(Lemma~\ref{lem:promotebackedge} restated)} The calls to
  operation $\oracleleft(C,e)$ in a spine of the recursion tree can be
  charged with a time cost of $O(|E_X|)$ to that spine.
\end{lemma}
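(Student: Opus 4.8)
The plan is to exploit the observation that along a single spine the graph seen by every left branch is \emph{frozen}. Recall that the spine is rooted at a vertex $u$ whose binary nodes correspond to the back edges $b_1=(z_1,u),\dots,b_k=(z_k,u)$ of $lb(u)$, taken in the reverse DFS postorder fixed by $\chooseedge(C,u)$. In the left branch hanging off the $i$-th binary node we append $b_i$ to $\pi_s$ and delete the vertex $u$; since the spine itself only deletes edges incident to $u$, the graph $G-u$ is identical for all $i$, and hence so is its block tree. Consequently the certificates $C_1,\dots,C_k$ we must produce are all compacted, augmented DFS trees of one and the same graph, differing only in their roots. Moreover, by Lemma~\ref{lemma:beadstring} all modifications are confined to the compacted head $H_X=(V_X,E_X)$ of $B_{u,t}$: writing $t'$ for the last vertex of the path $u\leadsto t$ that still lies in $V_X$, which is an articulation point, the subtree rooted at $t'$ is untouched, so it suffices to rebuild the part of the certificate inside $H_X$, using $t'$ as a surrogate target and $\gamma(t')$ as the baseline for the $\gamma$-labels.

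First I would build $C_1$ from scratch restricted to $H_X$, running the procedure of Lemma~\ref{lem:certificate_scratch} from $z_1$ (the endpoint of the last back edge in $lb(u)$) with $t'$ as target; this costs $O(|E_X|)$, which is within the budget we may charge to the spine. For $i>1$ I would compute $C_i$ incrementally from $C_{i-1}$: start a DFS from $z_i$, and while the exploration stays outside the first bead common to $B_{z_i,t}$ and $B_{z_{i-1},t}$ every traversed edge is fresh; once this common bead is reached, stop re-exploring its ``exterior'' --- an edge leaving the bead towards $t$ lets us graft the subtree already present in $C_{i-1}$, while an edge leaving the bead away from $t$ has already been visited and discarded in $C_{i-1}$ and can be ignored. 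The crucial accounting point is that, because the $b_i$ are processed in reverse DFS postorder, each bead of the (fixed) block tree of $G-u$ serves as this ``first common bead'' for at most one index $i$ and is otherwise touched only once; hence the total work outside the common beads, plus the grafting work, sums to $O(|E_X|)$ over the whole spine. (The unary node closing the spine needs no such argument, being handled in $O(1)$ by Lemma~\ref{lem:unary_left}.)

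The one residual source of repeated work is the set $E'_X$ of edges internal to the first common bead between $B_{z_i,t}$ and $B_{z_{i-1},t}$: if $z_i,\dots,z_j$ all fall inside the same bead after deleting $u$, the bead strings coincide but the roots differ, so its DFS must be recomputed $|j-i|$ times. To absorb this I would charge $O(|E'_X|)$ per recomputation to the first node on the leftmost path below the $i$-th binary node of the spine whose bead string has head exactly $E'_X$: if $|E'_X|\le 1$ that node is a unary node or a leaf and $O(1)$ suffices, and otherwise it is the first node of a spine and can absorb $O(|E'_X|)$ against its own spine budget. Since the charge always travels along the leftmost recursion path below a distinct binary node, no node of the recursion tree is charged twice, so the amortized total stays $O(|E_X|)$. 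I expect this charging step to be the delicate part of the argument: one must check that the node with head exactly $E'_X$ is uniquely determined, that it lies strictly below and is never reused by a different spine, and that the reverse-postorder processing really does prevent a bead from being the common bead more than once --- all of which hinge on the order imposed by $\chooseedge$ and on the invariant that the block tree of $G-u$ is the same at every left branch of the spine.
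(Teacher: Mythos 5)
Your proposal is correct and follows essentially the same route as the paper: build $C_1$ from scratch inside $H_X$ using $t'$ as surrogate target, build each $C_i$ incrementally from $C_{i-1}$ stopping at the first common bead, and charge the repeated work on the common bead's internal edges $E'_X$ down the leftmost recursion path to a node whose head is exactly $E'_X$. The only slip is the sentence claiming each bead is the ``first common bead'' for at most one index, which your own third paragraph contradicts and then correctly repairs with the charging scheme, exactly as the paper does.
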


To achieve this time cost, for each back edge $b_i = (z_i,u)$, we
compute the certificate corresponding to $B_{z_i,t}$ based on the
certificate of $B_{z_{i-1},t}$. Consider the compacted head $H_X =
(V_X , E_X )$ of the bead string $B_{u,t}$. We use $O(|E_X|)$ time to
compute the first certificate $C_1$ corresponding to bead string
$B_{z_1,t}$. Fig.~\ref{fig:spine-left-update} shows bead string
$B_{z_1,t}$ from the example of Fig.~\ref{fig:Certificate}.

\begin{lemma}
  The certificate $C_1$, corresponding to bead string $B_{z_1,t}$, can
  be computed in $O(|E_X|)$ time.
\end{lemma}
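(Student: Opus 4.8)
The plan is to reuse essentially all of the current certificate $C$ (the compacted, augmented DFS tree of $B_{u,t}$) and recompute only the part that actually changes when $u$ is deleted. By Lemma~\ref{lemma:beadstring}, removing $u$ can affect the certificate only inside the biconnected component of $u$, i.e. inside the compacted head $H_X = (V_X, E_X)$. Let $t'$ be the last vertex on the path $u \leadsto t$ with $t' \in V_X$; since $t'$ is an articulation point of $B_{u,t}$, the subtree of $C$ rooted at $t'$ is already a valid compacted, augmented DFS subtree of $B_{z_1,t}$ and can be kept verbatim. So the only genuine work is to build a DFS tree of $H_X - u$ rooted at $z_1$ and splice the retained subtree of $t'$ underneath it.

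Concretely, I would first detach the subtree of $C$ rooted at $t'$ (recording the detachment in the bookkeeping information $I$ so that $\undooracle$ can revert it in $O(1)$ time). Then I would run the construction of Lemma~\ref{lem:certificate_scratch}, but restricted to the subgraph $H_X - u$ and with $t'$ playing the role of the target: a DFS from $z_1$ that explores the path $z_1 \leadsto t'$ first (so $t$, reached through $t'$, stays on the leftmost path), assigns $\gamma$-values using $\gamma(t')$ as a baseline (so the $\gamma$-labels already present in the retained subtree of $t'$ remain consistent with the ordering $\gamma(\text{ancestor}) < \gamma(\text{descendant})$), builds the ancestor back-edge lists $ab(\cdot)$ in preorder on the way down and the descendant back-edge lists $lb(\cdot)$ in postorder on the way up, and computes $\mathit{lowpoint}(\cdot)$ bottom-up, seeding the lowpoint of $t'$ from the retained subtree. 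Finally, while returning from the DFS, I would prune every subtree rooted at a vertex $w$ whose parent $y$ satisfies $\mathit{lowpoint}(w) \le \gamma(y)$ and which is not on the leftmost path (hence $t$ is not in the subtree of $w$): such $w$ does not lie in $B_{z_1,t}$, so I would cut it off and bookkeep it in $I$.

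For the time bound: $H_X - u$ has $|V_X| - 1 = O(|V_X|)$ vertices and at most $|E_X|$ edges, and since $H_X$ is biconnected we have $|V_X| \le |E_X|$; thus the two DFS passes, the lowpoint computation, and the pruning all cost $O(|E_X|)$ in total, while detaching and re-attaching the subtree of $t'$ is $O(1)$. Hence $C_1$ is produced in $O(|E_X|)$ time, as claimed.

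The delicate point is not the time count itself but guaranteeing consistency across the boundary vertex $t'$: the recomputed part of the certificate must agree with the retained subtree in its $\gamma$-ordering and its $\mathit{lowpoint}$ values, and $t'$ must still be an articulation point separating $H_X$ from the remainder of $B_{z_1,t}$ — which is exactly why the construction is re-run with $t'$ as target and $\gamma(t')$ as baseline. One also has to argue that the back edges incident to $u$ in $B_{u,t}$ are irrelevant here (they are precisely the edges $b_1, b_2, \ldots$ removed one at a time along the spine, so none of them belongs to $B_{z_1,t}$ once $u$ is deleted), and that no back edge links the retained subtree of $t'$ to the recomputed part in a way that would invalidate the reuse, which holds because $t'$ is an articulation point of $B_{u,t}$.
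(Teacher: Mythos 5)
Your proposal is correct and follows essentially the same route as the paper's proof: keep the subtree rooted at the articulation point $t'$ untouched, and rerun the construction of Lemma~\ref{lem:certificate_scratch} from $z_1$ restricted to $H_X$ with $t'$ as target and $\gamma(t')$ as the baseline for the numbering, pruning non-leftmost subtrees via the lowpoint test. The extra remarks you add (bookkeeping the detachment in $I$, $\gamma$-consistency across $t'$, and $|V_X|\le|E_X|$ for the biconnected head) are consistent with, and slightly more explicit than, the paper's argument.
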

\begin{proof}
  Let $t'$ be the last vertex in the path $u \leadsto t$ s.t.  $t' \in
  V_X$. Since $t'$ is an articulation point, the subtree of the DFS
  tree rooted in $t'$ is maintained in the case of removal vertex
  $u$. Therefore the only modifications of the DFS tree occur in head
  $H_X$ of $B_{u,t}$.

  To compute $C_1$, we remove $u$ and rebuild the certificate starting
  form $z_1$ using the algorithm from
  Lemma~\ref{lem:certificate_scratch} restricted to $H_X$ and using
  $t'$ as target and $\gamma(t')$ as a baseline to $\gamma$ (instead
  of the depth). In particular we do the following.  To set $t'$ to be
  in the leftmost path, we perform a DFS traversal of graph $H_X$
  starting from $z_1$ and stop when we reach vertex $t'$. Then compute
  the DFS tree, traversing the path $z_1 \leadsto t'$ first.
	
  {\it Update of $\gamma$.} For each tree edge $(v,w)$ in the $t'
  \leadsto z_1$ path, we set $\gamma(v)=\gamma(w)-1$, using
  $\gamma(t')$ as a baseline.  During the rest of the traversal, when
  visiting vertex $v$, let $w$ be the parent of $v$ in the DFS
  tree. We set $\gamma(v)=\gamma(w)+1$. This maintains the property
  that $\gamma(v)>\gamma(w)$ for any $w$ ancestor of $v$.

  {\it Lowpoints and pruning the tree.}  Bottom-up in the DFS-tree,
  compute the lowpoints using the lowpoints of the children.  For $z$
  the parent of $v$, if $\mathit{lowpoint}(v) \leq \gamma(z)$ and $v$
  is not in the leftmost path in the DFS, cut the subtree of $v$ as it
  does not belong to~$B_{z_1,t}$.

  {\it Computing $lb$ and $ab$.} In the traversal, when finding a back
  edge $e=(v,w)$, if $w$ is a descendant of $v$ we append $e$ to
  $ab(w)$. This maintains the DFS preorder in the ancestor back edge
  list. After the first scan of $N(v)$ is over and all the recursive
  calls returned, re-scan the neighborhood of $v$. If $e=(v,w)$ is a
  back edge and $w$ is an ancestor of $v$, we add $e$ to $lb(w)$. This
  maintains the DFS postorder in the descendant back edge list.  This
  procedure takes $O(|E_X|)$ time.
\end{proof}

\begin{figure}[t!]
\centering
\begin{tikzpicture}
[nodeDecorate/.style={shape=circle,inner sep=1pt,draw,thick,fill=black},%
  lineDecorate/.style={-,dashed},%
  elipseDecorate/.style={color=gray!30},
  scale=0.25]
\draw (5,11.1) circle (3);
\draw[rotate around={-10:(8,23)}] (8,23) ellipse (3 and 5);
\draw[rotate around={-23:(12.6,26.5)}] (12.6,26.5) ellipse (4 and 1);
\draw[rotate around={55:(15,23.5)}] (15,23.5) ellipse (2.1 and 0.5);
\draw (6.7,15.9) circle (2.1);
\draw (17.5,23.5) circle (2.0);
\draw (20.8,23) circle (1.5);

\node (s) at (10,34) [nodeDecorate,color=lightgray,label=above left:$s$] {};
\node (u) at (10,31) [nodeDecorate,color=lightgray,label=above left:{ $u$}] {};

\node (tp) at (6.2,13.9) [nodeDecorate,label=right:{ $z_4$}] {};
\node (t) at (5.5,11) [nodeDecorate,label=below:$t$] {};
\node (a) at (9.3,28) [nodeDecorate,label=below left:$v$] {};

\path {
	(s) edge[snake,-,color=lightgray] node {\quad\quad$\pi_s$} (u)
	(u) edge[color=lightgray] node {} (a)
	(a) edge node {} (tp)
	(tp) edge node {} (t)
};

\node (b) at (7.1,18) [nodeDecorate,] {};
\node (c) at (16,25) [nodeDecorate,] {};
\node (d) at (19.3,23.5) [nodeDecorate,label=above:$~z_1$] {};
\node (e) at (14,22) [nodeDecorate,label=below:$z_3$] {};
\node (f) at (22,22.2) [nodeDecorate,label=right:$z_2$] {};
\node (g) at (17,22) [nodeDecorate,] {};
\node (h) at (10,22) [nodeDecorate,] {};

\path {
	(a) edge node {} (c)
	(c) edge node {} (d)
	(d) edge node {} (f)
	(c) edge node {} (e)
	(d) edge node {} (g)
	(b) edge node {} (h)
};

\path {
	(g) edge node {} (c)
	(a) edge node {} (h)


	(u) edge[bend left=-50,color=lightgray!50] node {} (tp)
	(f) edge[bend left=-40,color=lightgray!50] node {} (u)
	(e) edge[bend left=-10,color=lightgray!50] node {} (u)
	(d) edge[bend left=-20,color=lightgray!50] node {} (u)
};

\end{tikzpicture}
\caption{Block tree after removing vertex $u$}
\label{fig:blocktree_without_u}
\end{figure}
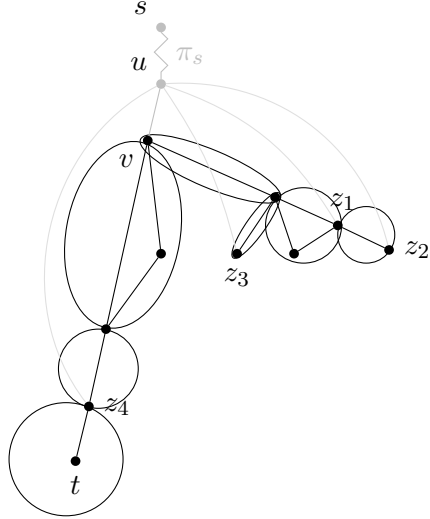

To compute each certificate $C_i$, corresponding to bead string
$B_{z_i,t}$, we are able to avoid visiting most of the edges that
belong $B_{z_{i-1},t}$. Since we take $z_i$ in reverse DFS postorder,
on the spine of the recursion we visit $O(|E_X|)$ edges plus a term
that can be amortized.

\begin{lemma} \label{lem:cost-spine-not-amortized}
  For each back edge $b_i = (z_i,u)$ with $i>1$, let ${E_X}_i'$ be the
  edges in the first bead in common between $B_{z_i,t}$ and
  $B_{z_{i-1},t}$. The total cost of computing all certificates
  $B_{z_i,t}$ in a spine of the recursion tree is: $O(|E_X| +
  \sum_{i>1}{|{E_X}_i'|})$.
\end{lemma}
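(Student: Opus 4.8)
The plan is to bound the total cost of constructing the certificates $C_1,\ldots,C_{d(u)-1}$ along a single spine, where $C_i$ corresponds to the bead string $B_{z_i,t}$ obtained in the left branch of the $i$th binary node. The first certificate $C_1$ costs $O(|E_X|)$ by the preceding lemma, so the real work is accounting for the certificates $C_i$ with $i>1$. The key structural fact I would use is that all these left branches operate on the \emph{same} graph, namely $H_X - u$ (removing $u$ from the head, since on a spine we only delete back edges incident to $u$, and a left branch additionally removes $u$ itself). Hence they share one block tree --- the block tree of $H_X - u$ depicted in Fig.~\ref{fig:blocktree_without_u} --- and only differ in which vertex $z_i$ is the root. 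The certificate $C_i$ is thus the DFS/lowpoint/back-edge-list structure of $B_{z_i,t}$, the unique bead string in that common block tree connecting $z_i$ to $t$.

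The main idea is incremental construction: to build $C_i$ I would start the DFS-rebuild procedure of Lemma~\ref{lem:certificate_scratch} from $z_i$, but reuse the data already computed for $C_{i-1}$ wherever the two bead strings $B_{z_i,t}$ and $B_{z_{i-1},t}$ overlap. Concretely, the bead strings $B_{z_i,t}$ and $B_{z_{i-1},t}$ are two paths in the same block tree both ending at $t$; they eventually merge at a first common bead $B^{(i)}$ whose internal edges I denote ${E_X}_i'$. Past that bead, the suffix of the bead string towards $t$ is identical; if a subtree hanging off $B^{(i)}$ leads to $t$ we can splice in the corresponding subtree of $C_{i-1}$ unchanged, while if it does not lead to $t$ it was already explored and pruned in $C_{i-1}$, so we may discard it without re-exploring. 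The only portion that must genuinely be recomputed when going from $C_{i-1}$ to $C_i$ is: (a) the ``approach'' path from $z_i$ through the beads that are private to $B_{z_i,t}$ up to $B^{(i)}$ --- because the root changed, $\gamma$-values and lowpoints there must be recomputed --- and (b) re-scanning the internal edges ${E_X}_i'$ of the first common bead $B^{(i)}$, since with a different root the DFS orientation inside that bead changes. Because the $z_i$ are taken in \emph{reverse DFS postorder} (the order fixed by $\chooseedge(C,u)$), the private portions of the successive bead strings are edge-disjoint: each bead of $H_X - u$ serves as a ``private approach bead'' for at most one index $i$, so the part-(a) cost telescopes to $O(|E_X|)$ over the whole spine. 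The part-(b) cost, summed over $i>1$, is exactly $\sum_{i>1}|{E_X}_i'|$. Putting the two together gives the claimed bound $O\bigl(|E_X| + \sum_{i>1}|{E_X}_i'|\bigr)$.

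The steps, in order, would be: (i)~establish that $C_1$ costs $O(|E_X|)$ and set up $t'$, the last vertex of $u\leadsto t$ in $V_X$, as the shared ``target anchor'' with $\gamma(t')$ as baseline, as in the restated Lemma~\ref{lem:promotebackedge} proof; (ii)~argue that the block tree of $H_X - u$ is invariant across all left branches of the spine, so that $B_{z_i,t}$ is well-defined and the only difference between consecutive certificates is the root; (iii)~describe the incremental rebuild from $C_{i-1}$ to $C_i$, identifying the first common bead $B^{(i)}$ and showing that everything strictly ``below'' $B^{(i)}$ (towards $t$) can be reused or is provably already pruned, so no re-exploration of those edges is needed; (iv)~charge the ``private approach'' edges for index $i$ and show, via the reverse-DFS-postorder property, that these sets are edge-disjoint across $i$, hence their total is $O(|E_X|)$; (v)~add the residual term $\sum_{i>1}|{E_X}_i'|$ for re-scanning the first common beads, concluding the bound.

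The hard part, I expect, is step~(iii): making precise the claim that a subtree hanging off $B^{(i)}$ which does \emph{not} lead to $t$ was already explored and removed during the construction of $C_{i-1}$, so that it can be skipped rather than re-traversed. This requires a careful invariant on what ``kept in $I$'' (the bookkeeping) contains and on which subtrees of $C_{i-1}$ are still physically present versus detached; one must verify that detached-but-irrelevant subtrees are never re-entered by the DFS rebuild of $C_i$ because the only edges from the common bead that get explored are the internal ones ${E_X}_i'$ plus the single tree edge continuing towards $t'$. The reverse-postorder ordering of the $z_i$ is what makes the disjointness in step~(iv) work, so I would also need to state and use the precise relationship between this ordering and the nesting structure of the paths $z_i \leadsto u$ in the original DFS tree of $B_{u,t}$ --- essentially the same non-re-traversal argument already used in Lemma~\ref{lem:lowpoints}, now transported to $H_X - u$.
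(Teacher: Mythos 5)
Your proposal follows essentially the same route as the paper's proof: rebuild $C_i$ incrementally from $C_{i-1}$ starting the DFS at $z_i$, stop at the first common bead and only re-explore its internal edges ${E_X}_i'$, reuse or discard the subtrees hanging off it according to whether they lead to $t$, and use the reverse-postorder choice of the $z_i$ to argue that each bead is charged as a ``private'' bead at most once, giving $O(|E_X| + \sum_{i>1}|{E_X}_i'|)$ overall. The decomposition into the telescoping approach cost plus the residual common-bead cost is exactly the paper's accounting, so no further comparison is needed.
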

\begin{proof}
  Let us compute the certificate $C_i$: the certificate of the left
  branch of the $i$th node of the spine where we augment the path with
  back edge $b_i = (z_i,u)$ of $lb(u)$.

  For the general case of $C_i$ with $i>1$ we also rebuild (part) of
  the certificate starting from $z_i$ using the procedure from
  Lemma~\ref{lem:certificate_scratch} but we use information gathered
  in $C_{i-1}$ to avoid exploring useless branches of the DFS
  tree. The key point is that, when we reach the first bead in common
  to both $B_{z_i,t}$ and $B_{z_{i-1},t}$, we only explore edges
  internal to this bead.  If an edge $e$ that leaves the bead leads to
  $t$, we can reuse a subtree of $C_{i-1}$. If $e$ does not lead to
  $t$, then it has already been explored (and cut) in $C_{i-1}$ and
  there is no need to explore it again since it is going to be
  discarded.

  In detail, we start computing a DFS from $z_i$ in $B_{u,t}$ until we
  reach a vertex $t' \in B_{z_{i-1},t}$. Note that the bead of $t'$
  has one entry point and one exit point in $C_{i-1}$. After reaching
  $t'$ we proceed with the traversal using only edges already in
  $C_{i-1}$. When arriving at a vertex $w$ that is not in the same
  bead of $t'$, we stop the traversal. If $w$ is in a bead towards
  $t$, we reuse the subtree of $w$ and use $\gamma(w)$ as a baseline
  of the numbering $\gamma$. Otherwise $w$ is in a bead towards
  $z_{i-1}$ and we cut this branch of the certificate. When all edges
  in the bead of $t'$ are traversed, we proceed with visit in the
  standard way.

  Given the order we take $b_i$, each bead is not added more than once
  to a certificate $C_i$, therefore the total cost over the spine is
  $O(|E_X|)$.  Nevertheless, the internal edges ${E_X}_i'$ of the
  first bead in common between $B_{z_i,t}$ and $B_{z_{i-1},t}$ are
  explored for each back edge $b_i$.
\end{proof}

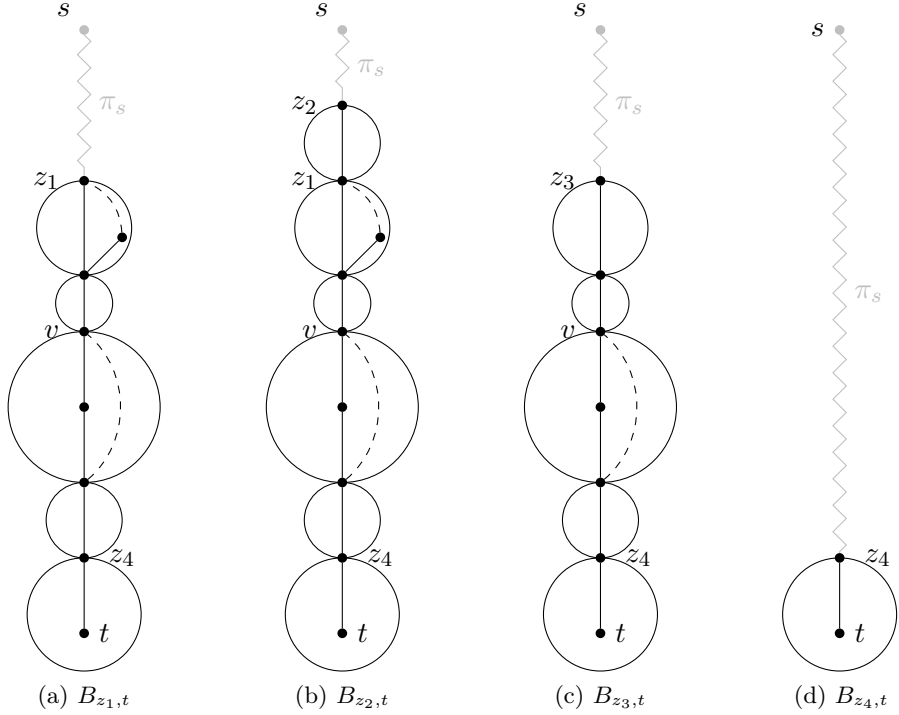
\begin{figure}[t]
\centering
\subfloat[$B_{z_1,t}$] {

\begin{tikzpicture}
[nodeDecorate/.style={shape=circle,inner sep=1pt,draw,thick,fill=black},%
  lineDecorate/.style={-,dashed},%
  elipseDecorate/.style={color=gray!30},
  scale=0.25]
\draw (10,23.5) circle (2.5);
\draw (10,19.5) circle (1.5);
\draw (10,14) circle (4);
\draw (10,8) circle (2);
\draw (10,3) circle (3);

\node (s) at (10,34) [nodeDecorate,color=lightgray,label=above left:$s$] {};

\node (z2) at (10,26) [nodeDecorate,label=left:$z_1~$] {};
\node (a) at (10,21) [nodeDecorate] {};
\node (b) at (12,23) [nodeDecorate] {};
\node (v) at (10,18) [nodeDecorate,label=left:$v~$] {};
\node (c) at (10,14) [nodeDecorate] {};
\node (d) at (10,10) [nodeDecorate] {};
\node (z4) at (10,6) [nodeDecorate,label=right:$~z_4$] {};
\node (t) at (10,2) [nodeDecorate,label=right:$t$] {};

\path {
	(s) edge[snake,-,color=lightgray] node {\quad\quad$\pi_s$} (z2)
	(z2) edge node {} (a)
	(a) edge node {} (b)
	(a) edge node {} (v)
	(v) edge node {} (c)
	(c) edge node {} (d)
	(d) edge node {} (z4)
	(z4) edge node {} (t)
};

\path {
	(d) edge[dashed,bend left=-50] node {} (v)
	(b) edge[dashed,bend left=-30] node {} (z2)
};
\end{tikzpicture}
}
\hspace{1cm}
\subfloat[$B_{z_2,t}$] {

\begin{tikzpicture}
[nodeDecorate/.style={shape=circle,inner sep=1pt,draw,thick,fill=black},%
  lineDecorate/.style={-,dashed},%
  elipseDecorate/.style={color=gray!30},
  scale=0.25]
\draw (10,28) circle (2);
\draw (10,23.5) circle (2.5);
\draw (10,19.5) circle (1.5);
\draw (10,14) circle (4);
\draw (10,8) circle (2);
\draw (10,3) circle (3);

\node (s) at (10,34) [nodeDecorate,color=lightgray,label=above left:$s$] {};

\node (z1) at (10,30) [nodeDecorate,label=left:$z_2~$] {};
\node (z2) at (10,26) [nodeDecorate,label=left:$z_1~$] {};
\node (a) at (10,21) [nodeDecorate] {};
\node (b) at (12,23) [nodeDecorate] {};
\node (v) at (10,18) [nodeDecorate,label=left:$v~$] {};
\node (c) at (10,14) [nodeDecorate] {};
\node (d) at (10,10) [nodeDecorate] {};
\node (z4) at (10,6) [nodeDecorate,label=right:$~z_4$] {};
\node (t) at (10,2) [nodeDecorate,label=right:$t$] {};

\path {
	(s) edge[snake,-,color=lightgray] node {\quad\quad$\pi_s$} (z1)
	(z1) edge node {} (z2)
	(z2) edge node {} (a)
	(a) edge node {} (b)
	(a) edge node {} (v)
	(v) edge node {} (c)
	(c) edge node {} (d)
	(d) edge node {} (z4)
	(z4) edge node {} (t)
};

\path {
	(d) edge[dashed,bend left=-50] node {} (v)
	(b) edge[dashed,bend left=-30] node {} (z2)
};
\end{tikzpicture}
}
\hspace{1cm}
\subfloat[$B_{z_3,t}$] {

\begin{tikzpicture}
[nodeDecorate/.style={shape=circle,inner sep=1pt,draw,thick,fill=black},%
  lineDecorate/.style={-,dashed},%
  elipseDecorate/.style={color=gray!30},
  scale=0.25]
\draw (10,23.5) circle (2.5);
\draw (10,19.5) circle (1.5);
\draw (10,14) circle (4);
\draw (10,8) circle (2);
\draw (10,3) circle (3);

\node (s) at (10,34) [nodeDecorate,color=lightgray,label=above left:$s$] {};

\node (z3) at (10,26) [nodeDecorate,label=left:$z_3~$] {};
\node (a) at (10,21) [nodeDecorate] {};
\node (v) at (10,18) [nodeDecorate,label=left:$v~$] {};
\node (c) at (10,14) [nodeDecorate] {};
\node (d) at (10,10) [nodeDecorate] {};
\node (z4) at (10,6) [nodeDecorate,label=right:$~z_4$] {};
\node (t) at (10,2) [nodeDecorate,label=right:$t$] {};

\path {
	(s) edge[snake,-,color=lightgray] node {\quad\quad$\pi_s$} (z3)
	(z3) edge node {} (a)
	(a) edge node {} (v)
	(v) edge node {} (c)
	(c) edge node {} (d)
	(d) edge node {} (z4)
	(z4) edge node {} (t)
};

\path {
	(d) edge[dashed,bend left=-50] node {} (v)
};
\end{tikzpicture}
}
\hspace{1cm}
\subfloat[$B_{z_4,t}$] {

\begin{tikzpicture}
[nodeDecorate/.style={shape=circle,inner sep=1pt,draw,thick,fill=black},%
  lineDecorate/.style={-,dashed},%
  elipseDecorate/.style={color=gray!30},
  scale=0.25]
\draw (10,3) circle (3);

\node (s) at (10,34) [nodeDecorate,color=lightgray,label=left:$s$] {};

\node (z4) at (10,6) [nodeDecorate,label=right:$~z_4$] {};
\node (t) at (10,2) [nodeDecorate,label=right:$t$] {};

\path {
	(s) edge[snake,-,color=lightgray] node {\quad\quad$\pi_s$} (z4)
	(z4) edge node {} (t)
};

\end{tikzpicture}
}
\caption{Certificates of the left branches of a spine}
\label{fig:spine-left-update}
\end{figure}

Although the edges in ${E_X}_i'$ are in a common bead between
$B_{z_i,t}$ and $B_{z_{i-1},t}$, these edges must be visited. The
entry point in the common bead can be different for $z_i$ and
$z_{i-1}$, the DFS tree of that bead can also be different. For an
example, consider the case where $z_i, \ldots, z_j$ are all in the
same bead after the removal of $u$. The bead strings $B_{z_i,t} \ldots
B_{z_j,t}$ are the same, but the roots $z_i, \ldots, z_j$ of the
certificate are different, so we have to compute the corresponding DFS
of the first bead $|j-i|$ times. Note that this is not the case for
the other beads in common: the entry point is always the same.

\begin{lemma} \label{lem:left-amortize}
  The cost $O(|E_X| + \sum_{i>1}{|{E_X}_i'}|)$ on a spine of the
  recursion tree can be amortized to $O(|E_X|)$.
\end{lemma}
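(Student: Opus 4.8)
Looking at Lemma~\ref{lem:left-amortize}, I need to show that the extra cost $\sum_{i>1}|{E_X}_i'|$ — the cost of re-exploring the internal edges of the first common bead between consecutive certificates $B_{z_{i-1},t}$ and $B_{z_i,t}$ — can be charged elsewhere so that the total over the spine stays $O(|E_X|)$.

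\medskip

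\textbf{Plan.} The key observation is that whenever the first bead in common between $B_{z_i,t}$ and $B_{z_{i-1},t}$ has internal edge set $E_X'$ with $|E_X'| \geq 2$, this bead is a genuine biconnected component that will serve as the \emph{head} of the bead string for at least one node in the recursion subtree hanging off the left child of the $i$th node of the spine. Concretely, after removing $u$ and prepending the back edge $b_i=(z_i,u)$ to the current path, the recursion continues at $z_i$; descending along the leftmost path of that left subtree, we eventually reach a recursive call whose current vertex is the articulation point where this common bead attaches, and whose bead-string head is exactly the component with edges $E_X'$. First I would make this correspondence precise: identify, for each $i>1$ with $|E_X'| \geq 2$, the unique \emph{first} node on the leftmost root-to-leaf path descending from the $i$th spine node that has $E_X'$ as its head. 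Then charge the $O(|E_X'|)$ re-exploration cost to that node. The degenerate case $|E_X'| \leq 1$ corresponds to a unary node or a leaf in the recursion tree, which can simply absorb an $O(1)$ charge.

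\medskip

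\textbf{Why the charging is sound.} I must verify that no node in the recursion tree is charged more than a constant number of times this way. Two sources of charges can land on a given node $x$: (a) the $O(|V_X|+|E_X|)$ cost from being the first node of its own spine (the "direct" spine cost already accounted for in Eq.~\eqref{eq:abstrac_cost}), and (b) the re-exploration charge from Lemma~\ref{lem:left-amortize}. For (b), the charge from spine step $i$ always lands strictly inside the leftmost recursion path of the $i$th node of that spine, and it lands on the \emph{first} such node whose head equals the shared bead $E_X'$. Since each bead string head is determined by the graph state, and descending leftward only deletes more structure, a fixed node $x$ can be "the first node with head equal to its own head" for at most one ancestor spine step — distinct spine steps $i$ on the same spine produce shared beads that, by the reverse-DFS-postorder in which the $z_i$ are taken, are pairwise distinct or nested in a way that each bead is added at most once (this is exactly the argument already used in Lemma~\ref{lem:cost-spine-not-amortized} that each bead is added at most once). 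So $x$ receives at most one re-exploration charge of size $O(|E_X^{(x)}|)$, comparable to its own spine cost, and the amortized blow-up is a constant factor.

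\medskip

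\textbf{Putting it together.} Combining Lemma~\ref{lem:cost-spine-not-amortized} with the charging just described, the total cost of all $\oracleleft(C,e)$ operations on a spine is $O(|E_X|)$ plus charges that are redistributed, each node of the recursion tree absorbing at most a constant-factor increase over its Eq.~\eqref{eq:abstrac_cost} budget. Hence the abstract cost bound of Eq.~\eqref{eq:abstrac_cost} is preserved, and Lemma~\ref{lem:left-amortize} follows. The main obstacle I anticipate is the bookkeeping to show the charge-target node is \emph{unique and well-defined} — in particular, handling the case in the footnote of Lemma~\ref{lem:promotebackedge} where $z_i,\dots,z_j$ all lie in the same bead after removing $u$, so that the "shared bead" is the same set $E_X'$ across several consecutive $i$; there one must argue the re-exploration is still $O(1)$ amortized per such step because the repeated DFS of that single bead is paid for $|j-i|$ times but the leftmost descendant with that head also branches $\Omega(|j-i|)$ ways, or else that the footnote's accounting already subsumes it. Making this argument airtight, consistent with the reverse-postorder traversal, is the delicate part; the rest is a routine re-derivation of the spine cost.
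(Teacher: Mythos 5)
Your proposal is correct and follows essentially the same route as the paper's proof: charge the $O(|{E_X}_i'|)$ re-exploration cost to the first node on the leftmost path descending from the left child of the $i$th spine node whose head is exactly ${E_X}_i'$, with the $|{E_X}_i'|\le 1$ case absorbed by a unary node or leaf, and observe that no node is charged twice because these leftmost paths hang off pairwise disjoint left subtrees of distinct spine nodes. The ``delicate'' case you flag ($z_i,\dots,z_j$ in the same bead) is resolved by exactly that disjointness --- the $|j-i|$ repeated explorations are charged to $|j-i|$ distinct nodes, one in each left subtree --- which is the argument the paper uses as well.
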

\begin{proof}
  We can charge the cost $O(|{E_X}_i'|)$ of exploring the edges in the
  first bead in common between $B_{z_i,t}$ and $B_{z_{i-1},t}$ to
  another node in the recursion tree. Since this common bead is the
  head of at least one certificate in the recursion subtree of the
  left child of the $i$th node of the spine.  Specifically, we charge
  the first and only node in the \emph{leftmost} path of the $i$th
  child of the spine that has exactly the edges ${E_X}_i'$ as head of
  its bead string: (i) if $|{E_X}_i'| \le 1$ it corresponds to a unary
  node or a leaf in the recursion tree and therefore we can charge it
  with $O(1)$ cost; (ii) otherwise it corresponds to a first node of a
  spine and therefore we can also charge it with $O(|{E_X}_i'|)$. We
  use this charging scheme when $i \neq 1$ and the cost is always
  charged in the leftmost recursion path of $i$th node of the spine,
  consequently we never charge a node in the recursion tree more than
  once.
\end{proof}

Lemmas~\ref{lem:cost-spine-not-amortized} and~\ref{lem:left-amortize}
finalize the proof of Lemma~\ref{lem:promotebackedge}.
Fig.~\ref{fig:spine-left-update} shows the certificates of bead
strings $B_{z_i,t}$ on the left branches of the spine from
Figure~\ref{fig:spine}.


\bigskip
\bigskip


\section{Discussion and conclusions}
In the first part of this chapter, we showed that it is possible
(Algorithm~\ref{cyclealgo}) to list all bubbles with a given source in
a directed graph with linear delay, thus solving
Problem~\ref{prob:unweighted:stbubbles}. Moreover, it is possible
(Algorithm~\ref{alg:main}) to enumerate all bubbles, for all possible
sources, thus solving Problem~\ref{prob:unweighted:bubbles}, in
$O((m+n)(\eta+n))$ total time, where $\eta$ is the number of bubbles.

Unfortunately, this algorithm is not a good replacement for \ks's
listing algorithm (Section~\ref{sec:kissplice:algorithm}), since for
the task listing bubbles corresponding to AS events, in practice, the
latter performs better.  Recall that \ks searches for \emph{cycles}
satisfying conditions (i) to (iv) of
Section~\ref{sec:kissplice:algorithm}, the cycles satisfying condition
(i) correspond to the $(s,t)$-bubbles, the remaining conditions, (ii)
to (iv), are constraints for the length of the sequences corresponding
to each path. In \ks's listing algorithm, several prunings based on
these constraints are applied to avoid the enumeration of bubbles that
are guarantee not to satisfy the constraints. On the other hand,
Algorithm~\ref{alg:main} efficiently lists all bubbles directly,
i.e. cycles satisfying condition (i), but it is not evident how to
apply the same prunings for constraints (ii) to (iv). In the end, we
have to list all bubbles and, in a post-processing step, filter out
the ones not satisfying the constraints. Since in typical cases, the
number of bubbles satisfying the constraints is small compared to the
total\footnote{Bubbles not satisfying the constraints correspond to,
among others, de Bruijn graph artifacts, other genomic polymorphisms
(i.e. inversions), repeat related structures, and, more rarely,
multiple exclusive exons.}  number of bubbles, this approach is worse
than \ks's listing algorithm. In Chapter~\ref{chap:weighted}, we
present a \emph{practical} polynomial delay algorithm that directly
lists bubbles satisfying constraints (ii) and (iv).

Nonetheless, the problem of listing bubbles in a directed graph is
interesting from a theoretical point of view, since $(s,t)$-bubbles
are natural substructures in directed graphs\footnote{For instance,
$(s,t)$-bubbles are related to 2-vertex-connected directed graphs
(\cite{Bang-Jensen08}) where every pair of vertices are extremities of
at least one bubble. }. Moreover, Algorithm~\ref{cyclealgo} required a
non-trivial adaptation of Johnson's algorithm (\cite{Johnson75}) for
listing cycles in directed graphs, and is the first linear delay
algorithm to list all bubbles with a given source in a directed graph.

In the second part of this chapter, we showed that Johnson's
algorithm, the long-standing best known solution to list cycles, is
surprisingly not optimal for undirected graphs. We then presented an
$O(m + \sum_{c \in \setofcycles(G)}{|c|})$ algorithm to list cycles in
undirected graphs, where $\setofcycles(G)$ in the set of cycles and
$|c|$ the length of cycle $c$. Clearly, $\Omega(m)$ time is necessary
to read the graph and $\Omega(\sum_{c \in \setofcycles(G)}{|c|}))$
time to list the output. Thus, our algorithm is optimal. Actually, we
presented an optimal algorithm to list $st$-paths in undirected graphs
and used an optimality preserving reduction from cycle listing to
$st$-path listing.

This chapter raises some interesting questions, for instance, whether
it is possible to directly list the $(s,t)$-bubbles satisfying path
length constraints. In Chapter~\ref{chap:weighted}, we give an
affirmative answer to this question. Another natural question is
whether it is possible to apply techniques similar to the ones
presented in Section~\ref{sec:unweighted:cycle} to improve Johnson's
algorithm for directed graphs or Algorithm~\ref{cyclealgo}. An
important invariant maintained by our optimal $st$-path listing
algorithm is the following: in the beginning of every recursive call
every edge in the graph is contained in some $st$-path. Intuitively,
this means that at every step the graph is cleaned and only the
necessary edges are kept. However, in directed graphs is NP-hard to
decide if a given a arc belongs to a $st$-path or $(s,t)$-bubble
(\cite{Fortune80}). Thus, it is unlikely that the same kind of
cleaning can be done in directed graphs. This seems a hard barrier to
overcome. The last question is, provided we are only interested in
counting\footnote{Counting $st$-paths is \#P-hard, so an algorithm
polynomial in the size of the graph is very unlikely.}, whether it is
possible to improve our $st$-path listing algorithm to
$O(|\setofpaths_{st}(G)|)$. In other words, is it the possible improve
our algorithm to spend only a constant time per path if it is not
required to output each $st$-path. This may seem impossible, but there
are listing algorithms achieving this complexity, e.g. listing
spanning trees (\cite{Marino14}).

\chapter{Listing in weighted graphs}
\label{chap:weighted}
\minitoc
In this chapter, we present efficient algorithms to list paths and
bubbles, satisfying path length constraints in \emph{weighted}
directed graphs. The chapter is divided in two main parts.

The first part (Section~\ref{sec:weighted:bubble}) is strongly based
on our paper \cite{Sacomoto13}, and its goal is to present a
polynomial delay algorithm to list all bubbles in weighted directed
graphs, such that each path $p_1,p_2$ in the bubble has length bounded
by $\alpha_1,\alpha_2$ respectively. For a directed graph with $n$
vertices and $m$ arcs, the method we propose lists all bubbles with a
given source in $O(n(m + n \log))$ delay. Moreover, we experimentally
show that this algorithm is significantly faster than the listing
algorithm of \ks (version 1.6) to identify bubbles corresponding to
alternative splicing events.

The second part (Section~\ref{sec:weighted:path}) is strongly based on
our paper \cite{Grossi14} (in preparation), and its goal is to present
a general scheme to list bounded length $st$-paths in weighted
directed or undirected graphs using memory linear in the size of the
graph, independent of the number of paths output. For undirected
non-negatively weighted graphs, we also show an improved algorithm
that lists all $st$-paths with length bounded by $\alpha$ in $O( (m +
t(n,m)) \gamma)$ total time, where $\gamma$ is the number the
$st$-paths with length bounded by $\alpha$ and $t(m,n)$ is the time to
compute a shortest path tree. In particular, this is $O(m \gamma)$ for
unit weights and $O((m + n \log n) \gamma)$ for general non-negative
weights. Moreover, we show how to modify the general scheme to output
the paths in increasing order of their lengths.


\bigskip
\bigskip


\section{Listing bounded length bubbles in weighted directed graphs} \label{sec:weighted:bubble}

\subsection{Introduction}
In the previous chapter, we proposed a linear delay algorithm to list
all bubbles in a directed graph, in particular, applicable also to de
Bruijn graphs. Although interesting from a theoretical point of view,
the algorithm cannot replace the listing algorithm of \ks
(Section~\ref{sec:kissplice:algorithm}), since for the task of listing
bubbles corresponding to AS events, in practice, the latter performs
better. Indeed, this is due to the fact that the bubbles corresponding
to alternative splicing events (excluding mutually exclusive exons)
satisfy some path length constraints. We can use this information in
simple backtracking algorithm of \ks (version 1.6) to efficiently
prune the branches of the search tree; we cannot, however, give any
theoretical guarantees. In the worst case, the algorithm is still
exponential in the number of bubbles and the size of the graph. On the
other hand, it is not clear how to incorporate the same prunings in
the linear delay algorithm of the last chapter. As a result, the
algorithm lists a huge number of bubbles that have to be checked, in a
post-processing step, for the path length constraints. In this
chapter, we present a polynomial delay algorithm to directly list
bubbles satisfying the path constraints. Moreover, we experimentally
show that the algorithm is several orders of magnitude faster
than \ks's (version 1.6) listing algorithm.

As stated in Section~\ref{sec:unweighted:bubble}, the problem of
identifying bubbles with path length constraints was considered before
in the genome assembly (\cite{Soapdenovo,Idba,Velvet,Abyss}) and in
the variant finding (\cite{Cortex,Bubbleparse}) contexts. However, in
the first case the goal was not to list all bubbles. In general,
assemblers perform a greedy search for bubbles in order to
``linearize'' a de Bruijn graph. Moreover, the path length constraints
are symmetric, that is both paths should satisfy the same length
constraint. In the second case, the goal is really to list bubbles,
but in \cite{Cortex} the search is restricted to non-branching
bubbles, while in \cite{Bubbleparse} this constraint is relaxed to a
bounded (small) number branching internal vertices. Additionally, in
\cite{Bubbleparse} there is no strong theoretical guarantee for the
time complexity; the algorithm is basically an unconstrained DFS,
similar to the listing algorithm of \ks, where the search is truncated
at a given depth.

In this chapter, we introduce the first polynomial delay algorithm to
list all bubbles with length constraints in a weighted directed graph.
Its complexity for general non-negatively weighted graphs is
$O(n(m+n \log n))$ (Section~\ref{sec:alg_delay}) where $n$ is the
number of vertices in the graph, $m$ the number of arcs.  In the
particular case of de Bruijn graphs, the complexity is $O(n(m+n \log
\alpha))$ (Section~\ref{subsec:dijkstra}) where $\alpha$ is a constant
related to the length of the skipped part in an alternative splicing
event. In practice, an algorithmic solution in $O(nm\log n)$
(Section~\ref{subsec:comp_kissplice}) appears to work better on de
Bruijn graphs built from such data.  We implemented the latter, show
that it is more efficient than previous approaches and outline that it
allows to discover novel long alternative splicing events.

\subsection{De Bruijn graphs and bounded length bubbles} \label{sec:debruijn_and_as}
As was shown in Chapter~\ref{chap:kissplice}, polymorphisms
(i.e. variable parts) in a transcriptome (including alternative
splicing events) correspond to recognizable patterns in the DBG that
are precisely the $(s,t)$-bubbles
(Definition~\ref{def:unweighted:bubble}). Intuitively, the variable
parts correspond to alternative paths and the common parts correspond
to the beginning and end points of those paths. More formally, any
process generating patterns $awb$ and $aw'b$ in the sequences, with
$a,b,w,w' \in \Sigma^*$, $|a| \geq k, |b|\geq k$ and $w$ and $w'$ not
sharing any $k$-mer, creates a $(s,t)$-bubble in the DBG. In the
special case of AS events (excluding mutually exclusive exons), since
$w'$ is empty, one of the paths corresponds to the \emph{junction} of
$ab$, i.e. to $k$-mers that contain at least one letter of each
sequence. Thus the number of vertices of this path in the DBG is
predictable: it is at most\footnote{The size is \emph{exactly} $k-1$
if $w$ has no common prefix with $b$ and no common suffix with $a$.}
$k-1$. An example is given in Fig.~\ref{fig:weighted:ex_bubble}. In
practice (see Section~\ref{sec:kissplice:dbg_models}), an upper bound
$\alpha$ to the other path and a lower bound $\beta$ on both paths is
also imposed. In other words, an AS event corresponds to a
$(s,t)$-bubble with paths $p_1$ and $p_2$ such that $p_1$ has at most
$\alpha$ vertices, $p_2$ at most $k-1$ and both have at least $\beta$
vertices.

\smallskip
\begin{figure}[htb]
\centering
\resizebox{!}{2cm}{%
  \begin{tikzpicture}[->,>=stealth',shorten >=1pt,auto,node distance=1.5cm,
      thick,main node/.style={rectangle,draw,font=\sffamily\bfseries}]

    \node[main node] (1) {\color{red}ACT};
    \node[main node] (2) [right of=1] {{\color{red}CT}{\color{blue}G}};
    \node[main node] (3) [above right of=2, yshift=-0.2cm] {{\color{red}T}{\color{black}GG}};
    \node[main node] (4) [right of=3] {\color{black}GGA};
    \node[main node] (5) [right of=4] {{\color{black}GA}{\color{cyan}G}};
    \node[main node] (6) [right of=5] {{\color{black}A}{\color{cyan}GC}};
    \node[main node] (7) [below right of=6] {\color{cyan}GCG};
    \node[main node] (8) [right=2.4cm of 2, yshift=-0.80cm] {{\color{red}T}{\color{cyan}GC}};
    
    \path[every node/.style={font=\sffamily\small}]
      (1) edge node [left] {} (2)
      (2) edge node [left] {} (3)
          edge node [left] {} (8)
      (3) edge node [left] {} (4)
      (4) edge node [left] {} (5)
      (5) edge node [left] {} (6)
      (6) edge node [left] {} (7)
      (8) edge node [left] {} (7);
  \end{tikzpicture}
}
\caption{DBG with $k=3$ for the sequences: {\color{red}
    ACT}{\color{black}GGA}{\color{cyan}GCG} ($awb$) and {\color{red}
    ACT}{\color{cyan}GCG} ($ab$). The pattern in the sequence
  generates a $(s,t)$-bubble, from {\color{red}CT}{\color{blue}G} to
  {\color{cyan}GCG}. In this case, $b=$ {\color{cyan}GCG} and $w=$ GGA
  have their first letter {\color{blue}G} in common, so the path
  corresponding to the junction $ab$ has $k-1-1 = 1$ vertex.}
\label{fig:weighted:ex_bubble}
\end{figure}
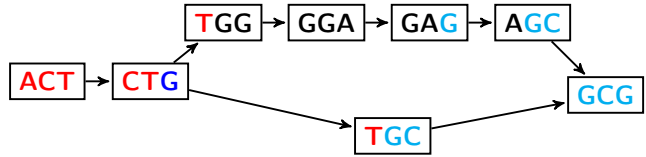

Given a directed graph $G$ with non-negative arc weights $w: E \mapsto
\mathbb{Q}_{\geq 0}$, we can extend
Definition~\ref{def:unweighted:bubble} to $G$ by considering
$(s,t)$-bubbles with length constraints in both paths.

\begin{definition}[$(s,t,\alpha_1,\alpha_2)$-bubble] 
  A \emph{$(s,t, \alpha_1,\alpha_2)$-bubble} in a weighted directed
  graph is a $(s,t)$-bubble with paths $p_1,p_2$ satisfying $w(p_1)
  \leq \alpha_1$ and $w(p_2) \leq \alpha_2$.
\end{definition}

As stated in Chapter~\ref{chap:kissplice}, when dealing with DBGs
built from RNA-seq data, in a lossless preprocessing step, all maximal
non-branching linear paths of the graph (i.e. paths containing only
vertices with in and out-degree 1) are compressed each into one single
vertex, whose label corresponds to the label of the path (i.e. it is
the concatenation of the labels of the vertices in the path without
the overlapping part(s)). The resulting graph is the compressed de
Bruijn graph (cDBG). In the cDBG, the vertices can have labels larger
than $k$, but an arc still indicates a suffix-prefix overlap of size
$k-1$. Finally, since the only property of a bubble corresponding to
an AS event is the constraint on the length of the path, we can
disregard the labels from the cDBG and only keep for each vertex its
label length\footnote{Resulting in a graph with weights in the
  vertices. Here, however, we consider the weights in the arcs. Since
  this is more standard and, in our case, both alternatives are
  equivalent, we can transform one into another by splitting vertices
  or arcs.}. In this way, searching for bubbles corresponding to AS
events in a cDBG can be seen as a particular case of looking for
$(s,t, \alpha_1, \alpha_2)$-bubbles satisfying the lower bound $\beta$
in a non-negative weighted directed graph.

Actually, it is not hard to see that the enumeration of $(s,t,
\alpha_1,\alpha_2)$-bubbles, for all $s$ and $t$, satisfying the lower
bound $\beta$ is NP-hard. Indeed, deciding the existence of at least
one $(s,t, \alpha_1,\alpha_2)$-bubble, for some $s$ and $t$, with the
lower bound $\beta$ in a weighted directed graph where all the weights
are 1 is NP-complete. It follows by a simple reduction from the
Hamiltonian path problem (\cite{Garey79}): given a directed graph $G =
(V,E)$ and two vertices $s$ and $t$, build the graph $G'$ by adding to
$G$ the vertices $s'$ and $t'$, the arcs $(s,s')$ and $(t,t')$, and a
new path from $s'$ to $t'$ with exactly $|V|$ nodes. There is a
$(x,y,|V|+2,|V|+2)$-bubble, for some $x$ and $y$, satisfying the lower
bound $\beta = |V| + 2$ in $G'$ if and only if there is a Hamiltonian
path from $s$ to $t$ in $G$.

From now on, we consider the more general problem of listing
$(s,t,\alpha_1,\alpha_2)$-bubbles (without the lower bound) for an
arbitrary non-negative weighted directed graph $G$ (not restricted to
a cDBG).

\begin{problem}[Listing bounded length bubbles] \label{prob:weighted:bubbles}
  Given a non-negatively weighted directed graph $G = (V,E)$, output
  all $(s,t,\alpha_1,\alpha_2)$-bubbles, for all pairs $s,t \in V$.
\end{problem}

In order to solve Problem~\ref{prob:unweighted:bubbles}, we consider
the problem of listing all bubbles with a given source
(Problem~\ref{prob:unweighted:stbubbles}). Indeed, by trying all
possible sources $s$ we can list all $(s,t)$-bubbles.

\begin{problem}[Listing $(s,*,\alpha_1,\alpha_2)$-bubbles] \label{prob:weighted:sbubbles}
  Given a non-negatively weighted directed graph $G = (V,E)$ and
  vertex $s$, output all $(s,t,\alpha_1,\alpha_2)$-bubbles, for all $t
  \in V$.
\end{problem}

The number of vertices and arcs of $G$ is denoted by $n$ and
$m$, respectively.

\subsection{An $O(n (m + n \log n))$ delay algorithm} \label{sec:alg_delay}
In this section, we present an $O(n (m + n \log n))$ delay algorithm
to enumerate, for a fixed source $s$, all
$(s,t,\alpha_1,\alpha_2)$-bubbles in a general directed graph $G$ with
non-negative weights.  The pseudocode is shown in
Algorithm~\ref{alg:weighted:listbubbles}. It is important to stress
that this pseudocode uses high-level primitives, e.g. the tests in
lines~\ref{alg2:initial_test}, \ref{alg2:include} and
\ref{alg2:exclude}. An efficient implementation for the test in
line~\ref{alg2:include}, along with its correctness and analysis, is
implicitly given in Lemma~\ref{lem:test2}. This is a central result in
this section. For its proof we need Lemma~\ref{lem:bubbles:dist}.

Algorithm~\ref{alg:weighted:listbubbles} uses a recursive strategy,
inspired by the binary partition method, that successively divides the
solution space at every call until the considered subspace is a
singleton.  In order to have a more symmetric structure for the
subproblems, we define the notion of a \emph{pair of compatible
  paths}, which is an object that generalizes the definition of a
$(s,t,\alpha_1,\alpha_2)$-bubble. Given two vertices $s_1,s_2 \in V$
and upper bounds $\alpha_1, \alpha_2 \in \mathbb{Q}_{\geq0}$, the
paths $p_1 = s_1 \leadsto t_1$ and $p_2 = s_2 \leadsto t_2$ are a
\emph{pair of compatible paths} for $s_1$ and $s_2$ if $t_1 = t_2$,
$w(p_1) \leq \alpha_1$, $w(p_2) \leq \alpha_2$ and the paths are
internally vertex-disjoint. Clearly, every
$(s,t,\alpha_1,\alpha_2)$-bubble is also a pair of compatible paths
for $s_1 = s_2 = s$ and some $t$.

Given a vertex $v$, the set of out-neighbors of $v$ is denoted by
$N^+(v)$. Let now $\mathcal{P}_{\alpha_1,\alpha_2}(s_1,s_2,G)$ be
the set of all pairs of compatible paths for $s_1$, $s_2$, $\alpha_1$
and $\alpha_2$ in $G$. We have\footnote{The same relation is true
  using $s_1$ instead of $s_2$.} that:
\begin{equation} \label{eq:bubbles:partition}
\mathcal{P}_{\alpha_1, \alpha_2}(s_1,s_2,G) = \mathcal{P}_{\alpha_1, \alpha_2}(s_1,s_2,G') 
                     \bigcup_{v \in N^+(s_2)} (s_2,v) \mathcal{P}_{\alpha_1, \alpha_2'} (s_1,v,G - s_2), 
\end{equation}
where $\alpha_2' = \alpha_2 - w(s_2,v)$ and $G' = G - \{(s_2,v) | v
\in N^+(s_2) \}$. In other words, the set of pairs of compatible
paths for $s_1$ and $s_2$ can be partitioned into:
$\mathcal{P}_{\alpha_1, \alpha_2'} (s_1,v,G - s_2)$, the sets of pairs
of paths containing the arc $(s_2,v)$, for each $v \in N^+(s_2)$;
and $\mathcal{P}_{\alpha_1, \alpha_2}(s_1,s_2,G')$, the set of pairs
of paths that do not contain any of them. Algorithm~\ref{alg:weighted:listbubbles}
implements this recursive partition strategy. The solutions are only
output in the leaves of the recursion tree (line~\ref{alg2:output}),
where the partition is always a singleton.  Moreover, in order to
guarantee that every leaf in the recursion tree outputs at least one
solution, we have to test if $\mathcal{P}_{\alpha_1, \alpha_2'}
(s_1,v,G - s_2)$ (and $\mathcal{P}_{\alpha_1, \alpha_2}(s_1,s_2,G')$)
is not empty before making the recursive call
(lines~\ref{alg2:include} and \ref{alg2:exclude}).

\begin{algorithm} 
\caption{$\listbubbles(s_1,\alpha_1,s_2,\alpha_2, B, G)$} \label{alg:weighted:listbubbles}
\If{$s_1 = s_2$}{ 
  \uIf{$B \neq \emptyset$}{
    output(B) \\ \label{alg2:output}
    \bf return 
  } 
  \ElseIf{there is no $(s, t, \alpha_1,\alpha_2)$-bubble, where $s = s_1 = s_2$}{ \label{alg2:initial_test} 
    \bf return
  } 
} 
choose $u \in \{s_1,s_2\}$, such that $N^+(u) \neq \emptyset$ \\
\For{$v \in N^+(u)$}{
  \If{there is a pair of compatible paths using $(u,v)$ in $G$}{ \label{alg2:include}
    \uIf{$u = s_1$}{ 
      $\listbubbles(v, \alpha_1 - w(s_1,v), s_2, \alpha_2, B \cup (s_1,v), G - s_1)$ \label{alg2:rec_include1}
    } 
    \Else{ 
      $\listbubbles(s_1, \alpha_1, v, \alpha_2 - w(s_2,v), B \cup (s_2,v), G - s_2)$ \label{alg2:rec_include2}
    } 
  } 
}
\If{there is a pair of compatible paths in $G - \{(u,v) | v \in N^+(u) \}$}{ \label{alg2:exclude}
  $\listbubbles(s_1, \alpha_1, s_2, \alpha_2, B, G - \{(u,v) | v \in N^+(u) \})$ \label{alg2:rec_exclude}
}
\end{algorithm}

The correctness of Algorithm~\ref{alg:weighted:listbubbles} follows
directly from the relation given in Eq.~\ref{eq:bubbles:partition} and
the correctness of the tests performed in lines \ref{alg2:include}
and \ref{alg2:exclude}. In the remaining of this section, we describe
a possible implementation for the tests, prove correctness and analyze
the time complexity. Finally, we prove that
Algorithm~\ref{alg:weighted:listbubbles} has an $O(n(m + n \log n))$
delay.

\begin{lemma} \label{lem:bubbles:dist}
  There exists a pair of compatible paths for $s_1 \neq s_2$ in $G$ if
  and only if there exists $t$ such that $d(s_1,t) \leq \alpha_1$ and
  $d(s_2, t) \leq \alpha_2$.
\end{lemma}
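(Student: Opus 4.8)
The plan is to prove the two implications separately, using shortest paths as the main tool. The forward direction is essentially free: if $p_1 = s_1 \leadsto t$ and $p_2 = s_2 \leadsto t$ form a pair of compatible paths, then $p_1$ is in particular an $s_1t$-path, so $d(s_1,t) \le w(p_1) \le \alpha_1$, and symmetrically $d(s_2,t) \le w(p_2) \le \alpha_2$; the common endpoint $t$ is then the witness required on the right-hand side. So the real content is the converse, and there the idea is to take shortest paths from $s_1$ and from $s_2$ to $t$ and then \emph{repair} them so that they become internally vertex-disjoint, at the cost of possibly ending at a different common vertex.

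Concretely, for the converse I would proceed as follows. Assume some vertex $t$ satisfies $d(s_1,t) \le \alpha_1$ and $d(s_2,t) \le \alpha_2$; in particular both distances are finite. Fix a shortest $s_1t$-path $q_1$ and a shortest $s_2t$-path $q_2$, so that $w(q_1) = d(s_1,t) \le \alpha_1$ and $w(q_2) = d(s_2,t) \le \alpha_2$. Walk along $q_1$ starting from $s_1$ towards $t$, and let $v$ be the \emph{first} vertex on this walk that also belongs to $q_2$; such a $v$ exists because $t$ lies on both paths. Let $p_1$ be the prefix of $q_1$ from $s_1$ to $v$, and $p_2$ the prefix of $q_2$ from $s_2$ to $v$ (well defined since $v \in q_2$). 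Since arc weights are non-negative, a prefix is never heavier than the whole path, so $w(p_1) \le w(q_1) \le \alpha_1$ and $w(p_2) \le w(q_2) \le \alpha_2$. The choice of $v$ as the first common vertex along $q_1$ guarantees that no vertex of $p_1$ except $v$ lies on $q_2$, hence $V(p_1) \cap V(p_2) = \{v\}$: the two paths meet only at the shared endpoint $v$. Thus $p_1,p_2$ is a pair of compatible paths for $s_1$ and $s_2$ with common target $v$ (which need not equal $t$), which is exactly what we need.

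The main obstacle — the step that needs to be argued with care rather than asserted — is the disjointness claim, and in particular the fact that one must truncate at the \emph{first} common vertex along $q_1$, not, say, the last one: truncating at the last common vertex could leave the two prefixes crossing each other, whereas the first-common-vertex choice immediately forces $V(p_1) \cap V(p_2) = \{v\}$ by a short contradiction (any earlier shared vertex would contradict minimality of $v$'s position on $q_1$). A minor additional point to dispatch is the degenerate case $v \in \{s_1,s_2\}$: then one of $p_1,p_2$ is the single-vertex path, which has no internal vertices and weight $0 \le \alpha_i$, so it causes no trouble. Everything else is routine.
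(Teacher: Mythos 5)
Your proof is correct and follows essentially the same route as the paper's: the forward direction is immediate from $d \le w$, and for the converse both arguments truncate the two bounded-weight paths at their first common vertex and use non-negativity of the weights to conclude the prefixes still respect the bounds. You are somewhat more careful than the paper in specifying that "first" is taken along $q_1$ and in dispatching the degenerate case $v \in \{s_1,s_2\}$, but the underlying idea is identical.
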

\begin{proof}
  Clearly this is a necessary condition. Let us prove that it is also
  sufficient. Consider the paths $p_1 = s_1 \leadsto t$ and $p_2 = s_2
  \leadsto t$, such that $w(p_1) \leq \alpha_1$ and $w(p_2) \leq
  \alpha_2$. Let $t'$ be the first vertex in common between $p_1$ and
  $p_2$. The sub-paths $p_1' = s_1 \leadsto t'$ and $p_2' = s_2
  \leadsto t'$ are internally vertex-disjoint, and since the weights
  are non-negative, they also satisfy $w(p_1') \leq w(p_1) \leq
  \alpha_1$ and $w(p_2') \leq w(p_2) \leq \alpha_2$.
\end{proof}

Using this lemma, we can test for the existence of a pair of
compatible paths for $s_1 \neq s_2$ in $O(m + n \log n)$ time. Indeed,
let $T_1$ be a shortest path tree of $G$ rooted in $s_1$ and truncated
at distance $\alpha_1$, the same for $T_2$, meaning that, for any
vertex $w$ in $T_1$ (resp. $T_2$), the tree path between $s_1$ and $w$
(resp. $s_2$ and $w$) is a shortest one.  It is not difficult to prove
that the intersection $T_1 \cap T_2$ is not empty if and only if there
is a pair of compatible paths for $s_1$ and $s_2$ in $G$. Moreover,
each shortest path tree can be computed in $O(m + n\log n)$ time,
using Dijkstra's algorithm (\cite{Cormen01}). Thus, in order to test
for the existence of a $(s, t, \alpha_1,\alpha_2)$-bubble for some $t$
in $G$, we can test, for each arc $(s,v)$ outgoing from $s$, the
existence of a pair of compatible paths for $s \neq v$ and $v$ in
$G$. Since $s$ has at most $n$ out-neighbors, we obtain
Lemma~\ref{lem:initial_test}.
  
\begin{lemma} \label{lem:initial_test}
  The test of line \ref{alg2:initial_test} can be performed in $O(n (m
  + n \log n))$.
\end{lemma}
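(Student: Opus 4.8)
The plan is to reduce this ``does any $(s,t,\alpha_1,\alpha_2)$-bubble exist'' test to a bounded number of the asymmetric reachability tests already handled by Lemma~\ref{lem:bubbles:dist}. First I would observe that an $(s,t,\alpha_1,\alpha_2)$-bubble, being a pair of internally vertex-disjoint $st$-paths, necessarily leaves $s$ through two \emph{distinct} arcs $(s,v_1)$ and $(s,v_2)$ (in a simple graph, identical first arcs would force a shared internal vertex or two identical paths); moreover, since both paths are simple they never return to $s$, so after deleting those first arcs the two remaining subpaths run from $v_1$ and $v_2$ to the common endpoint $t$ entirely inside $G-s$, with weights at most $\alpha_1 - w(s,v_1)$ and $\alpha_2 - w(s,v_2)$. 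Hence some $(s,t,\alpha_1,\alpha_2)$-bubble exists if and only if there are distinct $v_1,v_2 \in N^+(s)$ and a vertex $t$ with $d_{G-s}(v_1,t) \le \alpha_1 - w(s,v_1)$ and $d_{G-s}(v_2,t) \le \alpha_2 - w(s,v_2)$. The forward direction is immediate, since shortest distances are at most the realised path weights; for the converse I would reuse the ``truncate at the first common vertex'' argument from the proof of Lemma~\ref{lem:bubbles:dist} to turn two (possibly overlapping) budget-respecting shortest paths from $v_1$ and $v_2$ to $t$ into two internally vertex-disjoint paths of no larger weight, which, prepended with $(s,v_1)$ and $(s,v_2)$, form a genuine bubble (towards $t$, or towards an earlier vertex when the truncation point is reached first).

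Given this characterisation, the algorithm is: for each of the at most $n$ out-neighbours $v$ of $s$, run Dijkstra's algorithm in $G-s$ from $v$ to obtain the distance array $d_{G-s}(v,\cdot)$ (equivalently, a shortest-path tree truncated at the relevant budget) in $O(m + n \log n)$ time (\cite{Cormen01}); this is $O(n(m + n \log n))$ in total. Then I would sweep over all candidate targets $t \in V$: for each $t$, determine in $O(|N^+(s)|)$ time which out-neighbours satisfy the $\alpha_1$-inequality and which satisfy the $\alpha_2$-inequality, and report success as soon as some $t$ admits two \emph{distinct} out-neighbours, one witnessing each inequality. This sweep costs $O(n\,|N^+(s)|) = O(n^2)$, which is absorbed by the Dijkstra cost (as $m \ge n-1$ or $n \log n \ge n$). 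Summing the two contributions gives the claimed $O(n(m + n \log n))$ bound.

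The step that requires the most care is the converse of the characterisation: a priori the two shortest paths in $G-s$ may share vertices, and one of them could even pass through the other's starting out-neighbour, so one must argue explicitly (via the first-common-vertex truncation of Lemma~\ref{lem:bubbles:dist}, together with a small case check on whether that vertex equals $t$ or the peeled neighbour) that a bona fide bubble can always be extracted. The only other point to verify is that $n$ single-source computations genuinely suffice in place of the $\Theta(n^2)$ pairwise tests a naive reading of the condition would suggest — which is precisely what the per-$t$ sweep accomplishes. This same truncated shortest-path-tree primitive is also what realises, with minor modifications, the emptiness tests of lines~\ref{alg2:include} and~\ref{alg2:exclude}.
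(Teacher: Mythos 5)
Your proof is correct and rests on the same two ingredients as the paper's: the distance characterization of Lemma~\ref{lem:bubbles:dist} (together with its first-common-vertex truncation, which you rightly flag as the delicate step) and a reduction to $O(n)$ single-source shortest-path computations. The only substantive difference is how the two paths of the bubble are anchored. The paper peels off a single arc $(s,v)$ and, for each out-neighbour $v$, tests for a pair of compatible paths with sources $s$ and $v$ by intersecting two truncated shortest-path trees (so $O(n)$ pairwise tests, each two Dijkstra runs, and no sweep); you peel off both initial arcs and recover the pairing of distinct witnesses $v_1,v_2$ through a per-target sweep over the $n$ precomputed distance arrays in $G-s$. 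Both organizations give the claimed $O(n(m+n\log n))$ bound; your symmetric version costs you the extra ``distinct out-neighbours'' bookkeeping and the explicit case analysis in the converse, but in exchange it works entirely inside $G-s$ and makes the disjointness of the reconstructed bubble cleaner to argue than the paper's asymmetric reduction.
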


The test of line~\ref{alg2:include} could be implemented using the
same idea. For each $v \in N^+(u)$, we test for the existence of
a pair of compatible paths for, say, $u = s_2$ (the same would apply
for $s_1$) and $v$ in $G - u$, that is $v$ is in the subgraph of $G$
obtained by eliminating from $G$ the vertex $u$ and all the arcs
incoming to or outgoing from $u$.  This would lead to a total cost of
$O(n(m+ n \log n))$ for all tests of line~\ref{alg2:include} in each
call. However, this is not enough to achieve an $O(n(m + n \log n))$
delay. In Lemma~\ref{lem:test2}, we present an improved strategy to
perform these tests in $O(m+ n \log n)$ total time.

\begin{lemma} \label{lem:test2}
  The test of line~\ref{alg2:include}, for all $v \in N^+(u)$,
  can be performed in $O(m + n \log n)$ total time.
\end{lemma}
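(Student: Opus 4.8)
The plan is to reduce every one of the tests performed by line~\ref{alg2:include} during a single call of $\listbubbles$ to a shortest-distance query of the kind characterised in Lemma~\ref{lem:bubbles:dist}, and then to answer \emph{all} of these queries (one per out-neighbour of the chosen vertex $u$) using only two truncated Dijkstra computations rather than one per neighbour. Assume w.l.o.g.\ that the vertex selected in the current call is $u=s_2$, the case $u=s_1$ being symmetric after exchanging the roles of the two sources and of the two budgets; the degenerate situation $s_1=s_2$ reduces, via \eqref{eq:bubbles:partition}, to the same kind of queries rooted at $s$ and at the candidate $v$, and is handled along the lines of the proof of Lemma~\ref{lem:initial_test}. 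Fix $v\in N^+(s_2)$ and set $\alpha_2'=\alpha_2-w(s_2,v)$. By \eqref{eq:bubbles:partition}, there is a pair of compatible paths in $G$ using the arc $(s_2,v)$ iff $\mathcal{P}_{\alpha_1,\alpha_2'}(s_1,v,G-s_2)\neq\emptyset$, and by Lemma~\ref{lem:bubbles:dist} (note that if $s_1=v$ the empty pair already witnesses the claim whenever $\alpha_2'\ge 0$) this holds iff there is a vertex $t$ with $d_{G-s_2}(s_1,t)\le\alpha_1$ and $w(s_2,v)+d_{G-s_2}(v,t)\le\alpha_2$.

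The crucial observation is that the graph $G-s_2$ does not depend on $v$. So first I would run a single Dijkstra from $s_1$ in $G-s_2$, halting as soon as the minimum key in the heap exceeds $\alpha_1$; this identifies, in $O(m+n\log n)$ time (\cite{Cormen01}), the set $V_1:=\{\,t: d_{G-s_2}(s_1,t)\le\alpha_1\,\}$, which is nonempty since $s_1\in V_1$. Then, working in the reverse graph $(G-s_2)^R$, I would run one multi-source Dijkstra whose sources are exactly the vertices of $V_1$, each with initial key $0$, truncated at $\alpha_2$. Because a distance from $t$ to $v$ in $(G-s_2)^R$ equals the distance from $v$ to $t$ in $G-s_2$, this run computes, for every vertex $v$, the value $g(v):=\min_{t\in V_1} d_{G-s_2}(v,t)$ (reported as ``$>\alpha_2$'' when it exceeds the truncation bound), again in $O(m+n\log n)$ time. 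Finally, for each $v\in N^+(s_2)$ the test of line~\ref{alg2:include} answers \true{} precisely when $w(s_2,v)+g(v)\le\alpha_2$, which is an $O(1)$ lookup once $g$ has been tabulated; the loop over $N^+(s_2)$ thus adds only $O(n)$, for an $O(m+n\log n)$ total, as claimed.

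What remains, and what I expect to be the only delicate point, is the correctness of this last equivalence; the efficiency hinges on \emph{not} running a separate Dijkstra per $v$ (which would only give the $O(n(m+n\log n))$ of the naive bound) but on batching the $v$-side queries through the single reverse multi-source computation, and this batching must be justified. The implication ``a compatible pair using $(s_2,v)$ exists $\Rightarrow w(s_2,v)+g(v)\le\alpha_2$'' is immediate: a witnessing target lies in $V_1$, so $g(v)$ is at most its distance from $v$. For the converse, pick $t\in V_1$ attaining $g(v)$; we obtain a path $p_1=s_1\leadsto t$ in $G-s_2$ with $w(p_1)\le\alpha_1$ and a path $p_2=(s_2,v)\cdot(v\leadsto t)$ with $w(p_2)\le\alpha_2$, but $p_1$ and $p_2$ need not be internally vertex-disjoint. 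Here I would invoke the same ``first common vertex'' argument used to prove Lemma~\ref{lem:bubbles:dist}: letting $t'$ be the first vertex of $p_1$ that also lies on $p_2$, the prefixes of $p_1$ and of $p_2$ up to $t'$ are internally vertex-disjoint and, the weights being non-negative, no heavier than $p_1$ and $p_2$; their union is the desired compatible pair using $(s_2,v)$. One still has to dispose of the boundary cases $v=s_1$, $t'=s_1$, $t'=v$ and the empty-path cases, but each of them collapses to the inequality $w(s_2,v)\le\alpha_2$, i.e.\ to $\alpha_2'\ge 0$, so they are consistent with the stated check. This establishes Lemma~\ref{lem:test2}; and, as a sanity check, iterating the very same two-Dijkstra scheme over the out-arcs of $s$ is exactly what underlies the $O(n(m+n\log n))$ bound of Lemma~\ref{lem:initial_test}, so the two constructions are uniform.
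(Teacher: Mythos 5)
Your proposal is correct and follows essentially the same route as the paper: one truncated Dijkstra from $s_1$ in $G-s_2$ to find the set of admissible targets, then a single reverse-graph computation that tabulates, for every $v$, the minimum distance to that set (the paper implements your multi-source Dijkstra by adding a super-sink $r$ with zero-weight arcs from all vertices within distance $\alpha_1$ of $s_1$ and computing one shortest-path tree rooted at $r$ in the reverse graph, which is the same algorithm). Your extra care about the converse direction and the $v=s_1$ boundary case is sound but is already covered by Lemma~\ref{lem:bubbles:dist} in the paper's version.
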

\begin{proof}
  Let us assume that $u = s_2$, the case $u = s_1$ is
  symmetric. From Lemma~\ref{lem:bubbles:dist}, for each $v \in N^+(u)$,
  we have that deciding if there exists a pair of compatible paths for
  $s_1$ and $s_2$ in $G$ that uses $(u,v)$ is equivalent to deciding
  if there exists $t$ satisfying (i) $d(s_1,t) \leq \alpha_1$ and (ii)
  $d(v,t) \leq \alpha_2 - w(u,v)$ in $G - u$.

  First, we compute a shortest path tree rooted in $s_1$ for
  $G-u$. Let $V_{\alpha_1}$ be the set of vertices at a distance at
  most $\alpha_1$ from $s_1$. We build a graph $G'$ by adding a new
  vertex $r$ to $G-u$, and for each $y \in V_{\alpha_1}$, we add the
  arcs $(y,r)$ with weight $w(y,r) = 0$.  We claim that there exists
  $t$ in $G-u$ satisfying conditions (i) and (ii) if and only if
  $d(v,r) \leq \alpha_2 - w(u,v)$ in $G'$. Indeed, if $t$ satisfies
  (i) we have that the arc $(t,r)$ is in $G'$, so $d(t,r) = 0$. From
  the triangle inequality and (ii), $d(v,r) \leq d(v,t) + d(t,r) =
  d(v,t) \leq \alpha_2 - w(u,v)$. The other direction is trivial.

  Finally, we compute a shortest path tree $T_r$ rooted in $r$ for the
  reverse graph $G'^R$, obtained by reversing the direction of the
  arcs of $G'$. With $T_r$, we have the distance from any vertex to
  $r$ in $G'$, i.e. we can answer the query $d(v,r) \leq \alpha_2 -
  w(u,v)$ in constant time. Observe that the construction of $T_r$
  depends only on $G-u$, $s_1$ and $\alpha_1$, i.e. $T_r$ is the same
  for all out-neighbors $v \in N^+(u)$. Therefore, we can build
  $T_r$ only once in $O(m + n \log n)$ time, with two iterations of
  Dijkstra's algorithm, and use it to answer each test of
  line~\ref{alg2:include} in constant time.
\end{proof}

\begin{theorem}
  Algorithm~\ref{alg:weighted:listbubbles} has $O(n(m + n \log n))$ delay.
\end{theorem}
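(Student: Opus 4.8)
The plan is to analyse Algorithm~\ref{alg:weighted:listbubbles} as a depth‑first traversal of its recursion tree and to apply the standard binary‑partition delay argument: show that (a)~every leaf of the recursion tree outputs exactly one solution, (b)~the recursion tree has height $O(n)$, and (c)~the work performed at each node, apart from the recursive calls, is $O(m+n\log n)$. Since a DFS passes through only $O(\text{height})$ nodes between two consecutive leaves, (a)--(c) give a delay of $O(n(m+n\log n))$.

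First I would settle the structure of the recursion tree, starting from the call $\listbubbles(s,\alpha_1,s,\alpha_2,\emptyset,G)$. The recurrence \eqref{eq:bubbles:partition} together with the correctness of the three tests (lines~\ref{alg2:initial_test}, \ref{alg2:include} and~\ref{alg2:exclude}) guarantees that no recursive call is issued on an empty subproblem; hence every root‑to‑leaf branch terminates at a node with $s_1=s_2$ and $B\neq\emptyset$, which outputs one pair of compatible paths at line~\ref{alg2:output}, and this pair is an $(s,t,\alpha_1,\alpha_2)$‑bubble because $s_1=s_2=s$ initially. Distinct leaves give distinct bubbles since the union in \eqref{eq:bubbles:partition} is disjoint, so correctness holds and, crucially for the delay, \emph{every} leaf produces output. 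Next I would bound the height. An include step (lines~\ref{alg2:rec_include1}--\ref{alg2:rec_include2}) replaces $s_1$ (resp.\ $s_2$) by an out‑neighbour and deletes the old vertex from the current graph; therefore along any root‑to‑leaf path the successive values taken by $s_1$ are pairwise distinct (they form a simple path of $G$), and likewise for $s_2$, so there are $O(n)$ include steps. An exclude step (line~\ref{alg2:rec_exclude}) on a vertex $u\in\{s_1,s_2\}$ requires $N^+(u)\neq\emptyset$ but turns $u$ into a sink of the current graph permanently; since at most $n$ vertices ever play the role of $s_1$ and at most $n$ that of $s_2$, there are $O(n)$ exclude steps. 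Hence the recursion tree has height $O(n)$.

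Then I would bound the work at a single node, charging every recursive call to its child. A node with $s_1=s_2$ and $B\neq\emptyset$ merely outputs $B$ and returns, in $O(n)$ time. A node with $s_1=s_2$ and $B=\emptyset$ occurs only at the root (as $B$ is nonempty as soon as $s_1$ or $s_2$ differs from $s$), where line~\ref{alg2:initial_test} is executed once in $O(n(m+n\log n))$ time by Lemma~\ref{lem:initial_test}; this is paid before the first output and thus does not affect the delay bound. At every other node $s_1\neq s_2$, so: the whole batch of line~\ref{alg2:include} tests over all $v\in N^+(u)$ costs $O(m+n\log n)$ by Lemma~\ref{lem:test2}; and the single line~\ref{alg2:exclude} test costs $O(m+n\log n)$ because, by Lemma~\ref{lem:bubbles:dist}, it amounts to computing in $G-\{(u,v):v\in N^+(u)\}$ the shortest‑path trees from $s_1$ truncated at distance $\alpha_1$ and from $s_2$ truncated at distance $\alpha_2$ and checking whether they intersect — two runs of Dijkstra's algorithm. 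So the work attributable to a node is $O(m+n\log n)$.

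Finally I would assemble the bound: between two consecutive outputs the DFS follows a leaf‑to‑leaf path of $O(n)$ nodes, re‑incurring the $O(m+n\log n)$ cost of a node only on its first entry (moving between siblings or returning costs $O(1)$, using the cached line~\ref{alg2:include} data), so the delay is $O(n(m+n\log n))$; the time before the first output is the leftmost root‑to‑leaf descent, $O(n(m+n\log n))$, plus the one‑time root test $O(n(m+n\log n))$, and symmetrically after the last output. I expect the main obstacle to be the height argument — making precise that neither long runs of consecutive exclude steps nor the (degenerate) reappearance of $s_1=s_2$ below the root can push the recursion depth beyond $O(n)$, the key invariants being ``once a sink, always a sink'' for excluded vertices and the fact that the emptiness tests forbid recursing into a graph in which both $s_1$ and $s_2$ are sinks.
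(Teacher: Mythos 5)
Your proposal is correct and follows essentially the same route as the paper's own proof: bound the recursion-tree height by $O(n)$ (each call removes a vertex or a whole out-neighborhood), charge $O(m+n\log n)$ per node via Lemma~\ref{lem:test2} and Lemma~\ref{lem:bubbles:dist}, note that the $O(n(m+n\log n))$ test of line~\ref{alg2:initial_test} is paid only once at the root, and multiply by the $O(n)$ leaf-to-leaf distance. The extra care you take with the height invariant and with not re-paying the batched include tests is sound but does not change the argument.
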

\begin{proof}
  The height of the recursion tree is bounded by $2n$ since at each
  call the size of the graph is reduced either by one vertex
  (lines~\ref{alg2:rec_include1} and \ref{alg2:rec_include2}) or all
  its out-neighborhood (line~\ref{alg2:rec_exclude}). After at most
  $2n$ recursive calls, the graph is empty. Since every leaf of the
  recursion tree outputs a solution and the distance between two
  leaves is bounded by $4n$, the delay is $O(n)$ multiplied by the
  cost per node (call) in the recursion tree. From
  Lemma~\ref{lem:bubbles:dist}, line~\ref{alg2:exclude} takes $O(m +
  n \log n)$ time, and from Lemma~\ref{lem:test2},
  line~\ref{alg2:include} takes $O(m + n \log n)$ total time. This
  leads to an $O(m + n \log n)$ time per call, excluding
  line~\ref{alg2:initial_test}. Lemma~\ref{lem:initial_test} states
  that the cost for the test in line~\ref{alg2:initial_test} is $O(n(m
  + n \log n))$, but this line is executed only once, at the root of
  the recursion tree. Therefore, the delay is $O(n (m + n \log n))$.
\end{proof}

\subsection{Implementation and experimental results}
We now discuss the details necessary for an efficient implementation
of Algorithm~\ref{alg:weighted:listbubbles} and the results on two
sets of experimental tests. For the first set, our goal is to compare
the running time of Dijkstra's algorithm (for typical cDBGs arising
from applications) using several priority queue implementations. With
the second set, our objective is to compare an implementation of
Algorithm~\ref{alg:weighted:listbubbles} to the \ks listing algorithm
given in Section~\ref{sec:kissplice:algorithm}.  For both cases, we
retrieved from the \emph{Short Read Archive} (accession code
ERX141791) 14M Illumina 79bp single-ended reads of a \emph{Drosophila
  melanogaster} RNA-seq experiment. We then built the de Bruijn graph
for this dataset with $k = 31$. In order to remove likely sequencing
errors, we discarded all $k$-mers that are present less than 3 times
in the dataset. The resulting graph contained 22M $k$-mers, which
after compressing all maximal linear paths, corresponded to 600k
vertices.

In order to perform a fair comparison with \ks, we pre-processed the
graph as described in Section~\ref{sec:kissplice:algorithm}. Namely,
we decomposed the underlying undirected graph into biconnected
components (BCCs) and compressed all non-branching bubbles with equal
path lengths. In the end, after discarding all BCCs with less than 4
vertices (as they cannot contain a bubble), we obtained 7113 BCCs, the
largest one containing 24977 vertices.  This pre-processing is
lossless, i.e. every bubble in the original graph is entirely
contained in exactly one BCC. In \ks, the enumeration is then done in
each BCC independently.

\subsubsection{Dijkstra's algorithm with different priority queues} \label{subsec:dijkstra}
Dijkstra's algorithm is an important subroutine of
Algorithm~\ref{alg:weighted:listbubbles} that may have a big influence
on its running time. Actually, the time complexity of
Algorithm~\ref{alg:weighted:listbubbles} can be written as $O(n
t(n,m))$, where $t(n,m)$ is the complexity of Dijkstra's
algorithm. There are several variants of this algorithm
(\cite{Cormen01}), with different complexities depending on the
priority queue used, including binary heaps ($O(m \log n)$) and
Fibonacci heaps ($O(m + n \log n)$). In the particular case where all
the weights are non-negative integers bounded by $C$, Dijkstra's
algorithm can be implemented using radix heaps ($O(m + n \log C)$)
(\cite{Tarjan90}). As stated in Section~\ref{sec:debruijn_and_as}, the
weights of the de Bruijn graphs considered here are integer, but not
necessarily bounded.  However, we can remove from the graph all arcs
with weights greater than $\alpha_1$ since these are not part of any
$(s,t,\alpha_1, \alpha_2)$-bubble. This results in a complexity of
$O(m + n \log \alpha_1)$ for Dijkstra's algorithm.

\begin{figure}[Htbp]
  \center
  \includegraphics[width=0.5\linewidth]{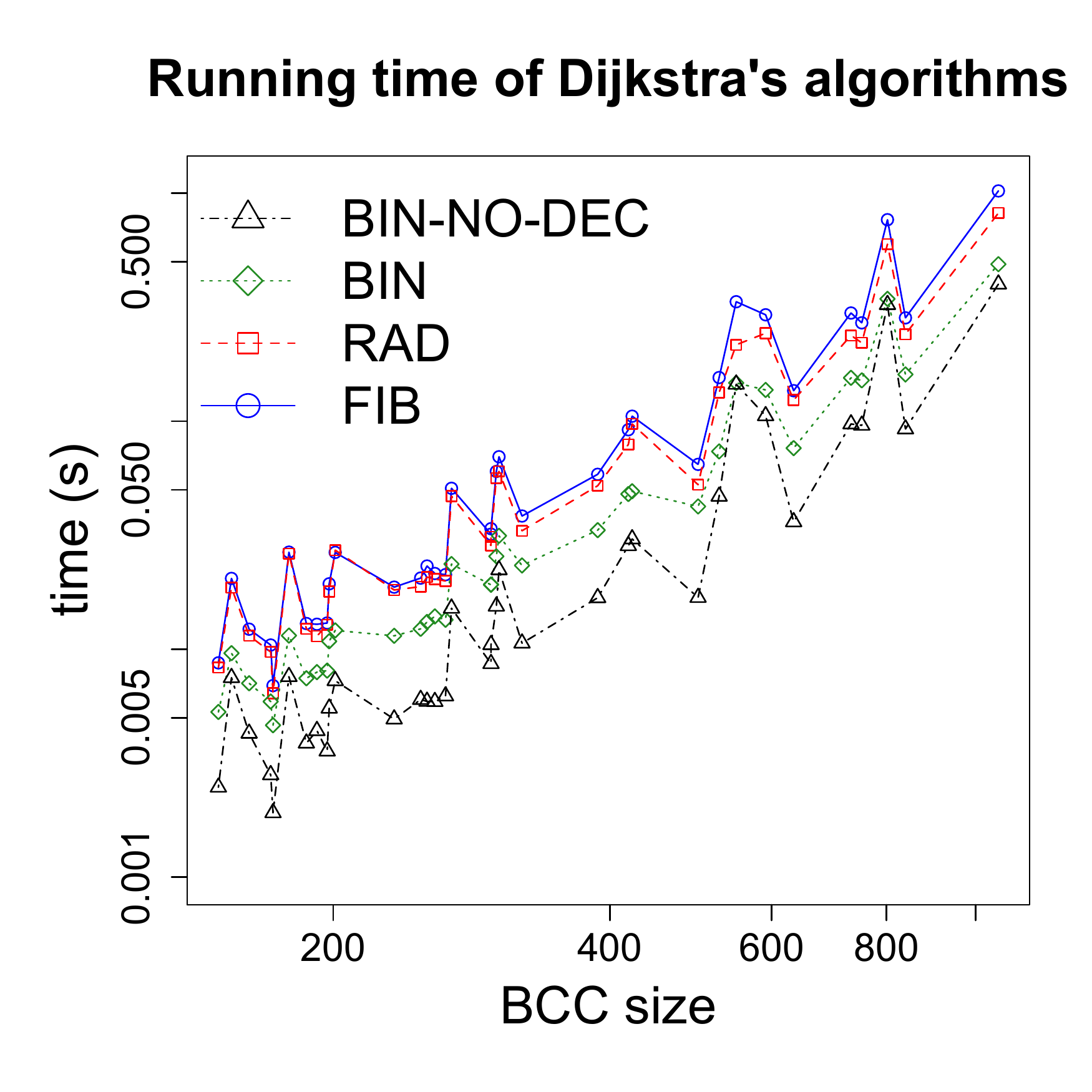}\label{fig:running_time_dijkstra}
  \caption{Running times for each version of Dijkstra's algorithm:
    using Fibonacci heaps (FIB), using radix heaps (RAD), using binary
    heaps (BIN) and using binary heaps without the decrease-key
    operation (BIN-NO-DEC).  The tests were done including all BCCs
    with more than 150 vertices. Both axes are in logarithmic
    scale.}  \label{fig:dijkstra}
\end{figure}

We implemented four versions of Lemma~\ref{lem:initial_test} (for
deciding whether there exists a $(s,t,\alpha_1, \alpha_2)$-bubble for
a given $s$) each using a different version of Dijkstra's algorithm:
with Fibonacci heaps (FIB), with radix heaps (RAD), with binary heaps
(BIN) and with binary heaps without decrease-key operation
(BIN-NO-DEC). The last version is Dijkstra's modified in order not to
use the decrease-key operation so that we can use a simpler binary
heap that does not support such operation (\cite{Chen07}).  We then
ran the four versions, using $\alpha_1 = 1000$ and $\alpha_2 = 2k - 2
= 60$, for each vertex in all the BCCs with more than 150
vertices. The results are shown\footnote{The results for the largest
  BCC were omitted from the plot to improve the visualization. It took
  942.15s for FIB and 419.84s for BIN-NO-DEC.}  in
Fig.~\ref{fig:dijkstra}. Contrary to the theoretical predictions, the
versions with the best complexities, FIB and RAD, have the worst
results on this type of instances. It is clear that the best version
is BIN-NO-DEC, which is at least 2.2 times and at most 4.3 times
faster than FIB.  One of the factors possibly contributing to a
better performance of BIN and BIN-NO-DEC is the fact that cDBGs, as
stated in Section~\ref{sec:debruijn_and_as}, have bounded degree and
are therefore sparse.

\subsubsection{Comparison with the \ks algorithm} \label{subsec:comp_kissplice}

In this section, we compare Algorithm~\ref{alg:weighted:listbubbles}
to the \ks enumeration algorithm given in
Section~\ref{sec:kissplice:algorithm}. To this purpose, we implemented
Algorithm~\ref{alg:weighted:listbubbles} using Dijkstra's algorithm
with binary heaps without the decrease-key operation for all shortest
paths computation. In this way, the delay of
Algorithm~\ref{alg:weighted:listbubbles} becomes $O(nm \log n)$, which
is worse than the one using Fibonacci or radix heaps, but is faster in
practice. The goal of the \ks enumeration is to find all the potential
alternative splicing events in a BCC, i.e. to find all
$(s,t,\alpha_1,\alpha_2)$-bubbles satisfying also the lower bound
constraint (Section~\ref{sec:debruijn_and_as}).  In order to compare
\ks (version 1.6) to Algorithm~\ref{alg:weighted:listbubbles}, we
(naively) modified the latter so that, whenever a
$(s,t,\alpha_1,\alpha_2)$-bubble is found, we check whether it also
satisfies the lower bound constraints and output it only if it does.

In \ks, the upper bound $\alpha_1$ is an open parameter, $\alpha_2 =
k-1$ and the lower bound is $k - 7$. Moreover, there are two stop
conditions: either when more than 10000
$(s,t,\alpha_1,\alpha_2)$-bubbles satisfying the lower bound
constraint have been enumerated or a 900s timeout has been reached.
We ran both \ks (version 1.6) and the modified
Algorithm~\ref{alg:weighted:listbubbles}, with the stop conditions,
for all 7113 BCCs, using $\alpha_2 = 60$, a lower bound of $54$ and
$\alpha_1 = 250,500,750$ and $1000$. The running times for all BCCs
with more than 150 vertices (there are 37) is shown\footnote{The BCCs
  where \emph{both} algorithms reach the timeout were omitted from the
  plots to improve the visualization. For $\alpha_1 = 250, 500, 750$
  and $1000$ there are 1, 2, 3 and 3 BCCs omitted, respectively.}  in
Fig.~\ref{fig:running_time}. For the BCCs smaller than 150 vertices,
both algorithms have comparable (very small) running times. For
instance, with $\alpha_1 = 250$, \ks runs in 17.44s for \emph{all}
7113 BCCs with less than 150 vertices, while
Algorithm~\ref{alg:weighted:listbubbles} runs in 15.26s.

The plots in Fig.~\ref{fig:running_time} show a trend of increasing
running times for larger BCCs, but the graphs are not very smooth,
i.e. there are some sudden decreases and increases in the running
times observed. This is in part due to the fact that the time complexity of 
Algorithm~\ref{alg:weighted:listbubbles} is output sensitive. The delay of the
algorithm is $O(nm \log n)$, but the total time complexity is
$O(|\mathcal{B}|nm \log n)$, where $|\mathcal{B}|$ is the number of
$(s,t,\alpha_1,\alpha_2)$-bubbles in the graph. The number of bubbles
in the graph depends on its internal structure. A large graph does not
necessarily have a large number of bubbles, while a small graph may have
an exponential number of bubbles. Therefore, the value of
$|\mathcal{B}|nm \log n$ can decrease by increasing the size of the
graph.

Concerning now the comparison between the algorithms, as we can see in
Fig.~\ref{fig:running_time}, Algorithm~\ref{alg:weighted:listbubbles}
is usually several times faster (keep in mind that the axes are in
logarithmic scale) than \ks, with larger differences when $\alpha_1$
increases (10 to 1000 times faster when $\alpha_1 = 1000$).  In some
instances however, \ks is faster than
Algorithm~\ref{alg:weighted:listbubbles}, but (with only one exception
for $\alpha_1 = 250$ and $\alpha_1 = 500$) they correspond either to
very small instances or to cases where only 10000 bubbles were
enumerated and the stop condition was met.  Finally, using
Algorithm~\ref{alg:weighted:listbubbles}, the computation finished
within 900s for all but 3 BCCs, whereas using \ks, 11 BCCs remained
unfinished after 900s.  The improvement in time therefore enables us
to have access to bubbles that could not be enumerated with the
previous approach.

\begin{figure}[Htbp]
  \center
  \subfloat[]{\includegraphics[width=0.5\linewidth]{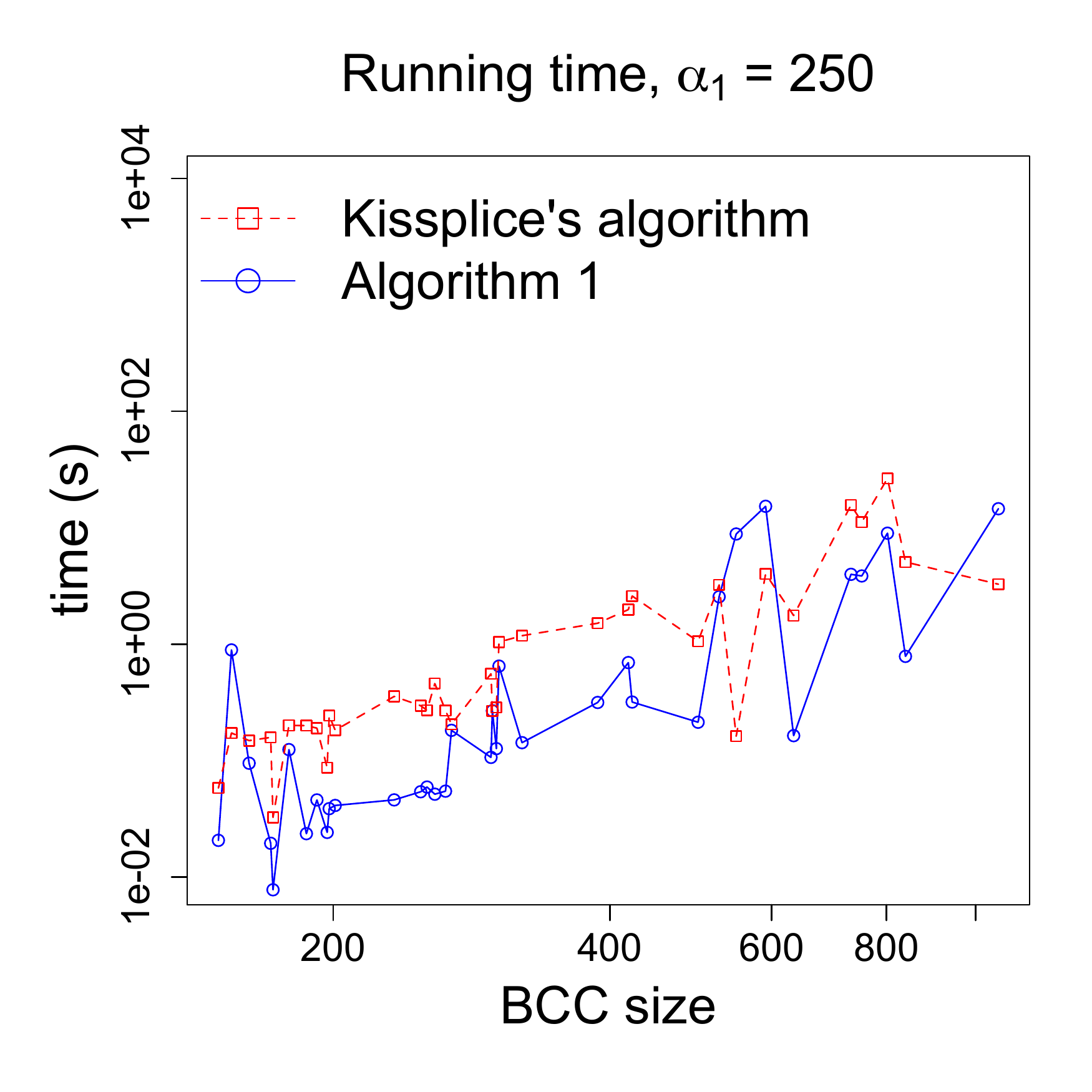}\label{fig:running_time_250}} 
  \subfloat[]{\includegraphics[width=0.5\linewidth]{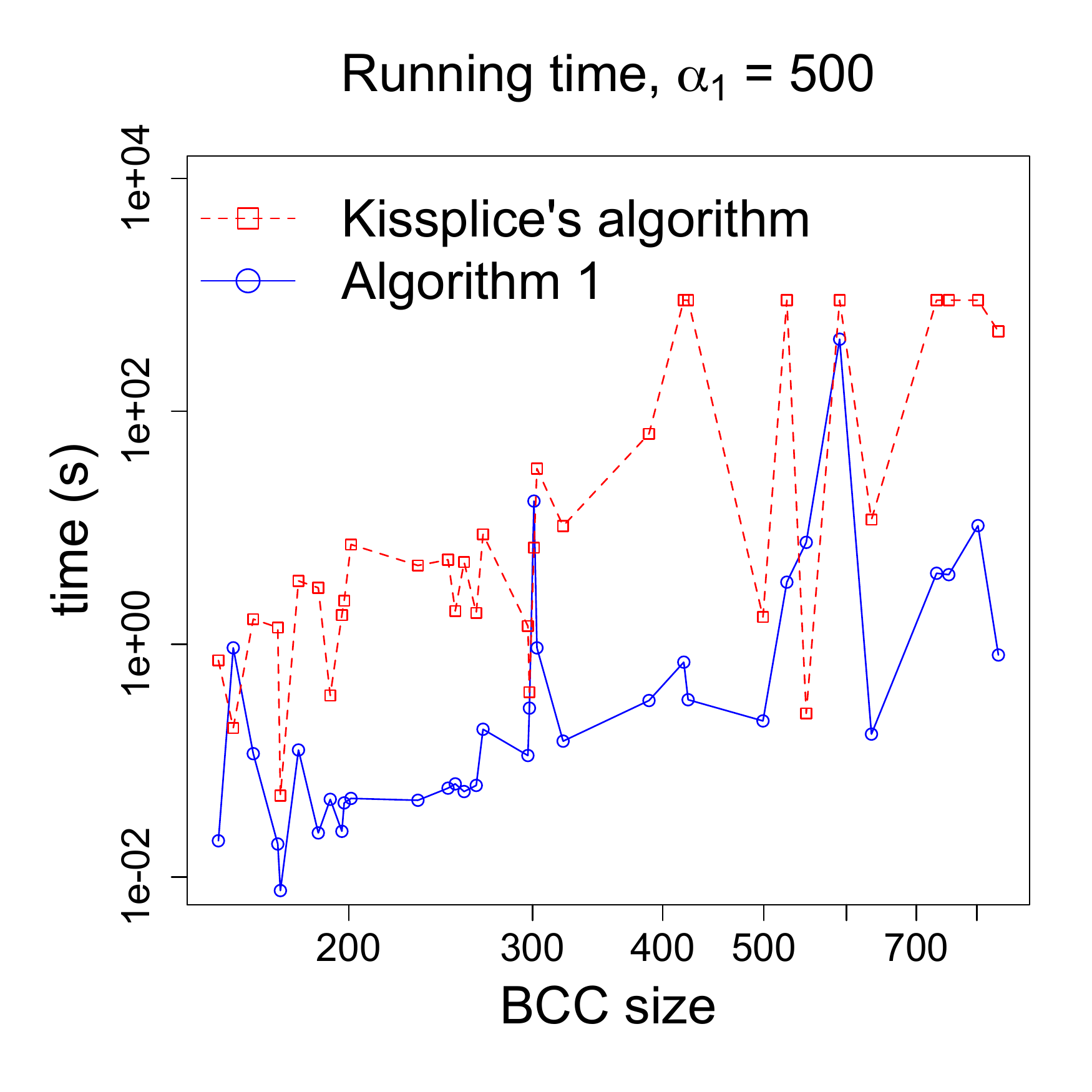}\label{fig:running_time_500}} \\
  \subfloat[]{\includegraphics[width=0.5\linewidth]{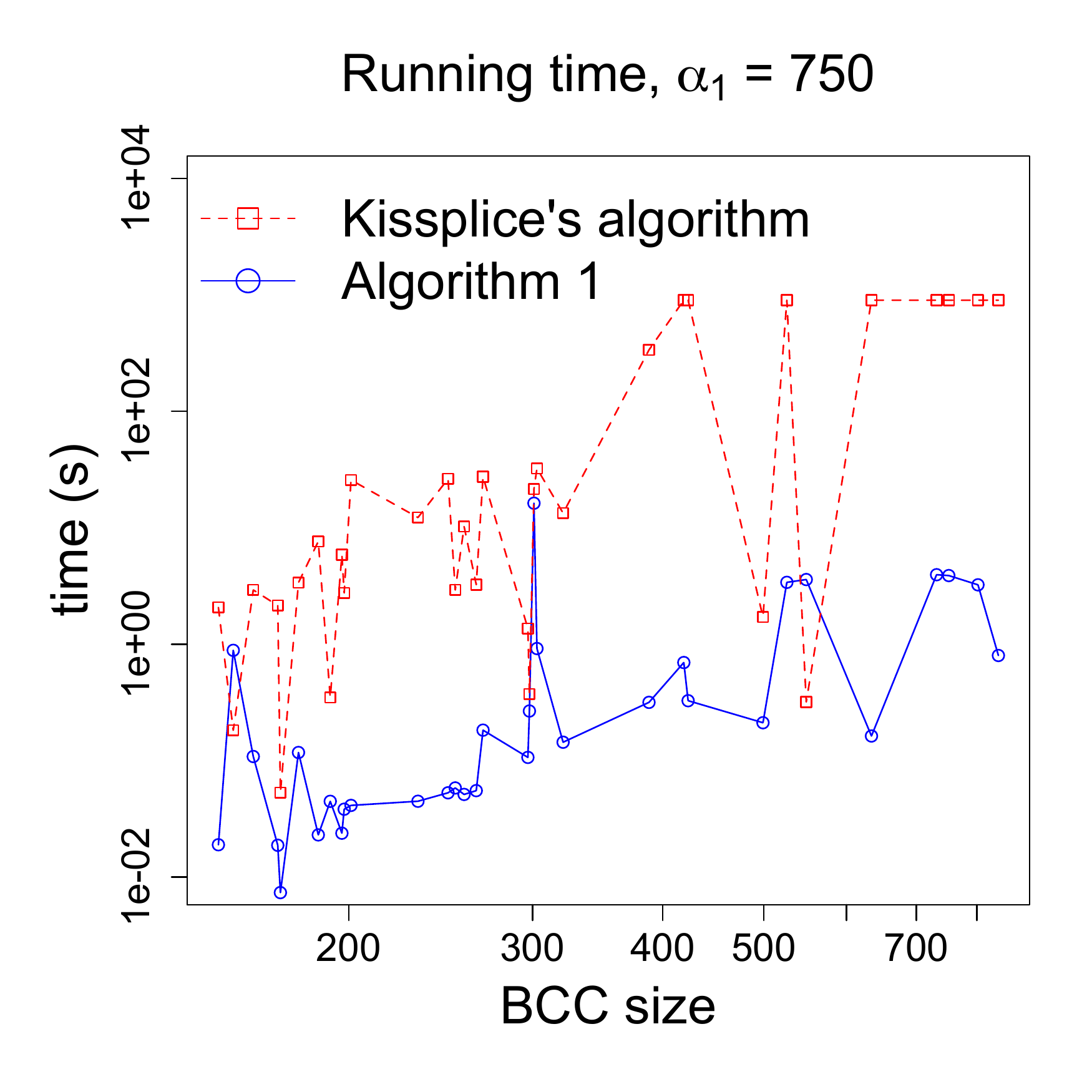} \label{fig:running_time_750}} 
  \subfloat[]{\includegraphics[width=0.5\linewidth]{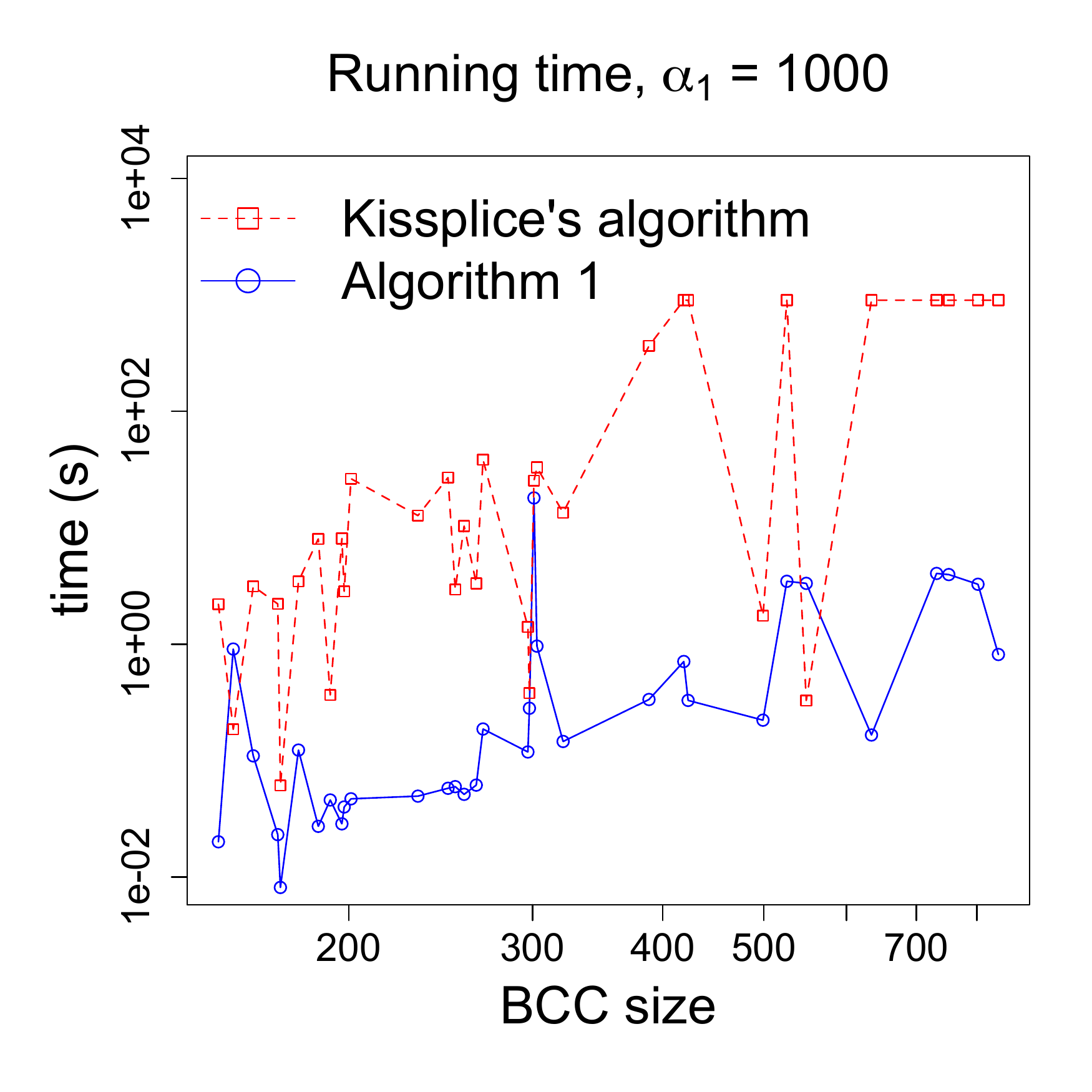}\label{fig:running_time_1000}}
  \caption{Running times of Algorithm~\ref{alg:weighted:listbubbles}
    and of the \ks bubble listing algorithm for all the BCCs with more
    than 150 vertices. Each graph (a), (b), (c) and (d) shows the
    running time of both algorithms for $\alpha_1 = 250, 500, 750$ and
    $1000$, respectively.} \label{fig:running_time}
\end{figure} 

\subsubsection{On the usefulness of larger values of $\alpha_1$}
In \ks (version 1.6), the value of $\alpha_1$ was experimentally set
to 1000 due to performance issues, as indeed the algorithm quickly
becomes impractical for larger values. On the other hand, the results
of Section~\ref{subsec:comp_kissplice} suggest that
Algorithm~\ref{alg:weighted:listbubbles}, that is faster than \ks, can
deal with larger values of $\alpha_1$.  From a biological point of
view, it is a priori possible to argue that $\alpha_1 = 1000$ is a
reasonable choice, because 87\% of annotated exons in Drosophila
indeed are shorter than 1000 bp (\cite{Refseq}). However, missing the
top 13\% may have a big impact on downstream analyses of AS, not to
mention the possibility that not yet annotated AS events could be
enriched in long skipped exons.  In this section, we outline that
larger values of $\alpha_1$ indeed produces more results that are
biologically relevant. For this, we exploit another RNA-seq dataset,
with deeper coverage.

\begin{figure}[Htbp]
  \center
  \includegraphics[width=\linewidth]{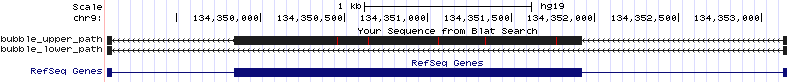}
  \caption{One of the bubbles with longest path larger than 1000 bp
    found by Algorithm~\ref{alg:weighted:listbubbles} with the corresponding
    sequences mapped to the reference genome and visualized using the
    UCSC Genome Browser. The first two lines correspond to the
    sequences of, respectively, the shortest (exon exclusion variant)
    and longest paths of the bubble mapped to the genome.  The blue
    lines are the UCSC human transcript annotations. }
  \label{fig:weighted:exon_skipping}
\end{figure}

To this purpose, we retrieved 32M RNA-seq reads from the human brain
and 39M from the human liver from the Short Read Archive (accession
number ERP000546). Next, we built the de Bruijn graph with $k=31$ for
both datasets, then merged and decomposed the DBG into 5692 BCCs
(containing more than 10 vertices). We ran
Algorithm~\ref{alg:weighted:listbubbles} for each BCC with $\alpha_1 =
5000$. It took 4min25s for Algorithm~\ref{alg:weighted:listbubbles} to
run on all BCCs, whereas \ks, even using $\alpha_1 = 1000$, took
31min45s, almost 8 times more. There were 59 BCCs containing at least
one bubble with the length of the longest path strictly larger than
1000bp potentially corresponding to alternative splicing events. In
Fig.~\ref{fig:weighted:exon_skipping}, we show one of those bubbles
mapped to the reference genome. It corresponds to an exon skipping in
the PRRC2B human gene, the skipped exon containing 2069 bp. While the
transcript containing the exon is annotated, the variant with the exon
skipped is not annotated.

Furthermore, we ran \tri on the same dataset and found that it was
unable to report this novel variant.  Our method therefore enables us
to find new AS events, reported by no other method. This is, of
course, just an indication of the usefulness of our approach when
compared to a full-transcriptome assembler.

\subsection{A natural generalization} \label{sec:gen}
\subsubsection{An intractable case: Paths with length constraints} \label{subsec:intractable}

For the sake of theoretical completeness, in this section, we extend
the definition of $(s,t,\alpha_1,\alpha_2)$-bubble to the case where
the length constraints concern $d$ vertex-disjoint paths, for an
arbitrary but fixed $d$. This situation also arises in real data, when
more than 2 variants share the same flanking splice sites (for
instance for single and double exon skipping), or when a SNP has 3
variants.

\begin{definition}[$(s,t, A)$-$d$-bubble] \label{def:kbubble}
  Let $d$ be a natural number and $A = \{\alpha_1, \ldots, \alpha_d\}
  \subset \mathbb{Q}_{\geq0}$.  Given a directed weighted graph $G$
  and two vertices $s$ and $t$, an \emph{$(s,t,A)$-$d$-bubble} is a
  set of $d$ pairwise internally vertex-disjoint paths $\{p_1, \ldots
  p_d\}$, satisfying $p_i = s \leadsto t$ and $w(p_i) \leq \alpha_i$,
  for all $i \in [1,d]$.
\end{definition} 

Analogously to $(s,t,\alpha_1,\alpha_2)$-bubbles, we can define two
variants of the enumeration problem: all bubbles with a given source
($s$ fixed) and all bubbles with a given source and target ($s$ and
$t$ fixed). In both cases, the first step is to decide the existence
of at least one $(s,t, A)$-$d$-bubble in the graph.

\begin{problem}[$(s,t,A)$-$d$-bubble decision problem] \label{prob:bounded_bubble_st}
  Given a non-negatively weighted directed graph $G$, two vertices
  $s,t$, a set $A = \{\alpha_1, \ldots, \alpha_d\} \subset
  \mathbb{Q}_{\geq0}$ and $d \in \mathbb{N}$, decide if there exists a
  $(s,t, A)$-$d$-bubble.
\end{problem} 

This problem is a generalization of the two-disjoint-paths problem
with a min-max objective function, which is NP-complete
(\cite{Chung90}). More formally, this problem can be stated as
follows: given a directed graph $G$ with non-negative weights, two
vertices $s,t \in V$, and a maximum length $M$, decide if there exists
a pair of vertex-disjoint paths such that the maximum of their lengths
is less than $M$. The $(s,t, A)$-$d$-bubble decision problem, with $A
= \{M,M\}$ and $d=2$, is precisely this problem.

\begin{problem}[$(s,*,A)$-$d$-bubble decision problem] \label{prob:bounded_bubble_s*}
  Given a non-negatively weighted directed graph $G$, a vertex $s$, a
  set $A = \{\alpha_1, \ldots, \alpha_d\} \subset \mathbb{Q}_{\geq0}$
  and $d \in \mathbb{N}$, decide if there exists a $(s,t,
  A)$-$d$-bubble, for some $t \in V$.
\end{problem} 

The two-disjoint-path problem with a min-max objective function is
NP-complete even for strictly positive weighted graphs. Let us reduce
Problem~\ref{prob:bounded_bubble_s*} to it.  Consider a graph $G$ with
strictly positive weights, two vertices $s,t \in V$, and a maximum
length $M$. Construct the graph $G'$ by adding an arc with weights $0$
from $s$ to $t$ and use this as input for the
$(s,*,\{M,M,0\})$-$3$-bubble decision problem. Since $G$ has strictly
positive weights, the only path with length $0$ from $s$ to $t$ in
$G'$ is the added arc. Thus, there is a $(s,*,\{M,M,0\})$-$3$-bubble
in $G'$ if and only if there are two vertex-disjoint paths in $G$ each
with a length $\leq M$.

Therefore, the decision problem for fixed $s$
(Problem~\ref{prob:bounded_bubble_st}) is NP-hard for $d \geq 2$, and
for fixed $s$ and $t$ (Problem~\ref{prob:bounded_bubble_s*}) is
NP-hard for $d \geq 3$.  In other words, the only tractable case is
the enumeration of $(s,t, A)$-$2$-bubbles with fixed $s$, the one
considered in Section~\ref{sec:alg_delay}.

\subsubsection{A tractable case: Paths without length constraints}

In the previous section, we showed that a natural generalization of
$(s,t,\alpha_1,\alpha_2)$-bubbles to contain more than two
vertex-disjoint paths satisfying length constraints leads to an
NP-hard enumeration problem.  Indeed, even deciding the existence of
at least one $(s,t, {\cal A})$-$d$-bubble is NP-hard. In this section,
we consider a similar generalization for $(s,t)$-bubbles instead of
$(s,t,\alpha_1,\alpha_2)$-bubbles, that is, we consider bubbles
containing more than two vertex-disjoint paths without any path length
constraints. The formal definition is given below.

\begin{definition}[$(s,t)$-$d$-bubble] \label{def:kbubble_unweighted}
  Let $d$ be a natural number. Given a directed graph $G$ and two
  vertices $s$ and $t$, a \emph{$(s,t)$-$d$-bubble} is a set of $d$
  pairwise internally vertex-disjoint paths $\{p_1, \ldots p_d\}$.
\end{definition} 

Clearly, this definition is a special case of
Definition~\ref{def:kbubble}: consider a weighted graph $G = (V,E)$
with unitary weights (i.e. an unweighted graph), the $(s,t, {\cal
  A})$-$d$-bubbles with $\alpha_i = |V|$ for $i \in [1,d]$ are
precisely the $(s,t)$-$d$-bubbles of $G$. As in
Section~\ref{subsec:intractable}, let us first consider the problem of
deciding whether a graph contains a $(s,t)$-$d$-bubble for fixed $s$
and $t$.

\begin{problem}[$(s,t)$-$d$-bubble decision problem] \label{prob:stbubble_un}
  Given a directed graph $G$ and two vertices $s,t$, decide whether
  there exists a $(s,t)$-$d$-bubble in $G$.
\end{problem} 

Contrary to Problem~\ref{prob:bounded_bubble_st}, this problem can be
decided in polynomial time. Indeed, given a directed graph $G = (V,
A)$ and two vertices $s$ and $t$, construct the graph $G' = (V', A')$
by splitting each vertex $v \in V$ in two vertices: an incoming part
$v_{in}$ with all the arcs entering $v$, and an outgoing part
$v_{out}$ with all the arcs leaving $v$; and add the arc
$(v_{in},v_{out})$. More formally, $G'$ is defined as $V' = \{
\{v_{in},v_{out}\} | v \in V \}$ and $A' = \{ (u_{out},v_{in}) | (u,v)
\in A \} \cup \{ (v_{in},v_{out}) | v \in V \}$. Now, it is not hard
to prove that every set of \emph{arc-disjoint} paths in $G'$
corresponds to a set of vertex-disjoint paths in $G$. Thus,
considering $G'$ a network with unitary arc capacities
(\cite{Cormen01}), we have that $G$ contains a $(s,t)$-$d$-bubble if
and only if $G'$ contains a $(s,t)$-flow $f$ such that $|f| \geq
d$. Therefore, using the augmenting path algorithm (\cite{Cormen01})
for the max-flow problem, we can decide if there exists a
$(s,t)$-$d$-bubble in $G$ in $O(md)$ time. Actually, using an
iterative decomposition of the $(s,t)$-flow $f$ into $(s,t)$-paths, we
can explicitly find a $(s,t)$-$d$-bubble in the time bound.

\begin{lemma}
  Given a directed graph $G = (V,A)$ and two vertices $s,t \in V$, a
  $(s,t)$-$d$-bubble in $G$ can be found in $O(md)$ time.
\end{lemma}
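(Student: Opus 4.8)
The plan is to reduce the problem to a unit-capacity maximum-flow computation, using exactly the vertex-splitting construction $G' = (V', A')$ described above, and then to read a $(s,t)$-$d$-bubble off an integral flow of value $d$. Recall that $|V'| = 2|V|$ and $|A'| = |A| + |V| = m + n$; since vertices lying on no $st$-path can be discarded in $O(m+n)$ time by a forward search from $s$ and a backward search from $t$, we may assume $n = O(m)$, so that $|A'| = O(m)$. Assign capacity $1$ to every arc of $G'$, let the source be $s_{out}$ and the sink be $t_{in}$. By the correspondence already noted, arc-disjoint $s_{out}t_{in}$-paths in $G'$ are in bijective correspondence with internally vertex-disjoint $st$-paths in $G$; hence $G$ contains a $(s,t)$-$d$-bubble iff the maximum $s_{out}t_{in}$-flow in $G'$ has value at least $d$.

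First I would compute such a flow. Run the classical augmenting-path algorithm, but \emph{stop after $d$ successful augmentations}. Each augmentation looks for an $s_{out}t_{in}$-path in the current residual network via a single BFS/DFS, costing $O(|A'|) = O(m)$ time, and — because all capacities are $1$ — increases the value of the integral flow by exactly $1$. Thus after at most $d$ iterations we either obtain an integral flow $f$ with $|f| = d$, or the search fails before the $d$-th augmentation, certifying that the maximum flow value is $< d$ and hence (by the correspondence) that no $(s,t)$-$d$-bubble exists. This phase runs in $O(dm)$ time.

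Next I would decompose $f$ into paths. Since $f$ is integral and every capacity is $1$, the arcs carrying flow form a subgraph $S$ of $G'$ in which $\deg^+_S(x) = \deg^-_S(x)$ for every vertex except $s_{out}$ (out-excess $d$) and $t_{in}$ (in-excess $d$). Repeat $d$ times the following: start at $s_{out}$ and follow arcs of $S$, deleting each traversed arc; whenever a vertex is revisited a directed cycle has been closed, and we splice it out (delete its arcs) and continue from the splice point. Flow conservation guarantees that one can always leave any vertex other than $t_{in}$, so the walk necessarily reaches $t_{in}$, yielding a simple $s_{out}t_{in}$-path whose arcs are removed from $S$ before the next round. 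After the $d$ rounds, any arcs left in $S$ decompose into arc-disjoint cycles, which we discard. The whole decomposition touches each arc of $G'$ a constant number of times, so it costs $O(|A'|) = O(m)$ and produces $d$ pairwise arc-disjoint $s_{out}t_{in}$-paths $P_1, \dots, P_d$ in $G'$.

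Finally I would translate $P_1, \dots, P_d$ back to $G$. Contracting in each $P_i$ every consecutive pair of arcs $(u_{out}, v_{in}), (v_{in}, v_{out})$ to the arc $(u,v)$ of $G$ (and dropping the initial $(s_{in},s_{out})$ and final $(t_{in},t_{out})$ stubs, if present) turns $P_i$ into an $st$-walk $\pi_i$ in $G$. The only way $\pi_i$ can pass through a vertex $v \neq s,t$ is by using the arc $(v_{in}, v_{out})$ of $G'$; since the $P_i$ are arc-disjoint, this arc is used by at most one of them and at most once, so each $\pi_i$ is a simple $st$-path and no two of them share an internal vertex. Hence $\{\pi_1, \dots, \pi_d\}$ is a $(s,t)$-$d$-bubble, obtained in $O(dm)$ total time. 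The step requiring the most care is the flow decomposition: one must argue that the cycle-splicing never stops the walk from reaching $t_{in}$ and that exactly $d$ $st$-paths (not fewer, thanks to the in-excess $d$ at $t_{in}$) are extracted within linear time; the vertex-splitting bijection and the reduction to a value-$d$ flow are then routine.
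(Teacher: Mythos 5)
Your proposal is correct and takes essentially the same route as the paper: the vertex-splitting reduction to a unit-capacity network, at most $d$ augmenting-path iterations costing $O(m)$ each, and an iterative decomposition of the resulting flow into $st$-paths that are translated back into internally vertex-disjoint paths of $G$. The additional details you work out (stopping after $d$ augmentations, cycle splicing during the decomposition) simply make explicit what the paper leaves as routine.
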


We now consider the problem of enumerating $(s,t)$-$d$-bubbles in $G$
for fixed $s$ and $t$. The reduction from $(s,t)$-$d$-bubbles to
$(s,t)$-flows used in the last paragraph may induce us to think that
we can enumerate $(s,t)$-$d$-bubbles in $G$ by enumerating
$(s,t)$-flows in $G'$, and since there is a polynomial delay algorithm
for the latter (\cite{Bussieck98}), we would be done. Unfortunately,
there is no one-to-one correspondence between $(s,t)$-flows in $G'$
and $(s,t)$-$d$-bubbles in $G$: we can always add a circulation $c$ to
a $(s,t)$-flow $f$ to obtain a new $(s,t)$-flow $f'$, but $f$ and $f'$
correspond to the same $(s,t)$-$d$-bubble. In fact, there can be
exponentially more $(s,t)$-flows in $G'$ than $(s,t)$-$d$-bubbles in
$G$. On the other hand, the strategy used in
Algorithm~\ref{alg:weighted:listbubbles} can be adapted to enumerate
$(s,t)$-$d$-bubbles.

Similarly to Section~\ref{sec:alg_delay}, in order to have a more
symmetric structure for the subproblems, we define the notion of a
\emph{set of compatible paths}, which is an object that generalizes
the definition of a $(s,t)$-$d$-bubble. Given a set of sources $S =
\{s_1, \ldots, s_d\}$ and a target $t$, a set of paths $P_t = \{p_1,
\ldots, p_d\}$ is compatible if $p_i = s_i \leadsto t$ and they are
internally vertex-disjoint. We then focus on the more general problem
of enumerating sets of compatible paths. Let $\mathcal{P}(S,t,G)$ be
the set of all compatible paths for $S$ and $t$ in $G$. The same
partition given in Eq.~\ref{eq:bubbles:partition} is also valid for
$\mathcal{P}(S,t,G)$. Namely, for any $s \in S$ such that $\delta^+(s)
\neq \emptyset$,
\begin{equation} \label{eq:bubbles:partition2}
\mathcal{P}(S,t,G) = \mathcal{P}(S,t,G') 
                     \bigcup_{v \in \delta^+(s)} (s,v) \mathcal{P}(S\setminus \{s\} \cup \{v\},t,G - s), 
\end{equation}
where $G' = G - \{(s,v) | v \in \delta^+(s)\}$.  Now, adding a new
source to $G$ with an arc to each vertex in $S$, we can use an
augmenting path algorithm to test whether $\mathcal{P}(S,t,G) \neq
\emptyset$ in $O(md)$ time. That way, an algorithm implementing the
partition scheme of Eq.~\ref{eq:bubbles:partition2} can enumerate
$(s,t)$-$d$-bubbles in $O(n^2md)$ delay, where the bound on the delay
holds since each node of the recursion tree costs $O(nmd)$ (at most
$n$ emptiness checks are performed) and the height of the tree is
bounded by $n$.

\begin{theorem}
  Given a directed graph $G$ and two vertices $s,t$, the
  $(s,t)$-$d$-bubbles in $G$ can be enumerated in $O(n^2md)$ delay.
\end{theorem}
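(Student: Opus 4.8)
The plan is to adapt the binary-partition algorithm of Section~\ref{sec:alg_delay} (Algorithm~\ref{alg:weighted:listbubbles}) from pairs of compatible paths to sets of $d$ compatible paths, using the decomposition in Eq.~\eqref{eq:bubbles:partition2} as the recursive rule. A recursive call is parameterised by the current (multi)set $S$ of frontier vertices, the target $t$, the partial bundle $B$ of directed paths built so far (one per eventual path of the $(s,t)$-$d$-bubble, see Definition~\ref{def:kbubble_unweighted}), and the current graph $G$. The call picks any $s\in S$ with $\delta^+(s)\neq\emptyset$; for every $v\in\delta^+(s)$ it recurses on $\mathcal{P}(S\setminus\{s\}\cup\{v\},t,G-s)$ with $B$ extended by the arc $(s,v)$, and it additionally recurses on $\mathcal{P}(S,t,G')$ with $G'=G-\{(s,v):v\in\delta^+(s)\}$; a recursive call is performed only when the emptiness oracle (below) certifies that the corresponding set is non-empty. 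The base case is reached when all frontier vertices have merged into $t$, and $B$ is then output; the initial call branches over the choice of the first arc of each of the $d$ paths out of $s$, removes $s$, and proceeds as above. Correctness follows by structural induction from Eq.~\eqref{eq:bubbles:partition2}: its right-hand side is a disjoint union (a path currently at $s\neq t$ must leave $s$ by exactly one out-arc) and it exhausts $\mathcal{P}(S,t,G)$, so every $(s,t)$-$d$-bubble is produced exactly once.

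The key ingredient is the emptiness oracle deciding whether $\mathcal{P}(S,t,G)\neq\emptyset$. I would reuse the standard reduction from vertex-disjoint to arc-disjoint paths — split each vertex $v$ into $v_{in}\to v_{out}$, with unit capacity on the internal arc of every intermediate vertex — then add a super-source $\sigma$ with a unit-capacity arc into each element of $S$, and test whether the maximum $\sigma t$-flow in this unit-capacity network has value at least $d$. Running $d$ breadth-first augmentations of the Ford–Fulkerson method costs $O(md)$ time, exactly as in the lemma immediately preceding the theorem, and a positive answer also yields an explicit witness bubble by flow decomposition.

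For the complexity, note that each recursive call either removes a vertex from $G$ (an ``include'' branch) or turns a vertex into a sink by deleting its out-neighbourhood (the ``exclude'' branch). Since a vertex, once it is chosen as the pivot $s$ on a root-to-leaf path, is thereafter either absent from the graph or a sink (hence never chosen again), the recursion tree has height $O(n)$. At each node the algorithm performs at most one emptiness check per out-neighbour of $s$, plus one for the exclude branch — i.e. $O(n)$ checks of cost $O(md)$ each — so $O(nmd)$ per node. Because the oracle guarantees that every node we descend into has a leaf in its subtree, the trace between two consecutive outputs visits only $O(n)$ nodes; fixing the order in which the branches of a node are explored lets us charge the $O(nmd)$ work of each such node exactly once, giving delay $O(n)\cdot O(nmd)=O(n^2md)$, as claimed.

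The main obstacle I anticipate is the delay bookkeeping rather than any single hard step: one must pin down the branch-exploration order so that the $O(nmd)$ cost of a node on a leaf-to-leaf path is paid only once, and ensure the oracle is consulted before each recursive call so that no time is wasted descending into solution-free subtrees. The handling of coinciding sources in $S$ (at the very start, and whenever a path reaches $t$) has to be carried consistently through both the recursion and the capacities of the flow network, but it is routine once the convention is fixed.
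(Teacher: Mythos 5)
Your proposal is correct and follows essentially the same route as the paper: you recurse on the partition of Eq.~\eqref{eq:bubbles:partition2} over sets of compatible paths, implement the emptiness oracle by the vertex-splitting reduction plus a super-source and $d$ augmenting-path computations in $O(md)$ time, and bound the delay by an $O(n)$-height recursion tree with $O(n)$ oracle calls of cost $O(md)$ per node, giving $O(n^2md)$. The only points you spell out beyond the paper (the initialization with $d$ copies of $s$ and the per-node charging along a leaf-to-leaf trace) are consistent with, and implicit in, the paper's argument.
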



\bigskip
\bigskip


\section{Listing bounded length paths} \label{sec:weighted:path}

\subsection{Introduction}
A natural generalization of the problem of listing $st$-paths in
undirected graphs (Section~\ref{sec:unweighted:cycle}) is obtained by
imposing a length constraint for the paths, that is, listing only the
$st$-paths such that the length is bounded by some constant. The
problem of listing $st$-paths in a \emph{weighted} directed graph with
lengths bounded by a constant is a further generalization of that
problem. In this section, we consider this more general problem along
with restrictions to undirected and unweighted graphs.

The shortest path problem is probably one of the most studied ones in
computer science with a huge number of applications; it would be
infeasible to list any reasonable subset of them here. A natural
generalization of it, falling into the enumeration context, is the
$K$-shortest paths problem, that consists in returning the first $K$
distinct shortest $st$-paths, where both the graph and the parameter
$K$ are part of the input. There are several applications for this
problem ranging from finding suboptimal solutions in sequence
alignment problems (\cite{Waterman83,Byers84}), to heuristics to solve
NP-hard multi-criteria path optimization problems
(\cite{Bahgat88,El-Amin93}). See \cite{Eppstein99} for further
references.

The $K$-shortest paths problem has been studied since the early 1960s
(see the references in \cite{Dreyfus69}). However, the first efficient
algorithm for this problem in directed graphs with non-negative
weights only appeared 10 years later, in the early 1970s, by
\cite{Yen71} and \cite{Lawler72}. With Dijkstra's algorithm
implemented with Fibonacci heaps (\cite{Cormen01}), their algorithm
runs in $O(K (mn + n^2 \log n))$ time, where $m,n$ are the number of
arcs and vertices, respectively. More recently, \cite{Eppstein99}
showed that if the paths can have cycles, i.e. they are walks, then
the problem can be solved in $O(K + m + n \log n)$ time. When the
input graph is undirected, the $K$-shortest \emph{simple} paths
problem is solvable in $O(K(m+n \log n))$ time (\cite{Katoh82}). For
directed unweighted graphs, the best known algorithm for the problem
is the $O(Km\sqrt{n})$ time\footnote{Polylog factors are omitted.}
randomized algorithm of \cite{Roditty05}.  In a different direction,
\cite{Roditty07} noticed that the $K$-shortest simple paths can be
efficiently approximated. Building upon his work, \cite{Bernstein10}
presented an $O(Km/\epsilon)$ time\footnote{Polylog factors are
  omitted.} algorithm for a $(1 + \epsilon)$-approximation. When the
paths are to be computed exactly, however, the best running time is
still the $O(K(mn + n^2 \log n))$ time of Yen and Lawler's algorithm
for directed graphs and the $O(K(m+n \log n))$ time of Katoh's
algorithm for undirected graphs. Both algorithms use $O(Kn + m)$
memory.

The problems of listing $st$-paths with a length bounded by $\alpha$
on one hand and of the $K$-shortest $st$-paths on the other are
closely related, even though the first problem cannot be solved in
polynomial time under the standard definition, i.e. there can be an
exponential number of bounded length $st$-paths. Intuitively, they are
both the same problem with different parameterizations; in the first
case the enumeration is constrained by the maximum length of the path
and in the second by the maximum number of paths. Although very
similar, the problem of listing bounded length $st$-paths has not, to
the best of our knowledge, been explicitly considered before, except
for \cite{Eppstein99} who mentions that his algorithm can be modified
to the bounded length case maintaining the same time and space
complexity. Actually, Yen and Lawler's algorithm can be modified to
solve the bounded length $st$-path problem, but in this case the
memory used by the algorithm is the same as in the original version,
i.e. proportional to the number of bounded length $st$-paths output,
which is potentially exponential in the size of the graph. We show
here that it is possible to list bounded length $st$-paths using space
that is only linear in the size of the graph.

In the remainder of the chapter, we consider the problem of listing
all $st$-paths with length bounded by $\alpha$ in a graph $G$ with $n$
vertices and $m$ edges/arcs. We give a general $O(nt(n,m))$ delay
algorithm, where $t(n,m)$ is the cost for a single source shortest
paths computation, to list them in weighted (including negative
values) directed graphs (Section~\ref{sec:simple_alg}) using $O(m+n)$
space. Next, we improve the total complexity of this algorithm to
$O((m + t(n,m)) \gamma)$, where $\gamma$ is the number of paths output,
for undirected graphs with non-negative weights
(Section~\ref{sec:weighted:improved}) while maintaining the same
memory complexity. Finally, we modify the general algorithm to output
the paths in increasing order of their lengths
(Section~\ref{sec:simple_alg}). This algorithm can be used to solve
the $K$-shortest paths problem.

\subsection{Preliminaries}
Given a weighted (directed or undirected) graph $G$ with weights $w :
E \mapsto \mathbb{Q}$, we say that a path $p$ is
\emph{$\alpha$-bounded} if the weight, or \emph{length}, of the path
satisfies $w(p) \leq \alpha$ and $\alpha \in \mathbb{Q}$, in the
particular case of unitary weights (i.e. unweighted graphs), we say
that $p$ is
\emph{$k$-bounded} if $w(p) \leq k$ and $k \in \mathbb{Z}_{\geq
  0}$. The general problem, formally defined below, with which we are
concerned in this section is listing $\alpha$-bounded $st$-paths in
$G$.

\begin{problem}[Listing $\alpha$-bounded $st$-paths] 
  Given a weighted directed graph $G = (V,E)$, two vertices $s,t \in
  V$, and an upper bound $\alpha \in Q$, output all $\alpha$-bounded
  $st$-paths.
\end{problem}

The general problem is stated in terms of \emph{directed} weighted
graph, because any solution for directed graphs also applies to the
undirected graphs, and in fact in Section~\ref{sec:simple_alg} we only
provide a solution to the directed case.  Moreover, whenever $G$
contains negative weight arcs, we assume that $G$ does not contain any
negative cycle, otherwise the shortest paths cannot be efficiently
computed (\cite{Cormen01}). Finally, we assume that all directed
graphs considered here are weakly connected and all undirected graphs
are connected, that way $m \geq n$, where $n$ is the number of
vertices and $m$ the number of arcs (edges).

\subsection{A simple polynomial delay algorithm} \label{sec:simple_alg}

In this section, we present a simple polynomial delay algorithm to
list all $st$-paths with length bounded by $\alpha$ in a weighted
directed graph $G$. This is the most general version of the problem.
Consequently, the algorithm works for any version of the problem,
weighted (including negative weights) or unweighted, directed or
undirected.  However, the complexity is different for each version of
the problem.  The algorithm, inspired by the binary partition method,
recursively partitions the solution space at every call until the
considered subspace is a singleton (contains only one solution) and in
that case outputs the corresponding solution. It is important to
stress that the order in which the solutions are output is fixed, but
arbitrary.  The pseudocode is given in Algorithm~\ref{alg:simple}.

Let us describe the partition scheme. Let
$\mathcal{P}_{\alpha}(s,t,G)$ be the set of all paths from $s$ to $t$
in $G$. Assuming $s \neq t$, we have that
\begin{equation} \label{eq:path:partition}
\mathcal{P}_{\alpha}(s,t,G) = \bigcup_{v \in N^+(s)} (s,v) \mathcal{P}_{\alpha'} (v,t,G-s), 
\end{equation}
where $\alpha' = \alpha - w(s,v)$. In other words, the set of paths
from $s$ to $t$ can be partitioned into the union of
$(s,v)\mathcal{P}_{\alpha} (v, t,G - s)$, the sets of paths containing
the edge $(s,v)$, for each $v \in N^+(s)$. Indeed, since $s \neq t$,
every path in $\mathcal{P}_{\alpha} (s, t, G)$ necessarily contains an
edge $(s,v)$, where $v \in N^+(s)$.

Algorithm~\ref{alg:simple} implements this recursive partition
strategy. The solutions are only output in the leaves of the recursion
tree (line~\ref{alg:simple:output}), where the partition is always a
singleton. Moreover, in order to guarantee that every leaf in the
recursion tree outputs one solution, we have to test if
$\mathcal{P}_{\alpha'} (v, t, G - u)$, where $\alpha' = \alpha -
w(u,v)$, is not empty before the recursive call
(line~\ref{alg:simple:test}). This set is not empty if and only if the
weight of the shortest path from $v$ to $t$ in $G-u$ is at most
$\alpha'$, i.e. $d_{G-u}(v,t) \leq \alpha' = \alpha - w(u,v)$. Hence,
to perform this test it is enough to compute all the distances from
$t$ in the graph $G^R - u$, where $G^R$ is the graph $G$ with all arcs
reversed.

\begin{algorithm} 
\caption{$\listpaths(u,t, \alpha, \pi_{su}, G)$} \label{alg:simple}
\If{$u = t$}{ 
  output($\pi_{su}$) \\ \label{alg:simple:output}
  \bf return 
}
compute the distances from $t$ in 
$G^R - u$ \\
\For{$v \in N^+(u)$}{
  \If{$d(v,t) \leq \alpha - w(u,v)$}{ \label{alg:simple:test} 
    $\listpaths(v,t, \alpha - w(u,v), \pi_{su} (u,v), G - u)$ 
  } 
}
\end{algorithm}

The correctness of Algorithm~\ref{alg:simple} follows directly from
the relation given in Eq.~\ref{eq:path:partition} and the correctness
of the tests of line \ref{alg:simple:test}. We can perform those tests
in $O(1)$ by pre-computing the distances from $t$ to all vertices
(single source shortest paths) in the reverse graph $G^R - u$, which
can be computed in $O(t(n,m))$. The height of the recursion tree is
bounded by $n$, since at every level of the recursion tree a new
vertex is added to the current solution and any solution has at most
$n$ vertices.  In that way, the path between any two leaves in the
recursion tree has at most $2n$ nodes. Thus, the time elapsed between
two solutions being output is $O(nt(n,m))$. Moreover, the space
complexity of the algorithm is $O(m)$, since for each recursive call,
we can store the difference with the previous graph.

\begin{theorem} \label{thm:simple_complexity}
  Algorithm~\ref{alg:simple} has delay $O(n t(n,m))$, where $t(n,m)$
  is the cost to compute a shortest path tree, and uses $O(m)$ space.
\end{theorem}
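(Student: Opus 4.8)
The statement has two parts: a delay bound of $O(n\,t(n,m))$ and a space bound of $O(m)$. Both follow from a careful analysis of the recursion tree induced by Algorithm~\ref{alg:simple}, together with the correctness of the emptiness test on line~\ref{alg:simple:test}. The plan is to first argue correctness (so that ``delay'' is meaningful: every leaf must produce a genuine, distinct solution), then bound the height of the recursion tree, then bound the work per node, and finally combine these to get the delay; the space bound is handled separately by a persistence/undo argument.

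\textbf{Step 1: Correctness.} I would argue by induction on the structure of the recursion that $\listpaths(u,t,\alpha,\pi_{su},G)$ outputs exactly the paths obtained by extending the prefix $\pi_{su}$ (a simple $s\leadsto u$ path in the original graph, with $G$ being the original graph minus the internal vertices of $\pi_{su}$) by an $\alpha$-bounded $u\leadsto t$ path in $G$. The base case $u=t$ outputs the single path $\pi_{su}$. For the inductive step, Eq.~\eqref{eq:path:partition} shows the set of $u\leadsto t$ paths partitions over the first arc $(u,v)$; removing $u$ from $G$ before recursing guarantees simplicity of the concatenation, and decrementing $\alpha$ by $w(u,v)$ preserves the length budget. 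The test on line~\ref{alg:simple:test} — that $d_{G-u}(v,t)\le \alpha - w(u,v)$, computed via a single-source shortest path computation from $t$ in the reverse graph $G^R-u$ — is exactly the condition that $\mathcal{P}_{\alpha'}(v,t,G-u)\neq\emptyset$, so we never recurse into an empty subproblem, hence every leaf outputs a solution and distinct leaves output distinct paths (their prefixes, read off the root-to-leaf arcs, differ).

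\textbf{Step 2: Height and per-node cost.} At each recursive call a vertex $u$ is moved from the graph into the prefix, so along any root-to-leaf path the number of calls is at most $n$; thus the recursion tree has height $\le n$, and any leaf-to-leaf path in the tree passes through $\le 2n$ nodes. At each node the dominant cost is the single call to a shortest-path routine on $G^R-u$, costing $O(t(n,m))$; iterating over $N^+(u)$ and performing the $O(1)$ distance comparisons adds only $O(m)$, absorbed into $t(n,m)$ since $t(n,m)=\Omega(m)$. Because solutions are emitted only at leaves and each internal node on a leaf-to-leaf trace contributes $O(t(n,m))$, the time between two consecutive outputs is $O(n\,t(n,m))$, which is the claimed delay (the time to the first output, and after the last, is likewise bounded by one root-to-leaf descent).

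\textbf{Step 3: Space.} The naive implementation would store a fresh copy of $G-u$ at each recursion level, costing $O(nm)$. Instead, as the theorem claims $O(m)$, I would note that each recursive call modifies $G$ only by deleting one vertex $u$ and its incident arcs; we record on a stack the $O(\deg(u))$ deleted arcs (and the shortest-path data can be recomputed or likewise stacked), and on return we reinsert them. The total size of all deletions stored simultaneously along one root-to-leaf path is $\sum_u \deg(u) = O(m)$, since the vertices deleted along a path are distinct. Hence the working space is $O(m+n)=O(m)$. The one point needing a word of care here is that the shortest-path computation at each node needs $O(n)$ auxiliary space, but this too is reused (freed on return), so it does not accumulate beyond $O(n)$ at once.

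\textbf{Main obstacle.} None of the steps is deep; the only genuinely delicate point is making the delay argument airtight, i.e.\ confirming that \emph{every} leaf of the recursion tree is a solution (no ``dead'' leaves) so that the $O(n)$-node leaf-to-leaf bound really translates into an $O(n\,t(n,m))$ delay — this is precisely what the emptiness test buys us, and it hinges on the equivalence ``$\mathcal{P}_{\alpha'}(v,t,G-u)\neq\emptyset \iff d_{G-u}(v,t)\le\alpha'$'', which in turn relies on the assumption that $G$ has no negative cycles so that shortest paths (hence the distances $d$) are well-defined and attained by \emph{simple} paths. I would make that assumption explicit at the start of the proof.
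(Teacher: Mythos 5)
Your proposal is correct and follows essentially the same route as the paper: the partition of Eq.~\eqref{eq:path:partition} with the emptiness test via distances from $t$ in $G^R-u$, the height-$n$ recursion tree giving a $2n$-node leaf-to-leaf bound and hence $O(n\,t(n,m))$ delay, and the $O(m)$ space bound by storing only the per-call differences to the graph. Your extra remarks (explicit induction for correctness, charging the stored deletions to distinct vertices along a root-to-leaf path, and flagging the no-negative-cycle assumption) are sound elaborations of the paper's more terse argument rather than a different approach.
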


For unweighted (directed and undirected) graphs, the single source
shortest paths can be computed using breadth-first search (BFS)
running in $O(m)$, so Theorem~\ref{thm:simple_complexity} guarantees
an $O(km)$ delay to list all $k$-paths, since the height of the
recursion tree is bounded by $k$ instead of $n$. In the case of
non-negative weights the single source shortest paths can be computed
using Dijkstra's algorithm in $O(m + n \log n)$, resulting in an $O(nm
+ n^2 \log n)$ delay. Finally, for general weights, the single source
shortest paths can be computed using the Bellman-Ford algorithm in
$O(mn)$ time, resulting in an $O(mn^2)$ delay.

\subsection{An improved algorithm for undirected graphs} \label{sec:weighted:improved}

In this section, we improve the total time complexity of
Algorithm~\ref{alg:simple} from $O(nt(n,m) \gamma)$ to $O((m+t(n,m))
\gamma)$ in the case of \emph{non-negatively} weighted undirected
graphs, where $\gamma = |\mathcal{P}_{\alpha}(s,t,G)|$ is the number
of $\alpha$-bounded $st$-paths .  In other words, for undirected
graphs we can list all $\alpha$-bounded $st$-paths in $O((m+n \log n)
\gamma)$ and all $k$-bounded $st$-paths in $O(m \gamma)$. However, the
delay of the algorithm is still $O(nt(n,m))$ in the worst case,
although the (worst case) average delay is $O(m + t(n,m))$. From now
on, all the graphs considered are undirected unless otherwise stated.

The basis to improve the complexity of Algorithm~\ref{alg:simple} is
to explore the structure of $\mathcal{P}_{\alpha}(s,t,G)$ to reduce
the number of nodes in the recursion tree. More precisely, at every
call, we identify the longest common prefix of
$\mathcal{P}_{\alpha}(s,t,G)$, i.e. the longest (considering the
number of edges) path $\pi_{ss'}$ such that
$\mathcal{P}_{\alpha}(s,t,G) = \pi_{ss'}\mathcal{P}_{\alpha}(s',t,G)$,
and append it to the current path prefix being considered in the
recursive call. The pseudocode for this algorithm is very similar to
Algorithm~\ref{alg:simple} and, for the sake of completeness, is given
in Algorithm~\ref{alg:improved}. We postpone the description of the
$\lcp(u, t, \alpha, G)$ function to the next section, along with a
discussion about the difficulties to extend it to directed graphs or
general weights graphs.

\begin{algorithm} 
\caption{$\listpaths(u,t, \alpha, \pi_{su}, G)$} \label{alg:improved}
$\pi_{uu'}$ = $\lcp(u, t, \alpha, G)$ \\
\uIf{$u' = t$}{ 
  output($\pi_{su}\pi_{uu'}$) \\ 
  \bf return 
}
\Else{
  compute a shortest path tree $T'_t$ from $t$ in $G^R - \pi_{uu'}$ \\
  \For{$v \in N(u')$}{
    \If{$d(v,t) + w(u,v) \leq \alpha$}{ 
      $\listpaths(v,t, \alpha - w(\pi_{uu'}) - w(u',v), \pi_{su} \pi_{uu'} (u',v), G - \pi_{uu'})$ 
    } 
  }
}
\end{algorithm}

The correctness of Algorithm~\ref{alg:improved} follows directly from
the correctness of Algorithm~\ref{alg:simple}. The space used is the
same of Algorithm~\ref{alg:simple}, provided that $\lcp(u, t, \alpha,
G)$ uses linear space, which, as we show in the next section, is
indeed the case (Theorem~\ref{teo:lcp}).

Let us now analyze the total complexity of
Algorithm~\ref{alg:improved} as a function of the input size and of
$\gamma$, the number of $\alpha$-bounded ($k$-bounded) $st$-paths.  Let
$R$ be the recursion tree of Algorithm~\ref{alg:improved} and $T(r)$
the cost of a given node $r \in R$. The total cost of the algorithm
can be split in two parts, which we later bound individually, in the
following way:

\begin{equation}
   \sum_{r \in R} T(r) = \sum_{r: internal} T(r) + \sum_{r: leaf} T(r). \label{eq:total_cost}
\end{equation}

We have that $\sum_{r: leaf} T(r) = O((m + t(m,n))\gamma)$, since
leaves and solutions are in one-to-one correspondence and the cost for
each leaf is dominated by the cost of $\lcp(u, t, \alpha, G)$, that is
$O(m + t(m,n))$ (Theorem~\ref{teo:lcp}). Now, we have that every
internal node of the recursion has at least two children, otherwise
$\pi_{uu'}$ would not be the longest common prefix of
$\mathcal{P}_{\alpha}(u,t,G)$. Thus, $\sum_{r: internal} T(r) = O((m +
t(m,n))\gamma)$ since each internal node costs $O(m + t(m,n))$, the
cost is also dominated by the cost of the longest prefix computation,
and in any tree the number of branching nodes is at most the number of
leaves. Therefore, the total complexity of
Algorithm~\ref{alg:improved} is $O((m + t(n,m))\gamma)$. This
completes the proof of Theorem~\ref{thm:cost_improved}.

\begin{theorem} \label{thm:cost_improved}
  Algorithm~\ref{alg:improved} outputs all $\alpha$-bounded (or
  $k$-bounded) $st$-paths in $O((m+t(n,m))\gamma)$ using $O(m)$ space.
\end{theorem}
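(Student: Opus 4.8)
The plan is to reuse the recursion-tree analysis already set up for Algorithm~\ref{alg:simple}, sharpened so that the cost is charged against the \emph{number} of output paths rather than against the tree height. First I would dispatch correctness and space, which are essentially inherited. Prepending $\pi_{uu'}=\lcp(u,t,\alpha,G)$ to the current prefix only reorganizes, but does not change, the set of $\alpha$-bounded $st$-paths that the recursion enumerates: by definition of the longest common prefix, every path in $\mathcal{P}_{\alpha}(u,t,G)$ begins with $\pi_{uu'}$, so Algorithm~\ref{alg:improved} lists exactly the same paths as Algorithm~\ref{alg:simple}, whose correctness is already established via Eq.~\ref{eq:path:partition} and the soundness of the feasibility test. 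The $O(m)$ space bound is also inherited: along any root-to-leaf branch we store only the $O(m)$-size difference between a graph and its parent (as in Algorithm~\ref{alg:simple}), and $\lcp$ itself runs in linear space by Theorem~\ref{teo:lcp}.

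For the running time I would consider the recursion tree $R$ of Algorithm~\ref{alg:improved} and bound $\sum_{r\in R}T(r)$ by the split of Eq.~\ref{eq:total_cost} into leaves and internal nodes. Two facts drive the argument. (i) Leaves are in bijection with the $\gamma$ output paths, exactly as for Algorithm~\ref{alg:simple}. (ii) Every internal node has at least two children: otherwise the single arc $(u',v)$ surviving the test $d(v,t)+w(u',v)\le\alpha'$ would be common to all paths in $\mathcal{P}_{\alpha}(u,t,G)$, contradicting maximality of $\pi_{uu'}$. In a rooted tree where every internal node has at least two children, the number of internal nodes is strictly less than the number of leaves, hence $O(\gamma)$. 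Combining this with a per-node cost $T(r)=O(m+t(n,m))$ — dominated by one $\lcp$ call (Theorem~\ref{teo:lcp}), one single-source shortest-path computation on $G^R-\pi_{uu'}$, and an $O(m)$ scan of $N(u')$ — yields $\sum_{r\in R}T(r)=O\big((m+t(n,m))\,\gamma\big)$, which is the claimed total time; the delay remains $O(n\,t(n,m))$ only in the worst case, but the amortized (average) delay is $O(m+t(n,m))$.

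The main obstacle is concentrated in the forward-referenced Theorem~\ref{teo:lcp}: I need $\lcp(u,t,\alpha,G)$ to return the longest common prefix of $\mathcal{P}_{\alpha}(u,t,G)$ in $O(m+t(n,m))$ time and $O(m)$ space, and — crucially — its correctness must certify that whenever $u'\neq t$ there are at least two feasible continuations after $\pi_{uu'}$. That is precisely what licenses fact (ii); without it the recursion tree could degenerate into a unary chain of length $\Theta(n)$ and the amortization would collapse back to the $O(n\,t(n,m))$-delay bound of Algorithm~\ref{alg:simple}. A secondary routine check is that the test $d(v,t)+w(u',v)\le\alpha$ performed against the shortest-path tree $T'_t$ of $G^R-\pi_{uu'}$ is exactly the emptiness test for $\mathcal{P}_{\alpha'}(v,t,G-\pi_{uu'})$ with $\alpha'=\alpha-w(\pi_{uu'})-w(u',v)$, so that every recursive call reaches at least one leaf; this follows from monotonicity of shortest-path distances under non-negative weights (the same observation used for subpaths in Lemma~\ref{lem:bubbles:dist}), but must be restated for the non-negatively weighted undirected setting, where it is exactly this non-negativity that makes the whole improvement go through.
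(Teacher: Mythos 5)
Your proposal is correct and follows essentially the same route as the paper's proof: the same split of the recursion-tree cost into leaves and internal nodes (Eq.~\ref{eq:total_cost}), the same one-to-one correspondence between leaves and the $\gamma$ output paths, the same observation that maximality of $\lcp$ forces every internal node to branch (so internal nodes number at most the leaves), and the same per-node cost $O(m+t(n,m))$ from Theorem~\ref{teo:lcp}, with correctness and the $O(m)$ space bound inherited from Algorithm~\ref{alg:simple}. Your extra remarks on the forward dependence on Theorem~\ref{teo:lcp} and on the emptiness test being sound under non-negative weights are consistent with the paper and do not change the argument.
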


This means that for unweighted graphs it is possible to list all
$k$-bounded $st$-paths in $O(m)$ per path.  Moreover, for
non-negatively weighted graphs, it is possible to list all
$\alpha$-bounded $st$-paths in $O(m + n \log n)$ per path.

\subsubsection{Computing the longest common prefix of $\mathcal{P}_{\alpha}(s,t,G)$}

In this section, we present an efficient algorithm to compute the
longest common prefix of the set of $\alpha$-paths from $s$ to $t$,
completing the description of Algorithm~\ref{alg:improved}. The naive
algorithm for this problem runs in $O(nt(n,m))$, so that using it in
Algorithm~\ref{alg:improved} would not improve the total complexity
compared to Algorithm~\ref{alg:simple}. Basically, the naive algorithm
computes a shortest path $\pi_{st}$ and then for each prefix in
increasing order of length tests if there are at least two distinct
extensions each with total weight less than $\alpha$.  In order to
test the extensions, for each prefix $\pi_{su}$, we recompute the
distances from $t$ in the graph $G - \pi_{su}$, thus performing $n$
shortest path tree computations ($k$ computations in the unweighted
case) in the worst case.

Algorithm~\ref{alg:lcp} improves the naive algorithm by avoiding those
recomputations.  However, before entering the description of
Algorithm~\ref{alg:lcp}, we need a better characterization of the
structure of the longest common prefix of
$\mathcal{P}_{\alpha}(s,t,G)$. Lemma~\ref{lem:lcp} gives this. It does
so by considering a shortest path tree rooted at $s$, denoted by
$T_s$.  Recall that $T_s$ is a subgraph of $G$ and induces a partition
of the edges of $G$ into tree edges and non-tree edges. In this tree,
the longest common prefix of $\mathcal{P}_{\alpha}(s,t,G)$ is a prefix
of the tree path from the root $s$ to $t$. Additionally, any $st$-path
in $G$, excluding the tree path, necessarily passes through at least
one non-tree edge.  The lemma characterizes the longest common prefix
in terms of the non-tree edges from the subtrees rooted at siblings of
the vertices in the tree path from $s$ to $t$.

\begin{figure}[htbp]
  \centering \def\svgwidth{0.8\linewidth}
  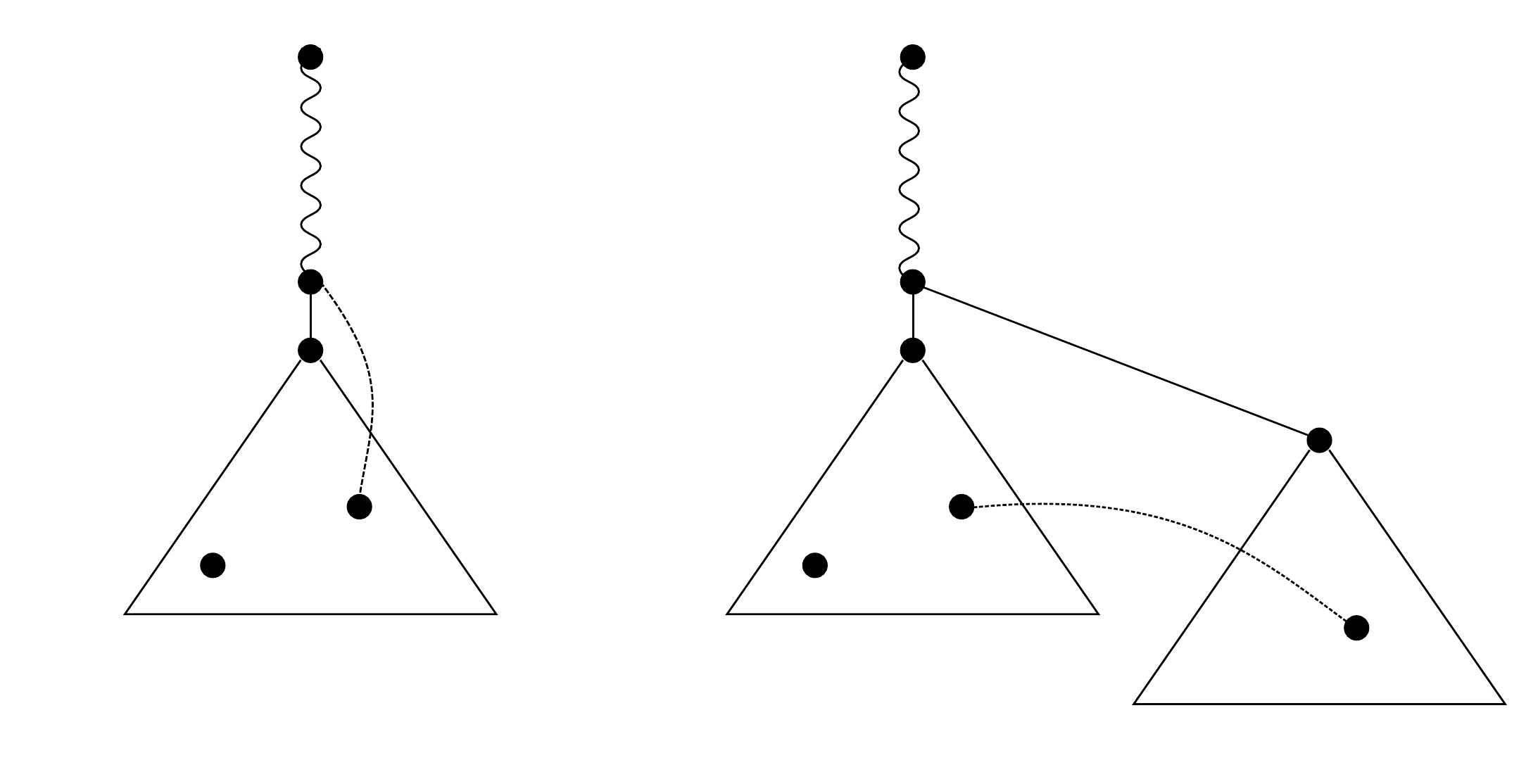 
  \caption{The common prefix $\pi_{su}$ of
    $\mathcal{P}_{\alpha}(s,t,G)$ can always be extended into an
    $st$-path using the tree path of $T_s$ from $u$ to $t$. The path
    $\pi_{su}$ is the longest common prefix if and only if it can also
    be extended with a path containing a non-tree edge $(x,z)$ such
    that $z \in T_v$ and (a) $x = u$ or (b) $x \in T_w$ and $w$ is
    sibling of $v$; and $d_{G'}(s,x) + w(x,z) + d_{G'}(z,t) \leq
    \alpha$, where $G' = G - (u,v)$.}  \label{fig:weighted:lemma}
\end{figure}

\begin{lemma} \label{lem:lcp}
 Let $\pi_{su} = (s = v_0, v_1), \ldots, (v_{l-1}, v_l = u)$ be a
 common prefix of all paths in $\mathcal{P}_{\alpha}(s,t,G) \neq
 \emptyset$ and $T_s$ a shortest path tree rooted at $s$. Then,
 \begin{enumerate}
 \item the path $\pi_{su}(u,v)$ is a common prefix of
   $\mathcal{P}_{\alpha}(s,t,G)$, if there is no edge $(x,z)$, with $z
   \in T_v$ and $x = u$ or $x \in T_w$ where $w$ is a sibling of $v$
   in the tree $T_s$, such that $d_{G'}(s,x) + w(x,z) + d_{G'}(z,t)
   \leq \alpha$, where $G' = G - (u,v)$; (see
   Fig.~\ref{fig:weighted:lemma})
 \item $\pi_{su}$ is the longest common prefix of
   $\mathcal{P}_{\alpha}(s,t,G)$, otherwise.
 \end{enumerate}
\end{lemma}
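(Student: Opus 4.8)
\textbf{Proof plan for Lemma~\ref{lem:lcp}.}

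The plan is to prove both directions of the characterization by analysing where an $st$-path in $\setofpaths_{\alpha}(s,t,G)$ can first deviate from the shortest path tree $T_s$. Fix the common prefix $\pi_{su}$, let $v$ be the child of $u$ on the tree path $s \leadsto t$ in $T_s$, and set $G' = G - (u,v)$. First I would observe that since $\pi_{su}$ is a common prefix of a \emph{nonempty} set and $T_s$ realises shortest distances, the tree path from $u$ to $t$ gives an $\alpha$-bounded completion of $\pi_{su}$; hence there always is at least one extension, and the only question is whether $\pi_{su}(u,v)$ is \emph{forced} or whether a second extension exists. Any $st$-path extending $\pi_{su}$ that does not use the edge $(u,v)$ lives in $G'$; I would show that such a path, viewed from $u$, must eventually enter the subtree $T_v$ (because $t \in T_v$ and $\pi_{su}$ has already consumed the ancestors of $v$), so it contains a first edge $(x,z)$ with $z \in T_v$ and $x \notin T_v$. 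Since all strict ancestors of $v$ together with the subtrees hanging off them except $T_v$ itself have been removed as usable vertices along the prefix (they are internal to $\pi_{su}$, except $u$), the tail $x$ of this crossing edge is forced to be either $u$ itself or a vertex in a subtree $T_w$ for some sibling $w$ of $v$ — this is exactly the case distinction (a)/(b) in the statement.

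Next I would turn the existence of such a path into the distance inequality. Given a crossing edge $(x,z)$ as above, the cheapest $st$-path through it that avoids $(u,v)$ has length $d_{G'}(s,x) + w(x,z) + d_{G'}(z,t)$: the prefix portion $s \leadsto x$ can be taken as a shortest path in $G'$ (it stays outside $T_v$ up to $x$, hence is disjoint from the part of $G'$ it will use afterwards — here I need to be slightly careful and argue, as in the proof of Lemma~\ref{lem:bubbles:dist}, that one can always extract an internally disjoint pair of subpaths, using non-negativity of the weights), and similarly $z \leadsto t$ can be taken shortest in $G'$. Conversely, if this quantity is $\le \alpha$ for some admissible $(x,z)$, then concatenating these shortest pieces yields an $\alpha$-bounded $st$-path that uses $(u,v)$ nowhere, so it extends $\pi_{su}$ but not $\pi_{su}(u,v)$; together with the tree-path extension this shows $\pi_{su}$ is already the longest common prefix, giving statement~2. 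If no admissible $(x,z)$ satisfies the inequality, then \emph{every} $\alpha$-bounded extension of $\pi_{su}$ is forced to use $(u,v)$, so $\pi_{su}(u,v)$ is a common prefix, giving statement~1.

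The main obstacle I anticipate is the disjointness bookkeeping when splicing a shortest $s \leadsto x$ path with a shortest $z \leadsto t$ path in $G'$: a priori these two shortest paths may share vertices, so the concatenation with $(x,z)$ is only a walk, not a simple path. I would handle this exactly as in Lemma~\ref{lem:bubbles:dist}: take the first common vertex and shortcut, using non-negativity of the weights to guarantee the resulting simple $st$-path is no longer than $\alpha$. A second, more technical point is justifying that the crossing edge $(x,z)$ really has its tail confined to $u$ or a sibling subtree of $v$; this rests on the fact that along the prefix $\pi_{su}$ every ancestor of $v$ strictly above $u$ has already been used, so it, and everything in $T_s$ reachable only through it other than $T_v$, is unavailable — a clean but slightly fiddly structural claim about $T_s$ that I would state and prove as a small sub-observation before assembling the two directions.
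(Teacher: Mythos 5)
Your plan follows the same route as the paper's proof: locate the crossing edge $(x,z)$ into $T_v$ on any extension of $\pi_{su}$ that avoids $(u,v)$, lower-bound its cost by $d_{G'}(s,x)+w(x,z)+d_{G'}(z,t)$, and for the converse splice shortest pieces and shortcut them using non-negativity, exactly as in Lemma~\ref{lem:bubbles:dist}. Those parts are sound. The genuine gap is the structural claim you defer to a ``small sub-observation'': that the tail $x$ of the crossing edge is \emph{forced} to be $u$ or to lie in a sibling subtree of $v$ because the subtrees hanging off the strict ancestors of $u$ are ``unavailable'', being ``reachable only through'' vertices already used by $\pi_{su}$. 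That claim is false as stated: $T_s$ is a shortest-path tree, not a DFS tree, so non-tree edges may join $T_u$ (or $T_v$) directly to a subtree hanging off a proper ancestor $v_i$ of $u$. An $\alpha$-bounded extension of $\pi_{su}$ avoiding $(u,v)$ can therefore leave $T_u$ through such an edge, wander in that far subtree, and enter $T_v$ from a vertex that is neither $u$ nor in a sibling subtree of $v$. With your justification, part~1 of the lemma (and hence the sufficiency of the edge set scanned by Algorithm~\ref{alg:lcp}) does not follow; the sub-observation you intend to prove would fail.

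What is actually needed — and what the paper invokes when it writes $\pi_{sx}=\pi_{su}\pi_{ux}$ — is the hypothesis that $\pi_{su}$ is a common prefix of \emph{all} paths of $\mathcal{P}_{\alpha}(s,t,G)$, combined with the fact that $w(\pi_{su})=d_G(s,u)$ because $\pi_{su}$ lies on $T_s$. Concretely: if an $\alpha$-bounded extension $\pi_{su}\cdot Q$ avoiding $(u,v)$ visited some $y\notin T_u$, then $d_G(s,y)\le d_G(s,u)+w(Q[u\leadsto y])$, so replacing $\pi_{su}\cdot Q[u\leadsto y]$ by the tree path $s\leadsto y$ gives an $\alpha$-bounded walk; shortcutting it at the first vertex it shares with $Q[y\leadsto t]$ (such a vertex lies in the subtree of the child $c\neq v_{i+1}$ of the ancestor $v_i$, since $Q$ avoids the vertices of $\pi_{su}$) produces a simple $\alpha$-bounded $st$-path that branches off $\pi_{su}$ at $v_i$, before $u$ — contradicting the common-prefix hypothesis. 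Only this argument confines the crossing edge's tail to $u$ or a sibling subtree of $v$; it is a consequence of the hypothesis on $\mathcal{P}_{\alpha}(s,t,G)$ and of shortest-path-tree distances, not of any ``removal'' of vertices along the prefix. Once this is repaired, the rest of your outline coincides with the paper's proof.
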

\begin{proof}
  Let us prove that if there exists a path $\pi_{ut}$, not containing
  the tree edge $(u,v)$, extending $\pi_{su}$ such that
  $w(\pi_{su}\pi_{ut}) \leq \alpha$, then there is a non-tree edge
  $(x,z)$ such that $d_{G'}(s,x) + w(x,z) + d_{G'}(z,t) \leq \alpha$
  and $z \in T_v$. For the moment, we do not impose that $x \in T_w$
  or $x = u$, we deal with this condition later. The paths of
  $\mathcal{P}_{\alpha}(s,t,G)$ that do not pass through $(u,v)$
  necessarily use some non-tree edge $(x,z)$, where $z \in T_v$, since
  $t$ belongs to $T_v$. Now, consider the path $\pi_{su}\pi_{ut}$ and
  let $(x,z)$ be the last non-tree edge of this path that enters
  $T_v$. This path can be rewritten as $\pi_{sx}(x,z)\pi_{zt}$.  We
  have that $w(\pi_{sx}(x,z)\pi_{zt}) = w(\pi_{sx}) + w(x,z) +
  w(\pi_{zt})$, where the path $\pi_{zt}$ is entirely contained in the
  induced subgraph of the vertices of $T_v$, because of our choice of
  $(x,z)$. Thus, $w(\pi_{zt}) \geq d_{G'}(z,t)$. Moreover, the path
  $\pi_{sx}$ does not contain $(u,v)$, since $\pi_{su}\pi_{ut}$ is a
  simple path and $(u,v)$ is not the first edge of $\pi_{ut}$. Thus,
  $w(\pi_{sx}) \geq d_{G'}(s,x)$. Therefore, combining the two
  inequalities, we have that $d_{G'}(s,x) + w(x,z) + d_{G'}(z,t) \leq
  w(\pi_{sx}) + w(x,z) + w(\pi_{zt}) \leq \alpha$ and $(x,z)$ is a
  non-tree edge with $z \in T_v$.
  
  It remains to prove that it is sufficient to consider only the
  non-tree edges $(x,z)$ entering $T_v$, such that $x = u$ or $x \in
  T_w$ where $w$ is sibling of $v$ in the tree. Let $\pi_{st}$ be a
  path in $\mathcal{P}_{\alpha}(s,t,G)$ including $(x,z)$ that does
  not have $\pi_{su}(u,v)$ as a prefix. This path can be rewritten as
  $\pi_{st} = \pi_{sx}(x,z)\pi_{zt}$. Since $\pi_{su}$ is a common
  prefix of $\mathcal{P}_{\alpha}(s,t,G)$, we have that $\pi_{su}$ is
  a prefix of $\pi_{sx}(x,z)\pi_{zt}$. Thus, $\pi_{sx}
  = \pi_{su}\pi_{ux}$, and either $\pi_{ux}$ is empty, so $(x,z)$ is a
  non-tree edge from $u$, or $\pi_{ux}$ enters a sibling subtree of
  $v$. This completes the proof of the first part of the lemma.

  Let us prove the second part of the lemma. There is at least one
  non-tree edge $(x,z)$ entering $T_v$ from $u$ or $T_w$, a sibling of
  $v$, such that $d_{G'}(s,x) + w(x,z) + d_{G'}(z,t) \leq
  \alpha$. Thus, the concatenation $\pi_{sx}(x,z)\pi_{zt}$ of the
  shortest paths contains $\pi_{su}(u,y)$ as prefix, where $y$ is a
  neighbor of $u$. Moreover, there is a subpath $\pi^*_{st}$ of
  $\pi_{sx}(x,z)\pi_{zt}$ that is simple and $w(\pi^*_{st}) \leq
  \alpha$, which also has $\pi_{su}(u,y)$ as prefix. Therefore,
  $\pi_{su}$ has two possible extensions, using $(u,y)$ or the tree
  edge $(u,v)$.
\end{proof}

In order to use the characterization of Lemma~\ref{lem:lcp} for the
longest prefix of $\mathcal{P}_{\alpha}(s,t,G)$, we need to be able to
efficiently test for the weight condition given in item~1, namely
$d_{G'}(s,x) + w(x,z) + d_{G'}(z,t) \leq \alpha$, where $G' = G -
(u,v)$ and $(u,v)$ belongs to the tree path from $s$ to $t$. We have
that $d_{G'}(s,x) = d_{G}(s,x)$, since $x$ does not belong to the
subtree of $v$ in the shortest path tree $T_s$. Indeed, only the
distances of vertices in the subtree $T_v$ can possibly change after
the removal of the tree edge $(u,v)$. On the other hand, in principle
we have no guarantee that $d_{G'}(z,t)$ also remains unchanged: recall
that to maintain the distances from $t$ we need a tree rooted at $t$
not at $s$. Clearly, we cannot compute the shortest path tree from $t$
for each $G'$, in the worst case, this would imply the computation of
$n$ shortest path trees. For this reason, we need
Lemma~\ref{lem:path:dist}.  It states that, in the specific case of
the vertices $z$ we need to compute the distance to $t$ in $G'$, we
have that $d_{G'}(z,t) = d_G(z,t)$. A similar result was proved in
\cite{Hershberger01}.

\begin{lemma} \label{lem:path:dist}
  Let $T_s$ be a shortest path tree rooted at $s$ and $t$ a vertex of
  $G$. Then, for any edge $(u,v)$, with $v$ closer to $t$, in the
  shortest path $\pi_{st}$ in the tree $T_s$, we have that $d_G(z,t) =
  d_{G'}(z,t)$, where $z \in T_v$ and $G' = G - (u,v)$.
\end{lemma}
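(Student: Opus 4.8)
The plan is to prove Lemma~\ref{lem:path:dist} by contradiction, exploiting the fact that a shortest path tree $T_s$ is being used and that the edge $(u,v)$ removed lies on the tree path from $s$ to $t$ with $v$ on the side of $t$. First I would set up the contradiction: suppose there is a vertex $z \in T_v$ with $d_{G'}(z,t) > d_G(z,t)$. This means every shortest $zt$-path in $G$ uses the edge $(u,v)$; pick one such shortest path $\rho = z \leadsto t$ and let $(u,v)$ be the occurrence of that edge on $\rho$, so we may write $\rho = \rho_{zu} \cdot (u,v) \cdot \rho_{vt}$.

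The key observation is that $u \notin T_v$: since $(u,v)$ is a tree edge with $v$ the child (the endpoint closer to $t$ in the tree path $\pi_{st}$), $u$ is the parent of $v$ in $T_s$, so the subtree $T_v$ does not contain $u$. Now the subpath $\rho_{zu}$ goes from $z \in T_v$ to $u \notin T_v$, so it must leave the subtree $T_v$ through some edge. I would then consider the first edge $(x,y)$ along $\rho_{zu}$ that exits $T_v$ — i.e. $x \in T_v$ and $y \notin T_v$ — and look at the subpath $\sigma = z \leadsto x$ of $\rho_{zu}$, which stays inside $T_v$, hence lies entirely in $G' = G - (u,v)$ (the only deleted edge $(u,v)$ has $v \in T_v$ but $u \notin T_v$, so it is not an edge within $T_v$). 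The remaining part of the argument is a distance comparison: I want to show one can route from $x$ to $t$ in $G'$ without net cost increase, contradicting $d_{G'}(z,t) > d_G(z,t)$. For this I would use the tree structure at $t$: since $v$ is on the tree path $\pi_{st}$ between $s$ and $t$, the tree path from $v$ to $t$ (a suffix of $\pi_{st}$) avoids $u$ and hence lies in $G'$; and for the vertex $x$, which is in $T_v$, its tree path to $s$ passes through $v$, so it uses $(u,v)$ — that is not directly what I want, so instead I would route $x \to z \to t$ going back along $\sigma$ and then along $\rho_{vt}$-related pieces, or more cleanly observe that $d_{G'}(z,t) \le d_{G'}(z,x) + d_{G'}(x,t)$ and bound each term using that $\rho$ was a shortest path in $G$.

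The hard part will be the case analysis needed to close the loop cleanly: showing that the ``escape'' from $T_v$ on the shortest path $\rho_{zu}$ can be short-circuited to stay in $G'$ while not increasing the total length. Concretely, the clean way is: take the first exit edge $(x,y)$ of $\rho$ out of $T_v$ (note $y \notin T_v$ means the remainder $y \leadsto t$ of $\rho$ need not use $(u,v)$ again — but $\rho$ might, so one picks the last exit edge instead, or argues by the non-negativity of weights). With non-negative weights, the subpath $x \leadsto t$ of $\rho$ after the last exit from $T_v$ is a path in $G'$ of length $\le w(\rho)$, and $z \leadsto x$ inside $T_v$ is also in $G'$; concatenating gives a $zt$-walk in $G'$ of length $\le w(\rho) = d_G(z,t)$, which after taking a simple subpath yields $d_{G'}(z,t) \le d_G(z,t)$, contradicting the assumption. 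I would double-check that non-negativity of the weights is what makes ``take a simple subpath'' cost-free, and cite the reference to \cite{Hershberger01} for the analogous statement, mirroring the structure already used in the proof of Lemma~\ref{lem:lcp}.
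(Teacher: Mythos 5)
Your setup (argue by contradiction, note $u\notin T_v$, and use that any subpath whose vertices all lie in $T_v$ automatically avoids $(u,v)$) is fine, but the concrete step that is supposed to produce a $zt$-path in $G'$ of weight at most $d_G(z,t)$ --- splicing $\rho$ at its first or last exit from $T_v$ --- does not work. Write $\rho=\rho_{zu}\cdot(u,v)\cdot\rho_{vt}$ as you do. Every exit edge of $\rho$ from $T_v$ that occurs on $\rho_{zu}$ lies strictly before the edge $(u,v)$, so the corresponding suffix of $\rho$ still contains $(u,v)$ and is \emph{not} a path of $G'$; this is exactly what happens for the first exit, and also for the last exit in the typical case where $\rho_{vt}$ stays inside $T_v$. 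In the remaining case, where the last exit $(x,y)$ lies on $\rho_{vt}$, the suffix of $\rho$ after $x$ does avoid $(u,v)$, but then the prefix $z\leadsto x$ of $\rho$ contains $(u,v)$, so it is neither inside $T_v$ nor in $G'$; and if instead you replace it by the tree path from $z$ to $x$ inside $T_v$, the weight accounting breaks: the tree path between two non-root vertices of a shortest path tree need not be a shortest path, and its weight is not bounded by the weight of the discarded prefix, so the claimed bound $\le w(\rho)$ is unjustified. Likewise, your alternative ``bound $d_{G'}(x,t)$ using that $\rho$ is shortest'' is circular --- that bound is precisely the lemma applied to $x$ in place of $z$. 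So no choice of exit edge makes both pieces lie in $G'$ while keeping total weight at most $w(\rho)=d_G(z,t)$.

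The missing idea --- and the paper's proof --- is to abandon the prefix/suffix of $\rho$ altogether and reroute through the vertex $v$, which $\rho$ necessarily visits. Split $\rho$ at $v$ into $\rho_{zv}$ and $\rho_{vt}$. Since both $z$ and $t$ lie in $T_v$, the tree paths $\pi_{vz}$ and $\pi_{vt}$ inside $T_v$ are subpaths of the shortest tree paths $s\leadsto z$ and $s\leadsto t$, hence are shortest paths in $G$ of weights $d_G(v,z)$ and $d_G(v,t)$, and all their vertices lie in $T_v$, so neither uses $(u,v)$. Their concatenation is a $zt$-walk in $G'$ of weight $d_G(v,z)+d_G(v,t)\le w(\rho_{zv})+w(\rho_{vt})=d_G(z,t)$, and by non-negativity of the weights it contains a simple $zt$-path of no larger weight, giving $d_{G'}(z,t)\le d_G(z,t)$ (the reverse inequality is trivial). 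Note that this is where the shortest-path-tree structure is genuinely exploited; your draft only uses that paths staying inside $T_v$ avoid $(u,v)$, which by itself is not enough. The contradiction framing versus the paper's direct argument is immaterial.
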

\begin{proof}
  Suppose that $(u,v)$ belongs to the shortest path $\pi'_{zt}$ in
  $G$. This path can be written as the concatenation
  $\pi'_{zv}\pi'_{vt}$ (assuming wlog $v$ closer than $u$ to $t$),
  where $\pi'_{zv}$ and $\pi'_{vt}$ are both simple paths. We also
  have that $t$ and $z$ belong to the subtree $T_v$ which does not
  include the edge $(u,v)$, so the paths $\pi_{vt}$ and $\pi_{vz}$ in
  the tree $T_v$ are shortest paths that do not include $(u,v)$. The
  concatenation of $\pi_{vt}$ and $\pi_{vz}$ contains a subpath
  $\pi_{zt}$ from $z$ to $t$ such that $w(\pi_{zt}) \leq w(\pi_{vt}) +
  w(\pi_{vz})$. On the other hand, $w(\pi_{vt}) + w(\pi_{vz}) \leq
  w(\pi'_{zv}) + w(\pi'_{vt})$, since $\pi_{vt}$ and $\pi_{vz}$ are
  both shortest paths. Thus, $w(\pi_{zt}) \leq w(\pi'_{zv}) +
  w(\pi'_{vt})$.  Therefore, the concatenation of $\pi_{vt}$ and
  $\pi_{vz}$ contains as a subpath a shortest path from $z$ to $t$
  that does not include $(u,v)$.
\end{proof}
 
It is not hard to verify that Lemma~\ref{lem:lcp} is also valid for
directed graphs. Indeed, in the proof above, the fact that $G$ is
undirected is not used. On the other hand, the non-negative hypothesis
for the weights is necessary; more specifically, we need the
monotonicity property for path weights which states that for any path
the weight of any sub-path is not greater than the weight of the full
path. Now, in Lemma~\ref{lem:path:dist} both the path monotonicity
property and the fact that the graph is undirected are
necessary. Since these two lemmas are the base for the efficiency of
Algorithm~\ref{alg:lcp}, it seems difficult to extend it to general
weights and/or directed graphs.

Algorithm~\ref{alg:lcp} implements the strategy suggested by
Lemma~\ref{lem:lcp}. Given a shortest path tree $T_s$ of $G$ rooted at
$s$, the algorithm traverses each vertex $v_i$ in the tree path $s =
v_0, \ldots, v_n = t$ from the root $s$ to $t$, and at every step
finds all non-tree edges $(x,z)$ entering the subtree rooted at
$v_{i+1}$ from a sibling subtree, i.e. a subtree rooted at $w \in
N^+(v_i) \setminus \{v_{i+1}\}$. For each non-tree $(x,z)$ linking the
sibling subtrees found, it checks if it satisfies the weight condition
$d_{G'}(s,x) + w(x,z) + d_{G'}(z,t) \leq \alpha$, where $G' = G
\setminus (v_i,v_{i+1})$, given in item~1 of
Lemma~\ref{lem:lcp}. Item~2 of the same lemma implies that the first
time an edge $(x,z)$ satisfies the weight condition, the tree path
traversed so far is the longest common prefix of
$\mathcal{P}_{\alpha}(s,t,G)$. In order to test the weight conditions,
as stated previously, we have that $d_{G'}(s,x) = d_G(s,x)$, since $x$
does not belong to the subtree of $v$ in $T_s$. In addition,
Lemma~\ref{lem:path:dist} guarantees that $d_{G'}(z,t) =
d_G(z,t)$. Thus, it is sufficient for the algorithm to compute only
the shortest path trees from $t$ and from $s$ in $G$.

\begin{algorithm} 
\caption{$\lcp(s, t, \alpha, G)$} \label{alg:lcp}
compute $T_s$, a shortest path tree from $s$ in $G$ \\
compute $T_t$, a shortest path tree from $t$ in $G$ \\
let $\pi_{st} = (s = v_0, v_1) \ldots (v_{n-1}, v_n = t)$ be the shortest path in $T_s$ \\
\For{$v_i \in \{v_1, \ldots, v_n\}$} { \label{alg:lcp:for}
  \For{$w \in N^+(v_i) \setminus \{v_{i+1}\}$}{
    let $T_w$ be the subtree of $T_s$ rooted at $w$ \\
    \For{$(x,z) \in G$, s.t. $x \in T_w$ {\bf or} $x = v_i$}{
      \If{$z \in T_{v_{i+1}}$ {\bf and}  $d_G(s,x) + w(x,z) + d_G(z,t) \leq \alpha$}{ \label{alg:lcp:edges}
        {\bf break} \\
      }
    }
  }
}
{\bf return} $\pi_{sv_{i-1}}$
\end{algorithm}

\begin{theorem} \label{teo:lcp}
  Algorithm~\ref{alg:lcp} finds the longest common prefix of
  $\mathcal{P}_{\alpha}(s,t,G)$ in $O(m+t(n,m))$ using $O(m)$ space.
\end{theorem}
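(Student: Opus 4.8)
The statement has three parts: correctness (Algorithm~\ref{alg:lcp} returns the longest common prefix of $\mathcal{P}_{\alpha}(s,t,G)$), the $O(m+t(n,m))$ time bound, and the $O(m)$ space bound. Correctness will be reduced to Lemma~\ref{lem:lcp} and Lemma~\ref{lem:path:dist}; the running time will be obtained by an amortization over the vertices of the shortest path tree $T_s$.

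For correctness, I would argue by induction that, as the outer loop (line~\ref{alg:lcp:for}) walks down the tree path $\pi_{st} = (s = v_0, v_1) \ldots (v_{n-1}, v_n = t)$ of $T_s$, the prefix $\pi_{s v_i}$ processed so far is always a common prefix of $\mathcal{P}_{\alpha}(s,t,G)$. At step $i$ the algorithm inspects exactly the non-tree edges $(x,z)$ with $z \in T_{v_{i+1}}$ and either $x = v_i$ or $x \in T_w$ for a sibling $w$ of $v_{i+1}$, and it tests $d_G(s,x) + w(x,z) + d_G(z,t) \leq \alpha$. Setting $G' = G - (v_i, v_{i+1})$, the text preceding Lemma~\ref{lem:path:dist} gives $d_G(s,x) = d_{G'}(s,x)$ since $x \notin T_{v_{i+1}}$, and Lemma~\ref{lem:path:dist} gives $d_G(z,t) = d_{G'}(z,t)$ since $z \in T_{v_{i+1}}$; hence the tested condition coincides with the one in item~1 of Lemma~\ref{lem:lcp}. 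Therefore, if no edge passes the test at step $i$, item~1 ensures the induction continues with $\pi_{s v_{i+1}}$; and at the first step where some edge passes, item~2 of Lemma~\ref{lem:lcp} guarantees that the tree prefix reported at that point is exactly the longest common prefix returned by the algorithm.

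For the time and space bounds, I would precompute $T_s$, $T_t$ and the distance labels $d_G(s,\cdot)$, $d_G(\cdot,t)$ in $O(t(n,m))$ time, together with a preorder/postorder (Euler-tour) numbering of $T_s$ so that a query ``$z \in T_v$?'' costs $O(1)$. The heart of the analysis is the triple loop. The key structural fact is that the subtrees of $T_s$ hanging off the tree path $v_0, \ldots, v_n$ are pairwise disjoint and, together with the path vertices themselves, partition $V$; consequently every vertex $x$ is scanned in at most one iteration of the outer loop — either as some $v_i$, or as a member of the unique subtree $T_w$ rooted at the child $w$ through which $x$ descends from the path. Hence the cumulative work of the two inner loops is $O(\sum_{x \in V} d(x)) = O(m)$, where each examined edge is handled in $O(1)$ using the precomputed distances and the Euler-tour numbering (for the $z \in T_{v_{i+1}}$ test); enumerating the children $N^+(v_i)$ over all path vertices costs only $O(n)$. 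Adding the $O(t(n,m))$ preprocessing yields the $O(m + t(n,m))$ total time, and since the only data kept are the two trees, $O(n)$ labels, and the input graph, the space is $O(m+n) = O(m)$.

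\textbf{Main obstacle.}
The delicate point is making the amortization airtight, i.e.\ proving that the outer loop's cumulative cost is $O(m)$ and not $O(nm)$: one must verify carefully that an undirected edge $\{x,z\}$ is charged to a single endpoint scan (and that the asymmetric condition $z \in T_{v_{i+1}}$ is checked in the correct direction), and above all that the subtree-disjointness of $T_s$ relative to the \emph{fixed} path $s \leadsto t$ really forces each vertex to appear in at most one outer iteration. This is exactly the structural property that lets us reuse the \emph{single} pair of trees $T_s, T_t$ for every tested edge \textemdash\ which is why Lemma~\ref{lem:path:dist} is indispensable, as it removes any need to recompute a shortest path tree per vertex $v_i$.
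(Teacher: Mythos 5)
Your proposal is correct and follows essentially the same route as the paper: the paper's proof of Theorem~\ref{teo:lcp} is exactly the $O(t(n,m))$ preprocessing of $T_s$ and $T_t$ plus the amortized $O(m)$ bound on line~\ref{alg:lcp:edges} (constant-time tests via precomputed distances and subtree-membership preprocessing, with the disjointness of the sibling subtrees $T_w$ guaranteeing each neighborhood is visited at most once), while correctness is delegated, as you do, to Lemma~\ref{lem:lcp} together with Lemma~\ref{lem:path:dist} and the observation that $d_{G'}(s,x)=d_G(s,x)$. Your write-up merely spells out the induction along the tree path and the Euler-tour implementation detail a bit more explicitly than the paper does.
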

\begin{proof}
  The cost of the algorithm can be divided in two parts: the cost to
  compute the shortest path trees $T_s$ and $T_t$, and the cost of the
  loop in line~\ref{alg:lcp:for}. The first part is bounded by
  $O(t(n,m))$. Let us now prove that the second part is bounded by
  $O(m+n)$. The cost of each execution of line~\ref{alg:lcp:edges} is
  $O(1)$, since we only need distances from $s$ and $t$ and the
  shortest path trees from $s$ and $t$ are already computed, and we
  pre-process the tree to decide in $O(1)$ if a vertex belongs to a
  subtree.  In that way, the cost of the loop is bounded by the number
  of times line~\ref{alg:lcp:edges} is
  executed. Line~\ref{alg:lcp:edges} is executed at most $m$ times,
  the neighborhood of each vertex is visited at most once, since the
  subtrees $T_w$ are disjoint, they are rooted at vertices adjacent to
  some vertex in the tree path $\pi_{su}$ but not included in it.
\end{proof}

\subsection{Listing paths in increasing order of their lengths} \label{sec:weighted:ordered}
In this section, we modify Algorithm~\ref{alg:simple} to output the
$\alpha$-bounded $st$-paths in increasing order of their length, while
maintaining (almost) the same time complexity but increasing the
memory usage. As for Algorithm~\ref{alg:simple}, this algorithm works
for any version of the problem, directed or undirected graphs with
general weights, and the complexity depends on the cost to compute
a shortest path tree. The pseudocode is shown in
Algorithm~\ref{alg:iterative}. This is a generic description of the
algorithm, the container $Q$ is not specified in the pseudocode, the
only requirement is the support for two operations: \emph{push}, to
insert a new element in $Q$; and \emph{pop}, to remove and return an
element of $Q$.

\begin{algorithm}[htbp]
\caption{$\listpathsiter(u,t, \alpha, \pi_{su}, G)$}  \label{alg:iterative}   
push $\langle s, t, \emptyset, G\rangle$ in $Q$ \\
\While{$Q$ is not empty} {
  $\langle u, t, \pi_{su}, G\rangle = Q.pop()$ \\
  \uIf{$u = t$}{ 
    output($\pi_{su}$) \\ 
  }
  \Else {
    compute a shortest path tree $T_t$ from $t$ in $G^R - u$ \\
    \For{$v \in N^+(u)$}{
      \If{$d(v,t) \leq \alpha - w(u,v)$}{ \label{alg:iterative:test}
        push $\langle v,t, \alpha - w(u,v), \pi_{su} (u,v), G - u \rangle$ in $Q$ 
      } 
    }
  }
}
\end{algorithm}

Algorithm~\ref{alg:iterative} is a non-recursive version of
Algorithm~\ref{alg:simple}, and uses the same strategy to partition
the solution space (Eq.~\ref{eq:path:partition}). However, the order
in which the partitions are explored is not necessarily the same,
depending on the type of container used for $Q$. We show that if $Q$
is a stack then the solutions are output in the reverse order of
Algorithm~\ref{alg:simple}, and the maximum size of the stack is
linear in the size of the input.  If on the other hand, $Q$ is a heap,
using a suitable key, the solutions are output in increasing order of
their lengths, but in this case the maximum size of the heap is linear
in the number of solutions, which is not polynomial in the size of the
input.


The recursive partition of $\mathcal{P}_{\alpha}(s,t,G)$, i.e. the set
of $\alpha$-bounded $st$-paths in $G = (V,E)$, according to
Eq.~\ref{eq:path:partition} has a rooted tree structure. Indeed, the
nodes are the sets $\mathcal{P}_{\alpha'}(v,t,G')$, where $G' =
(V',E')$ is a subgraph of $G$, $\alpha' \in \mathbb{Q}$, and $v \in
V'$; for a given node the children are the sets in the partition of
Eq.~\ref{eq:path:partition} satisfying the condition of
line~\ref{alg:iterative:test}, i.e. the non-empty sets; the root is
$\mathcal{P}_{\alpha}(s,t,G)$; and the leaves are the singletons
$\mathcal{P}_{\alpha'}(t,t,G')$, which are in a one-to-one
correspondence with the $\alpha$-bounded $st$-paths. We denote this
rooted tree by $\mathcal{T}$.

For any container $Q$ supporting push and pop operations,
Algorithm~\ref{alg:iterative} visits each node of $\mathcal{T}$
exactly once, since at every iteration a node from $Q$ is deleted and
its children are inserted in $Q$, and $\mathcal{T}$ is a tree.  In
particular, this guarantees that every leaf of $\mathcal{T}$ is
visited exactly once, thus proving the following lemma.

\begin{lemma}
  Algorithm~\ref{alg:iterative} outputs all $\alpha$-bounded
  $st$-paths.
\end{lemma}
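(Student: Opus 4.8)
The statement to prove is that Algorithm~\ref{alg:iterative} outputs all $\alpha$-bounded $st$-paths. The plan is to establish this as a consequence of two facts: (i) the recursive partition scheme of Eq.~\ref{eq:path:partition}, restricted to the non-empty subsets, induces a finite rooted tree $\mathcal{T}$ whose leaves are in one-to-one correspondence with the $\alpha$-bounded $st$-paths; and (ii) Algorithm~\ref{alg:iterative} visits every node of $\mathcal{T}$ exactly once, regardless of how the container $Q$ orders its elements. Given (i) and (ii), the algorithm visits every leaf exactly once, and since it outputs exactly one path per leaf (the case $u=t$), it outputs each $\alpha$-bounded $st$-path exactly once.

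First I would make precise the tree $\mathcal{T}$: as already described in the excerpt, its nodes are the sets $\mathcal{P}_{\alpha'}(v,t,G')$ arising in the recursion, the root is $\mathcal{P}_{\alpha}(s,t,G)$, the children of a node are the non-empty sets produced by Eq.~\ref{eq:path:partition} that satisfy the test on line~\ref{alg:iterative:test}, and the leaves are the singletons $\mathcal{P}_{\alpha'}(t,t,G')$. I would note that $\mathcal{T}$ is finite: each recursive step removes a vertex (passing from $G$ to $G-u$), so every root-to-leaf path has length at most $n$, and each node has out-degree at most $\deg(u)\le n$; hence $\mathcal{T}$ is finite. The one-to-one correspondence between leaves and $\alpha$-bounded $st$-paths follows from the correctness of the partition in Eq.~\ref{eq:path:partition} together with the fact that the emptiness test on line~\ref{alg:iterative:test} correctly prunes exactly the empty subsets (this is the same reasoning already used for Algorithm~\ref{alg:simple}): a leaf is reached iff $u=t$, the accumulated path $\pi_{su}$ is a valid $st$-path of weight at most $\alpha$, and conversely every such path corresponds to a unique sequence of left/subtree choices, i.e. a unique leaf.

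Next I would argue fact (ii) by a simple invariant on the main loop. The claim is: at every iteration, $Q$ contains exactly the nodes of $\mathcal{T}$ that have been discovered (pushed) but not yet processed (popped), and moreover every node of $\mathcal{T}$ is pushed exactly once and popped exactly once over the whole execution. For this I would use that $Q$ supports only push and pop, that $\mathcal{T}$ is a tree (so a node has a unique parent and is therefore generated by exactly one iteration), and that the loop terminates only when $Q$ is empty. Initialization pushes the root; when a node is popped, its children — and only its children — are pushed; since $\mathcal{T}$ is finite and acyclic, by induction on the order of processing each node is pushed exactly once and popped exactly once, so the loop terminates and visits every node. Restricting to leaves and recalling that each leaf triggers exactly one \texttt{output}, we conclude that every $\alpha$-bounded $st$-path is output, and exactly once.

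The main obstacle, such as it is, is not depth but bookkeeping: one must be careful that the informal ``visits each node exactly once'' is justified without relying on any property of the container ordering, since the lemma is stated for an arbitrary $Q$. The cleanest way is the push/pop invariant above, which is purely combinatorial and uses only that $\mathcal{T}$ is a finite rooted tree and that children are pushed precisely when the parent is popped. A minor point to handle is confirming that the test on line~\ref{alg:iterative:test} is evaluated correctly, i.e. that the distances computed from the reverse graph $G^R-u$ give $d_{G-u}(v,t)$; this is exactly the argument already spelled out for Algorithm~\ref{alg:simple} and can simply be cited rather than repeated.
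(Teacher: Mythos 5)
Your proposal is correct and follows essentially the same route as the paper: the paper also defines the rooted tree $\mathcal{T}$ induced by the partition of Eq.~\ref{eq:path:partition}, notes that its leaves are in one-to-one correspondence with the $\alpha$-bounded $st$-paths, and argues that since each popped node has its children pushed and $\mathcal{T}$ is a tree, every node (hence every leaf) is visited exactly once regardless of the container. Your push/pop invariant and the remarks on finiteness and on the line~\ref{alg:iterative:test} test just make explicit details the paper leaves implicit.
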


Let us consider the case where $Q$ is a stack. It is not hard to prove
that Algorithm~\ref{alg:simple} is a DFS traversal of $\mathcal{T}$
starting from the root, while Algorithm~\ref{alg:iterative} is an
\emph{iterative} DFS (\cite{Sedgewick01}) traversal of $\mathcal{T}$
also starting from the root. Basically, an iterative DFS keeps the
vertices of the fringe of the non-visited subgraph in a stack, at each
iteration the next vertex to be explored is popped from the stack, and
recursive calls are replaced by pushing vertices in the stack. Now,
for a fixed permutation of the children of each node in $\mathcal{T}$,
the nodes visited in an iterative DFS traversal are in the reverse
order of the nodes visited in a recursive DFS traversal
(\cite{Sedgewick01}), thus proving Lemma~\ref{lem:weighted:dfs_order}.

\begin{lemma} \label{lem:weighted:dfs_order}
  If $Q$ is a stack, then Algorithm~\ref{alg:iterative} outputs the
  $\alpha$-bounded $st$-path in the reverse order of
  Algorithm~\ref{alg:simple}.
\end{lemma}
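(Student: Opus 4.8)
The plan is to establish a precise structural correspondence between the recursion of Algorithm~\ref{alg:simple} and the container-driven iteration of Algorithm~\ref{alg:iterative} on the tree $\mathcal{T}$, and then invoke the classical fact that an iterative DFS using a stack visits the nodes of a rooted tree in the reverse order of the recursive DFS (for a fixed child ordering). First I would fix, once and for all, an ordering of the out-neighbors $N^+(u)$ used in the \texttt{for} loop of both algorithms; this induces a fixed permutation of the children of every node of $\mathcal{T}$, so that both traversals operate on the same ordered tree. I would then argue that Algorithm~\ref{alg:simple} is exactly a recursive DFS of $\mathcal{T}$ from the root $\langle s,t,\alpha,\emptyset,G\rangle$: each recursive call corresponds to descending to a child, the test on line~\ref{alg:simple:test} is precisely the check that the child node is non-empty (so only children present in $\mathcal{T}$ are visited), and an output is produced exactly when a leaf (a singleton $\mathcal{P}_{\alpha'}(t,t,G')$) is reached.

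Next I would analyze Algorithm~\ref{alg:iterative} with $Q$ instantiated as a stack. The key observation is that when a node is popped, its (non-empty) children are pushed onto the stack in the loop order; by the LIFO discipline, the last child pushed is the first popped, so the children are effectively explored in reverse of the fixed loop order — but this reversal at every node is exactly the standard transformation that makes an iterative stack-based DFS enumerate a tree's nodes in the mirror order of the recursive DFS. Here I would cite the stated fact from \cite{Sedgewick01}: for a fixed permutation of the children of each node, the sequence of nodes visited by an iterative DFS with a stack is the reverse of the sequence visited by a recursive DFS. Since both algorithms visit each node of $\mathcal{T}$ exactly once (as already established in the preceding lemma, because $\mathcal{T}$ is a tree and each iteration removes one node and inserts its children), and since outputs occur exactly at the leaves, the subsequence of leaves visited by Algorithm~\ref{alg:iterative} is the reverse of the subsequence of leaves visited by Algorithm~\ref{alg:simple}. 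As leaves are in one-to-one correspondence with $\alpha$-bounded $st$-paths, this yields the claim.

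The one point requiring a little care — and what I expect to be the main obstacle — is making rigorous the claim that the stack-based loop in Algorithm~\ref{alg:iterative} really implements the textbook iterative DFS, because a naive reading pushes \emph{all} children before any is popped, whereas the classical iterative DFS is sometimes described with a slightly different push/pop interleaving. I would handle this by showing directly, via induction on the structure of $\mathcal{T}$, that the pop-order of Algorithm~\ref{alg:iterative} on the subtree rooted at any node $x$ is: $x$ first, then the pop-orders of the subtrees of $x$'s children \emph{in reverse loop order}, concatenated; the inductive step uses that when $x$ is popped the stack below is untouched and the children of $x$ sit on top in loop order, so they (and their entire subtrees) are fully processed in reverse loop order before anything beneath is touched. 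Comparing this recurrence with the mirror of the recursive recurrence (which visits $x$ first, then the subtrees of the children in loop order) closes the argument. No new macros are needed beyond those already in the excerpt.
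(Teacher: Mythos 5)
Your proposal matches the paper's proof: both identify Algorithm~\ref{alg:simple} as a recursive DFS of $\mathcal{T}$ and Algorithm~\ref{alg:iterative} with a stack as an iterative DFS of $\mathcal{T}$, and conclude by the reversal of the traversal order, which the paper simply cites from \cite{Sedgewick01}. Your closing induction is a useful extra precaution, since the node-level claim ``iterative DFS visits nodes in exactly the reverse order of the recursive DFS'' is not literally true for internal nodes under the usual pop/preorder conventions (the pop order is the preorder of the mirrored tree), but the leaf subsequence --- which is all the lemma needs, as outputs occur only at leaves --- is indeed reversed.
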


For any rooted tree, at any moment during an iterative DFS traversal,
the number of nodes in the stack is bounded by the sum of the degrees
of the root-to-leaf path currently being explored. Recall that every
leaf in $\mathcal{T}$ corresponds to a path in
$\mathcal{P}_{\alpha}(s,t,G)$. Actually, there is a one-to-one
correspondence between the nodes of a root-to-leaf path $P$ in
$\mathcal{T}$ and the vertices of the $\alpha$-bounded $st$-path $\pi$
associated to that leaf. Hence, the sum of the degrees of the nodes of
$P$ in $\mathcal{T}$ is equal to the sum of the degrees of the
vertices $\pi$ in $G$, which is bounded by $m$, thus proving
Lemma~\ref{lem:weighted:dfs_size}. 

\begin{lemma} \label{lem:weighted:dfs_size}
  The maximum number of elements in the stack of
  Algorithm~\ref{alg:iterative} over all iterations is bounded by $m$.
\end{lemma}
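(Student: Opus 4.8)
The plan is to exploit the tree $\mathcal{T}$ already introduced: when $Q$ is a stack, Algorithm~\ref{alg:iterative} is precisely an iterative depth-first traversal of $\mathcal{T}$ started at the root $\mathcal{P}_{\alpha}(s,t,G)$, since each iteration pops one node of $\mathcal{T}$ and pushes exactly its children (the non-empty sets in the partition of Eq.~\ref{eq:path:partition}, i.e.\ those passing the test of line~\ref{alg:iterative:test}). So it suffices to bound, at an arbitrary iteration, the number of nodes currently held in the stack.

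First I would record the standard invariant of iterative DFS on a rooted tree. Suppose the traversal is, at the current iteration, about to process (or has just pushed the children of) a node reached along a root-to-leaf partial path $P=(N_0,N_1,\dots,N_d)$ of $\mathcal{T}$, with $N_0$ the root. Because a node is removed from the stack only when it is processed, and processing a node pushes only its children and nothing else, every node still in the stack is a not-yet-processed child of some $N_i$ with $0\le i\le d$. Hence the stack size is at most $\sum_{i=0}^{d} c(N_i)$, where $c(N)$ denotes the number of children of $N$ in $\mathcal{T}$.

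Next I would bound $c(N_i)$ and argue the $N_i$'s cannot overlap too much. Write $N_i=\mathcal{P}_{\alpha_i}(u_i,t,G_i)$; by Eq.~\ref{eq:path:partition} its children range over $v\in N^+_{G_i}(u_i)$, so $c(N_i)\le d^+_{G_i}(u_i)\le d^+_G(u_i)$ since $G_i$ is a subgraph of $G$. Moreover, descending one level in $P$ appends $u_i$ to the current prefix $\pi_{su}$ and deletes $u_i$ from the graph ($G_{i+1}=G_i-u_i$); therefore $\pi_{su}$ stays simple and $u_0,u_1,\dots,u_d$ are the (pairwise distinct) vertices of a simple path in $G$, which is exactly the one-to-one correspondence between the nodes of a root-to-leaf path of $\mathcal{T}$ and the vertices of the associated $st$-path. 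Consequently $\sum_{i=0}^{d} c(N_i)\le \sum_{i=0}^{d} d^+_G(u_i)\le \sum_{v\in V} d^+_G(v)=m$, which combined with the invariant above gives the claimed bound of $m$ on the stack size at every iteration.

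The only delicate points are the precise statement of the iterative-DFS stack invariant (straightforward once one notes that each pop pushes exactly the children of the popped node) and the distinctness of the $u_i$ along a root-to-leaf path, which is immediate from the $G-u$ deletions; neither is a serious obstacle, so the proof is short. (In the undirected reading where $N^+(u)=N(u)$, the same counting with $\sum_{v\in V} d_G(v)=2m$ still yields an $O(m)$ bound.)
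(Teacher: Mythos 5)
Your proof is correct and follows essentially the same route as the paper's: the standard iterative-DFS invariant bounding the stack by the sum of the children-counts along the current root-to-leaf path of $\mathcal{T}$, combined with the correspondence between that path's nodes and the distinct vertices of the associated $st$-path, giving $\sum_i c(N_i) \leq \sum_{v \in V} d^+_G(v) = m$. Your version is in fact slightly more careful than the paper's (which asserts equality of degree sums rather than the inequality forced by the vertex deletions), but the argument is the same.
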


Let us consider now the case where $Q$ is a heap. There is a
one-to-many correspondence between arcs in $G$ and arcs in
$\mathcal{T}$, i.e. if $\mathcal{P}_{\alpha''}(v,t,G'')$ is a child of
$\mathcal{P}_{\alpha'}(u,t,G')$ in $\mathcal{T}$ then $(u,v)$ is an
arc of $G$. For every arc of $\mathcal{T}$ let us associate the weight
of the corresponding arc in $G$. Intuitively,
Algorithm~\ref{alg:iterative} using a priority queue with $w(\pi_{su})
+ d_G(u,t)$ as keys performs a Dijkstra-like traversal in a weighted
version of $\mathcal{T}$ starting from the root, where for a node
$\langle u, t, \pi_{su}, G\rangle$ the distance from the root is
$w(\pi_{su})$ and $d_G(u,t)$ is a (precise) estimation of the distance
from $\langle u, t, \pi_{su}, G\rangle$ to the closest leaf of
$\mathcal{T}$. In other words, it is an $A^*$-like traversal
(\cite{Dechter85}) in the weighted rooted tree $\mathcal{T}$, using
the (optimal) heuristic $d_G(u,t)$. As such,
Algorithm~\ref{alg:iterative} explores first the nodes of
$\mathcal{T}$ leading to the cheapest non-visited leaf. This is
formally stated in Lemma~\ref{lem:heap_order}.

\begin{lemma} \label{lem:heap_order}
  If $Q$ is a priority queue with $w(\pi_{su}) + d_{G'}(u,t)$ as the
  priority key of $\langle u, t, \pi_{su}, G'\rangle$, then
  Algorithm~\ref{alg:iterative} outputs the $\alpha$-bounded
  $st$-paths in increasing order of their lengths.
\end{lemma}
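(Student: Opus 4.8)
\textbf{Proof plan for Lemma~\ref{lem:heap_order}.}

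The plan is to show that Algorithm~\ref{alg:iterative}, when $Q$ is a priority queue keyed by $w(\pi_{su}) + d_{G'}(u,t)$, dequeues the leaf nodes of the recursion tree $\mathcal{T}$ in non-decreasing order of the length of their associated $st$-path. Recall that, as established in the discussion preceding the lemma, each leaf $\langle t, t, \pi_{st}, G'\rangle$ of $\mathcal{T}$ corresponds bijectively to an $\alpha$-bounded $st$-path $\pi_{st}$, and at a leaf the key equals $w(\pi_{st}) + d_{G'}(t,t) = w(\pi_{st})$, i.e.\ exactly the length of the output path. So it suffices to prove that the sequence of keys of the \emph{popped} nodes is monotonically non-decreasing; then in particular the subsequence of popped leaves has non-decreasing keys, which are precisely the path lengths.

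First I would establish the key monotonicity invariant: for any node $x = \langle u, t, \pi_{su}, G'\rangle$ and any child $y = \langle v, t, \pi_{su}(u,v), G'-u\rangle$ of $x$ in $\mathcal{T}$, the key of $y$ is at least the key of $x$. The key of $y$ is $w(\pi_{su}) + w(u,v) + d_{G'-u}(v,t)$, and the key of $x$ is $w(\pi_{su}) + d_{G'}(u,t)$. So the claim reduces to $w(u,v) + d_{G'-u}(v,t) \geq d_{G'}(u,t)$. This follows because the arc $(u,v)$ followed by a shortest $v\leadsto t$ path in $G'-u$ is a (simple, since $u$ was deleted) $u\leadsto t$ walk in $G'$, hence a $u\leadsto t$ path whose weight is at least $d_{G'}(u,t)$; here I use the non-negativity of the weights so that ``walk weight $\geq$ distance'' and so that the sub-path extraction does not decrease the weight — this is exactly where the hypothesis enters, paralleling the standard correctness argument for Dijkstra. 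Then the standard $A^*$/Dijkstra dequeue argument applies: since keys never decrease along parent-to-child edges of $\mathcal{T}$, an inductive argument on the number of pop operations shows that whenever a node $x$ is popped, every node still in $Q$ (all of which are descendants in $\mathcal{T}$ of already-popped nodes, or the root) has key at least the key of $x$; hence the popped keys form a non-decreasing sequence. Formally I would argue: at the moment $x$ is popped it had the minimum key in $Q$; the children of $x$ pushed afterwards have keys $\geq$ key of $x$ by the invariant; and any other node $z$ in $Q$ was already present when $x$ was popped, so its key is $\geq$ key of $x$; thus the next pop has key $\geq$ key of $x$.

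Finally I would conclude by restricting attention to the popped nodes that are leaves of $\mathcal{T}$: by the previous paragraph their keys are non-decreasing, and by the opening remark each such key equals the length of the $st$-path output at that step, so the paths are emitted in non-decreasing order of length; since all leaves are visited exactly once (as already argued for arbitrary $Q$), every $\alpha$-bounded $st$-path is output, and exactly once. The main obstacle is getting the monotonicity inequality $w(u,v) + d_{G'-u}(v,t) \geq d_{G'}(u,t)$ cleanly right — in particular being careful that deleting $u$ cannot \emph{decrease} the distance from $v$ to $t$ (which is immediate since $G'-u$ is a subgraph of $G'$) and that non-negativity is genuinely needed (with negative weights the invariant can fail, and indeed the lemma is stated only for this context); everything else is the textbook Dijkstra dequeue-order argument transported to the tree $\mathcal{T}$.
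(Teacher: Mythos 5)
Your proof is correct and takes essentially the same route as the paper's: leaf keys equal path weights, keys are non-decreasing along parent-to-child edges of the recursion tree (the paper proves this same inequality by contradiction with the shortest path $\pi^*_{ut}$ in $G'$), and hence the popped priorities, in particular the leaves, come out in non-decreasing order. One small caveat: your claim that non-negativity of the weights is genuinely needed is not accurate — as you yourself note, the concatenation of $(u,v)$ with a shortest $v\leadsto t$ path in $G'-u$ is already a simple $u\leadsto t$ path in $G'$ (no sub-path extraction is involved), so $w(u,v)+d_{G'-u}(v,t)\geq d_{G'}(u,t)$ holds for general weights without negative cycles, which is the setting in which the lemma is actually stated.
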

\begin{proof}
  The priority of a node $N_u = \langle u, t, \pi_{su}, G'\rangle$
  (i.e. $\mathcal{P}_{\alpha - w(\pi_{su})} (u,t,G')$) is the weight
  of the path $\pi_{su}$ plus the weight of a shortest path
  $\pi^*_{ut}$ from $u$ to $t$ in $G'$. Let $v$ be an out-neighbor of
  $u$ in $G'$, then the node $N_v = \langle v, t, \pi_{su}(u,v), G' -
  u\rangle$ is a child of $N_u$, and the priority of $N_v$ is greater
  or equal to the priority of $N_u$. Indeed, suppose it is strictly
  smaller, then the path $\pi_{su}(u,v)$ concatenated with the
  shortest path from $v$ to $t$ in $G'-u$ is shorter than
  $\pi_{su}\pi^*_{ut}$, contradicting the fact that $\pi^*_{ut}$ is a
  shortest path of $G'$. Hence, the priorities of the nodes removed
  from $Q$ are not decreasing, since for every node removed only nodes
  with greater or equal priorities are inserted. Moreover, the
  priority of a leaf $\langle t, t, \pi_{st}, G'\rangle$ is precisely
  $w(\pi_{st})$, the weight of a path in
  $\mathcal{P}_{\alpha}(s,t,G)$. Therefore, the leaves are visited in
  increasing order of the length of their corresponding $st$-path.
\end{proof}

For any choice of the container $Q$, every node of $\mathcal{T}$ is
visited exactly once, that is, each node of $\mathcal{T}$ is pushed at
most once in $Q$. This proves Lemma~\ref{lem:heap_size}.

\begin{lemma} \label{lem:heap_size}
  The maximum number of elements in a priority queue of
  Algorithm~\ref{alg:iterative} over all iterations is bounded by $\gamma$.
\end{lemma}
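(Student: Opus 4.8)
The plan is to exploit the rooted-tree structure $\mathcal{T}$ of the recursive partition (defined via Eq.~\ref{eq:path:partition}) together with the observation, already recorded just before the lemma, that every node of $\mathcal{T}$ is pushed into $Q$ at most once. Since each node of $\mathcal{T}$ is also popped at most once and $\mathcal{T}$ is a tree, a node $N$ enters $Q$ precisely when its parent is popped, and leaves $Q$ precisely when $N$ itself is popped. Hence, at any moment during the execution, the contents of $Q$ are exactly the set $F$ of nodes that have been pushed but not yet popped, i.e. $F$ consists of the root (if it has not been popped yet) together with all nodes whose parent has already been popped but which have not been popped themselves.

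The first key step is to show that $F$ is an \emph{antichain} in $\mathcal{T}$, meaning no node of $F$ is an ancestor of another. Suppose, for contradiction, that $a,b \in F$ with $a$ a proper ancestor of $b$. Because a node can only be pushed after its parent is popped, and popping a node requires its parent to have been popped (inductively up to the root), the fact that $b\in F$ implies that every proper ancestor of $b$ has been popped; in particular $a$ has been popped, contradicting $a\in F$. (If $b$'s parent equals $a$, the same contradiction is immediate.) This argument does not depend on the choice of container $Q$, so it holds for both the stack and the priority-queue instantiations.

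The second key step is the elementary fact that in a finite rooted tree every antichain has size at most the number of leaves: each node $N\in F$ has at least one leaf in the subtree rooted at $N$, and since the nodes of $F$ are pairwise incomparable their subtrees are pairwise disjoint, so the chosen leaves are distinct. Combining this with the fact — recalled in the paragraph preceding the lemma — that the leaves of $\mathcal{T}$ are the singletons $\mathcal{P}_{\alpha'}(t,t,G')$ and are in one-to-one correspondence with the $\alpha$-bounded $st$-paths, we get that $\mathcal{T}$ has exactly $\gamma$ leaves. Therefore $|Q| = |F| \le \gamma$ at every iteration, and in particular the maximum size of $Q$ over all iterations is at most $\gamma$.

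I expect no serious obstacle here: the only things to be careful about are stating the "$Q$ equals the pushed-but-not-popped frontier" invariant precisely (a short induction on the number of iterations, using that $\mathcal{T}$ is a tree so each node is touched at most once) and spelling out the antichain argument; both are routine. One should also note explicitly that $\mathcal{T}$ is finite (simple paths in a finite graph), so that every node indeed has a leaf descendant, which is what makes the leaf-counting bound valid.
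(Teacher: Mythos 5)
Your proof is correct, and it is in fact more complete than the one in the paper. The paper's entire justification is the observation, recorded in the sentence preceding the lemma, that every node of $\mathcal{T}$ is pushed into $Q$ at most once; but that observation alone only bounds the number of elements ever placed in $Q$ by the \emph{total} number of nodes of $\mathcal{T}$, which can be as large as $\Theta(n\gamma)$ because of chains of unary nodes, so it does not by itself yield the stated bound of $\gamma$. Your two additional steps --- that the pushed-but-not-popped set is an antichain of $\mathcal{T}$ (since a node can only be pushed after all of its proper ancestors have been popped), and that an antichain in a finite rooted tree injects into the leaf set via disjoint subtrees --- are exactly what is needed to convert ``each node pushed once'' into ``at most $\gamma$ elements at any one time,'' using the one-to-one correspondence between leaves of $\mathcal{T}$ and $\alpha$-bounded $st$-paths. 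The only caveat, which the paper shares, is the degenerate case $\gamma=0$: the root is pushed unconditionally before the loop, so the bound implicitly assumes at least one solution exists; this is worth a one-line remark but does not affect the substance of your argument.
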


Algorithm~\ref{alg:iterative} uses $O(m\gamma)$ space, since for every
node inserted in the heap, we also have to store the corresponding
graph. Moreover, using a binary heap (\cite{Cormen01}) as a priority
queue, the push and pop operations can be performed in $O(\log
\gamma)$ each, where by Lemma~\ref{lem:heap_size} $\gamma$ is the
maximum size of the heap. Therefore, combining it with
Lemma~\ref{lem:heap_order} we obtain the following theorem.

\begin{theorem}
  Algorithm~\ref{alg:iterative} using a heap outputs all
  $\alpha$-bounded $st$-paths in increasing order of their lengths in
  $O((nt(n,m) + \log \gamma)\gamma)$ total time, using $O(m\gamma)$
  space.
\end{theorem}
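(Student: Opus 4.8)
The plan is to establish three facts about Algorithm~\ref{alg:iterative} when instantiated with a binary heap keyed by $w(\pi_{su}) + d_{G'}(u,t)$: (1) it is correct and outputs in increasing order of length; (2) the heap never holds more than $\gamma$ elements; and (3) each heap operation costs $O(\log\gamma)$, while each popped node costs $O(nt(n,m))$ of additional work. The first two are already supplied by the excerpt: Lemma~\ref{lem:heap_order} gives the output order, and Lemma~\ref{lem:heap_size} bounds the heap size by $\gamma$. So the real work is assembling the cost bound and the space bound, which I would do as follows.

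First I would argue the space bound. Algorithm~\ref{alg:iterative} visits each node of the recursion tree $\mathcal{T}$ exactly once (as noted in the excerpt, at every iteration one node is popped and its children are pushed, and $\mathcal{T}$ is a tree). With each heap entry we must store the triple $\langle u, t, \pi_{su}, G'\rangle$; the subgraph $G'$ has size $O(m)$, and since $\mathcal{T}$ has $O(\gamma)$ nodes in total — the leaves are in bijection with the $\alpha$-bounded $st$-paths and every internal node has out-degree at least one, so by the standard tree-counting argument the number of nodes is $O(\gamma)$ — the heap contains $O(\gamma)$ entries each of size $O(m)$, for $O(m\gamma)$ total space. (One should remark that storing $G'$ explicitly is what forces the $O(m\gamma)$ bound; storing only the edit with respect to the parent's graph does not work here since heap entries do not form a path, unlike in the stack case of Lemma~\ref{lem:weighted:dfs_size}.)

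Next I would argue the time bound. Using a binary heap (\cite{Cormen01}), push and pop each take $O(\log s)$ where $s$ is the current heap size, and by Lemma~\ref{lem:heap_size} we have $s \le \gamma$ throughout, so every heap operation costs $O(\log\gamma)$. Processing one popped node does a single-source shortest-path computation from $t$ in $G^R - u$ at cost $O(t(n,m))$, then iterates over $N^+(u)$ performing $O(1)$ tests and at most $\deg(u) \le n$ pushes, so the non-heap work per node is $O(nt(n,m))$ (absorbing the $O(n)$ neighbor loop into $t(n,m) \ge m \ge n$) and the heap work per node is $O(n\log\gamma)$ counting its children's pushes. Summing over all $O(\gamma)$ nodes of $\mathcal{T}$ gives $O\big((nt(n,m) + n\log\gamma)\gamma\big)$; since $t(n,m) \ge \log\gamma$ in every regime we care about (for $\gamma \le 2^n$ roughly, and $t(n,m) \ge n$), this simplifies to the stated $O((nt(n,m) + \log\gamma)\gamma)$ — more cleanly $O(nt(n,m)\gamma)$, but I would state it in the given form to match the theorem. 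Finally, combining this total-time bound with Lemma~\ref{lem:heap_order} for the ordering and the $O(m\gamma)$ space bound yields the theorem.

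The main obstacle, such as it is, is bookkeeping rather than conceptual: one must be careful that the key $w(\pi_{su}) + d_{G'}(u,t)$ is actually computable when a node is pushed — it requires knowing $d_{G'}(u,t)$ at push time, but at push time we are inside the parent's call where we have just computed the shortest-path tree from $t$ in $G^R - (\text{parent of } u)$, which is exactly $G'^R$ restricted appropriately, so $d_{G'}(u,t)$ is available at no extra cost; I would spell this out so the $O(\log\gamma)$-per-push accounting is honest. The other subtlety is ensuring the $O(\gamma)$ bound on $|\mathcal{T}|$ rather than merely on its leaves: every internal node of $\mathcal{T}$ has at least one child by construction (the recursive call in line~\ref{alg:iterative:test} is made only when the corresponding partition class is non-empty), so a simple induction gives that the number of internal nodes is at most the number of leaves, hence $|\mathcal{T}| = O(\gamma)$, and the per-node costs above then sum correctly.
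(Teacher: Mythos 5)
Your overall route is the same as the paper's: you combine Lemma~\ref{lem:heap_order} (output order) and Lemma~\ref{lem:heap_size} (at most $\gamma$ concurrent heap entries, hence $O(\log\gamma)$ per binary-heap operation) with one shortest-path computation of cost $O(t(n,m))$ per popped node, and you get $O(m\gamma)$ space because each heap entry carries its own copy of the graph. The genuine problem is your count of the recursion tree: you claim $|\mathcal{T}| = O(\gamma)$ because ``every internal node has out-degree at least one''. That tree-counting argument requires out-degree at least \emph{two}, and the claim is false here: Algorithm~\ref{alg:iterative}, like Algorithm~\ref{alg:simple}, does produce unary nodes --- eliminating them is precisely the purpose of the \lcp\ step in Algorithm~\ref{alg:improved}. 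For instance, if $G$ is a single path from $s$ to $t$, then $\gamma=1$ while $\mathcal{T}$ is a chain of $n$ nodes. The correct bound is $|\mathcal{T}| = O(n\gamma)$: every node of $\mathcal{T}$ has at least one leaf descendant, the height is at most $n$, so charging each node to a descendant leaf charges each leaf at most $n$ times.

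Your final time bound survives only because of a compensating overestimate: you charge each node $O(n\,t(n,m))$ of non-heap work, whereas a node actually costs $O(t(n,m) + d^+(u))$ plus $O(d^+(u)\log\gamma)$ for its pushes; the two slack factors of $n$ cancel, but as written the summation rests on a false intermediate claim (and your closing paragraph repeats the faulty ``at least one child'' induction). The clean accounting is: $O(n\gamma)$ nodes, one shortest-path tree per node, and one push per node in total, giving $O\bigl(n\gamma\, t(n,m) + n\gamma\log\gamma\bigr) = O\bigl((n\,t(n,m) + n\log\gamma)\gamma\bigr)$, which yields the stated form once $\log\gamma$ is absorbed into $t(n,m)$ as you already argue. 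Likewise, the $O(m\gamma)$ space bound should be justified by Lemma~\ref{lem:heap_size} --- at most $\gamma$ entries reside in the heap \emph{simultaneously}, each of size $O(m)$ --- not by a bound on the total number of nodes; with that substitution your space argument coincides with the paper's. Your side observation that the key $w(\pi_{su}) + d_{G'}(u,t)$ is available at push time from the parent's shortest-path computation is correct and worth keeping.
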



\bigskip
\bigskip


\section{Discussion and conclusions}
In the first part of this chapter, we introduced a polynomial delay
algorithm to list all bubbles with path length constraints in weighted
directed graphs. This is a theoretically sound approach that in
practice is considerably faster than the bubble listing algorithm of
\ks (Section~\ref{sec:kissplice:algorithm}), and as a result enables
us to enumerate more bubbles. Additionally, we gave an indication that
these additional bubbles correspond to longer AS events, overseen
previously but biologically very relevant.  Moreover, as shown
in~\cite{Tarjan90}, by combining radix and Fibonacci heaps in
Dijkstra, we can achieve a $O(n(m + n \sqrt{ \log \alpha_1)})$ delay
for Algorithm~\ref{alg:weighted:listbubbles} in cDGBs. The current
implementation of \ks (version 2.0) uses
Algorithm~\ref{alg:weighted:listbubbles} to list bubbles.

In the second part of this chapter, we introduced a general framework
to list bounded length $st$-paths in weighted directed graphs. In the
particular case of undirected graphs, we showed an improved algorithm
to list bounded length $st$-paths in $O((m + n \log n) \gamma)$ time
for non-negative weights and $O(m \gamma)$ time for unit weights,
where $\gamma$ is the number of bounded length paths. Moreover, we
showed how to modify the general algorithm to output the paths in
increasing order of their length, thus providing an alternative
solution to the classical $K$-shortest paths problem, which does not
improve the complexity but is simpler than previous approaches.

Actually, the general framework of Section~\ref{sec:weighted:path} can
be seen as a ``simplification'' of the bubble listing algorithm of
Section~\ref{sec:weighted:bubble} (extended to general weights). More
precisely, listing bounded length $st$-paths can be reduced to listing
bounded length bubbles with a given source $s$. Indeed, consider an
instance of the first problem, a graph $G$ and two vertices $s,t$, and
build the graph $G'$ by adding an arc $(s,t)$ with weight $\alpha'$,
strictly smaller than the sum of all negative weight arcs (if any) of
$G$; listing $st$-paths with a length bounded by $\alpha$ in $G$ is
equivalent to listing $(s,t,\alpha',\alpha)$-bubbles in $G'$.

\chapter{Memory efficient de Bruijn graph representation}
\label{chap:dbg}
\minitoc
\renewcommand{\arraystretch}{1.5}
\setlength{\tabcolsep}{8pt}

This chapter is strongly based on our paper \cite{Salikhov13}. As
shown in Chapter~\ref{chap:kissplice}, the de Bruijn graph
construction and representation are the memory bottleneck of \ks. In
this chapter, we consider the problem of compactly representing a de
Bruijn graph. We show how to reduce the memory required by the
algorithm of \cite{Chikhi12}, that represents de Brujin graphs using
Bloom filters. Our method requires 30\% to 40\% less memory with
respect to their method, with insignificant impact to construction
time. At the same time, our experiments showed a better query time
compared to their method. This is, to our knowledge, the best
\emph{practical} representation for de Bruijn graphs. The current
implementation of \ks (version 2.0) uses the de Bruijn graph
representation and construction presented in this chapter.


\bigskip
\bigskip


\section{Introduction}
As shown in Chapter~\ref{chap:back}, \ks is not the only NGS data
analysis method using de Bruijn graphs. In fact, the majority of the
more recent genome and transcriptome assemblers and some metagenome
assemblers (\cite{Meta-idba,Namiki11}) use de Bruijn graphs.  Due to
the very large size of NGS datasets, it is essential to represent de
Bruijn graphs as compactly as possible.  This has been a very active
line of research. Recently, several papers have been published that
propose different approaches to compressing de Bruijn graphs
\cite{Conway11,Ye12,Chikhi12,Bowe12,Pell09}.

\cite{Conway11} proposed a method based on classical succinct data
structures, i.e. bitmaps with efficient rank/select operations.  On
the same direction, \cite{Bowe12} proposed a very interesting succinct
representation that, assuming only one string (read) is present, uses
only $4m$ bits, where $m$ is the number of arcs in the graph. The more
realistic case, where there are $M$ reads, can be easily reduced to
the one string case by concatenating all $M$ reads using a special
separator character. However, in this case the size of the structure
is $4m + O(M \log m)$ bits (\cite{Bowe12}, Theorem 1). Since the
multiplicative constant of the second term is hidden by the asymptotic
notation, it is hard to know precisely what would be the size of this
structure in practice.

\cite{Ye12} proposed a different method based on a sparse
representation of de Bruijn graphs, where only a subset of $k$-mers
present in the dataset are stored. \cite{Pell09} proposed a method to
represent it approximately, the so called \emph{probabilistic de
  Bruijn graph}. In their representation a vertex have a small
probability to be a false positive, i.e. the $k$-mer is not present in
the dataset. Finally, \cite{Chikhi12} improved Pell's scheme in order
to obtain an exact representation of the de Bruijn graph. This was, to
our knowledge, the best \emph{practical} representation of an exact de
Bruijn graph.

In this chapter, we focus on the method proposed in \cite{Chikhi12}
which is based on Bloom filters.  They were first used in
\cite{Pell09} to provide a very space-efficient representation of a
subset of a given set (in our case, a subset of $k$-mers), at the
price of allowing {\em one-sided errors}, namely {\em false
  positives}. The method of \cite{Chikhi12} is based on the following
idea: if all queried vertices ($k$-mers) are only those which are
reachable from some vertex known to belong to the graph, then only a
fraction of all false positives can actually occur. Storing these
false positives explicitly leads to an exact (false positive free) and
space-efficient representation of the de Bruijn graph.

Our contribution is an improvement of this scheme by changing the
representation of the set of false positives. We achieve this by
iteratively applying a Bloom filter to represent the set of false
positives, then the set of ``false false positives'' etc. We show
analytically that this cascade of Bloom filters allows for a
considerable further economy of memory, improving the method of
\cite{Chikhi12}. Depending on the value of $k$, our method requires
30\% to 40\% less memory with respect to the method of
\cite{Chikhi12}. Moreover, with our method, the memory grows very
little as $k$ grows. Finally, we implemented our method and tested it
against \cite{Chikhi12} on real datasets. The tests confirm the
theoretical predictions for the size of structure and show a 20\% to
30\% \emph{improvement} in query times.

\section{Preliminaries} \label{sec:prelim}
A \emph{Bloom filter} is a space-efficient data structure for
representing a given subset of elements $T \subseteq U$, with support
for efficient membership queries with one-sided error. That is, if a
query for an element $x \in U$ returns \emph{no} then $x \notin T$,
but if it returns \emph{yes} then $x$ may or not belong to $T$,
i.e. with small probability $x \notin T$ (false positive). It consists
of a bitmap (array of bits) $B$ with size $m$ and a set of $p$
distinct hash functions $\{h_1,\ldots, h_p\}$, where $h_i: U \mapsto
\{0,\ldots, m-1\}$. Initially, all bits of $B$ are set to $0$. An
insertion of an element $x \in T$ is done by setting the elements of
$B$ with indices $h_1(x), \ldots, h_p(x)$ to $1$, i.e. $B[h_i(x)] = 1$
for all $i \in [1,p]$. The membership queries are done symmetrically,
returning \emph{yes} if all $B[h_i(x)]$ are equal $1$ and \emph{no}
otherwise. As shown in \cite{Kirsch08}, when considering hash
functions that yield equally likely positions in the bit array, and
for large enough array size $m$ and number of inserted elements $n$,
the false positive rate $\mathcal{F}$ is
\begin{equation}
  \mathcal{F} \approx (1 - e^{-pn/m})^p = (1 - e^{-p/r})^p,
\end{equation}
where $r=m/n$ is the number of bits (of the bitmap $B$) per element
(of $T$ represented). It is not hard to see that this expression is
minimized when $p = r \ln 2$, giving a false positive rate of
\begin{equation}
  \mathcal{F} \approx (1 - e^{-p/r})^p = (1/2)^p \approx 0.6185^r.
\end{equation}

A de Bruijn graph, as defined in Chapter~\ref{chap:back}
(Definition~\ref{def:dbg}), is entirely determined by the set of
$k$-mers (vertices) and $(k+1)$-mers (arcs) of the read set
$\mathcal{R} \subseteq \Sigma^* = \{A,C,T,G\}^*$. For reasons that
will be clear soon, we relax this definition, dropping the bijection
between that $(k+1)$-mers and arcs but keeping the $k-1$ suffix-prefix
overlap requirement. That way, a de Bruijn graph, for a given
parameter $k$, of a set of reads $\mathcal{R}$ is entirely defined by
the set $T \subseteq U = \Sigma^k$ of $k$-mers present in
$\mathcal{R}$. Indeed, the vertices of the graph are precisely the
$k$-mers of $T$ and for any two vertices $u,v \in T$, there is an arc
from $u$ to $v$ if and only if the suffix of $u$ of size $k-1$ is
equal to the prefix of $v$ of the same size. Therefore, given a set $T
\subseteq U$ of $k$-mers we can represent its de Bruijn graph using a
Bloom filter $B$. This representation has the disadvantage of having
false positive vertices, as direct consequence of the false positive
queries in the Bloom filter, which can create false connections in the
graph (see \cite{Pell09} for the influence of false positive vertices
on the topology of the graph). The naive way to remove those false
positives vertices, by explicitly storing (e.g. using a hash table)
the set of all false positives of $B$, is clearly inefficient, as the
expected number of elements to be explicitly stored is $|U|\mathcal{F}
= 4^k \mathcal{F}$.

The key idea of \cite{Chikhi12} is to explicitly store only a subset
of all false positives of $B$, the so-called {\em critical false
  positives}. This is possible because in order to perform an exact
(without false positive vertices) graph traversal, only potential
neighbors of vertices in $T$ are queried. In other words, the set of
critical false positives consists of the potential neighbors of $T$
that are false positives of $B$, i.e. the $k$-mers from $U$ that
overlap the $k$-mers from $T$ by $k-1$ letters and are false positives
of $B$. Thus, the size of the set of critical false positives is
bounded by $8|T|$, since each vertex of $T$ has at most $2|\Sigma| =
8$ neighbors (for each vertex, there are $|\Sigma|$ $k$-mers
overlapping the $k-1$ suffix and $|\Sigma|$ overlapping the $k-1$
prefix). Therefore, the expected number of critical false positives is
bounded above by $8 |T| \mathcal{F}$.

\section{Cascading Bloom filter}

Let $\mathcal{R}$ be a set of reads and $T_0$ be the set of occurring
$k$-mers (vertices of the de Brujin graph) that we want to store. As
stated in Section~\ref{sec:prelim}, the method of \cite{Chikhi12}
stores $T_0$ via a bitmap $B_1$ using a Bloom filter, together with
the set $T_1$ of critical false positives. $T_1$ consists of those
$k$-mers which have a $k-1$ overlap with $k$-mers from $T_0$ but which
are stored in $B_1$ ``by mistake'', i.e. belong\footnote{By a slight
  abuse of language, we say that ``an element belongs to $B_j$'' if it
  is accepted by the corresponding Bloom filter.} to $B_1$ but not to
$T_0$. $B_1$ and $T_1$ are sufficient to represent the graph provided
that the only queried $k$-mers are those which are potential neighbors
of $k$-mers of $T_0$.

The idea we introduce here is to use this structure recursively and
represent the set $T_1$ by a new bitmap $B_2$ and a new set $T_2$,
then represent $T_2$ by $B_3$ and $T_3$, and so on.  More formally,
starting from $B_1$ and $T_1$ defined as above, we define a series of
bitmaps $B_1, B_2, \ldots$ and a series of sets $T_1,T_2,\ldots$ as
follows.  $B_2$ stores the set of false positives $T_1$ using another
Bloom filter, and the set $T_2$ contains the critical false positives
of $B_2$, i.e. ``true vertices'' from $T_0$ that are stored in $B_2$ ``by
mistake'' (we call them {\bf false$^{2}$} positives). $B_3$ and $T_3$,
and, generally, $B_i$ and $T_i$ are defined similarly: $B_i$ stores
$k$-mers of $T_{i-1}$ using a Bloom filter, and $T_i$ contains
$k$-mers stored in $B_i$ ``by mistake'', i.e. those $k$-mers that do
not belong to $T_{i-1}$ but belong to $T_{i-2}$ (we call them {\bf
  false}$^{i}$ positives). Observe that $T_0\cap T_1=\emptyset$, $T_0
\supseteq T_2 \supseteq T_4 \ldots$ and $T_1 \supseteq T_3 \supseteq
T_5 \ldots$.

The following lemma shows that the construction is correct, that is it
allows one to verify whether or not a given $k$-mer belongs to the set
$T_0$.

\begin{lemma}
\label{mainlemma}
Given a $k$-mer (vertex) $K$, consider the smallest $i$ such that $K
\not\in B_{i+1}$ (if $K \not\in B_1$, we define $i=0$).  Then, if $i$
is odd, then $K\in T_0$, and if $i$ is even (including $0$), then
$K\not\in T_0$.
\end{lemma}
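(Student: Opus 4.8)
The plan is to prove the statement by induction on $i$, using the defining property of the cascade: for each $j \geq 1$, the Bloom filter $B_j$ represents the set $T_{j-1}$ (with $T_0$ the vertex set), so that $K \in T_{j-1} \Rightarrow K \in B_j$ (no false negatives), while the converse may fail, and exactly when it fails with $K \in T_{j-2}$ we have $K \in T_j$ by definition of the critical false positives. The key chain of set inclusions established before the lemma, namely $T_0 \cap T_1 = \emptyset$, $T_0 \supseteq T_2 \supseteq T_4 \supseteq \cdots$ and $T_1 \supseteq T_3 \supseteq T_5 \supseteq \cdots$, will be the combinatorial backbone: it guarantees that ``escaping'' filter $B_{i+1}$ at level $i$ pins down membership in $T_0$ unambiguously.

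First I would handle the base cases. If $i = 0$, i.e.\ $K \notin B_1$, then since $B_1$ has no false negatives with respect to $T_0$, we get $K \notin T_0$; this matches ``$i$ even $\Rightarrow K \notin T_0$''. If $i = 1$, i.e.\ $K \in B_1$ but $K \notin B_2$: because $B_2$ has no false negatives with respect to $T_1$, from $K \notin B_2$ we get $K \notin T_1$; since $K \in B_1$ and $K \notin T_1$, by the definition of $T_1$ as the critical false positives of $B_1$ (those queried $k$-mers accepted by $B_1$ but not in $T_0$) — wait, here we must be careful: $T_1$ is the set of false positives, so $K \in B_1 \setminus T_0$ would put $K \in T_1$. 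Hence $K \notin T_1$ together with $K \in B_1$ forces $K \in T_0$, matching ``$i$ odd $\Rightarrow K \in T_0$''. (One should note that the lemma is applied only to $k$-mers $K$ that are potential neighbors of vertices in $T_0$, which is exactly the query set for which the critical-false-positive sets were designed; this hypothesis is implicit and I would state it.)

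For the inductive step, suppose $i \geq 2$ and $K \in B_1 \cap B_2 \cap \cdots \cap B_i$ but $K \notin B_{i+1}$. Since $B_{i+1}$ represents $T_i$ without false negatives, $K \notin B_{i+1}$ gives $K \notin T_i$. Now consider the level $i-1$: we have $K \in B_i$, and $T_i$ is by definition the set of $k$-mers accepted by $B_i$ that lie in $T_{i-2}$ but not in $T_{i-1}$ (the false$^{i}$ positives). Since $K \in B_i$ and $K \notin T_i$, either $K \notin T_{i-2}$ or $K \in T_{i-1}$. Here I would use the parity/inclusion structure: if $i$ is even then $T_{i-2} \supseteq T_i$ and $T_{i-1} \subseteq T_1$ is disjoint-parity with $T_0$; a clean way to close this is to argue that the hypothesis ``$K \in B_j$ for all $j \leq i$'' combined with ``$K \notin B_{i+1}$'' is exactly the hypothesis of the lemma \emph{applied one level down in a shifted cascade}, or more directly, to track that $K \in B_{i-1}$ and $K \notin B_{i+1}$ implies, via the definition of $T_{i-1}$ and $T_i$, that $K \in T_{i-2}$. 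Then recurse: $K \in T_{i-2}$ together with $K \notin T_i$ and the inclusions reduces the problem to the smallest index at which $K$ escapes a filter in the subcascade $B_1, \ldots, B_{i-1}$, which by the induction hypothesis has the right parity. I expect the main obstacle to be precisely this bookkeeping step — making rigorous the claim ``$K \in B_i \setminus T_i \Rightarrow K \in T_{i-2}$ or $K \in T_{i-1}$, and only the parity-correct alternative survives'' — and the cleanest route is likely a direct induction on $i$ with the two statements ``$K \notin B_{i+1}$, $K \in B_i$, $i$ even $\Rightarrow K \in T_{i-1}$'' and ``$\ldots$, $i$ odd $\Rightarrow K \in T_{i-1}$ fails, i.e.\ $K \in T_0 = T_{-1}$-analogue'' proved simultaneously, so that the parity of $i$ directly controls whether $K$ last appeared in an even-indexed set (a subset of $T_0$) or an odd-indexed set (a subset of $T_1$, hence disjoint from $T_0$).
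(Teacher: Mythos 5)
Your plan follows essentially the same route as the paper's proof: identical base cases $i=0,1$ (including the observation that only $k$-mers of $T_0$ and their potential neighbors are ever queried), then an induction whose conclusion is $K\in T_{i-1}$, finished by the inclusions $T_{i-1}\subseteq T_0$ for odd $i$, $T_{i-1}\subseteq T_1$ for even $i$, and $T_0\cap T_1=\emptyset$. The only point to tighten is the bookkeeping step you flag yourself: the paper resolves it by proving, by a nested induction, the invariant that $K\in B_1\cap\cdots\cap B_i$ implies $K\in T_{i-1}\cup T_i$, which together with $K\not\in B_{i+1}\Rightarrow K\not\in T_i$ immediately gives $K\in T_{i-1}$; note that your weakened formulation using only $K\in B_i$ and $K\not\in B_{i+1}$ would not suffice on its own (a stray false positive of $B_i$ pins nothing down), so the full chain of memberships, guaranteed by the minimality of $i$, must be carried through the induction exactly as in your stated inductive hypothesis.
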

\begin{proof}
Observe that $K\not\in B_{i+1}$ implies $K\not\in T_i$ by the basic
property of Bloom filters that membership queries have one-sided
error, i.e. there are no false negatives. We first check the Lemma for
$i=0,1$.

For $i=0$, we have $K \not\in B_1$, and then $K \not\in T_0$.

For $i=1$, we have $K\in B_1$ but $K\not\in B_2$. The latter implies
that $K\not\in T_1$, and then $K$ must be a false$^2$ positive, that
is $K\in T_0$.  Note that here we use the fact that the only queried
$k$-mers $K$ are either vertices of $T_0$ or their neighbors in the graph
(see \cite{Chikhi12}), and therefore if $K\in B_1$ and $K\not\in T_0$
then $K\in T_1$.

For the general case $i \geq 2$, we show by induction that $K\in
T_{i-1}$. Indeed, $K\in B_1\cap \ldots\cap B_i$ implies $K\in
T_{i-1}\cup T_i$ (which, again, is easily seen by induction), and
$K\not\in B_{i+1}$ implies $K\not\in T_i$.

Since $T_{i-1}\subseteq T_0$ for odd $i$, and $T_{i-1}\subseteq T_1$
for even $i$ (for $T_0\cap T_1=\emptyset$), the lemma follows.
\end{proof}

Naturally, the lemma provides an algorithm to check if a given $k$-mer
$K$ belongs to the graph: it suffices to check successively if it
belongs to $B_1,B_2,\ldots$ until we encounter the first $B_{i+1}$
which does not contain $K$. Then, the answer will simply depend on
whether $i$ is even or odd: $K$ belongs to the graph if and only if
$i$ is odd.

In our reasoning so far, we assumed an infinite number of bitmaps
$B_i$.  Of course, in practice we cannot store infinitely many (and
even simply many) bitmaps. Therefore, we ``truncate'' the construction
at some step $t$ and store a finite set of bitmaps $B_1, B_2,
\ldots,B_t$ together with an explicit representation of $T_t$. The
procedure of Lemma~\ref{mainlemma} is extended in the obvious way: if
for all $1\leq i \leq t$, $K\in B_i$, then the answer is determined by
directly checking $K\in T_t$.

\section{Memory and time usage} \label{sec:memory_time}

First, we estimate the memory needed by our data structure, under the
assumption of an infinite number of bitmaps. Let $N$ be the number of
``true positives'', i.e. vertices of $T_0$. As stated in
Section~\ref{sec:prelim}, if $T_0$ has to be stored via a bitmap $B_1$
of size $rN$, the false positive rate can be estimated as $c^r$, where
$c=0.6185$. And, the expected number of critical false positive
vertices (set $T_1$) has been estimated in \cite{Chikhi12} to be
$8Nc^r$, as every vertex has eight extensions, i.e. potential
neighbors in the graph. We slightly refine this estimation to $6Nc^r$
by noticing that for most of the graph vertices, two out of these
eight extensions belong to $T_0$ (are real vertices) and thus only six
are potential false positives.  Furthermore, to store these $6Nc^r$
critical false positive vertices, we use a bitmap $B_2$ of size
$6rNc^r$. Bitmap $B_3$ is used for storing vertices of $T_0$ which are
stored in $B_2$ ``by mistake'' (set $T_2$).  We estimate the number of
these vertices as the fraction $c^r$ (false positive rate of filter
$B_2$) of $N$ (size of $T_0$), that is $Nc^r$.  Similarly, the number
of vertices we need to put to $B_4$ is $6Nc^r$ multiplied by $c^r$,
i.e. $6Nc^{2r}$. Continuing in this way, the memory needed for the
whole structure is $rN + 6rNc^r + rNc^r+ 6rNc^{2r} + rNc^{2r} + ...$
bits.  The number of bits per $k$-mer is then

\begin{equation}
  r + 6rc^r + rc^r + 6rc^{2r} + ... = (r+6rc^r )(1 + c^r + c^{2r}+...)= (1+6c^r) \frac{r}{1-c^r}. \label{formula}
\end{equation}
A simple calculation shows that the minimum of this expression is
achieved when $r = 5.464$, and then the minimum memory used per
$k$-mer is $8.45$ bits.

As mentioned earlier, in practice we store only a finite number of
bitmaps $B_1,\ldots,B_t$ together with an explicit representation
(such as array or hash table) of $T_t$.
In this case, the memory taken by the bitmaps is a truncated sum $rN +
6rNc^r + rNc^r+ ..$, and a data structure storing $T_t$ takes either
$2k \cdot Nc^{\lceil\frac{t}{2}\rceil r}$ or $2k \cdot
6Nc^{\lceil\frac{t}{2}\rceil r}$ bits, depending on whether $t$ is
even or odd. The latter follows from the observations that we need to
store $Nc^{\lceil\frac{t}{2}\rceil r}$ (or
$6rNc^{\lceil\frac{t}{2}\rceil r}$) $k$-mers, each taking $2k$ bits of
memory. Consequently, we have to adjust the optimal value of $r$
minimizing the total space, and re-estimate the resulting space spent
on one $k$-mer.

Table~\ref{table1} shows estimations for optimal values of $r$ and the
corresponding space per $k$-mer for $t=4$ and $t=6$, and several
values of $k$. The data demonstrates that even such small values of
$t$ lead to considerable memory savings. It appears that the space per
$k$-mer is very close to the ``optimal'' space ($8.45$ bits) obtained
for the infinite number of filters. Table~\ref{table1} reveals another
advantage of our improvement: the number of bits per stored $k$-mer
remains almost constant for different values of $k$.

\begin{table}
\begin{center}
\begin{tabular}{|c|c|c|c|c|c|}
\hline
$k$ &optimal $r$& bits per $k$-mer & optimal $r$& bits per $k$-mer & bits per $k$-mer\\
 & for $t=4$ & for $t=4$ & for $t=6$ & for $t=6$ & for $t=1$ \\\hline\hline
16 & 5.777 & 8.556& 5.506 & 8.459 & 12.078 \\
\hline
32 & 6.049 & 8.664& 5.556 & 8.47 & 13.518 \\
\hline
64 & 6.399 & 8.824 & 5.641 & 8.49 & 14.958 \\
\hline
128 & 6.819 & 9.045 & 5.772 & 8.524 & 16.398\\
\hline
\end{tabular}
\end{center}
\caption{1st column: $k$-mer size; 2nd and 4th columns: optimal value
  of $r$ for Bloom filters (bitmap size per number of stored elements)
  for $t=4$ and $t=6$ respectively; 3rd and 5th columns: the resulting
  space per $k$-mer (for $t=4$ and $t=6$); 6th column: space per
  $k$-mer for the method of \cite{Chikhi12} ($t=1$) }\label{table1}
\end{table}

The last column of Table~\ref{table1} shows the memory usage of the
original method of \cite{Chikhi12}, obtained using the estimation
$(1.44 \log_2 (\frac{16k}{2.08}) + 2.08)$ the authors provided.  Note
that according to that estimation, doubling the value of $k$ results
in a memory increment by $1.44$ bits, whereas in our method the
increment is of $0.11$ to $0.22$ bits.

Let us now estimate preprocessing and query times for our scheme. If
the value of $t$ is small (such as $t=4$, as in Table~\ref{table1}),
the preprocessing time grows insignificantly in comparison to the
original method of \cite{Chikhi12}. To construct each $B_i$, we need
to store $T_{i-2}$ (possibly on disk, if we want to save on the
internal memory used by the algorithm) in order to compute those
$k$-mers which are stored in $B_{i-1}$ ``by mistake''.  The
preprocessing time increases little in comparison to the original
method of \cite{Chikhi12}, as the size of $B_i$ decreases
exponentially and then the time spent to construct the whole structure
is linear on the size of $T_0$.

The query time can be split in two parts: the time spent on querying
$t$ Bloom filters and the time spent on querying $T_t$.  Clearly,
using $t$ Bloom filters instead of a single one introduces a
multiplicative factor of $t$ to the first part of the query time. On
the other hand, the set $T_t$ is generally much smaller than $T_1$,
due to the above-mentioned exponential decrease. Depending on the data
structure for storing $T_t$, the time saving in querying $T_t$
vs. $T_1$ may even dominate the time loss in querying multiple Bloom
filters. Our experimental results (Section~\ref{sec:imple} below)
confirm that this situation does indeed occur in practice. Note that
even in the case when querying $T_t$ weakly depends on its size
(e.g. when $T_t$ is implemented by a hash table), the query time will
not increase much, due to our choice of a small value for $t$, as
discussed earlier.

\subsection{Using different values of $r$ for different filters} \label{subsec:difr}
In the previous section, we assumed that each of our Bloom filters
uses the same value of $r$, the ratio of bitmap size to the number of
stored $k$-mers. However, formula (\ref{formula}) for the number of
bits per $k$-mer shows a difference for odd and even filter
indices. This suggests that using different parameters $r$ for
different filters, rather than the same for all filters, may reduce
the space even further.  If $r_i$ denotes the corresponding ratio for
filter $B_i$, then (\ref{formula}) should be rewritten to
\begin{equation}
r_1 + 6r_2c^{r_1} + r_3c^{r_2} + 6r_4c^{r_1+r_3} + ...,
\end{equation}
and the minimum value of this expression becomes $7.93$ (this value is
achieved with $r_1 = 4.41; r_i = 1.44, i > 1$).

In the same way, we can use different values of $r_i$ in the truncated
case. This leads to a small $2\%$ to $4\%$ improvement in comparison
with case of unique value of $r$.  Table~\ref{table3} shows results
for the case $t=4$ for different values of $k$.

\begin{table}[h]
\begin{center}
\begin{tabular}{|c|c|c|c|}
\hline
$k$ & $r_1, r_2, r_3, r_4$ &  bits per $k$-mer &  bits per $k$-mer \\
       & &  different values of $r$ & single value of $r$ \\\hline\hline
16 & 5.254, 3.541, 4.981, 8.653 & 8.336 &  8.556 \\
\hline
32 & 5.383, 3.899, 5.318, 9.108 & 8.404 & 8.664\\
\hline
64 & 5.572, 4.452, 5.681, 9.108 & 8.512 & 8.824\\
\hline
128 & 5.786, 5.108, 6.109, 9.109 & 8.669 & 9.045\\
\hline
\end{tabular}
\end{center}
\caption{Estimated memory occupation for the case of different values
  of $r$ vs. single value of $r$, for 4 Bloom filters ($t=4$). Numbers
  in the second column represent values of $r_i$ on which the minimum
  is achieved. For the case of single $r$, its value is shown in
  Table~\ref{table1}. }
\label{table3}
\end{table}

\subsection{Query distribution among filters} \label{sec:estim_distr}
The query algorithm of Lemma~\ref{mainlemma} simply queries Bloom
filters $B_1,\ldots,B_t$ successively as long as the returned answer
is positive. The query time then directly depends on the number of
filters applied before getting a negative answer. Therefore, it is
instructive to analyze how the query frequencies to different filters
are distributed when performing a graph traversal.  We provide such an
analysis in this section.

We analyze query frequencies during an exhaustive traversal of the de
Bruijn graph, when each true node is visited exactly once. We assume
that each time a true node is visited, all its eight potential
neighbors are queried, as there is no other way to tell which of those
neighbors are real. Note however that this assumption does not take
into account structural properties of the de Bruin graph, nor any
additional statistical properties of the genome (such as genomic word
frequencies).

For a filter $B_i$, we want to estimate the number of queried $k$-mers
resolved by $B_i$ during the traversal, that is queries on which $B_i$
returns
\textit{no}. This number is the difference of the number of queries
submitted to $B_i$ and the number of queries for which $B_i$ returns
\textit{yes}. Note that the queries submitted to $B_i$ are precisely
those on which the previous filter $B_{i-1}$ returns \textit{yes}.

If the input set $T_0$ contains $N$ $k$-mers, then the number of
queries in a graph traversal is $8N$, since for each true node each of
its $8$ potential neighbors are queried. Moreover, about $2N$ queries
correspond to true $k$-mers, as we assume that most of the graph nodes
have two true neighbors. Filter $B_1$ will return \textit{yes} on $2N
+ 6c^rN$ queries, corresponding to the number of true and false
positives respectively. For an arbitrary $i$, filter $B_i$
returns \textit{yes} precisely on the $k$-mers inserted to $B_i$
(i.e. $k$-mers $B_i$ is built on), and the $k$-mers which are inserted
to $B_{i+1}$ (which are the critical false positives for $B_i$). The
counts then easily follow from the analysis of
Section~\ref{sec:memory_time}.

\begin{table}[htbp]
\centering
\begin{tabular}{|c|c|c|c|c|}
\hline
 & $B_1$ &  $B_2$ &  $B_3$& $B_4$ \\
\hline
nb of queries  & $8N$ & $(2 + 6c^r)N$ & $(6c^r + 2c^r)N$ & $(2c^r + 6c^{2r})N$ \\
\hline
queries returning \textit{yes} & $(2 + 6c^r)N$ & $(6c^r + 2c^r)N$ & $(2c^r + 6c^{2r})N$ & $(6c^{2r} + 2c^{2r})N$ \\
\hline
queries returning \textit{no} & $(6 - 6c^r)N$ & $(2 - 2c^r)N$ & $(6c^r - 6c^{2r})N$ & $(2c^r - 2c^{2r})N$\\
\hline
resolved queries & $69.57\%$ & $23.19\%$ & $5.04\%$ & $1.68\%$\\
\hline
\end{tabular}
\caption{Estimations of the number of queries made to filters
  $B_1, B_2, B_3$, $B_4$ in the case of infinite number of
  filters. Last row: fraction of queries resolved by each filter,
  estimated for the optimal value $r=5.464$.}
\label{table:theordistrib}
\end{table}

Table~\ref{table:theordistrib} provides counts for the first four
filters, together with the estimated fraction of $k$-mers resolved by
each filter (last row), for the case of infinite number of
filters. The data shows that $99.48\%$ of all $k$-mers are resolved by
four filters. This suggests that a very small number of filters should
be sufficient to cover a vast majority of $k$-mers. Furthermore,
Table~\ref{table:theordistribcount} shows data for $1$-, $2$- and
$4$-filter setups, this time with the optimal value of $r$ for each
case. Even two filters are already sufficient to reduce the accesses
to $T_2$ to $2.08\%$. In case of four filters, $99.7\%$ of $k$-mers
are resolved before accessing $T_4$.

\begin{table}[htbp]
\centering
\begin{tabular}{|c|c|c|c|c|c|c|}
\hline
value of $t$ & $r$ & $B_1$ & $B_2$ & $B_3$ & $B_4$ & $T_t$ \\
\hline
$1$ & $11.44$ & $74.70\%$ & $0$ & $0$ & $0$ & $25.3\%$\\
\hline
$2$ & $8.060$ & $73.44\%$ & $24.48\%$ & $0$ & $0$ & $2.08\%$ \\
\hline
$4$ & $6.049$ & $70.90\%$ & $23.63\%$ & $3.88\%$ & $1.29\%$ & $0.3\%$ \\
\hline
\end{tabular}
\caption{Fractions of queries resolved by each filter for, $1$, $2$
  and $4$ filters. Estimations have been computed for $k=32$ and
  optimal values of $r$ shown in the second column.  Last column shows
  the fraction of queries resolved at the last step, by testing
  against the explicitly stored set
  $T_t$.}
\label{table:theordistribcount}
\end{table}

\section{Experimental results}
\subsection{Construction algorithm} \label{sec:imple}
In practice, constructing a cascading Bloom filter for a real-life
read set is a computationally intensive step. To perform it on a
commonly-used computer, the implementation makes an essential use of
external memory. Here we give a short description of the construction
algorithm for up to four Bloom filters. Extension for larger number of
filters is straightforward.

We start from the input set $T_0$ of $k$-mers written on disk.  We
build the Bloom filter $B_1$ of appropriate size by inserting elements
of $T_0$ successively.  Next, all possible extensions of each $k$-mer
in $T_0$ are queried against $B_1$, and those which return true are
written to the disk.  Then, in this set only the $k$-mers absent from
$T_0$ are kept, i.e. we perform a set difference from $T_0$. We cannot
afford to load $T_0$ entirely in memory, so we partition $T_0$ and
perform the set difference in several iterations, loading only one
partition of $T_0$ each time.  This results in the set $T_1$ of
critical false positives, which is also kept on disk. Up to this
point, the procedure is identical to that of
\cite{Chikhi12}.

Next, we insert all $k$-mers from $T_1$ into $B_2$ and to obtain
$T_2$, we check for each $k$-mer in $T_0$ if a query to $B_2$ returns
true. This results in the set $T_2$, which is directly stored on
disk. Thus, at this point we have $B_1$, $B_2$ and, by loading $T_2$
from the disk, a complete representation for $t = 2$. In order to
build the data structure for $t = 4$, we continue this process, by
inserting $T_2$ in $B_3$ and retrieving (and writing directly on disk)
$T_3$ from $T_1$ (stored on disk). It should be noted that to obtain
$T_i$ we need $T_{i-2}$, and by always directly storing it on disk we
guarantee not to use more memory than the size of the final structure.
The set $T_t$ (that is, $T_1$, $T_2$ or $T_4$ in our experiments) is
represented as a sorted array and is searched by a binary search. We
found this implementation more efficient than a hash table.

\subsection{Implementation and experimental setup}\label{exp-setup}
We implemented our method using {\sc Minia} software (\cite{Chikhi12})
and ran comparative tests for $2$ and $4$ Bloom filters ($t = 2,4$).
Note that since the only modified part of {\sc Minia} was the
construction step and the $k$-mer membership queries, this allows us
to precisely evaluate our method against the one of
\cite{Chikhi12}.

The first step of the implementation is to retrieve the list of
$k$-mers that appear more than $d$ times using DSK (\cite{Dsk}) -- a
constant memory streaming algorithm to count $k$-mers.  Note, as a
side remark, that performing counting allows us to perform off-line
deletions of $k$-mers. That is, if at some point of the scan of the
input set of $k$-mers (or reads) some of them should be deleted, it is
done by a simple decrement of the counter.

Assessing the query time is done through the procedure of graph
traversal, as it is implemented in
\cite{Chikhi12}. Since the procedure is identical and
independent on the data structure, the time spent on graph traversal
is a faithful estimator of the query time.

We compare three versions: $t=1$ (i.e. the version of
\cite{Chikhi12}), $t=2$ and $t=4$. For convenience, we
define $1$ Bloom, $2$ Bloom and $4$ Bloom as the versions with $t = 1,
2$ and $4$, respectively.

\subsection{\emph{E. coli} dataset, varying $k$} \label{subsec:varying_k}
In this set of tests, our main goal was to evaluate the influence of
the $k$-mer size on principal parameters: size of the whole data
structure, size of the set $T_t$, graph traversal time, and time of
construction of the data structure.  We retrieved 10M \emph{E. coli}
reads of 100bp from the \emph{Short Read Archive} (ERX008638) without
read pairing information and extracted all $k$-mers occurring at least
two times. The total number of $k$-mers considered varied, depending
on the value of $k$, from 6,967,781 ($k=15$) to 5,923,501 ($k = 63$).
We ran each version, 1 Bloom (\cite{Chikhi12}), 2 Bloom and 4 Bloom,
for values of $k$ ranging from $16$ to $64$. The results are shown in
Fig.~\ref{fig:ecoli}.

The total size of the structures in bits per stored $k$-mer, i.e. the
size of $B_1$ and $T_1$ (respectively, $B_1, B_2$,$T_2$ or $B_1, B_2,
B_3, B_4$,$T_4$) is shown in Fig.~\ref{fig:bits_per_kmer}. As
expected, the space for 4 Bloom filters is the smallest for all values
of $k$ considered, showing a considerable improvement, ranging from
32\% to 39\%, over the version of \cite{Chikhi12}. Even the version
with just 2 Bloom filters shows an improvement of at least 20\% over
\cite{Chikhi12}, for all values of $k$. Regarding the influence of the
$k$-mer size on the structure size, we observe that for 4 Bloom
filters the structure size is almost constant, the minimum value is
8.60 and the largest is 8.89, an increase of only 3\%. For 1 and 2
Bloom the same pattern is seen: a plateau from $k = 16$ to $32$, a
jump for $k=33$ and another plateau from $k = 33$ to $64$. The jump at
$k=32$ is due to switching from $64$-bit to $128$-bit representation
of $k$-mers in the table $T_t$.

The traversal times for each version is shown in
Fig.~\ref{fig:traversal}.  The fastest version is 4 Bloom, showing an
improvement over \cite{Chikhi12} of 18\% to 30\%, followed by 2 Bloom.
This result is surprising and may seem counter-intuitive, as we have
four filters to apply to the queried $k$-mer rather than a single
filter as in \cite{Chikhi12}. However, the size of $T_4$ (or even
$T_2$) is much smaller than $T_1$, as the size of $T_i$'s decreases
exponentially. As $T_t$ is stored in an array, the time economy in
searching $T_4$ (or $T_2$) compared to $T_1$ dominates the time lost
on querying additional Bloom filters, which explains the overall gain
in query time.

As far as the construction time is concerned
(Fig.~\ref{fig:construction}), our versions yielded also a faster
construction, with the 4 Bloom version being 5\% to 22\% faster than
that of \cite{Chikhi12}. The gain is explained by the time required
for sorting the array storing $T_t$, which is much higher for $T_0$
than for $T_2$ or $T_4$. However, the gain is less significant here,
and, on the other hand, was not observed for bigger datasets (see
Section~\ref{human}).

\begin{figure}[Htbp]
  \center
  \subfloat[]{\includegraphics[width=0.5\linewidth]{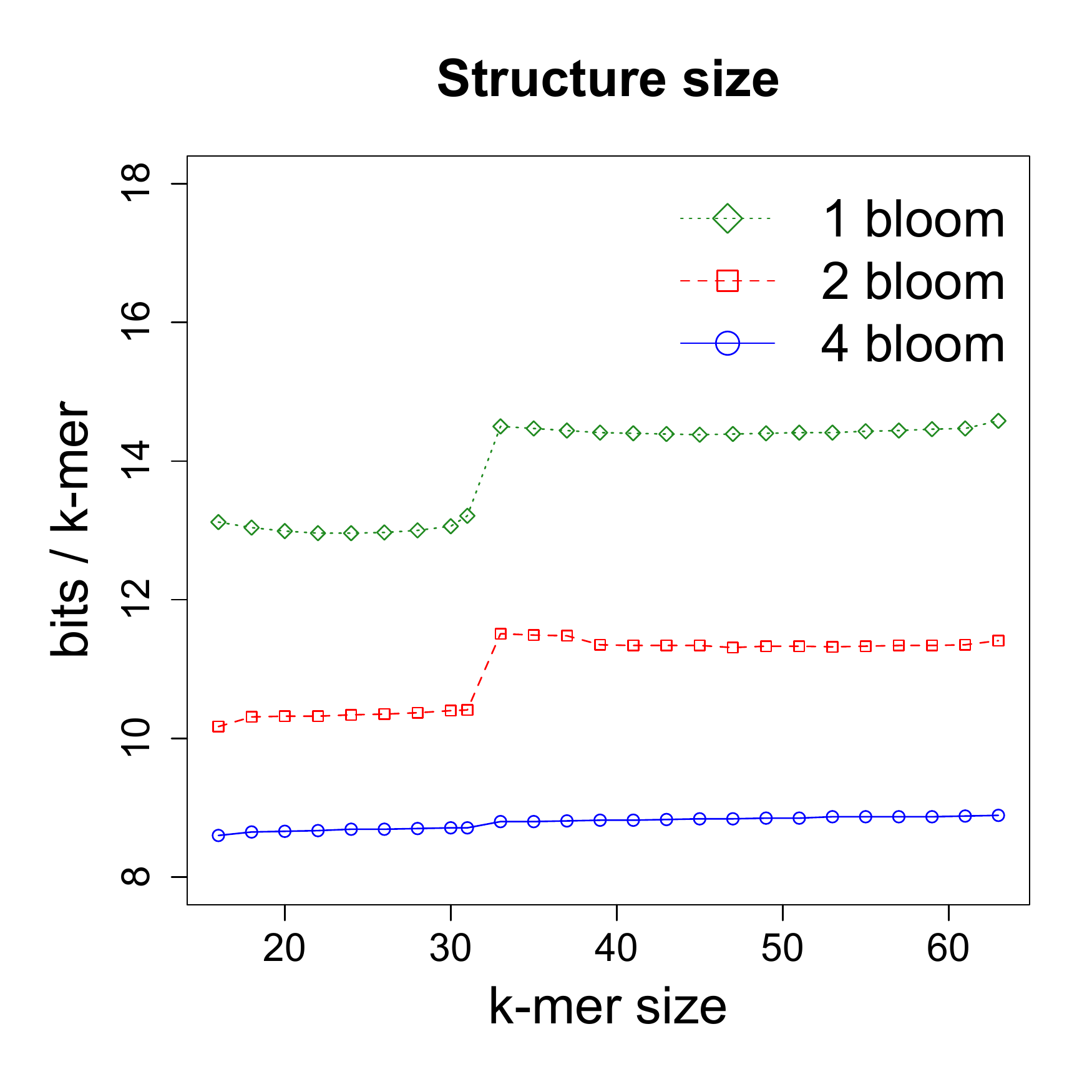}\label{fig:bits_per_kmer}}
  \subfloat[]{\includegraphics[width=0.5\linewidth]{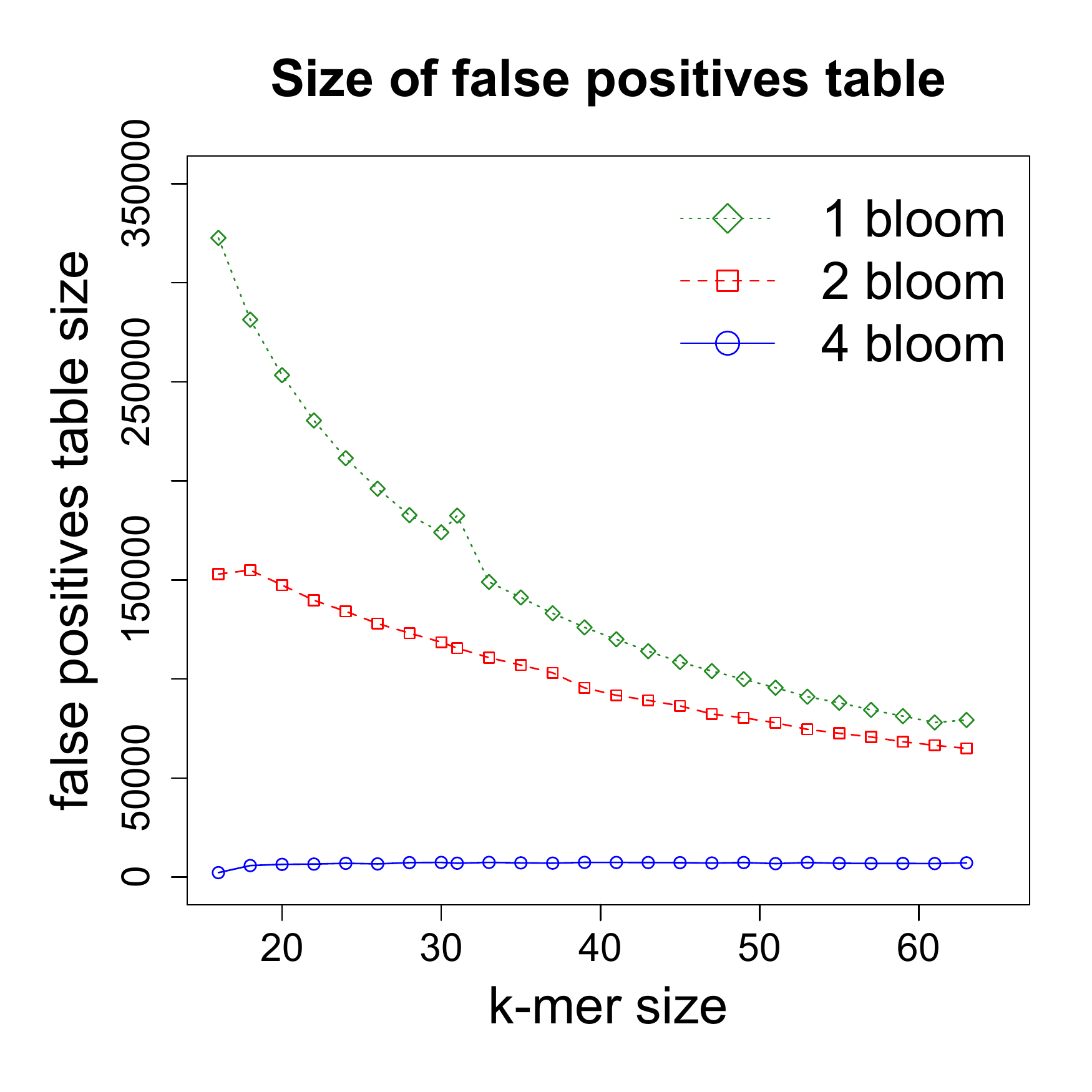}\label{fig:false_positive}} \\
  \subfloat[]{\includegraphics[width=0.5\linewidth]{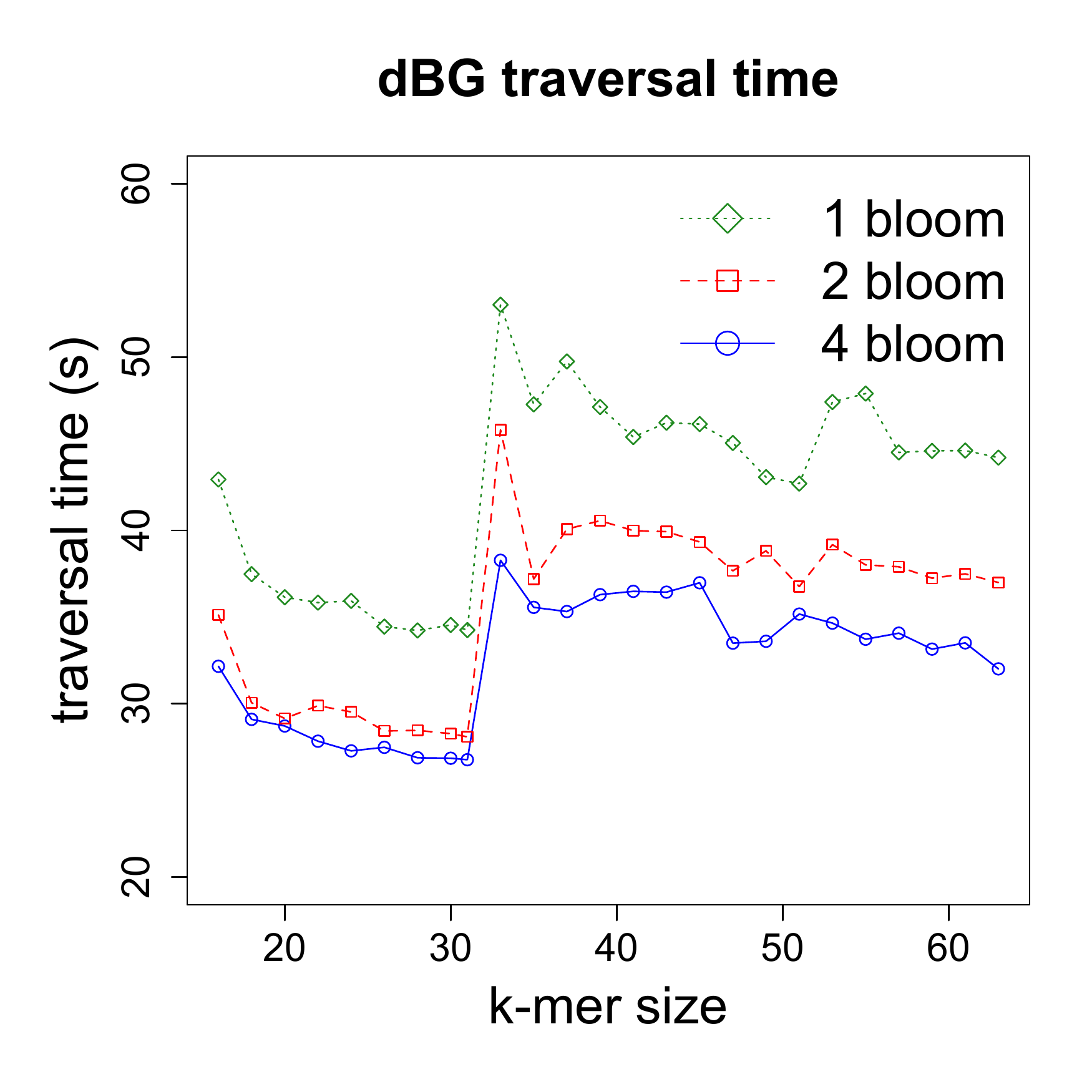} \label{fig:traversal}}
  \subfloat[]{\includegraphics[width=0.5\linewidth]{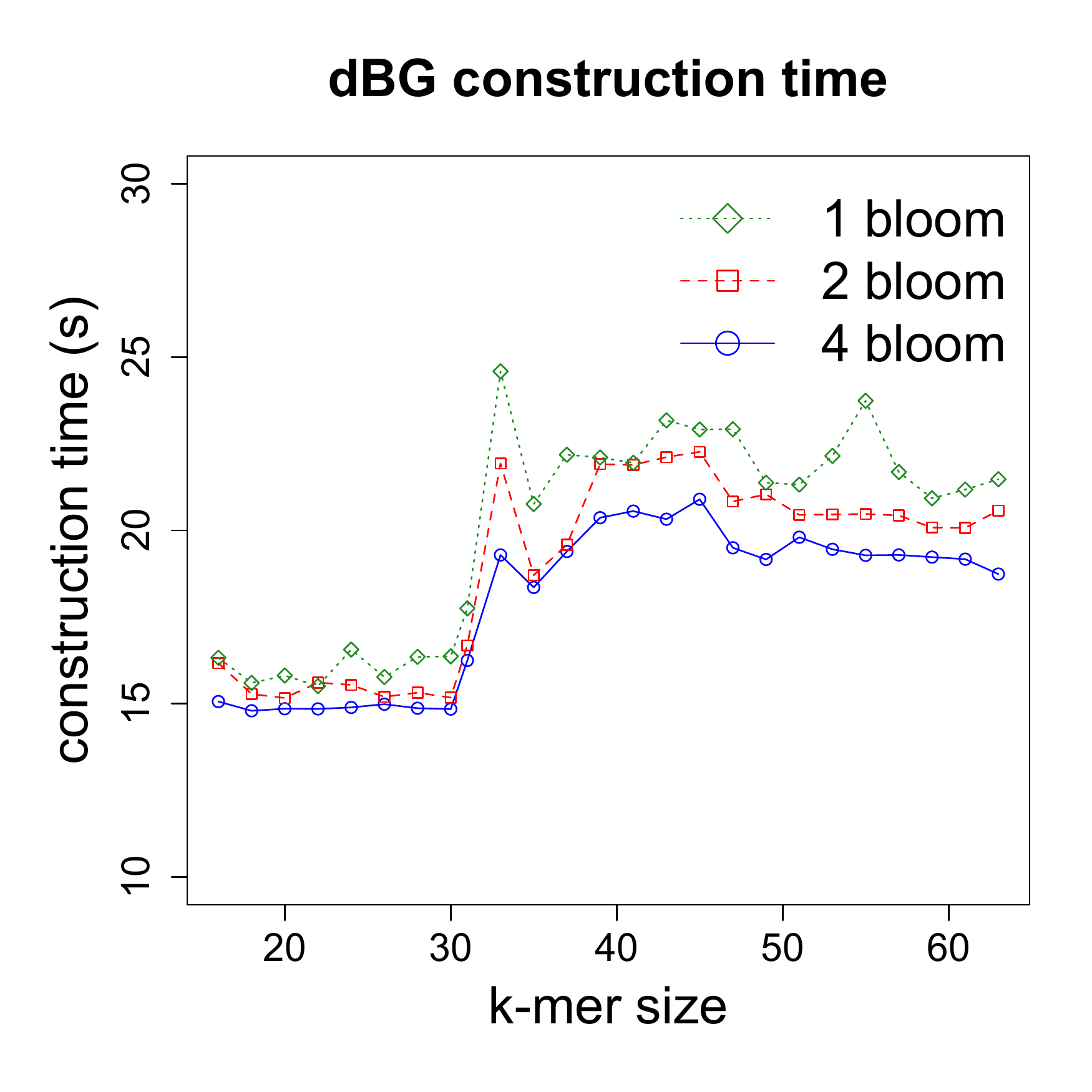}\label{fig:construction}} 
  \caption{Results for 10M E.coli reads of 100bp using several values
    of $k$. The \emph{1 Bloom} version corresponds to the one
    presented in \cite{Chikhi12}. (a) Size of the
    structure in bits used per $k$-mer stored. (b)
    Number of false positives stored in $T_1$, $T_2$ or $T_4$ for 1, 2
    or 4 Bloom filters, respectively. (c) De Bruijn graph
    construction time, excluding $k$-mer counting step. (d) De Bruijn
    graph traversal time, including branching $k$-mer indexing. }  \label{fig:ecoli}
\end{figure}

\subsection{\emph{E. coli} dataset, varying coverage} \label{subsec:cov}
From the complete \emph{E. coli} dataset ($\approx$44M reads) from the
previous section, we selected several samples ranging from 5M to 40M
reads in order to assess the impact of the coverage on the size of the
data structures. This strain \emph{E. coli} (K-12 MG1655) is estimated
to have a genome of 4.6M bp~\cite{ecoli}, implying that a sample of 5M
reads (of 100bp) corresponds to $\approx$100X coverage. We set $d = 3$
and $k = 27$. The results are shown in
Fig.~\ref{fig:ecoli_coverage}. As expected, the memory consumption per
$k$-mer remains almost constant for increasing coverage, with a slight
decrease for 2 and 4 Bloom. The best results are obtained with the 4
Bloom version, an improvement of 33\% over the 1 Bloom version of
\cite{Chikhi12}. On the other hand, the number of distinct $k$-mers
increases markedly (around 10\% for each 5M reads) with increasing
coverage, see Fig.~\ref{fig:ecoli_cov:kmers}.  This is due to
sequencing errors: an increase in coverage implies more errors with
higher coverage, which are not removed by our cutoff $d = 3$. This
suggests that the value of $d$ should be chosen according to the
coverage of the sample. Moreover, in the case where read qualities are
available, a quality control pre-processing step may help to reduce
the number of sequencing errors.

\begin{figure}[Htbp]
  \center
  \subfloat[]{\includegraphics[width=0.5\linewidth]{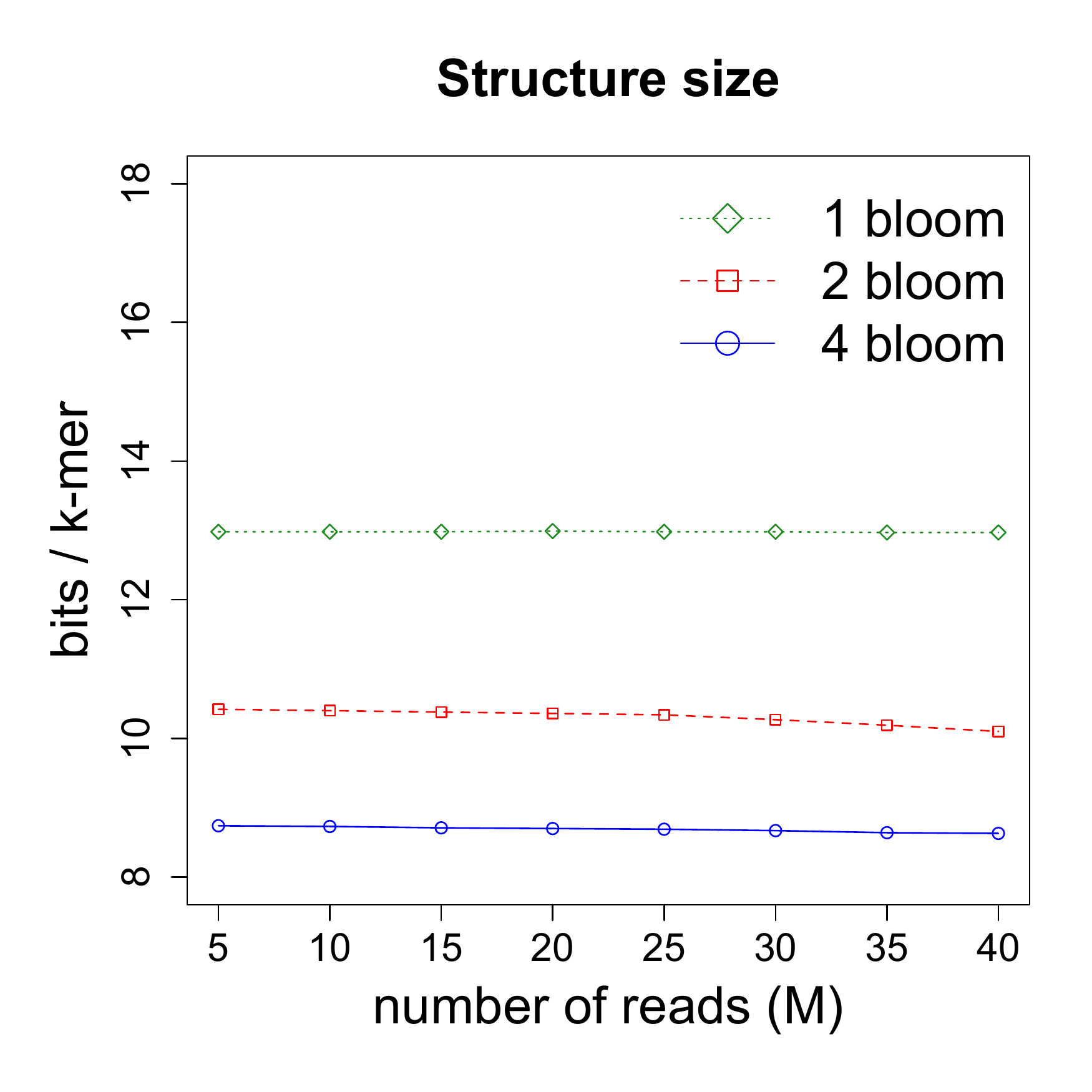}}
  \subfloat[]{\includegraphics[width=0.5\linewidth]{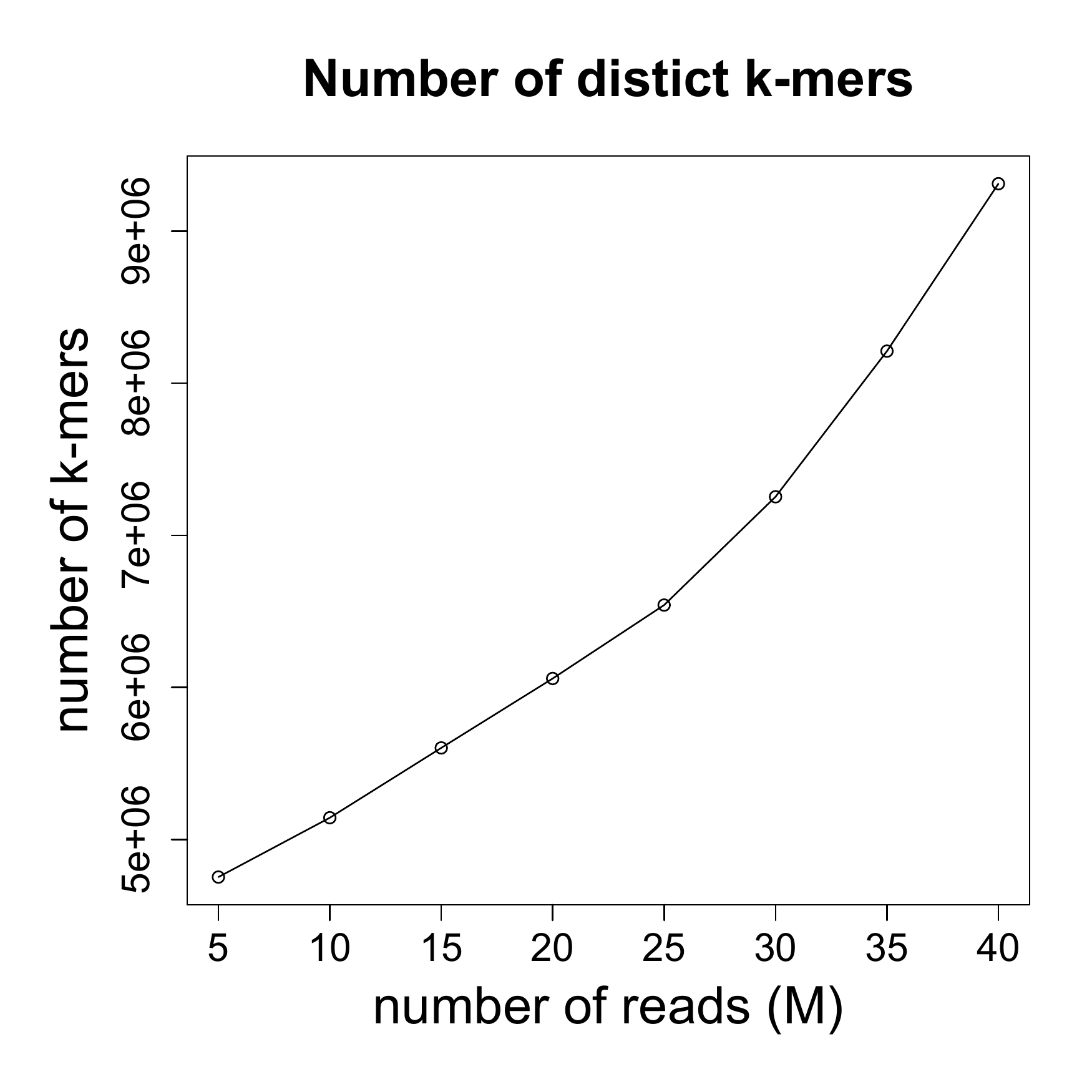} \label{fig:ecoli_cov:kmers}}
  \caption{Results for \emph{E.coli} reads of 100bp using $k =
    27$. The \emph{1 Bloom} version corresponds to the one presented
    in \cite{Chikhi12}. (a) Size of the structure in bits used per
    $k$-mer stored. (b) Number of distinct
    $k$-mers.}\label{fig:ecoli_coverage}
\end{figure}

\subsection{\emph{E. coli} dataset, query statistics}
In this set of tests we used the dataset of
Section~\ref{subsec:varying_k} to experimentally evaluate how the
queries are distributed among the Bloom filters. We ran the graph
traversal algorithm for each version, 1 Bloom (\cite{Chikhi12}), 2
Bloom and 4 Bloom, using values of $k$ ranging from $16$ to $64$ and
retrieved the number of queries resolved in each Bloom filter and the
table $T_t$. The results are shown in Fig.~\ref{fig:ecoli_query}. The
plots indicate that, for each version, the query distribution among
the Bloom filters is approximately invariant to the value of
$k$. Indeed, on average 74\%, 73\% and 70\% of the queries are
resolved in $B_1$ for the 1, 2 and 4 Bloom version, respectively, and
the variance is smaller than 0.01\% in each case. For the 4 Bloom
version, 70\%, 24\%, 4\%, 1\% and 0.2\% of the queries are resolved in
$B_1$, $B_2$, $B_3$, $B_4$ and $T_4$, respectively, showing that the
values estimated theoretically in Section~\ref{sec:estim_distr} (the
last row of Table~\ref{table:theordistribcount}) are very
precise. Furthermore, as a query to a Bloom filter is faster than to
$T_1$ and the majority of the queries to 4 and 2 Bloom versions, 94\%
and 95\% respectively, are resolved in the first two filters, it is
natural that on average queries to 1 Bloom version are slower than to
2 and 4 Bloom versions, corroborating the results of
Section~\ref{subsec:varying_k}.

\begin{figure}[htbp]
  \center
  \subfloat[]{\includegraphics[width=0.5\linewidth]{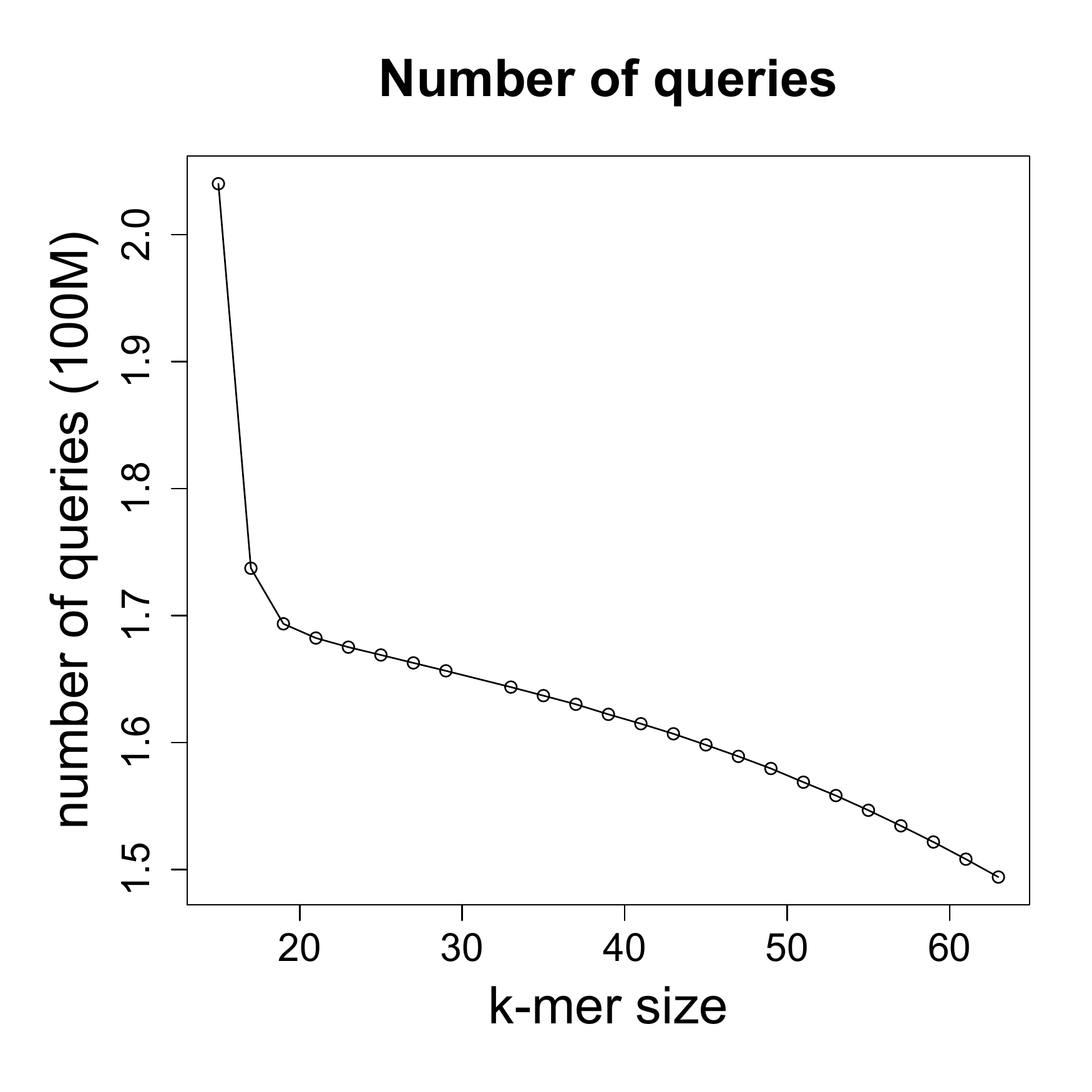}\label{fig:number_query}}
  \subfloat[]{\includegraphics[width=0.5\linewidth]{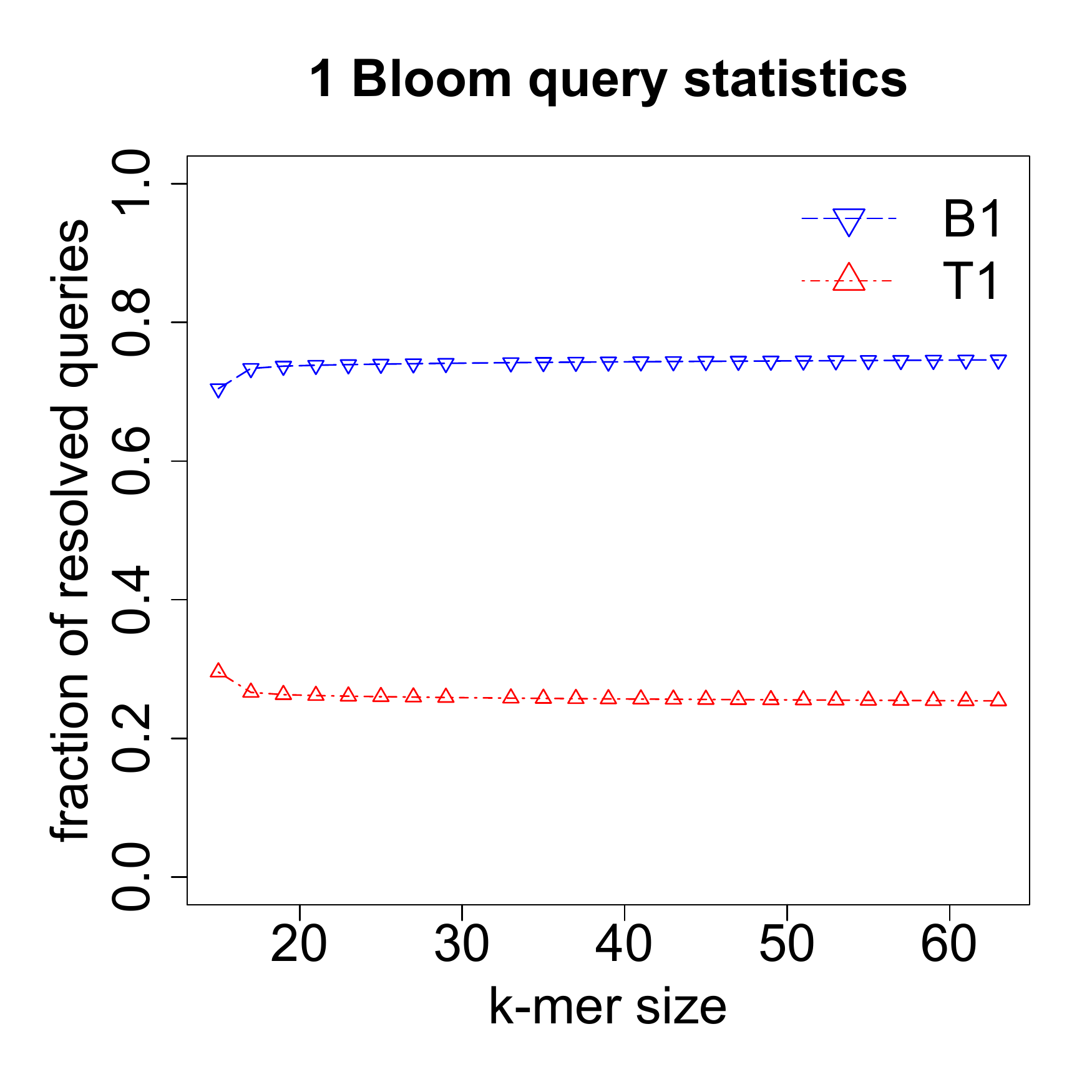}\label{fig:1bloom_query}} \\
  \subfloat[]{\includegraphics[width=0.5\linewidth]{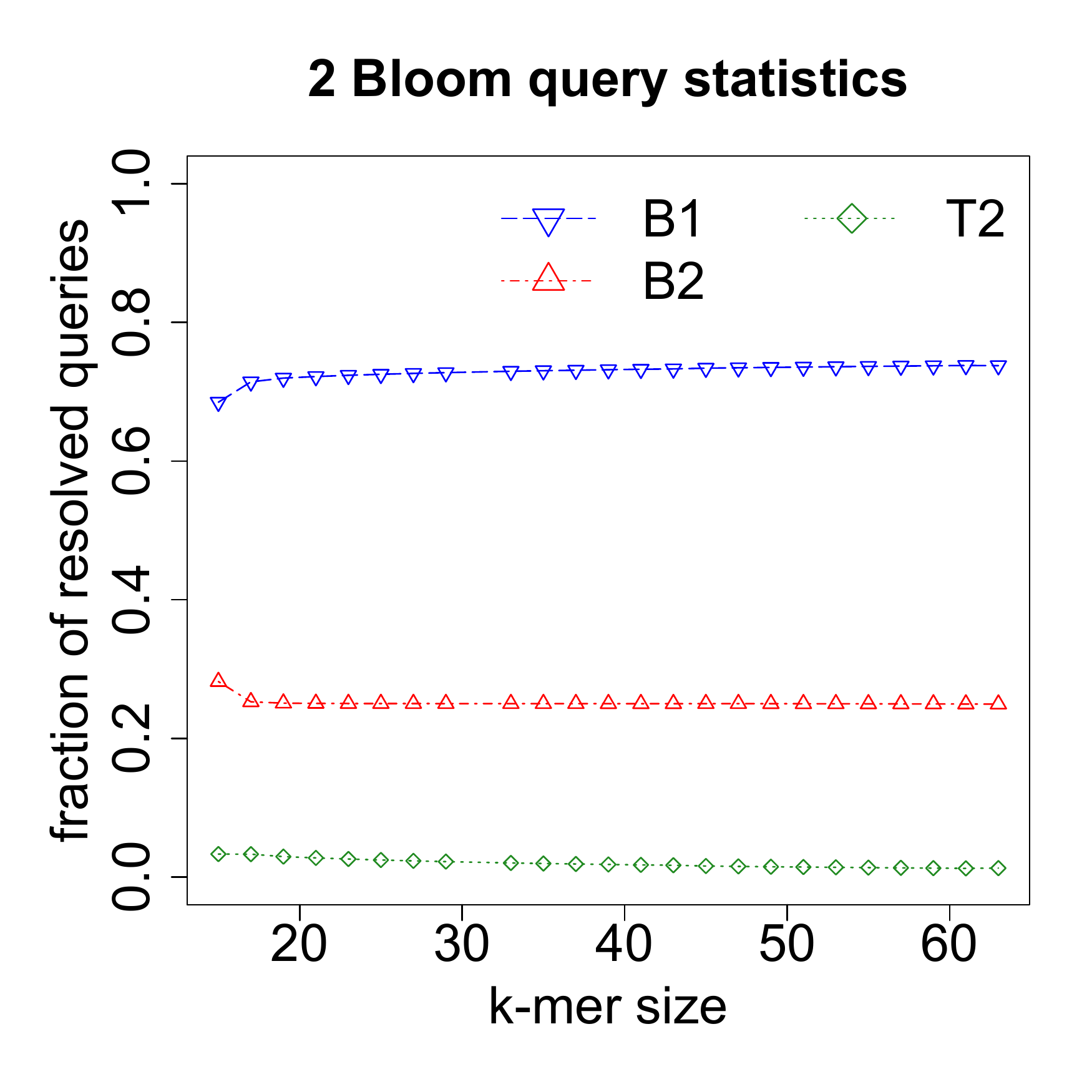}\label{fig:2bloom_query}}
  \subfloat[]{\includegraphics[width=0.5\linewidth]{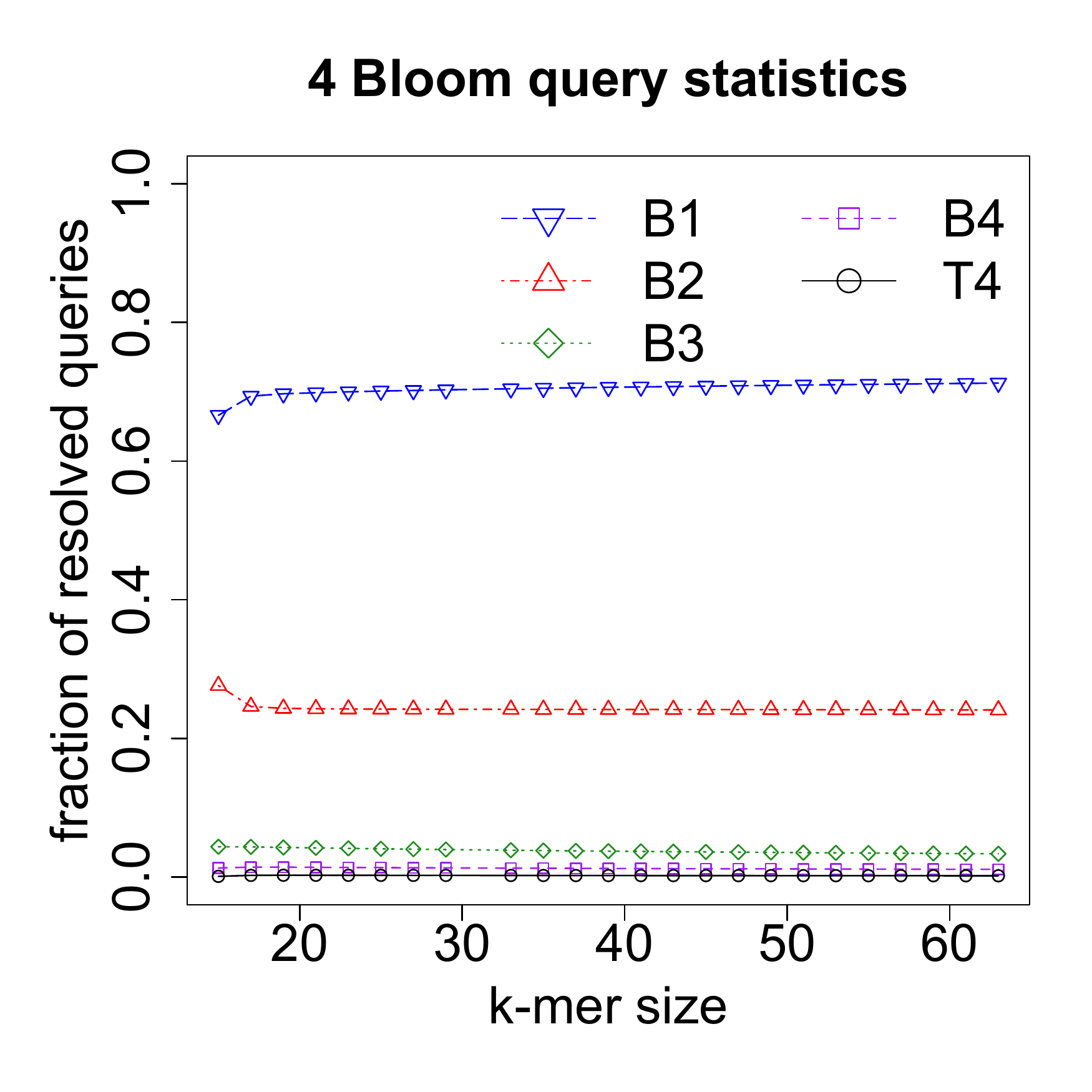}\label{fig:4bloom_qeury}} 
  \caption{Query statistics results for 10M E.coli reads of 100bp
    using several values of $k$. The \emph{1 Bloom} version
    corresponds to the one presented in
    \cite{Chikhi12}. (a) Total number of queries
    performed, for each value of $k$, during a graph traversal. (b)
    Fraction of resolved queries in $B_1$ and $T_1$ (1 Bloom version)
    for each value of $k$. (c) Fraction of resolved queries in
    $B_1$,$B_2$ and $T_2$ (2 Bloom version) for each value of $k$. (d)
    Fraction of resolved queries in $B_1$,$B_2$,$B_3$,$B_4$ and $T_4$
    for each value of $k$.}  \label{fig:ecoli_query}
  \end{figure}

\subsection{Human dataset}
\label{human}
We also compared 2 and 4 Bloom versions with the 1 Bloom version of
\cite{Chikhi12} on a large dataset. For that, we retrieved 564M Human
reads of 100bp (SRA: SRX016231) without pairing information and
discarded the reads occurring less than 3 times.  The dataset
corresponds to $\approx$17X coverage. A total of 2,455,753,508
$k$-mers were indexed. We ran each version, 1 Bloom (\cite{Chikhi12}),
2 Bloom and 4 Bloom with $k = 23$. The results are shown in
Table~\ref{tab:human}.

The results are in general consistent with the previous tests on
\emph{E.coli} datasets. There is an improvement of 34\% (21\%) for the
4 Bloom (2 Bloom) in the size of the structure. The graph traversal is
also 26\% faster in the 4 Bloom version. However, in contrast to the
previous results, the graph construction time {increased} by 10\% and
7\% for 4 and 2 Bloom versions respectively, when compared to the 1
Bloom version. This is due to the fact that disk writing/reading
operations now dominate the time for the graph construction, and 2 and
4 Bloom versions generate more disk accesses than 1 Bloom.  As stated
in Section~\ref{sec:imple}, when constructing the 1 Bloom structure,
the only part written on the disk is $T_1$ and it is read only once to
fill an array in memory. For 4 Bloom, $T_1$ and $T_2$ are written to
the disk, and $T_0$ and $T_1$ are read at least one time each to build
$B_2$ and $B_3$. Moreover, since the size coefficient of $B_1$
reduces, from $r = 11.10$ in 1 Bloom to $r = 5.97$ in 4 Bloom, the
number of false positives in $T_1$ increases.

\begin{table}[htbp]
\begin{center}
\begin{tabular}{|c|c|c|c|}
\hline
Method                          & 1 Bloom & 2 Bloom & 4 Bloom \\
\hline \hline
Construction time (s)           &  40160.7 & 43362.8 & 44300.7 \\
\hline
Traversal time (s)              &  46596.5 & 35909.3 & 34177.2 \\
\hline
$r$ coefficient                      &  11.10 &  7.80    & 5.97 \\
\hline
\multirow{4}{*}{Bloom filters size (MB)}  &  $B_1 = 3250.95$ & $B_1 = 2283.64$ & $B_1 = 1749.04$  \\
                                          &                  & $B_2 = 323.08$  & $B_2 = 591.57$ \\
                                          &                  &                 & $B_3 = 100.56$  \\
                                          &                  &                 & $B_4 = 34.01$ \\
\hline
False positive table size (MB)  &  $T_1 = 545.94$ & $T_2 = 425.74$ & $T_4 = 36.62$ \\
\hline
Total size (MB)                 &  3796.89 & 3032.46 & 2511.8 \\
\hline
\bf Size (bits/$k$-mer)             & \bf 12.96 & \bf 10.35 & {\bf 8.58} \\
\hline
\end{tabular}
\end{center}
\caption{Results of 1, 2 and 4 Bloom filters version for 564M Human
  reads of 100bp using $k = 23$. The \emph{1 Bloom} version
  corresponds to the one presented in
  \cite{Chikhi12}. }\label{tab:human}
\end{table}

\section{Discussion and conclusions}
Using cascading Bloom filters for storing de Bruijn graphs has clear
advantage over the single-filter method of \cite{Chikhi12}. In terms
of memory consumption, which is the main parameter here, we obtained
an improvement of around 30\%-40\% in all our experiments. Our data
structure takes 8.5 to 9 bits per stored $k$-mer, compared to 13 to 15
bits by the method of \cite{Chikhi12}.  This confirms our analytical
estimations. The above results were obtained using only four filters
and are very close to the estimated optimum (around 8.4 bits/$k$-mer)
produced by the infinite number of filters. An interesting
characteristic of our method is that the memory grows insignificantly
with the growth of $k$, even slower than with the method of
\cite{Chikhi12}.  Somewhat surprisingly, we also obtained a
significant decrease, of order 20\%-30\%, of query time. The
construction time of the data structure varied from being 10\% slower
(for the human dataset) to 22\% faster (for the bacterial dataset).

As stated previously, another compact encoding of de Bruijn graphs has
been proposed in \cite{Bowe12}, however no implementation of the
method was made available. For this reason, we could not
experimentally compare our method with the one of \cite{Bowe12}. We
remark, however, that the space bound of \cite{Bowe12} heavily
depends on the number of reads (i.e. coverage), while in our case, the
data structure size is almost invariant with respect to the coverage
(Section~\ref{subsec:cov}).

An interesting open question is whether the Bloom filter construction
can be made online, so that new $k$-mers (reads) can be inserted
without reconstructing the whole data structure from scratch.  Note
that the presented construction (Section~\ref{sec:imple}) is
inherently off-line, as all $k$-mers should be known before the data
structure is built.

Another interesting prospect for further possible improvements of our
method is offered by \cite{Porat09}, where an efficient replacement to
Bloom filter was introduced. The results of \cite{Porat09} suggest
that we could hope to reduce the memory to about $5$ bits per
$k$-mer. However, there exist obstacles on this way: an implementation
of such a structure would probably result in a significant
construction and query time increase.

\chapter*{Conclusion and perspectives}
\markboth{\bf Conclusion and perspectives}{\bf Conclusion and
  perspectives} \addstarredchapter{Conclusion and perspectives}
In this thesis, we presented \ks, a time and memory efficient method
to identify variations (alternative splicing and genomic
polymorphisms) by locally assembling RNA-seq data without using a
reference genome. The local nature of the \ks strategy allows to avoid
some of the difficulties faced by standard full-length transcriptome
assemblers, namely solving an ill-posed problem often formulated as a
NP-hard optimization problem. As a result, we can avoid an extensive
use of heuristics and, thus, obtain an overall more sensitive method
with stronger theoretical guarantees. A lot of effort was put in order
to make our method as scalable as possible in order to deal with
ever-increasing volumes of NGS data. We improved the state-of-the-art
de Bruijn graph construction and representation in an effort to reduce
the memory footprint of \ks. We also developed a new time-efficient
approach to list bubbles in de Bruijn graphs in order to reduce the
running time of our method.

The techniques we developed while studying the bubble listing problem
turned out to be useful in other enumeration contexts, namely cycle
listing and the $K$-shortest paths problem. The classical problem of
listing cycles in a graph has been studied since the early 70s,
however, as shown here, the best algorithm for undirected graph is not
optimal. In this thesis, we gave the first optimal algorithm to list
cycles in undirected graphs, along with the first optimal algorithm to
list $st$-paths. The classical $K$-shortest paths problem has been
studied since the early 60s, and the best algorithm for solving it
uses memory proportional to the number of path output, i.e. $K$. In
this thesis, we gave an alternative parameterization of the problem.
For this alternative version, we gave an algorithm that uses memory
linear in the size of the graph, independent of the number of paths
output.

In the past 3 years, \ks has evolved into not only a time and memory
efficient method, but also into an user-friendly \emph{software} for
the bioinformaticians and biologists. We are now a 5-persons team
actively developing \ks, which includes continuously: improving its
robustness, correcting bugs, improving the usability and the
documentation. It is important to highlight that, besides the
algorithmic improvements already mentioned, a lot of effort was put on
improving the \emph{implementation} of \ks, including: parallelization
of certain steps of the pipeline, careful implementation of the data
structures, systematic removal of memory leaks, among others. This
team effort produced a stable and user-friendly software.

\ks has been used in several projects, as evidenced by an average of
250 unique visitors per month to our website. A summary of some of the
projects in which we are directly involved is shown in the figure
below. We are aware that in order to further convince the biologists
that the new alternative splicing events found by \ks are \emph{real}
events, it is desirable to experimentally validate some of them, and
for that we are working in collaboration with D. Auboeuf's group to
validate AS events found by \ks in K562 cell lines. Since the human
genome is known and well annotated, our choice of human cell lines for
these experiments may seem a bit odd. Actually, our goal is to show
that \ks is useful even when a good reference genome is available. For
that, we need to validate AS events found by \ks that are not
annotated and not found by mapping approaches (\cite{Cufflinks})
either, even when there is a good (annotated) reference genome. The
preliminary results are promising: from the randomly selected events
only found by \ks, around 40\% of them were validated, while 30\%,
although not validated, are part of complex events (more than two
isoforms) and the isoform amplified in the experiment matches the AS
event found by \ks but not selected for validation.  The remaining
30\% corresponded to cases where the minor isoform had a relative
abundance of less than 15\%. Although for now we did not manage to
validate these cases, it does not yet mean that they are not
real. Indeed, since the experimental validation is based on RT-PCR, it
may be that the early rounds of the PCR favor the major isoform,
which causes the complete loss of the minor isoform in the final
rounds.  Finally, after clarifying how many of the novel events found
by \ks are real, there still remains the central question whether
these new isoforms are functional or just noise of the splicing
machinery. Our point of view is that an exhaustive description of all
isoforms present in the cell is a good prerequisite to help address
this central question.

\begin{figure}[htbp]
  \includegraphics[width=\linewidth]{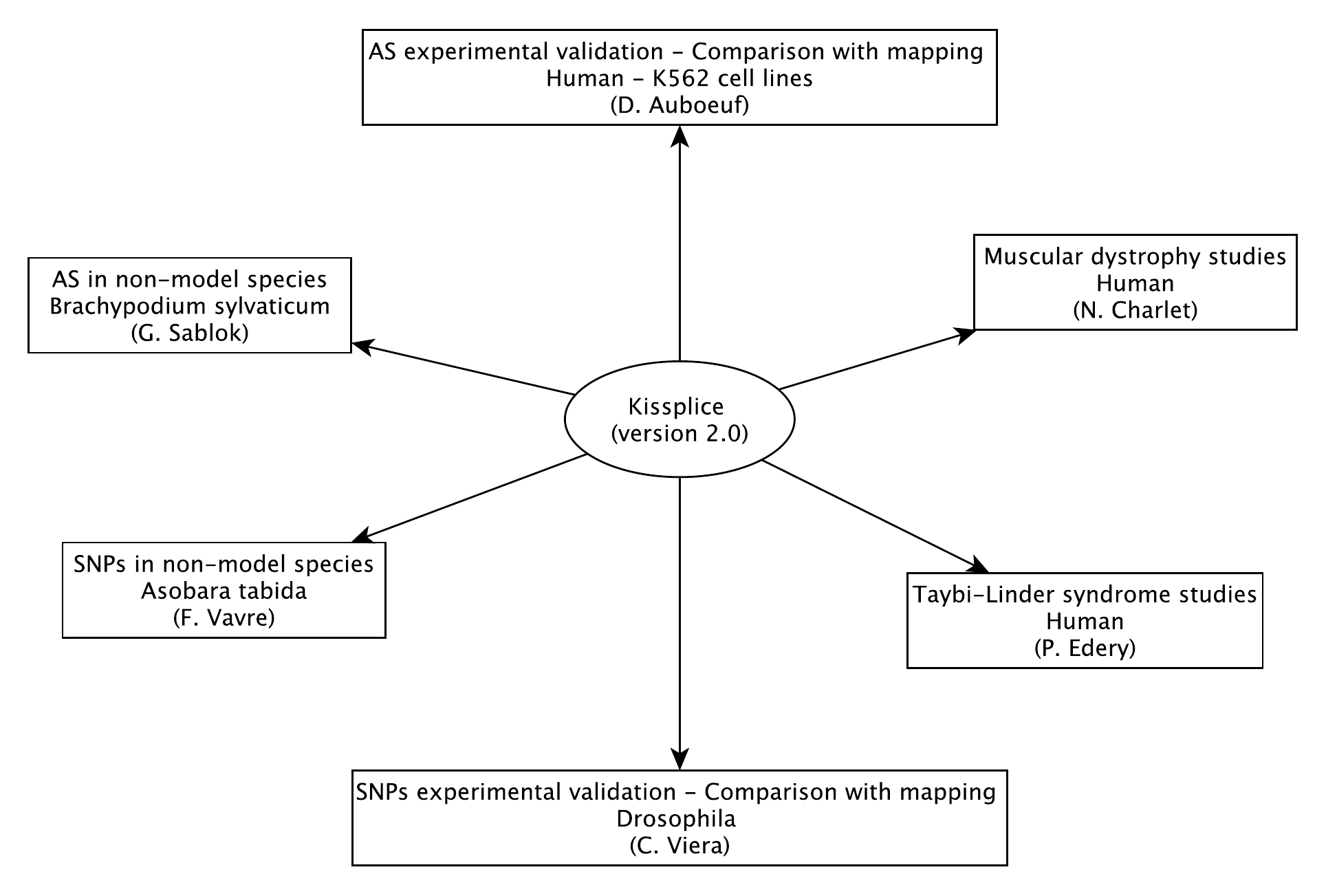} 
  \caption{Projects using \ks.}
\end{figure}

At the end of each chapter, open problems and perspectives were
discussed, providing the reader with an idea of possible extensions of
the methods and techniques presented in this work. For that reason, in
this chapter we focus on the main open problem concerning \ks: the
complex biconnected components (BCCs).

The second step of the \ks algorithm is the biconnected
decomposition. Since each bubble is entirely contained in one BCC,
after the BCC decomposition, the bubbles in each BCC are enumerated
independently. In the ideal case -- a repeat-free genome -- each
connected component in the DBG corresponds to a single gene, as well
as each BCC. In practice, the BCC decomposition works well: the vast
majority of the BCCs are relatively small and contain the sequence of
a single gene (or a family of paralogous genes). However, there is a
small number of large BCCs (often one or two) that contain the
sequences of several unrelated genes, and it is infeasible to
enumerate all the bubbles inside these complex BCCs. The problem is
not the efficiency of the algorithm, but the number of bubbles
satisfying our constraints. There are a huge number of bubbles in the
complex BCCs (usually more than in all the other BCCs together), and
most of them are repeat-associated bubbles.  A manual exploration of a
fraction of the bubbles contained in these complex BCCs led us to
think that they are generated by transposable elements (and to a much
lesser extent to other types of repeats). A transposable element (TE)
is a DNA sequence that can change its position within the genome
through a copy-and-paste or cut-and-paste process (\cite{Wicker07}).
Transposable elements are spread throughout the genome (including in
many transcribed regions).  We believe that old copies of TEs that
invaded the genome a long time ago are responsible for the complex
BCCs.  Since they are old, these copies diverged, however, they still
contain enough sequence similarity to merge several unrelated genes
inside the same BCC. More importantly, they are in transcribed
regions, mostly UTRs.  In human, the transposable elements in the Alu
family alone generate a BCC with millions of bubbles satisfying our
constraints, so it is infeasible to enumerate all of them. The problem
of simply ignoring the complex BCCs (that is what we have been doing
so far), is that they potentially contain true events ``trapped''
inside.

\begin{figure}[htbp]
  \includegraphics[width=\linewidth]{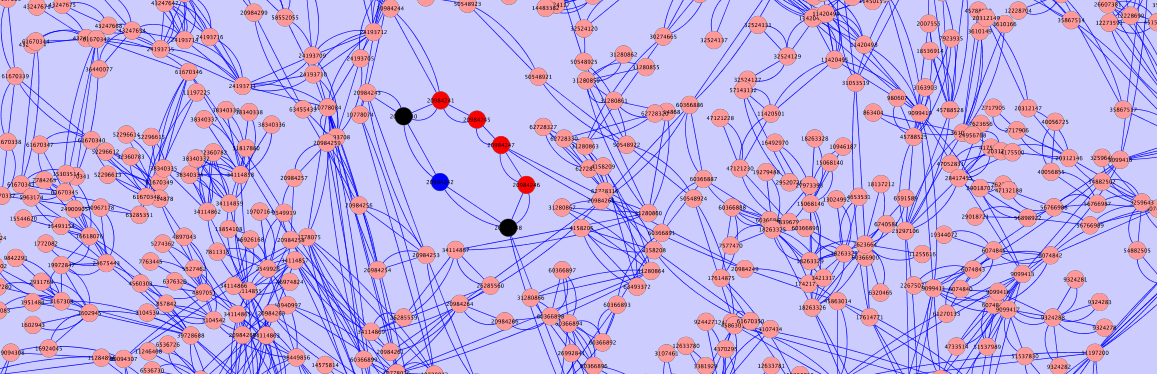}
  \caption{An alternative splicing event (intron retention) in the
    SCN5A gene (human) trapped inside a complex BCC. The switching
    vertices are shown in black.}
\end{figure}

Repeated elements are not a problem restricted to \ks or local
assembly strategies; global transcriptome assemblers are possibly even
more affected. Recall that the second step of the standard
transcriptome heuristic (see Section~\ref{sec:back:modeling}) is to
partition the graph into subgraphs corresponding to genes; in the
presence of repeated elements this is a much harder task. Several
genes are likely to be wrongly assigned to the same subgraph, and as a
result the heuristic is going to produce chimeric transcripts. On the
other hand, if to account for the presence of repeated elements, the
heuristic adopts a more stringent graph partition strategy, more genes
are likely to be split into several subgraphs, resulting in
transcripts only partially assembled.

A first step towards a solution to this issue is to solve the
following problem: given a DBG built from RNA-seq reads, identify the
subgraph corresponding to the transposable elements (they are not the
only repeated elements, but we believe they are the main source of
problems). Observe that, unlike genomic NGS data where it is possible
to use the coverage of a vertex as a proxy for uniqueness of that
sequence in the genome, in RNA-seq data it is not obvious how to
determine the uniqueness of the sequence corresponding to a vertex. A
solution to this problem would be useful in \ks as well as in
full-length transcriptome assemblers.

 \cleardoublepage
\phantomsection 
 \addstarredchapter{Bibliography}
\renewcommand{\leftmark}{\bf Bibliography}
\renewcommand{\rightmark}{\bf Bibliography}
\bibliographystyle{apalike}
\bibliography{thesis}

\newpage
\pagestyle{empty}
\strut
\newpage

\ifodd\thepage 
   \strut
   \newpage
\fi


\end{document}